\newtheorem{theorem}{Theorem}
\newtheorem{definition}{Definition}
\definecolor{zx_red}{RGB}{232, 165, 165}
\definecolor{zx_green}{RGB}{216, 248, 216}
\begin{document}

\title{2D Quon Language: Unifying Framework for Cliffords, Matchgates, and Beyond}

\author{Byungmin Kang}
\affiliation{Dpartment of Physics, Massachusetts Institute of Technology, Cambridge, MA 02139, USA}

\author{Chen Zhao}
\affiliation{QuEra Computing Inc., 1284 Soldiers Field Road, Boston, MA, 02135, USA}

\author{Zhengwei Liu}
\thanks{liuzhengwei@mail.tsinghua.edu.cn}
\affiliation{Yau Mathematical Sciences Center and Department of Mathematical Sciences, Tsinghua University, Beijing 100084, China}
\affiliation{Yanqi Lake Beijing Institute of Mathematical Sciences and Applications, Beijing 101408, China}

\author{Xun Gao}
\thanks{xun.gao@colorado.edu}
\affiliation{JILA and Department of Physics, CU Boulder, Boulder, CO, USA}

\author{Soonwon Choi}
\thanks{soonwon@mit.edu}
\affiliation{Dpartment of Physics, Massachusetts Institute of Technology, Cambridge, MA 02139, USA}

\preprint{MIT-CTP/5871}

\begin{abstract}

Simulating generic quantum states and dynamics is practically intractable using classical computers. However, certain special classes---namely Clifford and matchgate circuits---permit efficient computation. They provide invaluable tools for studying many-body physics, quantum chemistry, and quantum computation. While both play foundational roles across multiple disciplines, the origins of their tractability seem disparate, and their relationship remain unclear. A deeper understanding of such tractable classes could expand their scope and enable a wide range of new applications. In this work, we make progress toward the unified understanding of the Clifford and matchgate---these two classes are, in fact, distinct special cases of a single underlying structure. Specifically, we introduce the 2D Quon language, which combines Majorana worldlines with their underlying spacetime topology to diagrammatically represent quantum processes and tensor networks. In full generality, the 2D Quon language is universal---capable of representing arbitrary quantum states, dynamics, or tensor networks---yet they become especially powerful in describing Clifford and matchgate classes. Each class can be efficiently characterized in a visually recognizable manner using the Quon framework. This capability naturally gives rise to several families of efficiently computable tensor networks introduced in this work: punctured matchgates, hybrid Clifford-matchgate-MPS, and ansatze generated from factories of tractable networks. All of these exhibit high non-Cliffordness, high non-matchgateness, and large bipartite entanglement entropy. We discuss a range of applications of our approach, from recovering well-known results such as the Kramers-Wannier duality and the star-triangle relation of the Ising model, to enabling variational optimization with novel ansatz states. 
\end{abstract}

\maketitle


\section{Introduction}
Simulating generic quantum states and quantum dynamics is practically intractable for classical computers, due to the exponential growth of Hilbert space dimensions with increasing system sizes. This intractability poses significant challenges both for understanding quantum many-body systems and for developing applications in areas such as quantum algorithms~\cite{dalzell2023quantum}, quantum chemistry~\cite{RevModPhys.92.015003}, and material design~\cite{dalzell2023quantum, RevModPhys.92.015003}. Despite these difficulties, rapid developments in quantum information science over the past several decades have provided new insights into addressing these challenges in certain cases. Specifically, there exists classes of quantum states or dynamics that are efficiently simulable using classical computers or even allow exact solutions. Notable examples include Clifford circuits~\cite{gottesman1998heisenberg, nielsen2001quantum}, matchgates~\cite{valiant2001quantum, PhysRevA.65.032325}, and matrix-product states (MPS)~\cite{schollwock2011density}. The Clifford circuits have proven useful in designing quantum error correcting codes and fault tolerant computation~\cite{nielsen2001quantum}. The matchgates allow one to solve free-fermion systems and have been extensively used as variational ansatze for finding grounds states of quantum chemistry problems and quantum many-body systems~\cite{RevModPhys.92.015003}, under the name of the Hartree-Fock approximation. The MPS, an architecture in which low-rank tensors are arranged in an efficiently contractable manner, lies at the core of the density-matrix renormalization group (DMRG) algorithm, one of the leading numerical tools for finding the ground state for low-dimensional systems, particularly in (1+1)-dimensional quantum systems~\cite{schollwock2011density, PhysRevLett.69.2863}. 

Despite the great success of those techniques, their landscape is highly heterogeneous, lacking any systematic understanding of what are the general class of quantum states or processes that allow efficient classical computations. This naturally brings us to a question---can we understand efficiently simulable classes of quantum systems within a unified framework?

In this work, we address a corner of this question using a pictorial framework for understanding the quantum states and processes~\cite{jaffe2018holographic, liu2017quon, jaffe2017planar}. Our paper introduces and utilizes a graphical language called the \textit{two-dimensional (2D) Quon language}, which is the 2D version of the previously introduced three-dimensional (3D) Quon language~\cite{liu2017quon, liu2019quon}. As a graphical language, the key elements are called \textit{2D Quon diagrams}, where each diagram provides a semantic interpretation representing a quantum state, a quantum process, or a tensor network. Moreover, there exist diagrammatic rewriting rules that often drastically modify the diagram while preserving its semantics. We refer to Refs.~\onlinecite{liu20233, liu2023alterfold, liu2024functional, liu2024alterfold} for recent advances in the generalization of Quon in mathematics, and Refs.~\onlinecite{liu2019quantized, shao2024variational} for an application of Quon in the design of quantum error correcting codes. 

Using the 2D Quon language, we provide four key results. First, we introduce the two dimensional (2D) Quon language and prove that it is \textit{universal}, meaning that any tensor network\footnote{Since any quantum states or quantum processes can be represented using tensor networks, our results is quite generic.} can be represented by a 2D Quon diagram, with an efficient conversion between the two. Second, we present simple \textit{pictorial} characterizations of Clifford and matchgate tensor networks in terms of 2D Quon diagrams. Not only do these pictorial characterizations allow us to put Cliffords and matchgates---two tractable classes with seemingly distinct origins of simulability---under a common Quon framework, but they also allow us to push the boundaries of matchgates beyond their standard definition, leading to a new class of \textit{tractable} tensor networks called \textit{punctured matchgates}, which contain matchgates as a strict subset. Using the concept of the punctured matchgates, we can prove the \textit{decomposition theorem}: any tensor network can be decomposed into a Clifford part and a matchgate part, allowing one to reexpress the network as a tensor network consisting of a single Clifford tensor and a single matchgate tensor. As an added benefit, we introduce a new class of hybrid tensor network, called \textit{hybrid Clifford-matchgate-MPS}. Our hybrid tensor network combines Clifford and matchgate components in a geometrically separated manner---similar to previous tensor networks in the literature~\cite{shi2018variational, mishmash2023hierarchical, PhysRevLett.133.190402, projansky2024extending}---while our tensor network also capturing the tractability of a special class of non-planar dimer models~\cite{vazirani1989nc, straub2016counting, curticapean2014counting, likhosherstov2020tractable}. A state represented by hybrid Clifford-matchgate-MPS generally exhibits high non-Cliffordness, high non-matchgateness, and high bipartite entanglement---factors that generally make the classical simulation difficult---yet the state remains efficiently simulable due the choice in the geometric arrangement. We note that this hybrid tensor network does not exploit the diagrammatic nature of Quon, although it is motivated by the concepts from Quon. Third, as a first nontrivial application of Quon, we provide a way to generate a large class of tensor networks with controlled tractability, by combining three simple diagrammatic moves that fully utilize the diagrammatic nature of Quon. These three moves are sufficiently versatile that, with only a few manipulations, the punctured matchgates can be readily generated. Moreover, they are capable of creating a wide variety of tractable tensor networks that are much more convoluted than Cliffords and punctured matchgates. The resulting tractable tensor networks generally exhibit high non-Cliffordness and high non-matchgateness, with the Clifford and matchgate components intertwine geometrically. See FIG.~\ref{fig:venndiagram} for an illustration of how newly introduced ansatze are positioned in the space of tensor networks. Finally, as a second application of Quon, we provide simple diagrammatic proofs of the Kramers-Wannier duality and the star-triangle relation for the two-dimensional classical Ising model, therefore showcasing the flexibility and power of our diagrammatic approach.

\begin{figure}[t]
\centering
\includegraphics[width=0.45\textwidth]{./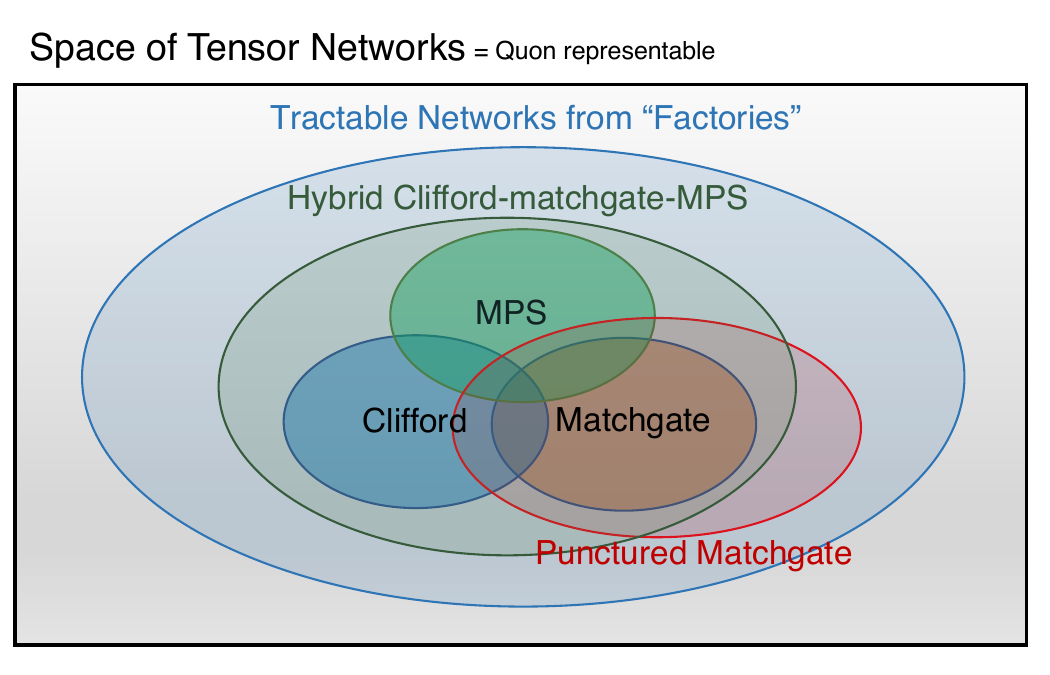}
\caption{A Venn diagram illustrating the space of tensor networks and tractable tensor networks considered in our work. The 2D Quon language is universal, capable of representing any arbitrary tensor network using a 2D Quon diagram. In this work, we focus on three known tractable classes of tensor networks---Cliffords, matchgates, and MPS---as well as three newly introduced classes of tensor networks---punctured matchgates, hybrid Clifford-matchgate-MPS, and tractable networks from ``factories,'' where newly introduced class are described in Sec.~\ref{sec:quon-universal} and Sec.~\ref{subsec:factory}. While tensor networks described by punctured matchgates and hybrid Clifford-matchgate-MPS exhibit a clear geometric separation of the Clifford and matchgate components, those obtained from ``factories'' generally do not have a clear geometric separation of the Clifford and matchgate parts.}
\label{fig:venndiagram}
\end{figure}

\section{Summary of Results}
Our four key results can be summarized in the following. Additionally, we highlight the relationship and differences between previous works and our approach. 

\subsection{Universality of 2D Quon Language}
In Sec.~\ref{sec:Quon-diagrams}, we formally introduce the two-dimensional (2D) Quon language as the 2D version of the previously introduced the three-dimensional (3D) Quon language~\cite{liu2017quon, liu2019quon}. In the 2D Quon language, we employ 2D Quon diagrams to pictorially represent quantum states, quantum circuits, or generally tensor networks. One may view a 2D Quon diagram as a form of tensor network; however unlike in a conventional tensor network where each tensor leg represents a qubit degree of freedom, our approach utilizes ``fractionalized'' Majorana degrees of freedom to visualize the content of the tensor, offering additional structure and flexibility that clearly illuminates important information when it comes to tractable tensor networks. 

\begin{figure*}[t]
\centering
\includegraphics[width=\textwidth]{./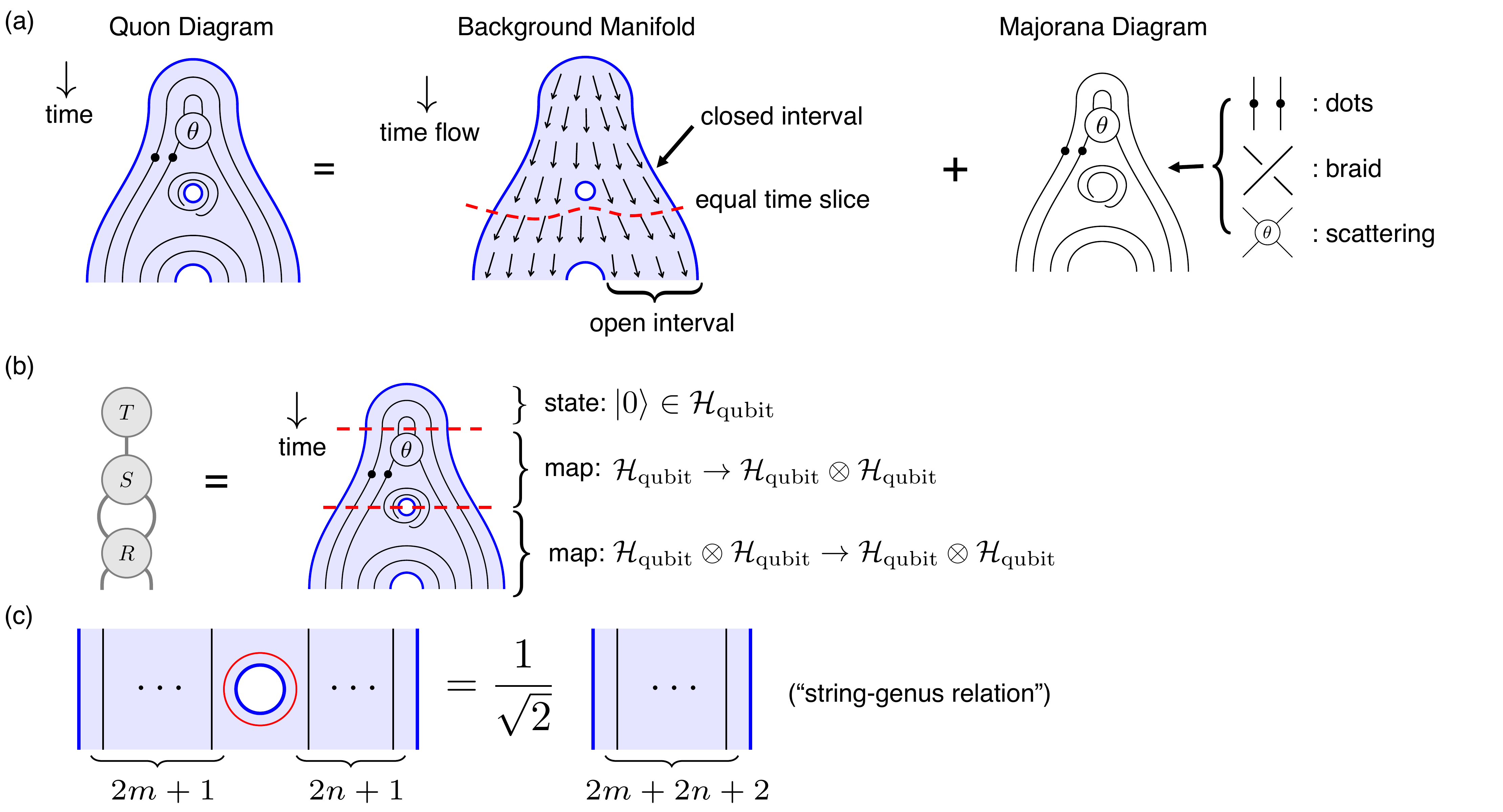}
\caption{(a) Example of a \textit{2D Quon diagrm}, which consists of a \textit{background manifold} and a \textit{Majorana Diagram}. The background manifold has a temporal and a spatial direction; in this specific example, the arrows in the bulk indicate the time direction at each point. In most cases, we simply adopt the convention of a global time direction, indicated as an arrow outside the bulk (shown together with ``time flow''). With this global time direction, an equal time slice is given by a horizontal cut. If the global time direction is not shown, we assume that time flows from top to bottom. Furthermore, the background manifold imposes the parity constraints (see main text for details.). The background manifold generally has multiple \textit{boundary components} and each boundary component consists of \textit{open boundaries} and \textit{closed boundaries}. Closed boundaries are highlighted in thick blue to distinguish them from open boundaries. The Majorana diagram illustrates the worldlines of Majoranas as thin black lines, together with \textit{dots}, \textit{braids}, and \textit{scattering elements}. Dots indicate Majorana operators actions, braids indicate the braiding process between two Majoranas, and the scattering element indicates the scattering (or a free-fermion evolution) between two Majoranas with the scattering angle $\theta$ shown inside the circle. A braid is a special type of the scattering element, with the scattering angle $\theta = \pm \frac{\pi}{2}$. A scattering element is called \textit{generic} if the scattering angle is not an integer multiple of $\pi/2$. (b) A 2D Quon diagram can be interpreted in two ways: as a dynamical quantum process, or as providing an internal structure of a tensor network. In the quantum process interpretation, this example describes a process with the creation of Majorana pairs, ultimately resulting in eight Majorana modes in a certain quantum state. Alternatively, the example can be viewed of as the contractions of three tensors $T$, $S$, and $R$, corresponding to three 2D Quon diagrams (separated by two red dashed lines) glued together. (c) The \textit{string-genus relation}, an example of a diagrammatic rewriting rule, amounts to removing a hole together with its enclosing Majorana loop. This example shows how the topology of the background manifold can be altered in the Quon diagram without changing its semantic content.}
\label{fig:quon-example}
\end{figure*}

\emph{Quon Diagram---}A 2D Quon diagram consists of two components: the \textit{background manifold} and the \textit{Majorana diagram}, as outlined in FIG.~\ref{fig:quon-example} (a). The \textit{background manifold} is a two-dimensional manifold embedded in a plane, with one direction corresponding to time and the other to space. As a topological manifold embedded in a plane, it may have holes, and thus it can be topologically non-trivial. In principle, each point in the bulk of the background manifold carries its own time-direction, as indicated by arrows in FIG.~\ref{fig:quon-example} (a). Importantly, only the relative time ordering between two points matters, reflecting the \textit{topological} nature of Quon. Therefore to avoid clutter, we often adopt a global time direction (flowing from top-to-bottom) to represent the time flow of the manifold. In this case, an equal-time contour line is given by a line perpendicular to the time direction. A background manifold consists of a bulk and a boundary, where the boundary consists of multiple boundary components when the bulk contains one or a few holes. Each boundary component is a disjoint union of \textit{open} and \textit{closed} intervals, where the latter are depicted as thick blue line, as shown in FIG.~\ref{fig:quon-example} (a). As we will see later, each \textit{open interval} corresponds to an \textit{open leg} in a conventional tensor network. The \textit{Majorana diagram} is a diagrammatic representation of the worldlines of Majorana zero modes, where the worldlines are usually depicted as thin black or red lines, as shown in FIG.~\ref{fig:quon-example} (a). It describes how Majoranas are pair-created or pair-annihilated, as well as how Majoranas scatter in a pair-wise fashion. Similar to the global time direction of the background manifold, we usually assume that the time flows from top to bottom in the Majorana diagram. The Majorana worldlines can form closed strings, or they can be open, in which case always an even number of endpoints appear near one another. When a Majorana diagram is embedded in a background manifold, the Majorana worldlines always terminate on the open intervals of the boundary of the background manifold. We are primarily interested in the case where four Majoranas terminate on each open interval of the background manifold. Specifically, we encode one qubit using four Majoranas by utilizing the parity-even subspace of the two-qubit Hilbert space, spanned by two complex fermions formed from these four Majoranas, a method referred to as the \textit{sparse encoding} in the literature~\cite{sarma2015majorana}, or equivalently as the Kitaev encoding in the Kitaev's honeycomb model~\cite{kitaev2006anyons}. This sparse nature of encoding adds flexibility to the Quon language, enabling a rich set of diagrammatic rewriting rules. In fact, the role of the background manifold is to impose the parity-even projections to the embedded Majorana diagram on every (connected part of an) equal-time slice. For an open interval, this amounts to imposing $i^2 \gamma_1 \gamma_2 \gamma_3 \gamma_4 = 1$, where $\gamma_i$ denotes the $i$th Majorana operator. Therefore, the background manifold also tracks the parity-even projections applied to the embedded Majorana diagram, in addition to providing the time ordering. 

Furthermore, Quon diagrams can be glued together to form a bigger diagram, corresponding to compositions or tensor contractions, as illustrated in FIG.~\ref{fig:quon-example} (b). 

Finally, the topology of the manifold can be altered, notably by the \textit{string-genus} relation, a two-dimensional version of the 3D string-genus relation~\cite{liu2017quon, liu2019quon}, which is shown in FIG.~\ref{fig:quon-example} (c). Through the string-genus relation, a hole in the background manifold can be completely removed if it is enclosed by an isolated Majorana loop, which is also removed. This relation also demonstrates the flexibility of the Quon language, enabled by the use of the fractionalized degrees of freedom.

\begin{figure}
\centering
\includegraphics[width=0.4\textwidth]{./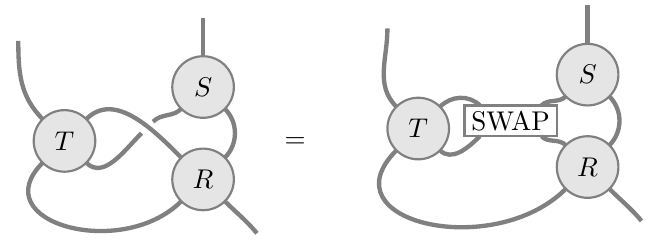}
\caption{A non-planar tensor network can always be reduced to a planar one at the expense of using the SWAP gates.}
\label{fig:planar-TN-via-SWAP}
\end{figure}

\emph{Universality of Quon---}Our first technical result is the \textit{universality of the Quon language}, presented in Sec.~\ref{sec:quon-universal}. By universality, we mean that any arbitrary tensor network\footnote{As previously mentioned, one can always view a quantum state or process as a tensor network, therefore our universality result is quite general.} can be represented by a single 2D Quon diagram, and vice versa, where the conversion between the two can be done efficiently. The key aspects of the universality proof is as follows. First, we construct the 2D Quon diagrams for tensors in a universal generating set, consisting of tensors of rank at most $3$. This set is universal in the sense that any arbitrary tensor, including the SWAP gate which plays an important role later, can be generated from it using tensor operations such as tensor contractions. Next, we observe that any tensor network can be rewritten as a planar tensor network, at the expense of using the SWAP gates as described in FIG.~\ref{fig:planar-TN-via-SWAP}. Given the planar tensor network, we first construct a 2D Quon representation for each tensor in the network using the universal generating set and its Quon representations, and then glue these diagrams together according to the tensor contractions in the network, thereby obtaining a 2D Quon representation of the entire network. Our construction is efficient in the following sense: if the tensor network is specified by tensors each with an efficient representation, e.g., (potentially large-ranked) Clifford and matchgate tensors, and tensors with small ranks, then a representing 2D Quon diagram can always be efficiently constructed. Note that such a representing 2D Quon diagram is not unique due to the diagrammatic rewriting rules. Conversely, given a 2D Quon diagram, compiling it into a conventional tensor network can also be done efficiently by parsing the diagram. 

\begin{figure*}[t]
\centering
\includegraphics[width=\textwidth]{./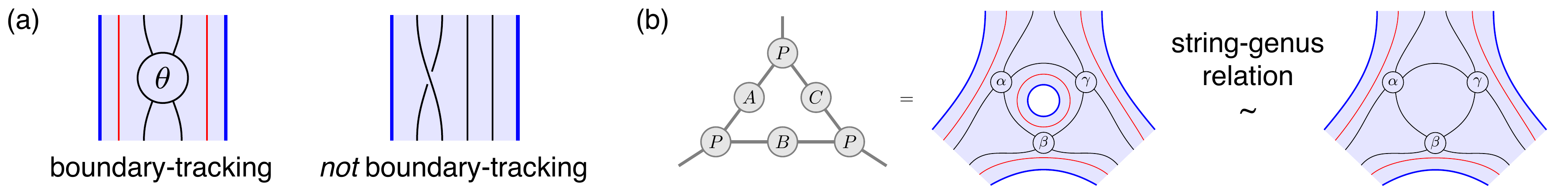}
\caption{(a) Examples of 2D Quon diagrams satisfying or violating the boundary-tracking property. The boundary-tracking Majoranas are colored in red for better visibility. (b) Example of a matchgate tensor network with a loop in the network, where $A$, $B$, $C$, and $P$ represent matchgate tensors. Tensor contractions of the network introduces a hole in the background manifold. However, due to a Majorana loop originated from the boundary-tracking Majorana lines, the hole can be removed via the string-genus relation.}
\label{fig:matchgate-example}
\end{figure*}

\subsection{Pictorial characterizations of Cliffords, matchgates, and punctured matchgates}
Using the universality of Quon, one can always find a 2D Quon diagrammatic representation for an arbitary tensor network, including Clifford and matchgate tensor networks. It turns out that for Clifford and matchgate tensor networks, simple pictorial characterizations exist, as presented in Sec.~\ref{sec:clifford-matchgate-quon-diagrams}. 

\emph{Pictorial characterization of Clifford---}The pictorial characterization of Clifford tensor networks were previously presented in Ref.~\onlinecite{liu2017quon} using the 3D Quon language. In our work, we present the corresponding characterizations using the 2D Quon language. A Clifford tensor network can be represented by a 2D Quon diagram in which the Majorana diagram contains no generic scattering elements---only braids are present. Moreover, the converse is true: if a 2D Quon diagram contains no generic scattering elements, then it represents a Clifford tensor network. 

\emph{Pictorial characterizations of matchgate---}We provide pictorial characterizations of matchgate tensor networks in terms of 2D Quon diagrams. Specifically, we prove that any arbitrary matchgate tensor network has a 2D Quon diagrammatic representation with two properties: (i) the boundary-tracking property and (ii) the background manifold contains no holes. The \textit{boundary-tracking property} asserts that along every closed interval in the boundary of the background manifold, there always exists a Majorana string spatially adjacent to it, tracking the closed interval without scattering or braiding with other Majoranas. See FIG.~\ref{fig:matchgate-example} (a) for examples of 2D Quon diagrams satisfying or violating the boundary-tracking property. We show that properties (i) and (ii) are kept upon planar tensor contractions between matchgate tensors, with the help of the string-genus relation, as demonstrated in FIG.~\ref{fig:matchgate-example} (b). Similar to the pictorial characterization of Clifford tensor networks, we also show that the converse is true: a 2D Quon diagram having properties (i) and (ii) represents a matchgate tensor network. In essence, the matchgate tensor networks correspond to networks of Majorana lines featuring free-fermion scatterings, which are effectively encoded in the parity-even subspace in the 2D Quon diagrammatic representations satisfying properties (i) and (ii). 

\begin{figure}[t]
\centering
\includegraphics[width=0.4\textwidth]{./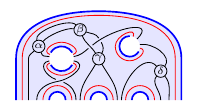}
\caption{A 2D Quon diagram representing a punctured matchgate network. Note that the diagram satisfies the boundary-tracking property, with the boundary-tracking Majorana lines highlighted in red, but it contains holes.}
\label{fig:punctured-matchgate-example}
\end{figure}

\emph{Introduction of a new class I: punctured matchgates---}In addition to the pictorial characterizations of the matchgate tensor networks, we newly introduce and provide a pictorial characterization of the \textit{punctured matchgates}. The punctured matchgates, introduced in Sec.~\ref{sec:planar-TN}, contain matchgates as a strict subset, while remaining tractable for computing their components. The pictorial characterization of the punctured matchgates can be obtained from that of the matchgates by requiring only the boundary-tracking property, potentially having irremovable holes in the background manifold. See FIG.~\ref{fig:punctured-matchgate-example} for an example of a 2D Quon diagram representing a punctured matchgate tensor. 

\emph{Topological interpretation of non-matchgateness---}Our pictorial characterization suggests a topological interpretation of \textit{non-matchgateness}, or equivalently, non-Gaussianity, in terms of \textit{irremovable holes in the background manifold}. In order to promote matchgate tensor networks to a universal one, one can supplement with a \textit{magic gate}, similar to how $T$-gate serves as a magic gate for Clifford computation~\cite{PhysRevA.71.022316}. While it is common to use the SWAP gate or the Hadamard gate as a magic gate for matchgate computation~\cite{PhysRevA.84.022310}, it is known that any pure non-matchgate state can serve as a magic gate via the gate injection for the matchgate quantum computation~\cite{PhysRevLett.123.080503}. We note that both the SWAP and Hadamard gates violate the boundary-tracking property, often leading to irremovable holes in the Quon diagram. We therefore propose \textit{the number of holes} in the background manifold as a measure for non-matchgateness/non-Gaussianity. Indeed, one can design an algorithm to perform a brute-force summation over $2^\textrm{\# of holes}$ matchgate Quon diagrams to evaluate an arbitrary 2D Quon diagram; see earlier literature for mathematical aspects of this algorithm~\cite{cimasoni2007dimers, cimasoni2008dimers, cimasoni2009dimers, Dijkgraaf2009dimer, bravyi2009contraction}. 

\emph{Decomposition theorem---}Using the concept of the punctured matchgates, we prove the \textit{decomposition theorem}: any tensor network can be rewritten as a tensor network consisting of a single Clifford tensor and a single matchgate tensor, presented as Theorem~\ref{thm:Clifford-matchgate-decomposition}. The theorem can be achieved by first properly identifying the Clifford part and the matchgate part in the network, and then contracting these parts separately to form a single Clifford and a single matchgate tensor. Once decomposed, the individual contractions of  the Clifford tensor network and matchgate tensor networks, separately, can be done efficiently. However, the contraction of the closed legs of the resulting (potentially large) Clifford and matchgate tensors are generally intractable as the brute-force contractions scale exponentially with the number of closed legs. 

\emph{Introduction of a new class II: hybrid Clifford-matchgate-MPS---}The decomposition theorem naturally leads to a new tractable tensor network architecture called the \textit{hybrid Clifford-matchgate-MPS}, as presented in Sec.~\ref{sec:quon-universal}. See FIG.~\ref{fig:venndiagram} to see where the hybrid Clifford-matchgate-MPS fits within the space of tensor networks. This class is a matrix-product state (MPS) type of tensor network---or more generally a tree structure is possible---where each tensor in the network is either a Clifford, a matchgate, or a low-rank tensor. While the number of open tensor legs attached to each Clifford and matchgate can reach $\mathrm{O}(n)$, where $n$ is the total number of external legs, the number of closed tensor legs between nearby tensors is kept small to ensure tractability; otherwise, our decomposition theorem implies that using two tensors---a Clifford and a matchgate---would be powerful enough to represent an arbitrary tensor network. By design, our tensor network spatially separates the regions corresponding to the Clifford part and the matchgate part in a way that ensures tractability. Therefore, one can easily understand that the contraction of the network is computationally tractable despite they exhibit high non-Cliffordness, high non-matchgateness, and large amount of entanglement entropy. Perhaps more surprisingly, this class offers an intuitive understanding behind the tractability of a special class of non-planar dimer models defined on so-called \textit{single-crossing minor-free graphs}, studied in computer-science literature~\cite{vazirani1989nc, straub2016counting, curticapean2014counting, likhosherstov2020tractable}.

\subsection{Applications: Factories for tractable tensor networks and Diagrammatic proofs of the Kramers-Wannier duality and the star-triangle relation}
\emph{Introduction of a new class III: ansatze from factories for tractable networks---}In addition to the punctured matchgates and hybrid Clifford-matchgate-MPS, we introduce a novel diagrammatic approach to generate a large class of tensor networks with controlled tractability. Specifically, we introduce three diagrammatic “moves” called \textit{stretching}, \textit{inserting}, and \textit{switching}, where each move modifies a 2D Quon diagram to produce new tensor networks. The first move is stretching, where a segment of Majorana string is stretched to a target location; during stretching, the string segment passes over other diagrammatic elements, introducing several braids into the Quon diagram. Notably, one can even stretch the Majorana segment to the ``outside'' of the Quon diagram, in which case it can increase the total number of tensor legs. The second move is inserting, which locally inserts an arbitrary Quon diagram. By itself, the inserting move does not alter the tensor-network content; however, when combined with the stretching move, the two moves are versatile enough to generate the entire class of the punctured matchgates and beyond. Crucially, neither the stretching nor the inserting move increases the tractability of the tensor network despite their generative power. The third and final move is switching, which replaces a local component of a Quon diagram, such as a braid, with another, for instance the opposite braid. Within the switching move, multiple specific replacements are available, where all except one preserve the tractability of the network. The exception, replacing a braid with an arbitrary scattering element, increases the computational cost by at most a factor of $2$. We track the number of such changes as the \textit{number of transformed scatterings}, which plays a role analogous to the $T$-count in $T$-gate-doped Clifford circuits or the bond dimension in MPS~\cite{PhysRevA.71.022316, schollwock2011density}. This analogy holds because (i) as the number of transformed scatterings increases without bound, the entire space of tensor networks is eventually spanned, and (ii) the computational cost scales exponentially with this number. We remark that different combinations of these three simple moves can generate drastically different tensor networks even when starting from the same 2D Quon diagram; hence we term our method ``factories'' for tractable tensor networks. 

Finally, using the 2D Quon language, we provide diagrammatic proofs of the \textit{Kramers-Wannier duality} and the \textit{star-triangle relation} of the two-dimensional (2D) classical Ising model, both of which highlight the core features of the 2D classical Ising model. 

\emph{Kramers-Wannier duality---}The Kramers-Wannier (KW) duality~\cite{PhysRev.60.252, baxter2016exactly} relates the partition function of the classical Ising model defined on a planar graph to that on its dual lattice, with modified coupling constants. The KW duality highlights key properties of the 2D classical Ising model, notably identifying the self-dual point in the square lattice, where the critical phenomena occur. Our diagrammatic proof of the KW duality follows from the following observations. First, the partition function of the 2D classical Ising model can be expressed as a matchgate tensor network. Next, by applying a series of the string-genus relation and the space-time duality of the Majorana scattering, we change the topology of the 2D Quon diagram and finally obtain the partition function of the 2D classical Ising model defined on its dual lattice. During the process, we re-group the Majoranas, which naturally leads to the half-translation of the KW duality~\cite{seiberg2024non}. 

\emph{Star-Triangle Relation---}The star-triangle relation of the 2D classical Ising model~\cite{PhysRev.65.117, baxter2016exactly} relates the partition function of a local ``star''-shaped region to that of a ``triangle''-shaped region of two-dimensional Ising model, and vice versa. In essence, it captures the integrability of the underlying model~\cite{baxter2016exactly}. Using 2D Quon language, we derive the star-triangle relation via the Yang-Baxter equation of the scatterings, which is equivalent the Euler decomposition of \textsc{SO(3)} with the spinor representation in terms of Majoranas. 

\subsection{Relation to earlier results}
\emph{Comparison between Quon approach and topological quantum computation---}The idea of using fractionalized particles, or anyons, to perform quantum computation has been extensively studied in the context of topological quantum computation~\cite{nayak19962n, freedman2003topological, sarma2015majorana, RevModPhys.80.1083, wang2010topological}. Broadly speaking, our Quon language corresponds to the topological quantum computation using \textit{Majorana zero modes} (MZMs) with additional details involving scattering between Majoranas and projections imposed by the background manifold. A MZM can be realized at the core of a point-like topological defect in a topological ordered state, such as on the boundary of $(1+1)$d $p$-wave superconductor, at the vortex core of $(2+1)$d $(p+ip)$-wave superconductor, on the surface of $(3+1)$d topological insulator proximitized by a superconductor, or as a twist defect in a surface code~\cite{PhysRevB.61.10267, kitaev2001unpaired, PhysRevB.63.224204, PhysRevB.95.235305, 10.21468/SciPostPhys.8.6.091, nayak19962n, PhysRevX.7.021029}. With multiple MZMs, there exist degenerate eigenstates in which quantum state can be encoded. By adiabatically exchanging two localized MZMs in a spatially separated manner, i.e., braiding them, one can perform a quantum gate as well. Mathematically, this process can be captured by the braiding of Ising anyons in the Ising anyon theory~\cite{nayak19962n, freedman2003topological, RevModPhys.80.1083, wang2010topological}. While the Ising anyon theory also includes fermions, which appear when fusing multiple Ising anyons, in our setting, fermions are condensed~\cite{aasen2019fermion} so that the usual fermion lines that appear in the fusion process do not appear, but leaving only their presence at the end points of the fermion lines, depicted as dots~\cite{jaffe2017planar, liu2019quon, aasen2019fermion}. Here, we instead construct the 2D Quon language purely starting from MZMs, extending the construction presented in Refs.~\onlinecite{jaffe2017planar, jaffe2018holographic}. We emphasize that our work focuses numerical and analytic methods for understanding quantum states or processes, rather than proposing implementations on topological quantum computing platforms. Experimental realization of operations such as parity projections requires post-selections, which demands a large amount of resource in experiments, but not necessarily for our purposes. Throughout the manuscript, we will simply refer to MZMs as Majoranas. 

\emph{Previous works on Clifford, matchgate, and MPS hybrids---}Previously, several lines of work have focused on creating hybrids of Clifford, matchgate, and MPS, particularly for constructing variational ansatze. Notable examples include a matchgate unitary acted on a non-Gaussian unitary---specifically, time-evolution under a density-density interaction---applied to a matchgate state, or its variants including bosonic generalizations~\cite{shi2018variational}, Clifford augmented MPS structure~\cite{PhysRevLett.133.190402}, Clifford augmented matchgate (Hartree-Fock) state~\cite{mishmash2023hierarchical, projansky2024extending}, and Clifford conjugated matchgates~\cite{projansky2024extending} have been explored. These states share some similarities with our hybrid Clifford-matchgate-MPS states, in that they allow a clear geometric separatation of the Clifford and the matchgate components, and computing the expectation value of an local observable remains tractable. As previously mentioned, crucially, our hybrid ansatz captures the underlying tractability of the dimer models defined on single-crossing minor-gree graphs~\cite{vazirani1989nc, straub2016counting, curticapean2014counting, likhosherstov2020tractable}. In contrast, punctured matchgates and states from ``factories'' have a distinct notion of tractability: while computing each component remains tractable, the evaluation of the expectation value of a local observable may become intractable. We note that recent neural-network-based variational wavefunctions~\cite{PhysRevLett.122.226401, acevedo2020vandermonde, von2022self, chen2023exact, geier2025attention} share some similarities with punctured matchgates. By generalizing the Hartree-Fock wavefunction (or matchgate), each component of a neural network wavefunction can be expressed as a single determinant of a matrix whose entries depends on multi-particle coordinates, rather than on a single-particle orbitals as in traditional Hartree-Fock. A component of a punctured matchgate can likewise be expressed as now a Pfaffian of a matrix, with its entries encoding correlations. However, we emphasize punctured matchgates possess additional structure, namely, a geometrical interpretation inherited from matchgates. We also note that tractable tensor networks generated from factories generally lack a simple geometric separation of the Clifford and the matchgate components; while one may invoke the decomposition theorem to rewrite the tensor network as consisting of a single Clifford and a single matchgate, these two tensors are typically connected by a large number of closed tensor legs. 

\emph{Relationship between ZX-calculus and Quon---}Several graphical calculi for quantum information--often referred to as categorical tensor networks~\cite{biamonte2011categorical}---have been developed, most notably the ZX-calculus and its variants~\cite{coecke2007graphical, coecke2008interacting, coecke2011interacting, coecke2010compositional, van2020zx}, for reasoning about tensor networks and quantum information. In the ZX-calculus, a tensor network is expressed in terms of the generators called Z-and X-spiders. Equipped with a set of rewriting rules, the ZX-calculus is universal and complete~\cite{ng2017universal, ng2018completeness, hadzihasanovic2018two, jeandel2020completeness, wang2022completeness}, similar to the Quon language. The crucial difference is that the Quon language employs fractionalized degrees of freedom, rather than the bare qubit degrees of freedom used in the ZX-calculus, thereby offering additional flexibilities compared to the ZX-calculus. For example, an arbitrary Clifford tensor network can be represented by the ZX-calculus using a diagram in which the phases of the Z- and X- spiders are multiples of $\pi/2$~\cite{backens2014zx}. However, representing matchgate tensor networks using the ZX-calculus is not as natural, and rewriting them tends to be generally complicated. Alternatively, one can choose a different set of generators, e.g., Z- and W-spiders to form the ZW-calculus, which is also a universal and complete graphical calculus~\cite{hadzihasanovic2015diagrammatic, hadzihasanovic2018two}. Using the ZW-calculus, its planar fragment, known as the pW-calculus, is complete for matchgate tensor networks and allows for efficient computations~\cite{carette2023compositionality}. On the other hand, it is complicated to manipulate Clifford tensor networks within the ZW-calculus framework. The Quon language can be considered a graphical calculus unifying the ZX- and ZW-calculus, in that the rewriting rules of the ZX- and ZW-calculus can be derived from more fundamental ones in the Quon language~\cite{liu2017quon}. We note that the ZXW-calculus has also been proposed, with its original motivation rooted in quantum machine learning~\cite{koch2022quantum}. While one may treat the Clifford and matchgate tensor networks on an equal footing under the ZXW-calculus, its rewriting rules~\cite{poor2023completeness} are much more complex than those of Quon, making the Quon language a more natural choice.

\section{Quon Diagrams}
\label{sec:Quon-diagrams}
In this section, we introduce two-dimensional (2D) Quon language. The main objects in the Quon language are pictorial elements called \textit{Quon diagrams}. A Quon diagram consists of two components: the \textit{Majorana diagram} and the \textit{background manifold}. The Majorana diagram depicts the worldlines of Majorana zero modes and is embedded the background manifold, where the background manifold imposes additional constraints on the Majorana diagram. In the remainder of this section, we outline all the diagrammatic rules that are relevant for our study. 

\subsection{Majorana Diagram and Diagrammatic Rewriting Rules}
\label{sec:Majorana-diagrams}
We first introduce the Majorana diagram. A Majorana diagram is a two-dimensional diagram of lines, or equivalently strings, depicting the spacetime trajectories, or worldlines, of Majoranas. Here, we only highlight key aspects of the Majorana diagrams relevant for our results, following the convention used in Ref.~\onlinecite{jaffe2018holographic, liu2017quon, jaffe2017planar}, where additional details, except the scattering element, can be found. 

As briefly mentioned, a string in a Majorana diagram depicts the worldline of a Majorana; by convention, time flows from top to bottom, unless explictly stated otherwise. At each time-slice (a horizontal cut), an even number of Majorana lines is present, forming a fermionic Fock space. When there are no Majorana line in a time slice, it is interpreted as a $1$-dimensional vector space $\mathbb{C}$. Additionally, if we consider two time slices, the intermediate section of the diagram, bounded by these slices, can be interpreted as a map from the Fock space associated with the upper time-slice to that with the lower time-slice. Therefore, the composition of maps is realized by vertically gluing diagrams, and the tensor product is realized by horizontally gluing diagrams. For example, the following Majorana diagram can be parsed into quantum processes between Majoranas: 
\begin{equation}
\begin{tikzpicture}[scale=1.2]
\draw (0, 0.8)--(0, -0.8);

\draw (0.4, 0.8)--(0.4, 0.4) to[out=-90, in=130] (0.55, 0.06);
\draw (0.65, -0.06) to[out=-50, in=90] (0.8, -0.4);

\draw (0.4, -0.8)--(0.4, -0.4) to[out=90, in=-90] (0.8, 0.4) arc (180:0:0.2)--(1.2, -0.4) arc (0:-180:0.2);

\draw [decorate, decoration = {brace}] (1.4, 0.8)--(1.4, 0.45);
\draw [decorate, decoration = {brace}] (1.4, 0.35)--(1.4, -0.35);
\draw [decorate, decoration = {brace}] (1.4, -0.45)--(1.4, -0.8);

\node at (3.6, 0.65) {\scriptsize pair-creation of Majoranas $3$ and $4$};
\node at (3.4, 0.02) {\scriptsize braiding of Majoranas $2$ and $3$};
\node at (3.86, -0.62) {\scriptsize pair-annihilation of Majoranas $3$ and $4$};
\end{tikzpicture} \nonumber
\end{equation}
With this \textit{semantic} meaning in mind, we present in TABLE~\ref{tab:majorana-diagrams} a dictionary relating Majorana states and operators to their corresponding elements in Majorana diagrams. In addition, we provide the corresponding states and operators in a qubit Hilbert space, obtained via the Jordan-Wigner transformation. This process, encoding one qubit per pair of Majoranas, is also known as the \textit{dense encoding} in the literature~\cite{bravyi2002fermionic, sarma2015majorana}. 

Let us explain diagrammatic elements presented in TABLE~\ref{tab:majorana-diagrams}. To begin, the identity evolution is denoted by parallel lines, with each line corresponds to a Majorana mode. Suppose that a horizontal (equal-time) cut supports $2n$ Majorana modes. We typically order them linearly from the left to right, labeling the corresponding Majorana operators as $\gamma_1, \gamma_2 \ldots, \gamma_{2n}$, which satisfy the anticommutation relation $\{ \gamma_j, \gamma_k \} = \delta_{j, k}$. When a Majorana operator $\gamma_j$ acts at a specific time, we diagrammatically represent that action as a \textit{dot} (also called a \textit{charge}) on the $j$th string at that time. Due to anticommuting nature of Majoranas, the relative time ordering of dots is important. Therefore, in principle, no more than one dot can appear on each time slice. However, it is convenient to introduce the following diagrammatic notation, in which two dots appear simultaneously: 
\begin{equation}
\label{eq:pair-of-dots}
\raisebox{-0.4cm}{\tikz{
\draw (-0.2, 0)--(-0.2, 1);
\draw (0, 0)--(0, 1);
\draw (1, 0)--(1, 1);
\draw (1.2, 0)--(1.2, 1);

\node at (-0.2, 0.5) [circle, fill, inner sep=1pt] {};
\node at (0.5, 0.5) {$\cdots$};
\node at (1.2, 0.5) [circle, fill, inner sep=1pt] {};
}} \,\, := i \,\, \raisebox{-0.4cm}{\tikz{
\draw (-0.2, 0)--(-0.2, 1);
\draw (0, 0)--(0, 1);
\draw (1, 0)--(1, 1);
\draw (1.2, 0)--(1.2, 1);

\node at (-0.2, 0.25) [circle, fill, inner sep=1pt] {};
\node at (0.5, 0.5) {$\cdots$};
\node at (1.2, 0.75) [circle, fill, inner sep=1pt] {};
}} \,\, , 
\end{equation}
which also appears in TABLE~\ref{tab:majorana-diagrams} as the \textit{parity operator} between two Majoranas. More generally, we allow an even number of dots to appear simultaneously, with the convention that dots are ``paired up'' one by one from left to right. Hence, the \textit{global parity} operator in TABLE~\ref{tab:majorana-diagrams} equals 
\begin{equation}
\label{eq:global-fermion-parity}
\hat{P}_{2n} = \,\, \raisebox{-0.4cm}{\tikz{
\draw (-0.6, 0)--(-0.6, 1);
\draw (-0.4, 0)--(-0.4, 1);
\draw (-0.2, 0)--(-0.2, 1);
\draw (0, 0)--(0, 1);
\draw (1, 0)--(1, 1);
\draw (1.2, 0)--(1.2, 1);

\node at (-0.6, 0.5) [circle, fill, inner sep=1pt] {};
\node at (-0.4, 0.5) [circle, fill, inner sep=1pt] {};
\node at (-0.2, 0.5) [circle, fill, inner sep=1pt] {};
\node at (0, 0.5) [circle, fill, inner sep=1pt] {};
\node at (0.525, 0.5) {$\cdots$};
\node at (1, 0.5) [circle, fill, inner sep=1pt] {};
\node at (1.2, 0.5) [circle, fill, inner sep=1pt] {};
}} \,\, = \,\, \raisebox{-0.4cm}{\tikz{
\draw (-0.6, 0)--(-0.6, 1);
\draw (-0.4, 0)--(-0.4, 1);
\draw (-0.2, 0)--(-0.2, 1);
\draw (0, 0)--(0, 1);
\draw (1, 0)--(1, 1);
\draw (1.2, 0)--(1.2, 1);

\node at (-0.6, 0.2) [circle, fill, inner sep=1pt] {};
\node at (-0.4, 0.2) [circle, fill, inner sep=1pt] {};
\node at (-0.2, 0.4) [circle, fill, inner sep=1pt] {};
\node at (0, 0.4) [circle, fill, inner sep=1pt] {};
\node at (0.525, 0.6) {$\cdots$};
\node at (1, 0.8) [circle, fill, inner sep=1pt] {};
\node at (1.2, 0.8) [circle, fill, inner sep=1pt] {};

}}  \,\, . 
\end{equation}
Note that this ``pairing'' convention is crucial as 
\begin{equation}
\label{eq:four-dots-ambiguity}
\raisebox{-0.33cm}{
\begin{tikzpicture}
\draw (0, 0)--(0, 0.8);
\draw (0.3, 0)--(0.3, 0.8);
\draw (0.6, 0)--(0.6, 0.8);
\draw (0.9, 0)--(0.9, 0.8);
\draw (1.2, 0)--(1.2, 0.8);
\draw (1.5, 0)--(1.5, 0.8);

\node at (0.3, 0.4) [circle, fill, inner sep=1pt] {};
\node at (0.9, 0.4) [circle, fill, inner sep=1pt] {};
\node at (1.2, 0.4) [circle, fill, inner sep=1pt] {};
\node at (1.5, 0.4) [circle, fill, inner sep=1pt] {};
\end{tikzpicture}
} = 
\raisebox{-0.33cm}{
\begin{tikzpicture}
\draw (0, 0)--(0, 0.8);
\draw (0.3, 0)--(0.3, 0.8);
\draw (0.6, 0)--(0.6, 0.8);
\draw (0.9, 0)--(0.9, 0.8);
\draw (1.2, 0)--(1.2, 0.8);
\draw (1.5, 0)--(1.5, 0.8);

\node at (0.3, 0.25) [circle, fill, inner sep=1pt] {};
\node at (0.9, 0.25) [circle, fill, inner sep=1pt] {};
\node at (1.2, 0.55) [circle, fill, inner sep=1pt] {};
\node at (1.5, 0.55) [circle, fill, inner sep=1pt] {};
\end{tikzpicture}
} \ne \raisebox{-0.33cm}{ 
\begin{tikzpicture}
\draw (0, 0)--(0, 0.8);
\draw (0.3, 0)--(0.3, 0.8);
\draw (0.6, 0)--(0.6, 0.8);
\draw (0.9, 0)--(0.9, 0.8);
\draw (1.2, 0)--(1.2, 0.8);
\draw (1.5, 0)--(1.5, 0.8);

\node at (0.3, 0.25) [circle, fill, inner sep=1pt] {};
\node at (0.9, 0.55) [circle, fill, inner sep=1pt] {};
\node at (1.2, 0.25) [circle, fill, inner sep=1pt] {};
\node at (1.5, 0.55) [circle, fill, inner sep=1pt] {};
\end{tikzpicture}
}  . 
\end{equation}
We extensively use the following diagrammatic notation, in which we place a number to the left of a dot to indicate the total number of dots: 
\begin{equation}
\raisebox{-0.45cm}{\tikz{
\draw (0, 0)--(0, 1);

\coordinate[label = left:$k$] (A) at (0, 0.5);
\node at (A) [circle, fill, inner sep=1pt] {};
}} \,\, := \begin{cases}
\,\, \raisebox{-0.45cm}{\tikz{
\draw (0, 0)--(0, 1);
}} \qquad \textrm{if $k=0$,} \\
\,\, \raisebox{-0.45cm}{\tikz{
\draw (0, 0)--(0, 1);

\node at (0, 0.5) [circle, fill, inner sep=1pt] {};
}} \qquad \textrm{if $k=1$.}
\end{cases}
\end{equation}
Using $2n$ Majorana modes, one can form $n$ complex fermions via $c_k = (\gamma_{2k-1} - i \gamma_{2k})/2$ and $c_k^\dagger = (\gamma_{2k-1} + i \gamma_{2k})/2$. The vacuum state $\vert \textrm{vac} \rangle$ in TABLE~\ref{tab:majorana-diagrams} precisely denotes the normalized state satisfying $c_k \vert \textrm{vac} \rangle = 0$ for all $k$. The \textit{caps} and \textit{cups}, used in representing the vacuum state, can be understood as the pair-creation and pair-annihilation processes of Majoranas, respectively. Note that a factor $2^{-n/4}$, which multiplies the $n$-cap diagram, is called the \textit{scalar coefficient}; generally, a linear superposition of diagrams with associated scalar coefficients is possible. There are special linear combinations of diagrams, which we denote with distinct diagrammatic notations---most notably for the \textit{braids}\footnote{We use a braid to represent the braiding process between two Majoranas. This might seem odd, since, in our context, Majoranas are confined to one spatial dimension, a setting in which physical braiding is impossible. One could regard a braid as a special case of a scattering element. Alternatively, a braid can be interpreted as the two-dimensional projection of the worldline resulting from the physical braiding of Majoranas in (2+1)-dimensional spacetime, analogous to how a knot diagram represents a two-dimensional projection of a knot in three-dimensional space.} and the \textit{scattering elements} (see TABLE~\ref{tab:Majorana-diagram-expansion} for various examples of how diagrammatic elements expressed as linear combinations of other diagrammatic elements). In fact, braids are special kinds of scatterings where the scattering angle is $\theta = \pm \frac{\pi}{2}$; however, we use separate diagrammatic notations because braids play important roles in representing Clifford circuits. We refer to the scattering element that does not reduce to either a braid (with scattering angle $\theta = \pm \frac{\pi}{2}$) or a pair of dots ($\theta = \pm \pi$, see TABLE~\ref{tab:majorana-rewriting-rules}) as a \textit{generic} scattering element (i.e., one for which $\theta$ is \textit{not} an integer multiple of $\frac{\pi}{2}$). Finally, the $\dagger$-operation in a Majorana diagram corresponds to the vertical reflection, where each scattering angle is mapped to its negation ($\theta \mapsto -\theta$) and the scalar coefficient is mapped to its complex conjugation ($c \mapsto c^*$). 

\begin{table*}[ht!]
\centering
\begin{tabular}{|c|c|c|c|}
\cline{2-4}
\multicolumn{1}{c|}{} & \multicolumn{2}{c|}{state/operator} & \multirow{2}{*}{Majorana diagram} \\ \cline{2-3}
\multicolumn{1}{c|}{} & fermionic Fock space & qubit Hilbert space & \\ \hline \hline 
identity operator & $\openone : \mathcal{H}_n \to \mathcal{H}_n$ & $\openone : H_n \to H_n$ & \raisebox{-0.6cm}{
\tikz{
\draw (-1.2, 0)--(-1.2, 0.8);
\draw (-0.8, 0)--(-0.8, 0.8);

\draw (0.8, 0)--(0.8, 0.8);
\draw (1.2, 0)--(1.2, 0.8);

\node at (0.05, 0.4) {$\cdots$};

\node at (-1.2, -0.2) {\scriptsize $1$};
\node at (-0.8, -0.2) {\scriptsize $2$};
\node at (0.05, -0.2) {\scriptsize $\cdots$};
\node at (1.2, -0.2) {\scriptsize $2n$};
}} \rule[-1ex]{0pt}{6ex} \\ \hline 
Majorana operator & $\gamma_{j}$ & 
$\begin{cases}
\big[ \prod_{a=1}^{k-1} Z_a \big] X_k, & \textrm{if } j=2k-1 \\
i \big[ \prod_{a=1}^{k} Z_a \big] X_k, & \textrm{if } j=2k 
\end{cases}$ & \raisebox{-0.6cm}{
\tikz{
\draw (-1.2, 0)--(-1.2, 0.8);

\draw (-0.4, 0)--(-0.4, 0.8);
\draw (0, 0)--(0, 0.8);
\draw (0.4, 0)--(0.4, 0.8);

\draw (1.2, 0)--(1.2, 0.8);

\node at (-0.75, 0.4) {$\cdots$};
\node at (0.85, 0.4) {$\cdots$};

\node at (0, 0.4) [circle, fill, inner sep=1pt] {};

\node at (-1.2, -0.2) {\scriptsize $1$};
\node at (-0.6, -0.2) {\scriptsize $\cdots$};
\node at (0, -0.2) {\scriptsize $j$};
\node at (0.6, -0.2) {\scriptsize $\cdots$};
\node at (1.2, -0.2) {\scriptsize $2n$};
}} \rule[-1.5ex]{0pt}{7ex} \\ \hline 
parity operator & $i \gamma_{j} \gamma_{j+1}$ & 
$\begin{cases}
Z_{k}, & \textrm{if } j=2k-1 \\ 
X_{k} X_{k+1}, & \textrm{if } j=2k \\ 
\end{cases}$
& \raisebox{-0.6cm}{
\tikz{
\draw (-1.2, 0)--(-1.2, 0.8);
\draw (-0.4, 0)--(-0.4, 0.8);

\draw (0, 0)--(0, 0.8);
\draw (0.6, 0)--(0.6, 0.8);

\draw (1, 0)--(1, 0.8);
\draw (1.8, 0)--(1.8, 0.8);

\node at (-0.75, 0.4) {$\cdots$};
\node at (1.45, 0.4) {$\cdots$};

\node at (0, 0.4) [circle, fill, inner sep=1pt] {};
\node at (0.6, 0.4) [circle, fill, inner sep=1pt] {};

\node at (-1.2, -0.2) {\scriptsize $1$};
\node at (-0.6, -0.2) {\scriptsize $\cdots$};
\node at (0, -0.2) {\scriptsize $j$};
\node at (0.6, -0.2) {\scriptsize $j+1$};
\node at (1.3, -0.2) {\scriptsize $\cdots$};
\node at (1.8, -0.2) {\scriptsize $2n$};
}} \rule[-1ex]{0pt}{7ex} \\ \hline 
global parity operator & $\displaystyle \hat{P}_{n} = \prod_{k=1}^n (i \gamma_{2k-1} \gamma_{2k} )$ & $\displaystyle \prod_{k=1}^n Z_k$ & \raisebox{-0.6cm}{
\tikz{
\draw (-1.2, 0)--(-1.2, 0.8);
\draw (-0.8, 0)--(-0.8, 0.8);
\draw (-0.4, 0)--(-0.4, 0.8);
\draw (0, 0)--(0, 0.8);

\draw (1.4, 0)--(1.4, 0.8);
\draw (1.8, 0)--(1.8, 0.8);

\node at (0.75, 0.4) {$\cdots$};

\node at (-1.2, 0.4) [circle, fill, inner sep=1pt] {};
\node at (-0.8, 0.4) [circle, fill, inner sep=1pt] {};
\node at (-0.4, 0.4) [circle, fill, inner sep=1pt] {};
\node at (0, 0.4) [circle, fill, inner sep=1pt] {};
\node at (1.4, 0.4) [circle, fill, inner sep=1pt] {};
\node at (1.8, 0.4) [circle, fill, inner sep=1pt] {};

\node at (-1.2, -0.2) {\scriptsize $1$};
\node at (-0.8, -0.2) {\scriptsize $2$};
\node at (-0.4, -0.2) {\scriptsize $3$};
\node at (0, -0.2) {\scriptsize $4$};
\node at (0.75, -0.2) {\scriptsize $\cdots$};
\node at (1.8, -0.2) {\scriptsize $2n$};
}} \rule[-1ex]{0pt}{6ex} \\ \hline \hline 
\raisebox{-0.2cm}{\multirow{2}{*}{ket state}} & $\vert \textrm{vac} \rangle$ & $ \vert 0 \rangle^{\otimes n}$ & $2^{-n/4}$ \raisebox{-0.1cm}{\tikz{
\draw (0, 0) arc (180:0:0.3);
\draw (0.8, 0) arc (180:0:0.3);
\draw (2.05, 0) arc (180:0:0.3);

\node at (1.75, 0.1) {$\cdots$};
}} \rule[-1.4ex]{0pt}{4ex} \\ \cline{2-4} 
& $ (c_1^\dagger)^{p_1} (c_2^\dagger)^{p_2} \ldots (c_n^\dagger)^{p_n} \vert \textrm{vac} \rangle$ & $\vert p_1, p_2, \ldots, p_n \rangle$ & $2^{-n/4}$ \raisebox{-0.3cm}{\tikz{
\draw (0, 0) arc (180:0:0.3);
\draw (0.8, 0) arc (180:0:0.3);
\draw (2.05, 0) arc (180:0:0.3);

\draw (0, 0)--(0, -0.4);
\draw (0.6, 0)--(0.6, -0.4);
\draw (0.8, 0)--(0.8, -0.4);
\draw (1.4, 0)--(1.4, -0.4);
\draw (2.05, 0)--(2.05, -0.4);
\draw (2.65, 0)--(2.65, -0.4);

\node at (0, -0.25) [circle, fill, inner sep=1pt] {};
\node at (0.8, -0.15) [circle, fill, inner sep=1pt] {};
\node at (2.05, -0.05) [circle, fill, inner sep=1pt] {};

\node at (-0.2, -0.25) {\scriptsize $p_1$};
\node at (0.6, -0.15) {\scriptsize $p_2$};
\node at (1.85, -0.05) {\scriptsize $p_n$};

\node at (1.75, -0.28) {$\cdots$};
}} \rule[-2ex]{0pt}{6ex} \\ \hline 
\raisebox{-0.25cm}{\multirow{2}{*}{bra state}} & $\langle \textrm{vac} \vert$ & $ \langle 0 \vert^{\otimes n}$ & $2^{-n/4}$ \raisebox{-0.1cm}{\tikz{
\draw (0, 0) arc (-180:0:0.3);
\draw (0.8, 0) arc (-180:0:0.3);
\draw (2.05, 0) arc (-180:0:0.3);

\node at (1.75, -0.1) {$\cdots$};
}} \\ \cline{2-4} 
& $\langle \textrm{vac} \vert (c_n)^{p_n} \ldots (c_2)^{p_2} (c_1)^{p_1}$ & $\langle p_1, p_2, \ldots, p_n \vert$ & $2^{-n/4}$ \raisebox{-0.25cm}{\tikz{
\draw (0, -0.4) arc (-180:0:0.3);
\draw (0.8, -0.4) arc (-180:0:0.3);
\draw (2.05, -0.4) arc (-180:0:0.3);

\draw (0, 0)--(0, -0.4);
\draw (0.6, 0)--(0.6, -0.4);
\draw (0.8, 0)--(0.8, -0.4);
\draw (1.4, 0)--(1.4, -0.4);
\draw (2.05, 0)--(2.05, -0.4);
\draw (2.65, 0)--(2.65, -0.4);

\node at (0, -0.15) [circle, fill, inner sep=1pt] {};
\node at (0.8, -0.25) [circle, fill, inner sep=1pt] {};
\node at (2.05, -0.35) [circle, fill, inner sep=1pt] {};

\node at (-0.2, -0.15) {\scriptsize $p_1$};
\node at (0.6, -0.25) {\scriptsize $p_2$};
\node at (1.85, -0.35) {\scriptsize $p_n$};

\node at (1.75, -0.08) {$\cdots$};
}} \rule[-2.5ex]{0pt}{6ex} \\ \hline \hline 
pair-creation & $\mathcal{F}_{n, j} : \mathcal{H}_{n-1} \to \mathcal{H}_n$ & $F_{n, j} : H_{n-1} \to H_n$ & \raisebox{-0.6cm}{
\tikz{
\draw (-1.2, 0)--(-1.2, 0.8);
\draw (-0.9, 0)--(-0.9, 0.8);

\draw (-0.1, 0)--(-0.1, 0.8);
\draw (0.2, 0) arc (180:0:0.3);

\draw (1.1, 0)--(1.1, 0.8);
\draw (1.9, 0)--(1.9, 0.8);

\node at (-0.475, 0.4) {$\cdots$};
\node at (1.525, 0.4) {$\cdots$};

\node at (-1.2, -0.2) {\scriptsize $1$};
\node at (-0.9, -0.2) {\scriptsize $2$};
\node at (0.2, -0.2) {\scriptsize $j$};
\node at (0.8, -0.2) {\scriptsize $j+1$};
\node at (1.9, -0.2) {\scriptsize $2n$};
}} \rule[-0.5ex]{0pt}{6ex} \\ \hline 
pair-annihilation & $\mathcal{F}_{n, j}^\dagger : \mathcal{H}_n \to \mathcal{H}_{n-1}$ & $F_{n, j}^\dagger : H_n \to H_{n-1}$ & \raisebox{-0.5cm}{
\tikz{
\draw (-1.2, 0)--(-1.2, 0.8);
\draw (-0.9, 0)--(-0.9, 0.8);

\draw (-0.1, 0)--(-0.1, 0.8);
\draw (0.2, 0.8) arc (-180:0:0.3);

\draw (1.1, 0)--(1.1, 0.8);
\draw (1.9, 0)--(1.9, 0.8);

\node at (-0.475, 0.4) {$\cdots$};
\node at (1.525, 0.4) {$\cdots$};

\node at (-1.2, 1) {\scriptsize $1$};
\node at (-0.9, 1) {\scriptsize $2$};
\node at (0.2, 1) {\scriptsize $j$};
\node at (0.8, 1) {\scriptsize $j+1$};
\node at (1.9, 1) {\scriptsize $2n$};
}} \rule[-4.5ex]{0pt}{10ex} \\ \hline \hline 
positive braid & $\frac{e^{-i \frac{\pi}{8}}}{\sqrt{2}} ( 1 + i \gamma_j \gamma_{j + 1})$ & $\begin{cases}
e^{-i \frac{\pi}{8}} e^{i \frac{\pi}{4} Z_k}, & \textrm{if } j=2k-1 \\ 
e^{-i \frac{\pi}{8}} e^{i \frac{\pi}{4} X_k X_{k+1}}, & \textrm{if } j=2k
\end{cases}$ & \raisebox{-0.6cm}{\tikz{
\draw (-1, 0)--(-1, 0.8);
\draw (-0.3, 0)--(-0.3, 0.8);

\draw (0, 0)--(0.8, 0.8);
\draw (0, 0.8)--(0.3, 0.5);
\draw (0.8, 0)--(0.5, 0.3);

\draw (1.1, 0)--(1.1, 0.8);
\draw (1.8, 0)--(1.8, 0.8);

\node at (-0.65, 0.4) {$\cdots$};
\node at (1.5, 0.4) {$\cdots$};

\node at (-1, -0.2) {\scriptsize $1$};
\node at (-0.5, -0.2) {\scriptsize $\cdots$};
\node at (0, -0.2) {\scriptsize $j$};
\node at (0.8, -0.2) {\scriptsize $j+1$};
\node at (1.4, -0.2) {\scriptsize $\cdots$};
\node at (1.8, -0.2) {\scriptsize $2n$};
}} \rule[-0.5ex]{0pt}{6ex} \\ \hline 
negative braid & $\frac{e^{i \frac{\pi}{8}}}{\sqrt{2}} ( 1 - i \gamma_j \gamma_{j + 1})$ & $\begin{cases}
e^{i \frac{\pi}{8}} e^{-i \frac{\pi}{4} Z_k}, & \textrm{if } j=2k-1 \\ 
e^{i \frac{\pi}{8}} e^{-i \frac{\pi}{4} X_k X_{k+1}}, & \textrm{if } j=2k
\end{cases}$ & \raisebox{-0.6cm}{\tikz{
\draw (-1, 0)--(-1, 0.8);
\draw (-0.3, 0)--(-0.3, 0.8);

\draw (0, 0)--(0.3, 0.3);
\draw (0.5, 0.5)--(0.8, 0.8);
\draw (0.8, 0)--(0, 0.8);

\draw (1.1, 0)--(1.1, 0.8);
\draw (1.8, 0)--(1.8, 0.8);

\node at (-0.65, 0.4) {$\cdots$};
\node at (1.5, 0.4) {$\cdots$};

\node at (-1, -0.2) {\scriptsize $1$};
\node at (-0.5, -0.2) {\scriptsize $\cdots$};
\node at (0, -0.2) {\scriptsize $j$};
\node at (0.8, -0.2) {\scriptsize $j+1$};
\node at (1.4, -0.2) {\scriptsize $\cdots$};
\node at (1.8, -0.2) {\scriptsize $2n$};
}} \rule[-0.5ex]{0pt}{6ex} \\ \hline 
scattering & $\frac{1 + e^{i \theta}}{2} + \frac{1 - e^{i \theta}}{2} i \gamma_j \gamma_{j+1}$ & $\begin{cases}
e^{i \frac{\theta}{2}} e^{-i \frac{\theta}{2} Z_k}, & \textrm{if } j=2k-1 \\ 
e^{i \frac{\theta}{2}} e^{-i \frac{\theta}{2} X_k X_{k+1}}, & \textrm{if } j=2k 
\end{cases}$ & \raisebox{-0.6cm}{\tikz{
\draw (-1, 0)--(-1, 0.8);
\draw (-0.3, 0)--(-0.3, 0.8);

\draw (0, 0)--(0.2, 0.2);
\draw (0.6, 0.6)--(0.8, 0.8);
\draw (0.8, 0)--(0.6, 0.2);
\draw (0.2, 0.6)--(0, 0.8);

\draw ({0.4+0.2*sqrt(2)}, 0.4) arc (0:360:{0.2*sqrt(2)});

\draw (1.1, 0)--(1.1, 0.8);
\draw (1.8, 0)--(1.8, 0.8);

\node at (-0.65, 0.4) {$\cdots$};
\node at (1.5, 0.4) {$\cdots$};

\node at (0.4, 0.4) {$\theta_\updownarrow$};

\node at (-1, -0.2) {\scriptsize $1$};
\node at (-0.5, -0.2) {\scriptsize $\cdots$};
\node at (0, -0.2) {\scriptsize $j$};
\node at (0.8, -0.2) {\scriptsize $j+1$};
\node at (1.4, -0.2) {\scriptsize $\cdots$};
\node at (1.8, -0.2) {\scriptsize $2n$};
}} \rule[-0.5ex]{0pt}{6ex} \\ \hline 
\end{tabular}
\caption{We present a dictionary between quantum state and operators in the (fermionic) Fock space, in the qubit Hilbert space (obtained via the Jordan-Wigner transformation), and in the Majorana diagram. The identity evolution of Majoranas is represented by parallel lines and a Majorana operator is depicted as a dot in the Majorana diagrams. The parity operator between two Majoranas is represented by two dots on the equal time slice, and the global parity operators is represented by dots appearing simultaneously on all lines. When a horizontal cut (i.e., equal time slice) intersects $2n$ Majoranas, we associate it with the Fock space of $n$ complex fermions, denoted by $\mathcal{H}_n$, and equivalently, the $n$-qubit Hilbert space $H_n$. The vacuum ket and bra states are represented by $n$-caps and $n$-cups, respectively. Computational basis states are obtained by adding dots in an appropriate order to the vacuum state diagram. Cap and cup are also used to represent a pair-creation and pair-annihilation process, which, as operators, are denoted as $\mathcal{F}_{n, j}$ ($F_{n, j}$) and $\mathcal{F}_{n, j}^\dagger$ ($F_{n, j}^\dagger$) in the table, with $F_{n, j} \vert p_1, \ldots, p_{j-1}, p_j, \ldots, p_{n-1} \rangle = 2^{1/4} \vert p_1, \ldots, p_{j-1}, 0, p_j, \ldots, p_{n-1} \rangle$ if $j=2k-1$ and $F_{n, j} \vert p_1, \ldots, p_{j-1}, p_j, \ldots, p_{n-1} \rangle = 2^{-1/4} \sum_{p = 0, 1} \vert p_1, \ldots, p_{j-1}, p, p + p_j, \ldots, p_{n-1} \rangle$ if $j = 2k$. We use separate diagrammatic notations to denote special linear combinations of diagrams, notably, the braids and the scattering elements. Braids are special cases of the scattering elements, where the scattering angle $\theta$ equals $-\frac{\pi}{2}$ for a positive braid and $+\frac{\pi}{2}$ for a negative braid\footnote{A crossing with an orientation (i.e., a direction along each strand, typically indicated by arrows) is defined as positive (negative) if the minimal rotation from the overpassing strand to the underpassing strand, which aligns the strands' orientations, is counterclockwise (clockwise). A braid is considered positive (negative) when its corresponding crossing is positive (negative) under the following top-to-bottom orientation convention: for two strands near the crossing in a braid, assign each an orientation following the direction of the time flow, i.e., from top to bottom.}.}
\label{tab:majorana-diagrams}
\end{table*}

Using these diagrammatic elements, we now formally define the \textit{Majorana diagrams}. A Majorana diagram is a parity-even diagram, i.e., one that contains an even number of dots, consisting of strings, caps, cups, dots, braids, and scatterings. We often encapsulate a Majorana diagram within a box. This box is connected to Majorana strings, referred to as \textit{open Majorana strings}, with those located in the upper part designated as \textit{input strings} and those in the lower part as \textit{output strings}, in accordance with our top-to-bottom time convention. Note that the numbers of both input and output legs must be even. An example of a Majorana diagram with $4$ input strings and $6$ output strings, denoted as $f$ below, is: 
\begin{equation}
\raisebox{-0.6cm}{\tikz{
\draw (0.05, 1)--(0.05, 0.7);
\draw (0.35, 1)--(0.35, 0.7);
\draw (0.65, 1)--(0.65, 0.7);
\draw (0.95, 1)--(0.95, 0.7);

\draw (-0.2, 0.7) rectangle (1.2, -0.1);
\node at (0.5, 0.3) {$f$};

\draw (0, -0.1)--(0, -0.4);
\draw (0.2, -0.1)--(0.2, -0.4);
\draw (0.4, -0.1)--(0.4, -0.4);
\draw (0.6, -0.1)--(0.6, -0.4);
\draw (0.8, -0.1)--(0.8, -0.4);
\draw (1, -0.1)--(1, -0.4);
}} \,\, = \,\, \raisebox{-0.6cm}{\tikz{
\draw (0.05, 1) to[out=-90, in=110] (0.3, 0.13);
\draw (0.33, 0.05) to[out=-70, in=90] (0.4, -0.4);

\draw (0.35, 1) to[out=-90, in=75] (0.23, 0.43);
\draw (0.18, 0.3) to[out=-105, in=90] (0, -0.4);

\draw (0.2, -0.1) arc (180:0:0.2);

\draw (1.1, 0.6) arc (0:360:0.25);
\node at (0.85, 0.6) {$\theta_\updownarrow$};

\draw (0.65, 1) to[out=-90, in=135] (+{0.85-0.25*cos(45)}, +{0.6+0.25*sin(45)});
\draw (0.95, 1) to[out=-90, in=45] (+{0.85+0.25*cos(45)}, +{0.6+0.25*sin(45)});

\draw (0.8, -0.4) to[out=90, in=-135] (+{0.85-0.25*cos(45)}, +{0.6-0.25*sin(45)});
\draw (1, -0.4) to[out=90, in=-45] (+{0.85+0.25*cos(45)}, +{0.6-0.25*sin(45)});

\draw (0.2, -0.1)--(0.2, -0.4);
\draw (0.6, -0.1)--(0.6, -0.4);

\node at (0.2, -0.2) [circle, fill, inner sep=1pt] {};
\node at (0.6, -0.2) [circle, fill, inner sep=1pt] {};
}} \,\, . 
\end{equation}
We denote a Majorana diagram that has open Majorana strings as an \textit{open Majorana diagram} and one does not have any open Majorana strings as a \textit{closed Majorana diagram}, with the latter evaluating to a number. Since Majorana diagrams are always assumed to be parity-even, any two Majorana diagrams commute in the following sense: 
\begin{equation}
\label{eq:f-g-equal-time}
\raisebox{-0.55cm}{\tikz{
\draw (0.2, 1)--(0.2, 0.65);
\draw (0.8, 1)--(0.8, 0.65);

\draw (0, 0.65) rectangle (1, 0.35);
\node at (0.5, 0.5) {\scriptsize $f$};

\draw (0.1, 0)--(0.1, 0.35);
\draw (0.9, 0)--(0.9, 0.35);

\node at (0.55, 0.95) {$\cdots$};
\node at (0.55, 0.05) {$\cdots$};

\draw (1.4, 0)--(1.4, 0.35);
\draw (2, 0)--(2, 0.35);

\draw (1.2, 0.65) rectangle (2.2, 0.35);
\node at (1.7, 0.5) {\scriptsize $g$};

\draw (1.3, 1)--(1.3, 0.65);
\draw (2.1, 1)--(2.1, 0.65);

\node at (1.75, 0.95) {$\cdots$};
\node at (1.75, 0.05) {$\cdots$};
}} \,\, := \,\, \raisebox{-0.55cm}{\tikz{
\draw (0.2, 1)--(0.2, 0.85);
\draw (0.8, 1)--(0.8, 0.85);

\draw (0, 0.85) rectangle (1, 0.55);
\node at (0.5, 0.7) {\scriptsize $f$};

\draw (0.1, 0)--(0.1, 0.55);
\draw (0.9, 0)--(0.9, 0.55);

\node at (0.55, 0.95) {$\cdots$};
\node at (0.55, 0.05) {$\cdots$};

\draw (1.4, 0)--(1.4, 0.15);
\draw (2, 0)--(2, 0.15);

\draw (1.2, 0.45) rectangle (2.2, 0.15);
\node at (1.7, 0.3) {\scriptsize $g$};

\draw (1.3, 1)--(1.3, 0.45);
\draw (2.1, 1)--(2.1, 0.45);

\node at (1.75, 0.95) {$\cdots$};
\node at (1.75, 0.05) {$\cdots$};
}} \,\, = \,\, \raisebox{-0.55cm}{\tikz{
\draw (0.2, 1)--(0.2, 0.45);
\draw (0.8, 1)--(0.8, 0.45);

\draw (0, 0.45) rectangle (1, 0.15);
\node at (0.5, 0.3) {\scriptsize $f$};

\draw (0.1, 0)--(0.1, 0.15);
\draw (0.9, 0)--(0.9, 0.15);

\node at (0.55, 0.95) {$\cdots$};
\node at (0.55, 0.05) {$\cdots$};

\draw (1.4, 0)--(1.4, 0.55);
\draw (2, 0)--(2, 0.55);

\draw (1.2, 0.85) rectangle (2.2, 0.55);
\node at (1.7, 0.7) {\scriptsize $g$};

\draw (1.3, 1)--(1.3, 0.85);
\draw (2.1, 1)--(2.1, 0.85);

\node at (1.75, 0.95) {$\cdots$};
\node at (1.75, 0.05) {$\cdots$};
}} \,\, , 
\end{equation}
where $f$ and $g$ denote two Majorana diagrams and, if there is no ambiguity, we draw two boxes on the same time slice as shown on the LHS.

\begin{table}[ht!]
\centering
\begin{tabular}{|c|l|}
\hline 
\raisebox{-0.4cm}{\multirow{2}{*}{cup and cap}} & \raisebox{-0.33cm}{
\begin{tikzpicture}[scale=0.8]
\coordinate[label = left:$l$] (A) at (0, 0.5);
\coordinate[label = left:$l$] (B) at (0.7, 0.5);

\draw (0,0) --(0,1);
\draw (0.7,0) --(0.7,1);

\node at (A) [circle, fill, inner sep=1pt] {};
\node at (B) [circle, fill, inner sep=1pt] {};
\end{tikzpicture}} $\displaystyle = \frac{1}{\sqrt{2}} \sum_{l=0, 1} (-1)^{k l}$ \raisebox{-0.55cm}{ 
\begin{tikzpicture}[scale=0.8]
\coordinate[label = left:$k$] (A) at (0.7, -0.05);
\coordinate[label = left:$k$] (B) at (0.7, 0.85);

\draw (0,0.8) --(0,1);
\draw (0.7,0.8) --(0.7,1);
\draw (0,-0.2) --(0,0);
\draw (0.7,-0.2) --(0.7,0);

\draw (0.7,0) arc (0:180:0.35);
\draw (0.7,0.8) arc (0:-180:0.35);

\node at (A) [circle, fill, inner sep=1pt] {};
\node at (B) [circle, fill, inner sep=1pt] {};
\end{tikzpicture}} \\ \cline{2-2} 
& \raisebox{-0.52cm}{
\begin{tikzpicture}[scale=0.8]
\coordinate[label = left:$k$] (A) at (0.7, -0.05);
\coordinate[label = left:$k$] (B) at (0.7, 0.85);

\draw (0,0.8)--(0,1);
\draw (0.7,0.8)--(0.7,1);
\draw (0,-0.2)--(0,0);
\draw (0.7,-0.2)--(0.7,0);

\draw (0.7,0) arc (0:180:0.35);
\draw (0.7,0.8) arc (0:-180:0.35);

\node at (A) [circle, fill, inner sep=1pt] {};
\node at (B) [circle, fill, inner sep=1pt] {};
\end{tikzpicture}} $\displaystyle = \frac{1}{\sqrt{2}} \sum_{l=0, 1} (-1)^{k l}$ \raisebox{-0.33cm}{
\begin{tikzpicture}[scale=0.8]
\coordinate[label = left:$l$] (A) at (0, 0.5);
\coordinate[label = left:$l$] (B) at (0.7, 0.5);

\draw (0,0)--(0,1);
\draw (0.7,0)--(0.7,1);

\node at (A) [circle, fill, inner sep=1pt] {};
\node at (B) [circle, fill, inner sep=1pt] {};
\end{tikzpicture}} \\ \hline \hline 
positive braid & $\begin{aligned}
\raisebox{-0.3cm}{
\begin{tikzpicture}[scale=0.9]
\draw (0, 0)--(0.8, 0.8);
\draw (0, 0.8)--(0.3, 0.5);
\draw (0.8, 0)--(0.5, 0.3);
\end{tikzpicture}} &= \frac{e^{-i \frac{\pi}{8}}}{\sqrt{2}} \bigg(
\raisebox{-0.3cm}{
\begin{tikzpicture}[scale=0.9]
\draw (0, 0)--(0, 0.8);
\draw (0.7, 0)--(0.7, 0.8);
\end{tikzpicture}} + i \raisebox{-0.3cm}{
\begin{tikzpicture}[scale=0.9]
\draw (0, 0)--(0, 0.8);
\draw (0.7, 0)--(0.7, 0.8);

\node at (0, 0.4) [circle, fill, inner sep=1pt] {};
\node at (0.7, 0.4) [circle, fill, inner sep=1pt] {};
\end{tikzpicture}} \bigg) \\ 
&= \frac{e^{i \frac{\pi}{8}}}{\sqrt{2}} \bigg(
\raisebox{-0.3cm}{
\begin{tikzpicture}[scale=0.9]
\draw (0, 0) arc (180:0:0.35);
\draw (0, 0.8) arc (-180:0:0.35);
\end{tikzpicture} } - i \raisebox{-0.3cm}{
\begin{tikzpicture}[scale=0.9]
\draw (0, 0) arc (180:0:0.35);
\draw (0, 0.8) arc (-180:0:0.35);

\node at ({0.35 + 0.35*cos(40)}, {0.35*sin(40)}) [circle, fill, inner sep=1pt] {};
\node at ({0.35 + 0.35*cos(40)}, {0.8 - 0.35*sin(40)}) [circle, fill, inner sep=1pt] {};
\end{tikzpicture} } \bigg)
\end{aligned}$ \rule{0pt}{7ex} \rule[-6ex]{0pt}{30pt} \\ \hline 
negative braid & $\begin{aligned}
\raisebox{-0.3cm}{
\begin{tikzpicture}[scale=0.9]
\draw (0, 0)--(0.3, 0.3);
\draw (0.5, 0.5)--(0.8, 0.8);
\draw (0.8, 0)--(0, 0.8);
\end{tikzpicture} }
&= \frac{e^{i \frac{\pi}{8}}}{\sqrt{2}} \bigg(
\raisebox{-0.3cm}{
\begin{tikzpicture}[scale=0.9]
\draw (0, 0)--(0, 0.8);
\draw (0.7, 0)--(0.7, 0.8);
\end{tikzpicture} } - i \raisebox{-0.3cm}{
\begin{tikzpicture}[scale=0.9]
\draw (0, 0)--(0, 0.8);
\draw (0.7, 0)--(0.7, 0.8);

\node at (0, 0.4) [circle, fill, inner sep=1pt] {};
\node at (0.7, 0.4) [circle, fill, inner sep=1pt] {};
\end{tikzpicture} } \bigg) \nonumber \\
&= \frac{e^{-i \frac{\pi}{8}}}{\sqrt{2}} \bigg(
\raisebox{-0.33cm}{
\begin{tikzpicture}[scale=0.9]
\draw (0, 0) arc (180:0:0.35);
\draw (0, 0.8) arc (-180:0:0.35);
\end{tikzpicture} } + i \raisebox{-0.3cm}{
\begin{tikzpicture}[scale=0.9]
\draw (0, 0) arc (180:0:0.35);
\draw (0, 0.8) arc (-180:0:0.35);

\node at ({0.35 + 0.35*cos(40)}, {0.35*sin(40)}) [circle, fill, inner sep=1pt] {};
\node at ({0.35 + 0.35*cos(40)}, {0.8 - 0.35*sin(40)}) [circle, fill, inner sep=1pt] {};
\end{tikzpicture} } \bigg)
\end{aligned} $ \rule{0pt}{7ex} \rule[-6ex]{0pt}{30pt} \\ \hline \hline 
\raisebox{-0.75cm}{\multirow{2}{*}{scattering}} & $\begin{aligned}
\raisebox{-0.3cm}{
\begin{tikzpicture}[scale=0.9]
\node at (0.4, 0.4) {$\theta_\updownarrow$};

\draw ({0.4+0.2*sqrt(2)}, 0.4) arc (0:360:{0.2*sqrt(2)});

\draw (0, 0)--(0.2, 0.2);
\draw (0.6, 0.6)--(0.8, 0.8);
\draw (0, 0.8)--(0.2, 0.6);
\draw (0.8, 0)--(0.6, 0.2);
\end{tikzpicture} } &= \frac{1 + e^{i \theta}}{2} 
\raisebox{-0.3cm}{
\begin{tikzpicture}[scale=0.9]
\draw (0, 0)--(0, 0.8);
\draw (0.7, 0)--(0.7, 0.8);
\end{tikzpicture} } + \frac{1 - e^{i \theta}}{2} 
\raisebox{-0.3cm}{
\begin{tikzpicture}[scale=0.9]
\draw (0, 0)--(0, 0.8);
\draw (0.7, 0)--(0.7, 0.8);

\node at (0, 0.4) [circle, fill, inner sep=1pt] {};
\node at (0.7, 0.4) [circle, fill, inner sep=1pt] {};
\end{tikzpicture} } \\ 
&= A_\theta \raisebox{-0.3cm}{
\begin{tikzpicture}[scale=0.9]
\draw (0, 0)--(0.8, 0.8);
\draw (0, 0.8)--(0.3, 0.5);
\draw (0.8, 0)--(0.5, 0.3);
\end{tikzpicture} } + B_\theta \raisebox{-0.3cm}{
\begin{tikzpicture}[scale=0.9]
\draw (0, 0)--(0.3, 0.3);
\draw (0.5, 0.5)--(0.8, 0.8);
\draw (0, 0.8)--(0.8, 0);
\end{tikzpicture} }
\end{aligned}$ \rule{0pt}{7ex} \rule[-6ex]{0pt}{30pt} \\ \cline{2-2} 
& $\begin{aligned}
\raisebox{-0.3cm}{
\begin{tikzpicture}[scale=0.9]
\node at (0.4, 0.4) {$\theta_\leftrightarrow$};

\draw ({0.4+0.2*sqrt(2)}, 0.4) arc (0:360:{0.2*sqrt(2)});

\draw (0, 0)--(0.2, 0.2);
\draw (0.6, 0.6)--(0.8, 0.8);
\draw (0, 0.8)--(0.2, 0.6);
\draw (0.8, 0)--(0.6, 0.2);
\end{tikzpicture} } &= \frac{1 + e^{i \theta}}{2} \raisebox{-0.3cm}{
\begin{tikzpicture}[scale=0.9]
\draw (0, 0) arc (180:0:0.35);
\draw (0, 0.8) arc (-180:0:0.35);
\end{tikzpicture} } + \frac{1 - e^{i \theta}}{2} 
\raisebox{-0.3cm}{
\begin{tikzpicture}[scale=0.9]
\draw (0, 0) arc (180:0:0.35);
\draw (0, 0.8) arc (-180:0:0.35);

\node at ({0.35 + 0.35*cos(40)}, {0.35*sin(40)}) [circle, fill, inner sep=1pt] {};
\node at ({0.35 + 0.35*cos(40)}, {0.8 - 0.35*sin(40)}) [circle, fill, inner sep=1pt] {};
\end{tikzpicture} } \\ 
&= B_\theta \raisebox{-0.3cm}{
\begin{tikzpicture}[scale=0.9]
\draw (0, 0)--(0.8, 0.8);
\draw (0, 0.8)--(0.3, 0.5);
\draw (0.8, 0)--(0.5, 0.3);
\end{tikzpicture} } + A_\theta\raisebox{-0.3cm}{
\begin{tikzpicture}[scale=0.9]
\draw (0, 0)--(0.3, 0.3);
\draw (0.5, 0.5)--(0.8, 0.8);
\draw (0, 0.8)--(0.8, 0);
\end{tikzpicture} }
\end{aligned}$ \rule{0pt}{7ex} \rule[-6ex]{0pt}{30pt} \\ \hline \hline 
\raisebox{-0.33cm}{\multirow{2}{*}{scattering*}} & $\begin{aligned}
\raisebox{-0.3cm}{
\begin{tikzpicture}[scale=0.9]
\node at (0.4, 0.4) {\scriptsize $\phi_\updownarrow$};

\draw ({0.4+0.16*sqrt(2)}, 0.4) arc (0:360:{0.16*sqrt(2)});
\draw ({0.4+0.2*sqrt(2)}, 0.4) arc (0:360:{0.2*sqrt(2)});

\draw (0, 0)--(0.2, 0.2);
\draw (0.6, 0.6)--(0.8, 0.8);
\draw (0, 0.8)--(0.2, 0.6);
\draw (0.8, 0)--(0.6, 0.2);
\end{tikzpicture} } = \frac{1 + e^{\phi}}{2} 
\raisebox{-0.3cm}{
\begin{tikzpicture}[scale=0.9]
\draw (0, 0)--(0, 0.8);
\draw (0.7, 0)--(0.7, 0.8);
\end{tikzpicture} } + \frac{1 - e^{\phi}}{2} 
\raisebox{-0.3cm}{
\begin{tikzpicture}[scale=0.9]
\draw (0, 0)--(0, 0.8);
\draw (0.7, 0)--(0.7, 0.8);

\node at (0, 0.4) [circle, fill, inner sep=1pt] {};
\node at (0.7, 0.4) [circle, fill, inner sep=1pt] {};
\end{tikzpicture} }
\end{aligned}$ \rule[-3ex]{0pt}{3pt} \\ \cline{2-2} 
& $\begin{aligned}
\raisebox{-0.3cm}{
\begin{tikzpicture}[scale=0.9]
\node at (0.4, 0.4) {\scriptsize $\phi_\leftrightarrow$};

\draw ({0.4+0.16*sqrt(2)}, 0.4) arc (0:360:{0.16*sqrt(2)});
\draw ({0.4+0.2*sqrt(2)}, 0.4) arc (0:360:{0.2*sqrt(2)});

\draw (0, 0)--(0.2, 0.2);
\draw (0.6, 0.6)--(0.8, 0.8);
\draw (0, 0.8)--(0.2, 0.6);
\draw (0.8, 0)--(0.6, 0.2);
\end{tikzpicture} } = \frac{1 + e^{\phi}}{2} 
\raisebox{-0.3cm}{
\begin{tikzpicture}[scale=0.9]
\draw (0, 0) arc (180:0:0.35);
\draw (0, 0.8) arc (-180:0:0.35);
\end{tikzpicture} } + \frac{1 - e^{\phi}}{2} 
\raisebox{-0.3cm}{
\begin{tikzpicture}[scale=0.9]
\draw (0, 0) arc (180:0:0.35);
\draw (0, 0.8) arc (-180:0:0.35);

\node at ({0.35 + 0.35*cos(40)}, {0.35*sin(40)}) [circle, fill, inner sep=1pt] {};
\node at ({0.35 + 0.35*cos(40)}, {0.8 - 0.35*sin(40)}) [circle, fill, inner sep=1pt] {};
\end{tikzpicture} } 
\end{aligned}$ \rule[-3ex]{0pt}{3pt} \\ \hline 
\end{tabular}
\caption{Examples of how diagrammatic elements can be expanded as superpositions of different diagrammatic elements. Here, we present two types of scattering elements that differ by an arrow appearing in the subscript of the \textit{scattering angle} $\theta \in \mathbb{C}$ (which is $2\pi$-periodicity). The arrow indicates the direction along which the string endpoints are ``connected.'' When the arrow is omitted, we assume that it aligns with the time direction. Note that when the scattering angle equals $\pm \pi/2$, the scattering reduces to a braid up to a $\mathrm{U}(1)$ factor. We also present expansions of the scattering elements as sums of braids, where $A_\theta := \frac{e^{-i \frac{\pi}{8}}}{2} \big( 1 + i e^{i \theta} \big)$ and $B_\theta := \frac{e^{i \frac{\pi}{8}}}{2} \big( 1 - i e^{i \theta} \big)$. In the last two columns, we introduce another scattering element, which reduces to the ordinary scattering if we set $\phi = i \theta$. While the ordinary scattering elements are useful when expressing unitary quantum circuits, these elements turn out to be useful when expressing non-unitary components as well as expressing the partition function of the Ising model.} 
\label{tab:Majorana-diagram-expansion}
\end{table}

\begin{table*}
\includegraphics[width=0.65\textwidth]{./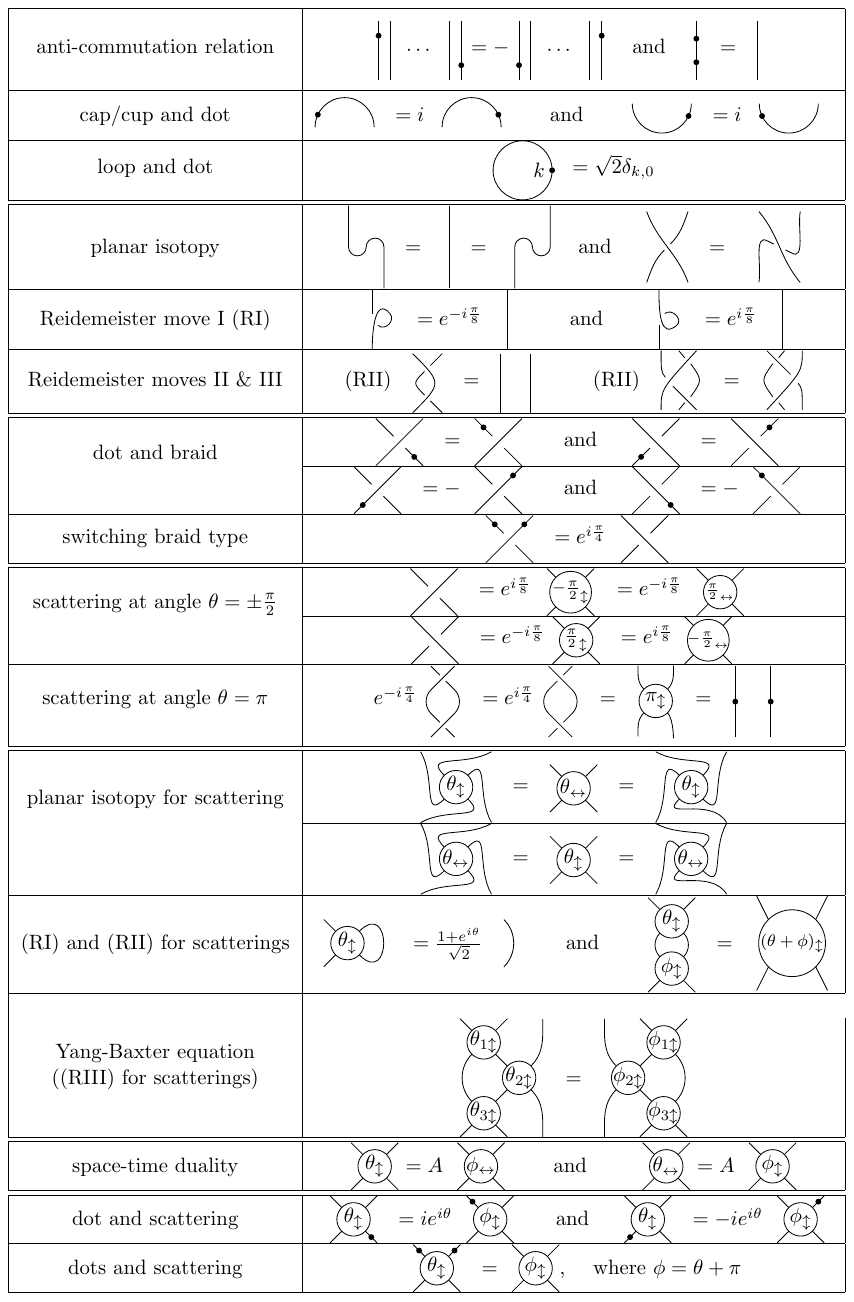}
\caption{Collection of \textit{diagrammatic rewriting rules} for Majorana diagrams. A dot can relocate its position in a cap or cup, introducing a $i$ factor, which exemplifies spin-statistics theorem. The Reidemeister move I applies a self-crossing based on the \textit{writhe}\footnote{The writhe of a self-crossing equals $+1$ ($-1$) if the crossing is positive (negative), with the orientation determined by the direction of traversing along the strand following the self-crossing rather than the top-to-bottom convention previously used for the braid. Therefore, the positivity of the braid at a self-crossing does not necessarily coincide with the positivity of the writhe at that self-crossing.}. A dot can pass under or over a braid, potentially up to $-1$ factor. One can even switch the braid type at the expense of using two dots. A scattering element with the scattering angle $\theta$ that is an integer multiple of $\frac{\pi}{2}$ can be reduced to another diagrammatic element; therefore, we call a scattering element \textit{generic} if the angle $\theta$ is \textit{not} a multiple of $\frac{\pi}{2}$. Planar isotopy and Reidemeister moves for braids can be appropriately generalized to scatterings. In particular, the generalized form of Reidemeister move III for scatterings corresponds to the famous \textit{Yang-Baxter equation}. In the space-time duality, $A$ and $\phi$ are complex numbers satisfying $A = \frac{1  + e^{i \theta}}{2}$ and $e^{i \phi} = \frac{1-e^{i \theta}}{1 + e^{i \theta}}$. A dot can also ``pass'' a scattering element, resulting in a change to the scattering angle (here, $\phi = \pi-\theta$). Similarly, two dots can be absorbed into a scattering element, also resulting in a change to the scattering angle.}
\label{tab:majorana-rewriting-rules}
\end{table*}

While it is possible to describe quantum processes solely in terms of states and Majorana operators, diagrammatic notation often offers an intuitive understanding and greater flexibility in manipulation. For example, the Reidemeister move III (RIII) presented in TABLE~\ref{tab:majorana-rewriting-rules} can be intuitively understood using braid notation, which is far less straightforward when expressed in terms of Majorana operators: 
\begin{equation}
e^{-\frac{\pi}{4} \gamma_1 \gamma_2} e^{-\frac{\pi}{4} \gamma_2 \gamma_3} e^{-\frac{\pi}{4} \gamma_1 \gamma_2} = e^{-\frac{\pi}{4} \gamma_2 \gamma_3} e^{-\frac{\pi}{4} \gamma_1 \gamma_2} e^{-\frac{\pi}{4} \gamma_2 \gamma_3} . 
\end{equation}
Furthermore, identities describing different quantum processes (so that the Majorana operators describing those processes look drastically different from each other) can be cast into the same diagrammatic identity: 
\begin{equation}
\raisebox{-0.35cm}{
\begin{tikzpicture}[scale=1.7]
\draw (0, 0.1)--(0, 0) to[out=-90, in=-90, looseness=1.5] (0.6, 0)--(0.6, 0.1);

\draw (0, -0.4)--(0, -0.3) to[out=90, in=-120] (0.03, -0.2);
\draw (0.08, -0.13) to[out=50, in=130] (0.52, -0.13);
\draw (0.6, -0.4)--(0.6, -0.3) to[out=90, in=-60] (0.57, -0.2);
\end{tikzpicture}
} \, = \raisebox{-0.35cm}{
\begin{tikzpicture}[scale=1.7]
\draw (0, -0.1) to[out=-90, in=-90, looseness=1.2] (0.6, -0.1);
\draw (0, -0.6) to[out=90, in=90, looseness=1.2] (0.6, -0.6);
\end{tikzpicture}
} \qquad \textrm{vs} \qquad \raisebox{-0.4cm}{
\begin{tikzpicture}[scale=1.7]
\draw (0, 0.6) to[out=-40, in=40, looseness=1.7] (0, 0);

\draw (0.3, 0.6) to[out=-140, in=42] (0.18, 0.49);
\draw (0.124, 0.43) to[out=-128, in=128] (0.124, 0.17);
\draw (0.3, 0) to[out=140, in=-42] (0.18, 0.11);
\end{tikzpicture}
} = \, \raisebox{-0.4cm}{
\begin{tikzpicture}[scale=1.7]
\draw (0, 0.6)--(0, 0);
\draw (0.3, 0.6)--(0.3, 0);
\end{tikzpicture}
} \,\, , 
\end{equation}
where the former involves pair-creation and annihilation process in addition to the braiding processes present in both cases. 

In TABLE~\ref{tab:majorana-rewriting-rules}, we summarize the \textit{diagrammatic rewriting rules} for Majorana diagrams, which are local rewritings of diagrams that preserves their semantic content. While most of the rewriting rules are self-explanatory, let us mention the Yang-Baxter equation (YBE) which involves three scatterings. The equality in the YBE holds in the following sense: given a triple $(\theta_1, \theta_2, \theta_3) \in \mathbb{C}^3$, except at a set of measure-zero points, there exists another triple $(\phi_1, \phi_2, \phi_3) \in \mathbb{C}^3$ that satisfies the equality. Conversely, given a triple $(\phi_1, \phi_2, \phi_3) \in \mathbb{C}^3$, except at another set of measure-zero points, there exists another triple $(\phi_1, \phi_2, \phi_3) \in \mathbb{C}^3$ that satisfies the equality. When two of a triple $(\theta_1, \theta_2, \theta_3)$ correspond to two positive (or negative) braids, the equality holds in the expected manner: e.g., 
\begin{equation}
\label{eq:YBE-two-braids}
\raisebox{-0.9cm}{\tikz{
\draw (0, 1)--(0.2, 0.8);
\draw (0.6, 0.4)--(1.2, -0.2) to[out=-45, in=90] (1.4, -1);

\draw (0.8, 1)--(0.6, 0.8);
\draw (0.2, 0.4) to[out=-135, in=135] (0.2, -0.4)--(0.8, -1);

\draw (1.4, 1) to[out=-90, in=45] (1.2, 0.2)--(1.1, 0.1);
\draw (0.5, -0.5)--(0.9, -0.1);
\draw (0, -1)--(0.3, -0.7);

\draw ({0.4+0.2*sqrt(2)}, 0.6) arc (0:360:{0.2*sqrt(2)});

\node at (0.4, 0.6) {$\theta_{\updownarrow}$};
}} \,\, = \,\, \raisebox{-0.9cm}{\tikz{
\draw (0, 1) to[out=-90, in=135] (0.2, 0.2)--(0.8, -0.4);
\draw (1.4, -1)--(1.2, -0.8);

\draw (0.6, 1)--(1.2, 0.4) to[out=-45, in=45] (1.2, -0.4);
\draw (0.6, -1)--(0.8, -0.8);

\draw (1.4, 1)--(1.1, 0.7);
\draw (0.9, 0.5)--(0.5, 0.1);
\draw (0, -1) to[out=90, in=-135] (0.2, -0.2)--(0.3, -0.1);

\draw ({1+0.2*sqrt(2)}, -0.6) arc (0:360:{0.2*sqrt(2)});

\node at (1, -0.6) {$\theta_{\updownarrow}$};
}} \,\, 
\end{equation}
and all the variants are also satisfied. 

As a side note, in principle, it is possible to substitute the scalar coefficient in a Majorana diagram with another Majorana diagram using the identity: 
\begin{equation}
\raisebox{-0.63cm}{
\tikz{
\node at (0, 0) {$\theta_\updownarrow$};

\draw ({0.2*sqrt(2)}, 0) arc (0:360:{0.2*sqrt(2)});

\draw [looseness=5] (-0.2, 0.2) to[out=135, in=-135] (-0.2, -0.2);
\draw [looseness=5] (0.2, 0.2) to[out=45, in=-45] (0.2, -0.2);
}} = 1 + e^{i \theta} , 
\end{equation}
where $\theta$ can be an arbitrary complex number, allowing one to choose $\theta$ such that the diagram evaluates to the desired scalar coefficient. However, to avoid unnecessary clutter, we typically refrain from replacing scalar coefficients with Majorana diagrams. 

Using Majorana diagrams, a restricted class of quantum computations can be represented. For instance, the following diagram corresponds to a $3$-qubit quantum computation:  
\begin{equation}
\raisebox{-1.9cm}{\tikz{
\draw (0.4, -0.47) to[out=145, in=-90] (0.1, 0)--(0.1, 1) arc (180:0:0.35) to[out=-90, in=145] (1.1, 0.53);

\draw (1.1+1.4, 0.53) to[out=145, in=-90] (2.2, 1) arc (0:180:0.35) to[out=-90, in=90] (0.8, 0) to[out=-90, in=90] (0.1, -1)--(0.1, -2) arc (-180:0:0.35) to[out=90, in=-90] (1.5, -1) to[out=90, in=-90] (0.8+1.4, 0) to[out=90, in=-90] (2.9, 1) arc (180:0:0.35)--(3.6, 0) to[out=-90, in=90] (2.9, -1) to[out=-90, in=90] (2.2, -2) arc (0:-180:0.35) to[out=90, in=-35] (1.2, 0.45-2);

\draw (2.6, -1.55) to[out=-35, in=90] (2.9, -2) arc (-180:0:0.35)--(3.6, -1) to[out=90, in=-35] (3.3, -0.55);

\draw (1.8, -0.47) to[out=145, in=-90] (1.5, 0) to[out=90, in=-35] (1.2, 0.45);

\draw (3.2, -0.47) to[out=145, in=-90] (2.9, 0) to[out=90, in=-35] (2.6, 0.45);

\draw (1.1, -1.47) to[out=145, in=-90] (0.8, -1) to[out=90, in=-35] (0.5, -0.55);

\draw (2.5, -1.47) to[out=145, in=-90] (2.2, -1) to[out=90, in=-35] (1.9, -0.55);

\node at (0.8, 0) [circle, fill, inner sep=1pt] {};
\node at (2.9, 0) [circle, fill, inner sep=1pt] {};

\node at (0.1, -1) [circle, fill, inner sep=1pt] {};
\node at (1.5, -1) [circle, fill, inner sep=1pt] {};

\draw [decorate, decoration = {brace}] (-0.05, 1.05)--(-0.05, 1.4);
\draw [decorate, decoration = {brace}] (-0.05, -1.95)--(-0.05, 0.95);
\draw [decorate, decoration = {brace}] (-0.05, -2.4)--(-0.05, -2.05);

\node at (-1.6, 1.25) {initial state: $\vert 0 \rangle^{\otimes 3}$};
\node at (-1.2, -0.5) {operator: $\hat{U}$};
\node at (-1.5, -2.25) {final state: $\langle 0 \vert^{\otimes 3}$};
}} \,\, , 
\end{equation}
where the complex number to which the diagram evaluates equals $2^{3/2} \langle 0,0,0 \vert \hat{U} \vert 0,0,0 \rangle$ with $\hat{U}$ being a $3$-qubit unitary. It turns out that Majorana diagrams with caps, cups, dots, and braids, but without generic scattering elements, can encode the intersection of Clifford and matchgate unitaries~\cite{PhysRevA.73.042313, PhysRevA.79.032311}. By incorporating generic scattering elements, the entire set of matchgate unitaries can be encoded. Therefore, one approach for efficiently evaluating a closed Majorana diagram is to translate it into a matchgate computation. However, an arbitrary Clifford unitary gate cannot be encoded using the dense encoding discussed here, since the SWAP gate, which is a Clifford gate, is \textit{not} encodable. One can at best realize the fermionic-SWAP ($f\textrm{SWAP}$) gate using the dense encoding. To achieve full capability, we introduce the \textit{background manifold}, which was originally used in the 3D Quon language~\cite{liu2017quon} and has now been adapted to the 2D Quon setting\footnote{Alternatively, universal quantum computation can be encoded by introducing a quartic Majorana operator~\cite{bravyi2002fermionic}. Additionally, Sec. 7 of the same paper discusses measurements of fermion parity operators, which are compatible with the background manifold.}. 

\subsection{Background Manifold and Diagrammatic Rewriting Rules}
\label{sec:background-manifold}
As a preliminary step toward introducing the \textit{background manifold}, we present the following diagrammatic notation for imposing the fermion parity-even projection on an even number of strings: 
\begin{align}
\label{eq:parity-even-projection}
\raisebox{-0.42cm}{\tikz{
\draw (-0.3, 0)--(-0.3, 1);
\draw (-0.1, 0)--(-0.1, 1);
\draw (0.1, 0)--(0.1, 1);
\draw (0.3, 0)--(0.3, 1);
\draw (1.1, 0)--(1.1, 1);
\draw (1.3, 0)--(1.3, 1);
\draw[thick, blue] (-0.5, 0.5)--(1.5, 0.5);
\node at (0.725, 0.7) {$\cdots$};
}} \,\, &:= \,\, \frac{1}{2} \Bigg( \,\, \raisebox{-0.42cm}{\tikz{
\draw (-0.3, 0)--(-0.3, 1);
\draw (-0.1, 0)--(-0.1, 1);
\draw (0.1, 0)--(0.1, 1);
\draw (0.3, 0)--(0.3, 1);
\draw (1.1, 0)--(1.1, 1);
\draw (1.3, 0)--(1.3, 1);
\node at (0.725, 0.5) {$\cdots$};
}} \,\, + \,\, \raisebox{-0.42cm}{\tikz{
\draw (-0.3, 0)--(-0.3, 1);
\draw (-0.1, 0)--(-0.1, 1);
\draw (0.1, 0)--(0.1, 1);
\draw (0.3, 0)--(0.3, 1);
\draw (1.1, 0)--(1.1, 1);
\draw (1.3, 0)--(1.3, 1);
\node at (-0.3, 0.5) [circle, fill, inner sep=1pt] {};
\node at (-0.1, 0.5) [circle, fill, inner sep=1pt] {};
\node at (0.1, 0.5) [circle, fill, inner sep=1pt] {};
\node at (0.3, 0.5) [circle, fill, inner sep=1pt] {};
\node at (1.1, 0.5) [circle, fill, inner sep=1pt] {};
\node at (1.3, 0.5) [circle, fill, inner sep=1pt] {};
\node at (0.725, 0.5) {$\cdots$};
}} \,\, \Bigg) \nonumber \\ 
& = \frac{\openone + \hat{P}_n}{2} , 
\end{align}
where the last term in the first line with dots represents the global parity operator Eq.~\eqref{eq:global-fermion-parity}. Importantly, the parity-even projection commutes with any parity-even diagram, i.e., for a parity-even Majorana diagram $f$: 
\begin{equation}
\label{eq:p-projections-majoroana-diagram}
\raisebox{-0.7cm}{\tikz{
\draw (-0.1, 1)--(-0.1, 0.5);
\draw (0.1, 1)--(0.1, 0.5);
\draw (0.9, 1)--(0.9, 0.5);
\draw (1.1, 1)--(1.1, 0.5);

\draw[thick, blue] (-0.3, 0.7)--(1.3, 0.7);

\draw (-0.4, 0.5) rectangle (1.4, 0.1);
\node at (0.5, 0.3) {$f$};

\draw (-0.2, 0.1)--(-0.2, -0.4);
\draw (0, 0.1)--(0, -0.4);
\draw (1, 0.1)--(1, -0.4);
\draw (1.2, 0.1)--(1.2, -0.4);

\node at (0.5, 0.9) {$\cdots$};
\node at (0.5, -0.3) {$\cdots$};
}} \,\, = \,\, \raisebox{-0.7cm}{\tikz{
\draw (-0.1, 1)--(-0.1, 0.5);
\draw (0.1, 1)--(0.1, 0.5);
\draw (0.9, 1)--(0.9, 0.5);
\draw (1.1, 1)--(1.1, 0.5);

\draw[thick, blue] (-0.4, -0.1)--(1.4, -0.1);

\draw (-0.4, 0.5) rectangle (1.4, 0.1);
\node at (0.5, 0.3) {$f$};

\draw (-0.2, 0.1)--(-0.2, -0.4);
\draw (0, 0.1)--(0, -0.4);
\draw (1, 0.1)--(1, -0.4);
\draw (1.2, 0.1)--(1.2, -0.4);

\node at (0.5, 0.9) {$\cdots$};
\node at (0.5, -0.3) {$\cdots$};
}}\,\, = \,\, \raisebox{-0.7cm}{\tikz{
\draw (-0.1, 1)--(-0.1, 0.5);
\draw (0.1, 1)--(0.1, 0.5);
\draw (0.9, 1)--(0.9, 0.5);
\draw (1.1, 1)--(1.1, 0.5);

\draw[thick, blue] (-0.3, 0.7)--(1.3, 0.7);
\draw[thick, blue] (-0.4, -0.1)--(1.4, -0.1);

\draw (-0.4, 0.5) rectangle (1.4, 0.1);
\node at (0.5, 0.3) {$f$};

\draw (-0.2, 0.1)--(-0.2, -0.4);
\draw (0, 0.1)--(0, -0.4);
\draw (1, 0.1)--(1, -0.4);
\draw (1.2, 0.1)--(1.2, -0.4);

\node at (0.5, 0.9) {$\cdots$};
\node at (0.5, -0.3) {$\cdots$};
}} \,\, , 
\end{equation}
where the number of input lines and the number of output lines may differ in general. The last equality implies that certain parity-even projections are inherently satisfied. For example, the parity-even projection becomes trivial in a closed Majorana diagram:
\begin{equation}
\raisebox{-1.4cm}{
\begin{tikzpicture}[scale=0.8]
\draw (0.4, -0.47) to[out=145, in=-90] (0.1, 0)--(0.1, 1) arc (180:0:0.35) to[out=-90, in=145] (1.1, 0.53);

\draw (1.1+1.4, 0.53) to[out=145, in=-90] (2.2, 1) arc (0:180:0.35) to[out=-90, in=90] (0.8, 0) to[out=-90, in=90] (0.1, -1)--(0.1, -2) arc (-180:0:0.35) to[out=90, in=-90] (1.5, -1) to[out=90, in=-90] (0.8+1.4, 0) to[out=90, in=-90] (2.9, 1) arc (180:0:0.35)--(3.6, 0) to[out=-90, in=90] (2.9, -1) to[out=-90, in=90] (2.2, -2) arc (0:-180:0.35) to[out=90, in=-35] (1.2, 0.45-2);

\draw (2.6, -1.55) to[out=-35, in=90] (2.9, -2) arc (-180:0:0.35)--(3.6, -1) to[out=90, in=-35] (3.3, -0.55);

\draw (1.8, -0.47) to[out=145, in=-90] (1.5, 0) to[out=90, in=-35] (1.2, 0.45);

\draw (3.2, -0.47) to[out=145, in=-90] (2.9, 0) to[out=90, in=-35] (2.6, 0.45);

\draw (1.1, -1.47) to[out=145, in=-90] (0.8, -1) to[out=90, in=-35] (0.5, -0.55);

\draw (2.5, -1.47) to[out=145, in=-90] (2.2, -1) to[out=90, in=-35] (1.9, -0.55);

\node at (0.8, 0) [circle, fill, inner sep=1pt] {};
\node at (2.9, 0) [circle, fill, inner sep=1pt] {};

\node at (0.1, -1) [circle, fill, inner sep=1pt] {};
\node at (1.5, -1) [circle, fill, inner sep=1pt] {};

\draw[thick, blue] (-0.1, 1)--(3.8, 1);
\end{tikzpicture}
} = \raisebox{-1.4cm}{
\begin{tikzpicture}[scale=0.8]
\draw (0.4, -0.47) to[out=145, in=-90] (0.1, 0)--(0.1, 1) arc (180:0:0.35) to[out=-90, in=145] (1.1, 0.53);

\draw (1.1+1.4, 0.53) to[out=145, in=-90] (2.2, 1) arc (0:180:0.35) to[out=-90, in=90] (0.8, 0) to[out=-90, in=90] (0.1, -1)--(0.1, -2) arc (-180:0:0.35) to[out=90, in=-90] (1.5, -1) to[out=90, in=-90] (0.8+1.4, 0) to[out=90, in=-90] (2.9, 1) arc (180:0:0.35)--(3.6, 0) to[out=-90, in=90] (2.9, -1) to[out=-90, in=90] (2.2, -2) arc (0:-180:0.35) to[out=90, in=-35] (1.2, 0.45-2);

\draw (2.6, -1.55) to[out=-35, in=90] (2.9, -2) arc (-180:0:0.35)--(3.6, -1) to[out=90, in=-35] (3.3, -0.55);

\draw (1.8, -0.47) to[out=145, in=-90] (1.5, 0) to[out=90, in=-35] (1.2, 0.45);

\draw (3.2, -0.47) to[out=145, in=-90] (2.9, 0) to[out=90, in=-35] (2.6, 0.45);

\draw (1.1, -1.47) to[out=145, in=-90] (0.8, -1) to[out=90, in=-35] (0.5, -0.55);

\draw (2.5, -1.47) to[out=145, in=-90] (2.2, -1) to[out=90, in=-35] (1.9, -0.55);

\node at (0.8, 0) [circle, fill, inner sep=1pt] {};
\node at (2.9, 0) [circle, fill, inner sep=1pt] {};

\node at (0.1, -1) [circle, fill, inner sep=1pt] {};
\node at (1.5, -1) [circle, fill, inner sep=1pt] {};
\end{tikzpicture}
}
\end{equation}
The strength of the parity-even projection lies in its ability to split into subsectors: 
\begin{align}
\raisebox{-0.4cm}{\tikz{
\draw (-0.2, 0)--(-0.2, 1);
\draw (0.4, 0)--(0.4, 1);
\draw (0.7, 0)--(0.7, 1);
\draw (1.3, 0)--(1.3, 1);
\draw[thick, blue] (-0.4, 0.7)--(1.5, 0.7);
\draw[thick, blue] (-0.4, 0.4)--(0.5, 0.4);
\draw[thick, blue] (0.6, 0.4)--(1.5, 0.4);
\node at (0.15, 0.2) {$\cdots$};
\node at (1.05, 0.2) {$\cdots$};
}} \,\, &= \raisebox{-0.4cm}{\tikz{
\draw (-0.2, 0)--(-0.2, 1);
\draw (0.4, 0)--(0.4, 1);
\draw (0.7, 0)--(0.7, 1);
\draw (1.3, 0)--(1.3, 1);
\draw[thick, blue] (-0.4, 0.7)--(1.5, 0.7);
\draw[thick, blue] (-0.4, 0.4)--(0.5, 0.4);
\node at (0.15, 0.2) {$\cdots$};
\node at (1.05, 0.2) {$\cdots$};
}} \,\, = \,\, \raisebox{-0.4cm}{\tikz{
\draw (-0.2, 0)--(-0.2, 1);
\draw (0.4, 0)--(0.4, 1);
\draw (0.7, 0)--(0.7, 1);
\draw (1.3, 0)--(1.3, 1);
\draw[thick, blue] (-0.4, 0.7)--(1.5, 0.7);
\draw[thick, blue] (0.6, 0.4)--(1.5, 0.4);
\node at (0.15, 0.2) {$\cdots$};
\node at (1.05, 0.2) {$\cdots$};
}} \nonumber \\ 
& \ne \raisebox{-0.4cm}{\tikz{
\draw (-0.2, 0)--(-0.2, 1);
\draw (0.4, 0)--(0.4, 1);
\draw (0.7, 0)--(0.7, 1);
\draw (1.3, 0)--(1.3, 1);
\draw[thick, blue] (-0.4, 0.7)--(1.5, 0.7);
\node at (0.15, 0.2) {$\cdots$};
\node at (1.05, 0.2) {$\cdots$};
}} \,\, , 
\end{align}
where the equalities in the first line demonstrate that some of the parity-even projections are automatically imposed. 

The above observations suggest that the parity-even projection has a \textit{topological} nature. Thus, we introduce the \textit{background manifold} as a way to keep track of the parity-even projections imposed in Majoranas: 
\begin{equation}
\raisebox{-0.45cm}{\tikz{
\fill [blue!10] (-0.2, 0) rectangle (1.4, 1);

\draw (0, 0)--(0, 1);
\draw (0.2, 0)--(0.2, 1);
\draw (1, 0)--(1, 1);
\draw (1.2, 0)--(1.2, 1);

\draw[thick,blue] (-0.2, 0)--(-0.2, 1);
\draw[thick,blue] (1.4, 0)--(1.4, 1);

\node at (0.65, 0.5) {$\cdots$};
}} \,\, := \raisebox{-0.45cm}{\tikz{
\draw (0, 0) --(0, 1);
\draw (0.2, 0) --(0.2, 1);
\draw (1, 0) --(1, 1);
\draw (1.2, 0) --(1.2, 1);

\draw[thick,blue] (-0.2,0.5) --(1.4,0.5);

\node at (0.65, 0.2) {$\cdots$};
}}
\end{equation}
and 
\begin{equation}
\raisebox{-0.45cm}{\tikz{
\fill[blue!10] (-0.4, 1)--(-0.4, 0)--(0.6, 0)--(0.6, 0.4) arc (180:0:0.1)--(0.8, 0)--(1.8, 0)--(1.8, 1);

\draw[thick, blue] (-0.4, 0)--(-0.4, 1);
\draw[thick, blue] (1.8, 0)--(1.8, 1);

\draw[thick, blue] (0.8, 0.4) arc (0:180:0.1);
\draw[thick, blue] (0.8, 0.4)--(0.8, 0);
\draw[thick, blue] (0.6, 0.4)--(0.6, 0);

\draw (-0.2, 0)--(-0.2, 1);
\draw (0.4, 0)--(0.4, 1);
\draw (1, 0)--(1, 1);
\draw (1.6, 0)--(1.6, 1);

\node at (0.15, 0.5) {$\cdots$};
\node at (1.35, 0.5) {$\cdots$};
}} \,\, := \,\,
\raisebox{-0.45cm}{\tikz{
\draw (-0.2, 0)--(-0.2, 1);
\draw (0.4, 0)--(0.4, 1);
\draw (0.7, 0)--(0.7, 1);
\draw (1.3, 0)--(1.3, 1);

\draw[thick, blue] (-0.4, 0.7)--(1.5, 0.7);
\draw[thick, blue] (-0.4, 0.4)--(0.5, 0.4);
\draw[thick, blue] (0.6, 0.4)--(1.5, 0.4);

\node at (0.15, 0.2) {$\cdots$};
\node at (1.05, 0.2) {$\cdots$};
}} \,\, , 
\end{equation}
where the background manifold is depicted as a shaded region in each diagram. As described in FIG~\ref{fig:quon-example} (a), a background manifold is a $(1+1)$-dimensional space-time manifold equipped with a local time flow at each point in the bulk, dictating the time ordering of the embdeed Majorana diagram. In principle, one should follow the local time flow provided by the background manifold; however, for simplicity, we always use the global time flow, which is top-to-bottom unless explicitly stated otherwise. The \textit{boundary} of a background manifold can have multiple connected components. Each component, which is topologically equivalent to a circle, consists of \textit{open} and \textit{closed} intervals, with the closed intervals highlighted by thick blue lines in the diagram. An open interval aligns with a constant time slice and is the only location where open Majorana strings can end transversally. We refer to a Quon diagram containing one or more open intervals as an \textit{open} Quon diagram, and one without any open intervals as a \textit{closed} Quon diagram. 

Using the background manifold, we can realize the SWAP gate, which is a magic gate for matchgate (see Appendix~\ref{app:matchgate}), using the following diagram: 
\begin{equation}
\label{eq:SWAP-quon-diagram}
\raisebox{-0.75cm}{\tikz{
\fill[blue!10] (0, 0.85)--(0.8, 0.85) arc (-180:0:0.1)--(1.8, 0.85)--(1.8, 0)--(1, 0) arc (0:180:0.1)--(0, 0);

\draw[thick, blue] (1, 0.85) arc (0:-180:0.1);
\draw[thick, blue] (1, 0) arc (0:180:0.1);

\draw (1.2, 0) to[out=110, in=-70] (0.1, 0.85);
\draw (1.7, 0) to[out=110, in=-70] (0.6, 0.85);

\draw (0.1, 0) to[out=70, in=-160] (0.6, 0.4);
\draw (0.7, 0.44)--(0.85, 0.5);
\draw (0.95, 0.55) to[out=25, in=-110] (1.2, 0.85);

\draw (0.6, 0) to[out=70, in=-145] (0.86, 0.3);
\draw (0.95, 0.34)--(1.1, 0.4);
\draw (1.2, 0.43) to[out=20, in=-110] (1.7, 0.85);

\draw[thick, blue] (0, 0)--(0, 0.85);
\draw[thick, blue] (1.8, 0)--(1.8, 0.85);

\node at (0.45, 0.1) {\scriptsize $\cdots$};
\node at (0.45, 0.75) {\scriptsize $\cdots$};
\node at (1.45, 0.1) {\scriptsize $\cdots$};
\node at (1.45, 0.75) {\scriptsize $\cdots$};

\draw [decorate, decoration = {brace, mirror}] (0.7, 0.9)--(0.05, 0.9);
\node at (0.375, 1.1) {\scriptsize $n$};

\draw [decorate, decoration = {brace, mirror}] (1.75, 0.9)--(1.15, 0.9);
\node at (1.45, 1.1) {\scriptsize $m$};

\draw [decorate, decoration = {brace, mirror}] (0.05, -0.05)--(0.7, -0.05);
\node at (0.375, -0.25) {\scriptsize $m$};

\draw [decorate, decoration = {brace, mirror}] (1.15, -0.05)--(1.75, -0.05);
\node at (1.45, -0.25) {\scriptsize $n$};
}} \,\, , 
\end{equation}
where $n$ and $m$ are even integers. Here, we use negative braids at every crossing, but one could instead use positive braids. 

In the remainder of the section, we discuss \textit{diagrammatic rewriting rules} for the 2D Quon diagrams. If the background manifold contains no Majorana diagrams, it evaluates to $1$: 
\begin{equation}
\label{eq:eval-empty-background}
\raisebox{-0.35cm}{
\tikz{
\fill [blue!10] (0.4, 0) arc (0:360:0.4);
\draw [thick, blue] (0.4, 0) arc (0:360:0.4);
}} \,\, = 1 \,\, , 
\end{equation}
where we note that this diagram is different from the Majorana loop presented in TABLE.~\ref{tab:majorana-rewriting-rules}. One can freely move a Majorana diagram as follows: 
\begin{equation}
\label{eq:push-neutral-diagram}
\raisebox{-0.45cm}{\tikz{
\fill[blue!10] (-0.4, 1)--(-0.4, 0)--(0.6, 0)--(0.6, 0.4) arc (180:0:0.1)--(0.8, 0)--(1.8, 0)--(1.8, 1);

\draw[thick, blue] (-0.4, 0)--(-0.4, 1);
\draw[thick, blue] (1.8, 0)--(1.8, 1);
\draw[thick, blue] (0.8,0)--(0.8, 0.4) arc (0:180:0.1)--(0.6, 0);

\draw (-0.2, 0)--(-0.2, 0.6);
\draw (-0.1, 0.9)--(-0.1, 1);
\draw (0.4, 0)--(0.4, 0.6);
\draw (0.3, 0.9)--(0.3, 1);
\draw (1, 0)--(1, 1);
\draw (1.6, 0)--(1.6, 1);

\draw (-0.3, 0.6) rectangle (0.5, 0.9);
\node at (0.1, 0.75) {\scriptsize $f$};

\node at (0.125, 0.2) {\scriptsize $\cdots$};
\node at (1.325, 0.5) {\scriptsize $\cdots$};
}} \,\, = \,\, \raisebox{-0.45cm}{\tikz{
\fill[blue!10] (-0.4, 1)--(-0.4, 0)--(0.6, 0)--(0.6, 0.4) arc (180:0:0.1)--(0.8, 0)--(1.8, 0)--(1.8, 1);

\draw[thick, blue] (-0.4, 0)--(-0.4, 1);
\draw[thick, blue] (1.8, 0)--(1.8, 1);
\draw[thick, blue] (0.8,0)--(0.8, 0.4) arc (0:180:0.1)--(0.6, 0);

\draw (-0.2, 0)--(-0.2, 0.1);
\draw (-0.1, 0.4)--(-0.1, 1);
\draw (0.4, 0)--(0.4, 0.1);
\draw (0.3, 0.4)--(0.3, 1);
\draw (1, 0)--(1, 1);
\draw (1.6, 0)--(1.6, 1);

\draw (-0.3, 0.1) rectangle (0.5, 0.4);
\node at (0.1, 0.25) {\scriptsize $f$};

\node at (0.125, 0.7) {\scriptsize $\cdots$};
\node at (1.325, 0.5) {\scriptsize $\cdots$};
}} \,\, , 
\end{equation}
where $f$ is a parity-even Majorana diagram generally having different numbers of input and output legs, which can be shown by using Eq.~\eqref{eq:p-projections-majoroana-diagram}. Furthermore,  
\begin{equation}
\label{eq:charges-out-of-nowhere}
\raisebox{-0.4cm}{\tikz{
\fill [blue!10] (-0.2, 0) rectangle (1.4, 1);

\draw[thick, blue] (-0.2, 0)--(-0.2, 1);
\draw[thick, blue] (1.4, 0)--(1.4, 1);

\draw (0, 0)--(0, 1);
\draw (0.2, 0)--(0.2, 1);
\draw (1, 0)--(1, 1);
\draw (1.2, 0)--(1.2, 1);

\node at (0.625, 0.5) {$\cdots$};
}} \,\, = \raisebox{-0.4cm}{\tikz{
\fill [blue!10] (-0.2, 0) rectangle (1.4, 1);

\draw[thick, blue] (-0.2, 0)--(-0.2, 1);
\draw[thick, blue] (1.4, 0)--(1.4, 1);

\draw (0, 0)--(0, 1);
\draw (0.2, 0)--(0.2, 1);
\draw (1, 0)--(1, 1);
\draw (1.2, 0)--(1.2, 1);

\node at (0, 0.5) [circle, fill, inner sep=1pt] {};
\node at (0.2, 0.5) [circle, fill, inner sep=1pt] {};
\node at (0.625, 0.5) {$\cdots$};
\node at (1, 0.5) [circle, fill, inner sep=1pt] {};
\node at (1.2, 0.5) [circle, fill, inner sep=1pt] {};
}} \,\, , 
\end{equation}
where the Majorana diagram in the RHS equals the fermion parity operator Eq.~\eqref{eq:global-fermion-parity}. When there are four strings in the manifold, one can move the location of a braid or scattering: 
\begin{equation}
\raisebox{-0.4cm}{\tikz{
\draw[fill, blue!10] (-0.1, -0.2) rectangle (1.2, 0.8);

\draw[thick, blue] (-0.1, -0.2)--(-0.1, 0.8);
\draw[thick, blue] (1.2, -0.2)--(1.2, 0.8);

\draw (0.15, -0.2) to[out=90, in=-120] (0.2, 0.13);
\draw (0.2, 0.47) to[out=120, in=-90] (0.15, 0.8);

\draw (0.45, -0.2) to[out=90, in=-60] (0.4, 0.13);
\draw (0.4, 0.47) to[out=60, in=-90] (0.45, 0.8);

\draw (0.7, -0.2)--(0.7, 0.8);
\draw (0.95, -0.2)--(0.95, 0.8);

\draw (0.5, 0.3) arc (0:360:0.2);
\node at (0.3, 0.3) {$\theta$};
}} \,\, = \,\, \raisebox{-0.4cm}{\tikz{
\draw[fill, blue!10] (-0.1, -0.2) rectangle (1.2, 0.8);

\draw[thick, blue] (-0.1, -0.2)--(-0.1, 0.8);
\draw[thick, blue] (1.2, -0.2)--(1.2, 0.8);

\draw (0.65, -0.2) to[out=90, in=-120] (0.7, 0.13);
\draw (0.7, 0.47) to[out=120, in=-90] (0.65, 0.8);

\draw (0.95, -0.2) to[out=90, in=-60] (0.9, 0.13);
\draw (0.9, 0.47) to[out=60, in=-90] (0.95, 0.8);

\draw (0.15, -0.2)--(0.15, 0.8);
\draw (0.4, -0.2)--(0.4, 0.8);

\draw (1, 0.3) arc (0:360:0.2);
\node at (0.8, 0.3) {$\theta$};
}} \,\, , 
\end{equation}
which can be derived by first expanding the scattering element as a sum of two diagrams using the expansion in TABLE~\ref{tab:Majorana-diagram-expansion}, and then applying Eq.~\eqref{eq:charges-out-of-nowhere}. 

The topology of the background manifold can be modified. Whenever the vertical or horizontal slice of the manifold contains no Majorana lines, we are free to reconnect the boundaries: 
\begin{equation}
\label{eq:shrink-bdy-1}
\raisebox{-0.45cm}{
\tikz{
\draw[fill, blue!10] (0, 0) rectangle (0.6, 1);

\draw[thick, blue] (0, 0)--(0, 1);
\draw[thick, blue] (0.6, 0)--(0.6, 1);
}} \,\, = \raisebox{-0.45cm}{
\tikz{
\draw[thick, blue, fill=blue!10] (0, 0)--(0, 0.1) arc (180:0:0.3)--(0.6, 0);
\draw[thick, blue, fill=blue!10] (0, 1)--(0, 0.9) arc (-180:0:0.3)--(0.6, 1);
}} 
\end{equation}
and
\begin{equation}
\label{eq:shrink-bdy-2}
\raisebox{-0.45cm}{
\tikz{
\draw[fill, blue!10] (-0.2, 1)--(-0.2, 0)--(1.1, 0)--(1.1, 0.2) arc (180:0:0.2)--(1.5, 0)--(2.6, 0)--(2.6, 1)--(1.5, 1)--(1.5, 0.8) arc (0:-180:0.2)--(1.1, 1);

\draw[thick,blue] (-0.2, 0)--(-0.2, 1);
\draw[thick,blue] (2.6, 0)--(2.6, 1);

\draw[thick,blue] (1.1, 0)--(1.1, 0.2) arc (180:0:0.2)--(1.5, 0);
\draw[thick,blue] (1.1, 1)--(1.1, 0.8) arc (-180:0:0.2)--(1.5, 1);

\draw (0, 0)--(0, 1);
\draw (0.9, 0)--(0.9, 1);
\draw (1.7, 0)--(1.7, 1);
\draw (2.4, 0)--(2.4, 1);

\node at (0.475, 0.5) {$\cdots$};
\node at (2.075, 0.5) {$\cdots$};
}} \,\, = \raisebox{-0.45cm}{
\tikz{
\draw[fill, blue!10] (-0.2, 0) rectangle (1.1, 1);
\draw[fill, blue!10] (1.5, 0) rectangle (2.6, 1);

\draw[thick, blue] (-0.2, 0)--(-0.2, 1);
\draw[thick, blue] (2.6, 0)--(2.6, 1);

\draw[thick,blue] (1.1, 0)--(1.1, 1);
\draw[thick,blue] (1.5, 0)--(1.5, 1);

\draw (0, 0)--(0, 1);
\draw (0.9, 0)--(0.9, 1);
\draw (1.7, 0)--(1.7, 1);
\draw (2.4, 0)--(2.4, 1);

\node at (0.475, 0.5) {$\cdots$};
\node at (2.075, 0.5) {$\cdots$};
}} \,\, . 
\end{equation}
Eq.~\eqref{eq:shrink-bdy-1} trivially holds as no Majorana lines intersect every time slice, making the parity projection Eq.~\eqref{eq:parity-even-projection} trivial. Eq.~\eqref{eq:shrink-bdy-2} also holds since the parity-even projection imposed in the ``middle'' part (along the vertical axis) of the diagram on the LHS is trivial as the Majoranas on the ``left'' and ''right'' parts (along the horizontal axis) are already in the parity-even sectors. 

When instead two Majorana lines are present between vertical or horizontal slices, we can still reconnect the boundaries of the manifold as follows: 
\begin{equation}
\label{eq:shrink-bdy-3}
\raisebox{-0.5cm}{
\tikz{
\fill[blue!10] (0, 0) rectangle (1, 1.1);

\draw[thick, blue] (0, 0)--(0, 1.1);
\draw[thick, blue] (1, 0)--(1, 1.1);

\draw (0.3, 0) --(0.3, 1.1);
\draw (0.7, 0) --(0.7, 1.1);
}} \,\, = \frac{1}{\sqrt{2}} \sum_{k=0, 1} \raisebox{-0.5cm}{
\tikz{
\fill[blue!10] (0, 0) rectangle (1, 1.1);

\draw[thick, blue] (0, 0)--(0, 1.1);
\draw[thick, blue] (1, 0)--(1, 1.1);

\draw (0.3, 0) --(0.3, 0.2);
\draw (0.3, 0.9) --(0.3, 1.1);
\draw (0.7, 0) --(0.7, 0.2);
\draw (0.7, 0.9) --(0.7, 1.1);

\draw (0.7, 0.9) arc (0:-180:0.2);
\draw (0.7, 0.2) arc (0:180:0.2);

\coordinate[label = left:$k$] (A) at (0.7, 0.2);
\coordinate[label = left:$k$] (B) at (0.7, 0.9);

\node at (A) [circle, fill, inner sep=1pt] {};
\node at (B) [circle, fill, inner sep=1pt] {};
}} \,\, = \frac{1}{\sqrt{2}} \raisebox{-0.5cm}{
\tikz{
\draw[thick, blue, fill=blue!10] (1, 1.1) arc (0:-180:0.5);
\draw[thick, blue, fill=blue!10] (1, 0) arc (0:180:0.5);

\draw (0.3, 0) --(0.3, 0.2);
\draw (0.3, 0.9) --(0.3, 1.1);
\draw (0.7, 0) --(0.7, 0.2);
\draw (0.7, 0.9) --(0.7, 1.1);

\draw (0.7, 0.9) arc (0:-180:0.2);
\draw (0.7, 0.2) arc (0:180:0.2);
}} \,\, , 
\end{equation}
where we used an expansion appearing in TABLE~\ref{tab:Majorana-diagram-expansion} in the first equality and applied Eq.~\eqref{eq:shrink-bdy-1}, along with the fact that the $k=1$ term vanishes due to the parity-even projection in the second equality. Similarly, 
\begin{align}
\label{eq:shrink-bdy-4}
\raisebox{-0.4cm}{
\tikz{
\fill[blue!10] (-0.2, 1)--(1.1, 1) arc (-180:0:0.2)--(2.8, 1)--(2.8, 0)--(1.5, 0) arc (0:180:0.2)--(-0.2, 0);
\draw[thick, blue] (-0.2, 0)--(-0.2, 1);
\draw[thick, blue] (2.8, 0)--(2.8, 1);
\draw[thick, blue] (1.5, 0.0) arc (0:180:0.2);
\draw[thick, blue] (1.5, 1) arc (0:-180:0.2);
\draw (0, 0)--(0, 1);
\draw (0.7, 0)--(0.7, 1);
\draw (1.9, 0)--(1.9, 1);
\draw (2.6, 0)--(2.6, 1);
\draw[red] (1.7, 1) arc (0:-180:0.4);
\draw[red] (1.7, 0) arc (0:180:0.4);
\node at (0.375, 0.5) {$\cdots$};
\node at (2.275, 0.5) {$\cdots$};
}} \,\, &= \frac{1}{\sqrt{2}} \sum_{\textcolor{red}{k}=0,1} \,\, \raisebox{-0.4cm}{
\tikz{
\fill[blue!10] (-0.2, 1)--(1.1, 1) arc (-180:0:0.2)--(2.8, 1)--(2.8, 0)--(1.5, 0) arc (0:180:0.2)--(-0.2, 0);
\draw[thick, blue] (-0.2, 0)--(-0.2, 1);
\draw[thick, blue] (2.8, 0)--(2.8, 1);
\draw[thick, blue] (1.5, 0.0) arc (0:180:0.2);
\draw[thick, blue] (1.5, 1) arc (0:-180:0.2);
\draw (0, 0)--(0, 1);
\draw (0.7, 0)--(0.7, 1);
\draw (1.9, 0) --(1.9, 1);
\draw (2.6, 0) --(2.6, 1);
\draw[red] (0.9,0)--(0.9,1);
\draw[red] (1.7,0)--(1.7,1);
\node at (0.375, 0.5) {$\cdots$};
\node at (2.275, 0.5) {$\cdots$};
\coordinate[label = left:{\scriptsize \textcolor{red}{$k$}}] (A) at (0.9, 0.5);
\coordinate[label = left:{\scriptsize \textcolor{red}{$k$}}] (B) at (1.7, 0.5);
\node at (A) [circle, fill, inner sep=1pt] {};
\node at (B) [circle, fill, inner sep=1pt] {};
}} \nonumber \\ 
&= \frac{1}{\sqrt{2}} \sum_{\textcolor{red}{k}=0, 1} \,\, \raisebox{-0.4cm}{
\tikz{
\fill[blue!10] (-0.2, 0) rectangle (1.1, 1);
\fill[blue!10] (1.5, 0) rectangle (2.8, 1);
\draw[thick, blue] (-0.2, 0)--(-0.2, 1);
\draw[thick, blue] (2.8, 0)--(2.8, 1);
\draw[thick, blue] (1.1, 0)--(1.1, 1);
\draw[thick, blue] (1.5, 0)--(1.5, 1);
\draw (0, 0)--(0, 1);
\draw (0.7, 0)--(0.7, 1);
\draw (1.9, 0)--(1.9, 1);
\draw (2.6, 0)--(2.6, 1);
\draw[red] (0.9, 0)--(0.9, 1);
\draw[red] (1.7, 0)--(1.7, 1);
\node at (0.375, 0.5) {$\cdots$};
\node at (2.275, 0.5) {$\cdots$};
\coordinate[label = left:{\scriptsize \textcolor{red}{$k$}}] (A) at (0.9, 0.5);
\coordinate[label = left:{\scriptsize \textcolor{red}{$k$}}] (B) at (1.7, 0.5);
\node at (A) [circle, fill, inner sep=1pt] {};
\node at (B) [circle, fill, inner sep=1pt] {};
}} \nonumber \\ 
&= \frac{1}{\sqrt{2}} \,\, \raisebox{-0.4cm}{
\tikz{
\fill[blue!10] (-0.2, 0) rectangle (1.1, 1);
\fill[blue!10] (1.5, 0) rectangle (2.8, 1);
\draw[thick, blue] (-0.2, 0)--(-0.2, 1);
\draw[thick, blue] (2.8, 0)--(2.8, 1);
\draw[thick, blue] (1.1, 0)--(1.1, 1);
\draw[thick, blue] (1.5, 0)--(1.5, 1);
\draw (0, 0)--(0, 1);
\draw (0.7, 0)--(0.7, 1);
\draw (1.9, 0) --(1.9, 1);
\draw (2.6, 0) --(2.6, 1);
\draw[red] (0.9,0)--(0.9,1);
\draw[red] (1.7,0)--(1.7,1);
\node at (0.375, 0.5) {$\cdots$};
\node at (2.275, 0.5) {$\cdots$};
}} \,\, , 
\end{align}
where we used an expansion appearing in TABLE~\ref{tab:Majorana-diagram-expansion} and Eq.~\eqref{eq:shrink-bdy-2}, and the strings being manipulated are highlighted in red for clarity. 

More non-trivial relations also follow. The first one is the \textit{string-genus relation}, which is the two-dimensional version of a more general string-genus relation appearing in the 3D Quon language~\cite{liu2017quon}: 
\begin{equation}
\label{eq:string-genus}
\raisebox{-1cm}{\tikz{
\begin{scope}[even odd rule]
\clip (-0.2, 0) rectangle (2.6, 1)  (1.5, 0.5) arc (0:360:0.2);

\fill [blue!10] (-0.2, 0) rectangle (2.6, 1);
\end{scope}

\draw[thick, blue] (-0.2, 0) --(-0.2, 1);
\draw[thick, blue] (2.6, 0) --(2.6, 1);

\draw[thick, blue] (1.5, 0.5) arc (0:360:0.2);

\draw (0, 0)--(0, 1);
\draw (0.9, 0)--(0.9, 1);
\draw (1.7, 0)--(1.7, 1);
\draw (2.4, 0)--(2.4, 1);

\node at (0.475, 0.5) {$\cdots$};
\node at (2.075, 0.5) {$\cdots$};

\draw[red] (1.6, 0.5) arc (0:360:0.3);

\draw [decorate, decoration = {brace, mirror}] (-0.1, -0.05)--(1.0, -0.05);
\draw [decorate, decoration = {brace, mirror}] (1.6, -0.05)--(2.5, -0.05);

\node at (0.5, -0.35) {\scriptsize $2m+1$};
\node at (2.1, -0.35) {\scriptsize $2n+1$};
}} \,\, = \frac{1}{\sqrt{2}} \raisebox{-1cm}{
\tikz{
\fill [blue!10] (-0.2, 0) rectangle (1.1, 1);

\draw[thick, blue] (-0.2, 0)--(-0.2, 1);
\draw[thick, blue] (1.1, 0)--(1.1, 1);

\draw (0, 0)--(0, 1);
\draw (0.9, 0)--(0.9, 1);

\node at (0.475, 0.5) {$\cdots$};

\draw [decorate, decoration = {brace, mirror}] (-0.1, -0.05)--(1.0, -0.05);
\node at (0.5, -0.35) {\scriptsize $2m+2n+2$};
}} \,\, , 
\end{equation}
where the hole in the background manifold can be removed along with its adjacent enclosing Majorana loop. As the genus in a 3D Quon diagram corresponds to a hole in its 2D counterpart, the relation could be referred to as the \textit{string-hole} relation. However, to maintain the consistency in terminology, we retain the nomenclature ``string-genus relation.'' 

\begin{proof}
The LHS is equal to 
\begin{align}
&\raisebox{-1cm}{\tikz{
\draw (0, 0)--(0, 1);
\draw (0.1, 0)--(0.1, 1);
\draw (0.9, 0)--(0.9, 1);
\draw (1.7, 0)--(1.7, 1);
\draw (2.3, 0)--(2.3, 1);
\draw (2.4, 0)--(2.4, 1);
\draw[red] (1.6, 0.5) arc (0:360:0.3);
\draw[thick, blue] (-0.2, 0.9) --(2.6, 0.9);
\node at (0.525, 0.2) {$\cdots$};
\node at (2.05, 0.2) {$\cdots$};
\draw[thick, blue] (1.15, 0.35)--(-0.2, 0.35);
\draw [decorate, decoration = {brace, mirror}] (-0.1, -0.05)--(1.0, -0.05);
\draw [decorate, decoration = {brace, mirror}] (1.6, -0.05)--(2.5, -0.05);
\node at (0.5, -0.35) {\scriptsize $2m+1$};
\node at (2.1, -0.35) {\scriptsize $2n+1$};
}} = \frac{1}{\sqrt{2}} \sum_{k=0,1} \,\, \raisebox{-0.7cm}{\tikz{
\draw (0, 0)--(0, 1.2);
\draw (0.1, 0)--(0.1, 1.2);
\draw (1.7, 0)--(1.7, 1.2);
\draw (2.3, 0)--(2.3, 1.2);
\draw (2.4, 0)--(2.4, 1.2);
\draw (0.9, 0)--(0.9, 0.45) to[out=90, in=110, looseness=1.5] (1.2, 0.45) to[out=-70, in=-90, looseness=2] (1.6, 0.65) to[out=90, in=70, looseness=2] (1.2, 0.85) to[out=-110, in=-90, looseness=1.5] (0.9, 0.85)--(0.9, 1.2);
\draw[thick, blue] (-0.2, 1.1) --(2.6, 1.1);
\node at (0.525, 0.2) {$\cdots$};
\node at (2.05, 0.2) {$\cdots$};
\draw[thick, blue] (-0.2, 0.33)--(1.38, 0.33);
\node[label={[xshift=-0.15cm, yshift=-0.23cm] \scriptsize $k$}] at (0.9, 0.45) [circle, fill, inner sep=1pt] {};
\node[label={[xshift=-0.15cm, yshift=-0.23cm] \scriptsize $k$}] at (0.9, 0.85) [circle, fill, inner sep=1pt] {};
}} \nonumber \\ 
& = \frac{1}{2^{3/2}} \sum_{k=0,1} \Bigg( \raisebox{-0.7cm}{\tikz{
\draw (0, 0)--(0, 1.2);
\draw (0.1, 0)--(0.1, 1.2);
\draw (1.6, 0)--(1.6, 1.2);
\draw (2.2, 0)--(2.2, 1.2);
\draw (2.3, 0)--(2.3, 1.2);
\draw (0.9, 0)--(0.9, 0.45) to[out=90, in=90, looseness=1.8] (1.2, 0.45)--(1.2, 0.3) to[out=-90, in=-90, looseness=2] (1.5, 0.65) to[out=90, in=90, looseness=2] (1.2, 0.85) to[out=-90, in=-90, looseness=1.8] (0.9, 0.85)--(0.9, 1.2);
\draw[thick, blue] (-0.2, 1.1)--(2.5, 1.1);
\node at (0.525, 0.2) {$\cdots$};
\node at (1.925, 0.2) {$\cdots$};
\node[label={[xshift=-0.15cm, yshift=-0.23cm] \scriptsize $k$}] at (0.9, 0.45) [circle, fill, inner sep=1pt] {};
\node[label={[xshift=-0.15cm, yshift=-0.23cm] \scriptsize $k$}] at (0.9, 0.85) [circle, fill, inner sep=1pt] {};
}} + \raisebox{-0.7cm}{\tikz{
\draw (0, 0)--(0, 1.2);
\draw (0.1, 0)--(0.1, 1.2);
\draw (1.6, 0)--(1.6, 1.2);
\draw (2.2, 0)--(2.2, 1.2);
\draw (2.3, 0)--(2.3, 1.2);
\draw (0.9, 0)--(0.9, 0.45) to[out=90, in=90, looseness=1.8] (1.2, 0.45)--(1.2, 0.3) to[out=-90, in=-90, looseness=2] (1.5, 0.65) to[out=90, in=90, looseness=2] (1.2, 0.85) to[out=-90, in=-90, looseness=1.8] (0.9, 0.85)--(0.9, 1.2);
\draw[thick, blue] (-0.2, 1.1)--(2.5, 1.1);
\node at (0.525, 0.2) {$\cdots$};
\node at (1.925, 0.2) {$\cdots$};
\node[label={[xshift=-0.15cm, yshift=-0.23cm] \scriptsize $k$}] at (0.9, 0.45) [circle, fill, inner sep=1pt] {};
\node[label={[xshift=-0.15cm, yshift=-0.23cm] \scriptsize $k$}] at (0.9, 0.85) [circle, fill, inner sep=1pt] {};
\node at (0, 0.1) [circle, fill, inner sep=1pt] {};
\node at (0.1, 0.1) [circle, fill, inner sep=1pt] {};
\node at (0.9, 0.3) [circle, fill, inner sep=1pt] {};
\node at (1.2, 0.3) [circle, fill, inner sep=1pt] {};
}} \Bigg) \nonumber \\ 
& = \frac{1}{2^{3/2}} \sum_{k=0,1} \Bigg( \raisebox{-1.1cm}{\tikz{
\draw (0, 0)--(0, 1.2);
\draw (0.1, 0)--(0.1, 1.2);
\draw (0.9, 0)--(0.9, 1.2);
\draw (1.2, 0)--(1.2, 1.2);
\draw (1.7, 0)--(1.7, 1.2);
\draw (1.8, 0)--(1.8, 1.2);
\draw[thick, blue] (-0.2, 1.1)--(2, 1.1);
\node at (0.525, 0.2) {$\cdots$};
\node at (1.475, 0.2) {$\cdots$};
\node[label={[xshift=-0.15cm, yshift=-0.23cm] \scriptsize $k$}] at (0.9, 0.45) [circle, fill, inner sep=1pt] {};
\node[label={[xshift=-0.15cm, yshift=-0.23cm] \scriptsize $k$}] at (0.9, 0.85) [circle, fill, inner sep=1pt] {};
\draw [decorate, decoration = {brace, mirror}] (-0.05, -0.05)--(0.95, -0.05);
\draw [decorate, decoration = {brace, mirror}] (1.15, -0.05)--(1.85, -0.05);
\node at (0.475, -0.35) {\scriptsize $2m+1$};
\node at (1.525, -0.35) {\scriptsize $2n+1$};
}} + (-1)^k \raisebox{-1.1cm}{\tikz{
\draw (0, 0)--(0, 1.2);
\draw (0.1, 0)--(0.1, 1.2);
\draw (0.9, 0)--(0.9, 1.2);
\draw (1.2, 0)--(1.2, 1.2);
\draw (1.7, 0)--(1.7, 1.2);
\draw (1.8, 0)--(1.8, 1.2);
\draw[thick, blue] (-0.2, 1.1)--(2, 1.1);
\node at (0.5, 0.2) {$\cdots$};
\node at (1.5, 0.2) {$\cdots$};
\node[label={[xshift=-0.15cm, yshift=-0.23cm] \scriptsize $k$}] at (0.9, 0.45) [circle, fill, inner sep=1pt] {};
\node[label={[xshift=-0.15cm, yshift=-0.23cm] \scriptsize $k$}] at (0.9, 0.85) [circle, fill, inner sep=1pt] {};
\node at (0, 0.1) [circle, fill, inner sep=1pt] {};
\node at (0.1, 0.1) [circle, fill, inner sep=1pt] {};
\draw [decorate, decoration = {brace, mirror}] (-0.05, -0.05)--(0.95, -0.05);
\draw [decorate, decoration = {brace, mirror}] (1.15, -0.05)--(1.85, -0.05);
\node at (0.475, -0.35) {\scriptsize $2m+1$};
\node at (1.525, -0.35) {\scriptsize $2n+1$};
}} \Bigg) \nonumber \\ 
& = \frac{1}{2^{3/2}} \sum_{k=0,1} \Bigg( \raisebox{-1.1cm}{\tikz{
\draw (0, 0)--(0, 1.2);
\draw (0.1, 0)--(0.1, 1.2);
\draw (0.9, 0)--(0.9, 1.2);
\draw (1.2, 0)--(1.2, 1.2);
\draw (1.7, 0)--(1.7, 1.2);
\draw (1.8, 0)--(1.8, 1.2);
\draw[thick, blue] (-0.2, 1.1)--(2, 1.1);
\node at (0.525, 0.2) {$\cdots$};
\node at (1.475, 0.2) {$\cdots$};
\draw [decorate, decoration = {brace, mirror}] (-0.05, -0.05)--(0.95, -0.05);
\draw [decorate, decoration = {brace, mirror}] (1.15, -0.05)--(1.85, -0.05);
\node at (0.475, -0.35) {\scriptsize $2m+1$};
\node at (1.525, -0.35) {\scriptsize $2n+1$};
}} + (-1)^k \raisebox{-1.1cm}{\tikz{
\draw (0, 0)--(0, 1.2);
\draw (0.1, 0)--(0.1, 1.2);
\draw (0.9, 0)--(0.9, 1.2);
\draw (1.2, 0)--(1.2, 1.2);
\draw (1.7, 0)--(1.7, 1.2);
\draw (1.8, 0)--(1.8, 1.2);
\draw[thick, blue] (-0.2, 1.1) --(2, 1.1);
\node at (0.525, 0.2) {$\cdots$};
\node at (1.475, 0.2) {$\cdots$};
\node at (0, 0.1) [circle, fill, inner sep=1pt] {};
\node at (0.1, 0.1) [circle, fill, inner sep=1pt] {};
\draw [decorate, decoration = {brace, mirror}] (-0.05, -0.05)--(0.95, -0.05);
\draw [decorate, decoration = {brace, mirror}] (1.15, -0.05)--(1.85, -0.05);
\node at (0.475, -0.35) {\scriptsize $2m+1$};
\node at (1.525, -0.35) {\scriptsize $2n+1$};
}} \Bigg) \nonumber \\ 
&= \frac{1}{\sqrt{2}} \raisebox{-0.8cm}{\tikz{
\draw (0, 0)--(0, 0.8);
\draw (0.1, 0)--(0.1, 0.8);
\draw (0.9, 0)--(0.9, 0.8);
\draw (1, 0)--(1, 0.8);
\draw[thick, blue] (-0.2, 0.4)--(1.2, 0.4);
\draw [decorate, decoration = {brace, mirror}] (-0.05, -0.05)--(1.05, -0.05);
\node at (0.525, 0.2) {$\cdots$};
\node at (0.5, -0.35) {\scriptsize $2m+2n+2$};
}} \,\, , 
\end{align}
where we used an expansion appearing in TABLE~\ref{tab:Majorana-diagram-expansion} in the first equality, the parity projection Eq.~\eqref{eq:parity-even-projection} in the second equality, 
\begin{equation}
\raisebox{-0.1cm}{\tikz{
\draw (0, -0.1)--(0, 0.2) arc (180:0:0.2)--(0.4, -0.1);

\node at (0, 0) [circle, fill, inner sep=1pt] {};
\node at (0.4, 0) [circle, fill, inner sep=1pt] {};

\node[label={[xshift=-0.15cm, yshift=-0.23cm] \scriptsize $k$}] at (0, 0.2) [circle, fill, inner sep=1pt] {};
}} \,\, = (-1)^k \raisebox{-0.1cm}{\tikz{
\draw (0, -0.1)--(0, 0.2) arc (180:0:0.2)--(0.4, -0.1);

\node[label={[xshift=-0.15cm, yshift=-0.23cm] \scriptsize $k$}] at (0, 0.2) [circle, fill, inner sep=1pt] {};
}}
\end{equation}
in the third equality, and the final equality follows from the fact that two diagrams on the fourth line are independent of $k$ and are, in fact, identical.
\end{proof}

The second one is the \textit{SWAP-hole relation}. This relation is one of the unique features of the 2d Quon language, for which there is no corresponding counterpart exists in the 3D Quon language. In a sense, it reflects that a 2D Quon diagram is ultimately a projection of a 3D Quon diagram. The SWAP-hole relation is given by 
\begin{equation}
\raisebox{-0.5cm}{\tikz{
\begin{scope}[even odd rule]
\clip (0, 0) rectangle (1.8, 1)  (1, 0.85) arc (0:360:0.1);
\clip (0, 0) rectangle (1.8, 1)  (1, 0) arc (0:180:0.1);

\fill [blue!10] (0, 0) rectangle (1.8, 1);
\end{scope}

\draw[thick, blue] (1, 0.85) arc (0:360:0.1);
\draw[thick, blue] (1, 0) arc (0:180:0.1);

\draw (1.2, 0) to[out=135, in=-80] (0.1, 1);
\draw (1.7, 0) to[out=135, in=-80] (0.6, 1);

\draw (0.1, 0) to[out=45, in=-150] (0.6, 0.35);
\draw (0.7, 0.4)--(0.85, 0.5);
\draw (0.95, 0.58) to[out=45, in=-100] (1.2, 1);

\draw (0.6, 0) to[out=45, in=-140] (0.85, 0.19);
\draw (0.94, 0.25)--(1.1, 0.35);
\draw (1.2, 0.4) to[out=30, in=-100] (1.7, 1);

\draw[thick, blue] (0, 0)--(0, 1);
\draw[thick, blue] (1.8, 0)--(1.8, 1);

\node at (0.5, 0.1) {\scriptsize $\cdots$};
\node at (0.4, 0.9) {\scriptsize $\cdots$};
\node at (1.35, 0.1) {\scriptsize $\cdots$};
\node at (1.45, 0.9) {\scriptsize $\cdots$};
}} \,\, = \,\, \raisebox{-0.5cm}{\tikz{
\begin{scope}[even odd rule]
\clip (0, 0) rectangle (1.8, 1)  (1, 0) arc (0:180:0.1);

\fill [blue!10] (0, 0) rectangle (1.8, 1);
\end{scope}

\draw[thick, blue] (1, 0) arc (0:180:0.1);

\draw (1.2, 0) to[out=135, in=-80] (0.1, 1);
\draw (1.7, 0) to[out=135, in=-80] (0.6, 1);

\draw (0.1, 0) to[out=45, in=-150] (0.6, 0.35);
\draw (0.7, 0.4)--(0.85, 0.5);
\draw (0.95, 0.58) to[out=45, in=-100] (1.2, 1);

\draw (0.6, 0) to[out=45, in=-140] (0.85, 0.19);
\draw (0.94, 0.25)--(1.1, 0.35);
\draw (1.2, 0.4) to[out=30, in=-100] (1.7, 1);

\draw[thick, blue] (0, 0)--(0, 1);
\draw[thick, blue] (1.8, 0)--(1.8, 1);

\node at (0.5, 0.1) {\scriptsize $\cdots$};
\node at (0.4, 0.9) {\scriptsize $\cdots$};
\node at (1.35, 0.1) {\scriptsize $\cdots$};
\node at (1.45, 0.9) {\scriptsize $\cdots$};
}} \,\, . 
\end{equation}
This is referred to as the SWAP-hole relation, as the ``inner'' hole can be removed if it is adjacent to the SWAP operator. The proof follows directly from the fact that the parity-even projections are preserved under the braidings within the SWAP operator.

Before closing the section, let us make a final remark regarding an algorithm for evaluating a closed 2D Quon diagram. Consider the following example: 
\begin{equation}
\label{eq:quon-diagram-to-sum-of-majorana-diagrams}
\raisebox{-1.6cm}{
\begin{tikzpicture}[scale=0.7]
\begin{scope}[even odd rule]
\clip (-0.2, 0.4) to[out=-130, in=130] (-0.2, -2.8) to[out=-50, in=-130, looseness=0.7] (3.6, -2.8) to[out=50, in=-50] (3.6, 0.4) to[out=130, in=50, looseness=0.7] (-0.2, 0.4)  (1.2, -0.5) arc (0:360:0.2);
\clip (-0.2, 0.4) to[out=-130, in=130] (-0.2, -2.8) to[out=-50, in=-130, looseness=0.7] (3.6, -2.8) to[out=50, in=-50] (3.6, 0.4) to[out=130, in=50, looseness=0.7] (-0.2, 0.4)  (2.6, -0.5) arc (0:360:0.2);
\clip (-0.2, 0.4) to[out=-130, in=130] (-0.2, -2.8) to[out=-50, in=-130, looseness=0.7] (3.6, -2.8) to[out=50, in=-50] (3.6, 0.4) to[out=130, in=50, looseness=0.7] (-0.2, 0.4)  (1.9, -1.8) arc (0:360:0.2);

\fill [blue!10] (-0.2, 0.4) to[out=-130, in=130] (-0.2, -2.8) to[out=-50, in=-130, looseness=0.7] (3.6, -2.8) to[out=50, in=-50] (3.6, 0.4) to[out=130, in=50, looseness=0.7] (-0.2, 0.4);
\end{scope}

\draw[thick, blue] (-0.2, 0.4) to[out=-130, in=130] (-0.2, -2.8) to[out=-50, in=-130, looseness=0.7] (3.6, -2.8) to[out=50, in=-50] (3.6, 0.4) to[out=130, in=50, looseness=0.7] (-0.2, 0.4);

\draw[thick, blue] (1.2, -0.5) arc (0:360:0.2);
\draw[thick, blue] (2.6, -0.5) arc (0:360:0.2);
\draw[thick, blue] (1.9, -1.8) arc (0:360:0.2);

\draw (0, 0.4) rectangle (3.4, 0);
\node at (1.7, 0.2) {\scriptsize $f_1$};

\draw (0.2, 0) to[out=-90, in=90] (-0.2, -1);
\draw (0.4, 0) to[out=-90, in=90] (0, -1);
\draw (0.6, 0) to[out=-90, in=90] (0.2, -1);
\draw (0.8, 0) to[out=-90, in=90] (0.4, -1);

\draw (1.4, 0) to[out=-90, in=90] (1.4, -1);
\draw (1.6, 0) to[out=-90, in=90] (1.6, -1);
\draw (1.8, 0) to[out=-90, in=90] (1.8, -1);
\draw (2, 0) to[out=-90, in=90] (2, -1);

\draw (2.6, 0) to[out=-90, in=90] (3, -1);
\draw (2.8, 0) to[out=-90, in=90] (3.2, -1);
\draw (3, 0) to[out=-90, in=90] (3.4, -1);
\draw (3.2, 0) to[out=-90, in=90] (3.6, -1);

\draw (-0.4, -1.4) rectangle (3.8, -1);
\node at (1.7, -1.2) {\scriptsize $f_2$};

\draw (0, -1.4) to[out=-90, in=90] (0.4, -2.4);
\draw (0.2, -1.4) to[out=-90, in=90] (0.6, -2.4);
\draw (0.4, -1.4) to[out=-90, in=90] (0.8, -2.4);
\draw (0.6, -1.4) to[out=-90, in=90] (1, -2.4);
\draw (0.8, -1.4) to[out=-90, in=90] (1.2, -2.4);
\draw (1, -1.4) to[out=-90, in=90] (1.4, -2.4);

\draw (2.4, -1.4) to[out=-90, in=90] (2, -2.4);
\draw (2.6, -1.4) to[out=-90, in=90] (2.2, -2.4);
\draw (2.8, -1.4) to[out=-90, in=90] (2.4, -2.4);
\draw (3, -1.4) to[out=-90, in=90] (2.6, -2.4);
\draw (3.2, -1.4) to[out=-90, in=90] (2.8, -2.4);
\draw (3.4, -1.4) to[out=-90, in=90] (3, -2.4);

\draw (0, -2.8) rectangle (3.4, -2.4);
\node at (1.7, -2.6) {\scriptsize $f_3$};
\end{tikzpicture}
} \, = \, \raisebox{-1.15cm}{
\begin{tikzpicture}[scale=0.7]
\draw[thick, blue] (-0.2, -0.3)--(1, -0.3);
\draw[thick, blue] (-0.3, -0.7)--(2.3, -0.7);
\draw[thick, blue] (-0.1, -1.9)--(1.5, -1.9);

\draw (0, 0.4) rectangle (3.4, 0);
\node at (1.7, 0.2) {\scriptsize $f_1$};

\draw (0.2, 0) to[out=-90, in=90] (-0.2, -1);
\draw (0.4, 0) to[out=-90, in=90] (0, -1);
\draw (0.6, 0) to[out=-90, in=90] (0.2, -1);
\draw (0.8, 0) to[out=-90, in=90] (0.4, -1);

\draw (1.4, 0) to[out=-90, in=90] (1.4, -1);
\draw (1.6, 0) to[out=-90, in=90] (1.6, -1);
\draw (1.8, 0) to[out=-90, in=90] (1.8, -1);
\draw (2, 0) to[out=-90, in=90] (2, -1);

\draw (2.6, 0) to[out=-90, in=90] (3, -1);
\draw (2.8, 0) to[out=-90, in=90] (3.2, -1);
\draw (3, 0) to[out=-90, in=90] (3.4, -1);
\draw (3.2, 0) to[out=-90, in=90] (3.6, -1);

\draw (-0.4, -1.4) rectangle (3.8, -1);
\node at (1.7, -1.2) {\scriptsize $f_2$};

\draw (0, -1.4) to[out=-90, in=90] (0.4, -2.4);
\draw (0.2, -1.4) to[out=-90, in=90] (0.6, -2.4);
\draw (0.4, -1.4) to[out=-90, in=90] (0.8, -2.4);
\draw (0.6, -1.4) to[out=-90, in=90] (1, -2.4);
\draw (0.8, -1.4) to[out=-90, in=90] (1.2, -2.4);
\draw (1, -1.4) to[out=-90, in=90] (1.4, -2.4);

\draw (2.4, -1.4) to[out=-90, in=90] (2, -2.4);
\draw (2.6, -1.4) to[out=-90, in=90] (2.2, -2.4);
\draw (2.8, -1.4) to[out=-90, in=90] (2.4, -2.4);
\draw (3, -1.4) to[out=-90, in=90] (2.6, -2.4);
\draw (3.2, -1.4) to[out=-90, in=90] (2.8, -2.4);
\draw (3.4, -1.4) to[out=-90, in=90] (3, -2.4);

\draw (0, -2.8) rectangle (3.4, -2.4);
\node at (1.7, -2.6) {\scriptsize $f_3$};
\end{tikzpicture}
} \,\, , 
\end{equation}
where $f_1$, $f_2$, and $f_3$ are parity-even Majorana diagrams. One approach to evaluate the Quon diagram is to apply Eq.~\eqref{eq:parity-even-projection} repeatedly starting from the RHS, thereby rewriting the diagram as a sum over closed Majorana diagrams. Since each closed Majorana diagram can be efficiently evaluated using matchgate computation as outlined in Sec.~\ref{sec:Majorana-diagrams}, a closed \textit{Quon diagram} containing $n_h$ holes can be rewritten as a summation of $2^{n_h}$ \textit{Majorana diagrams}, where each individual term can be efficiently computed using matchgate computations.

\section{Quon Representations and Universality}
\label{sec:quon-universal}
In this section, we employ 2D Quon diagrams to represent quantum states, gates, and, more generally, tensor networks. One of the main goal of this section is to show the universality of the Quon diagram. By \textit{universality}, we mean that any tensor network can be represented by a \textit{single} Quon diagram, not a superposition of Quon diagrams. To establish this, we first illustrate how Quon diagrams can represent arbitrary quantum states and unitary quantum gates, using Clifford and matchgate quantum computations as examples. 

We then take a slight detour by introducing \textit{planar tensor networks} and tensor network operations. While most of the tensor operations discussed in the section are fairly standard, we introduce the \textit{planar region} as a new tool to manifest the network's planarity. Additionally, we interpret matchgate tensor networks through the lens of the planar region and introduce a new class of tensor networks called the \textit{punctured matchgate tensor network}. This punctured matchgate tensor network is subsequently used to prove our decomposition theorem: any tensor network can be rewritten as a contraction between a single Clifford tensor and a single matchgate tensor. 

We conclude the section by demonstrating that planar tensor network operations are naturally compatible with the framework of 2D Quon diagrams and that a generating set of arbitrary tensor networks can be represented by 2D Quon diagrams. By leveraging this compatibility and the representability, we establish the universality of the 2D Quon diagrammatic language. Notably, constructing a 2D Quon diagram from a given tensor network is efficient in the following sense: if the tensor network is specified by tensors each with an efficient representation (e.g., (potentially large-ranked) Clifford and matchgate tensors and small-ranked tensors whose components are provided), then a representing Quon diagram can always be efficiently found on a classical computer, although it is not unique due to diagrammatic rewriting rules.

\subsection{Arbitrary Quantum States and Gates}
\label{sec:quon-for-q-states-gates}
Here, we illustrate how 2D Quon diagrams can encode arbitrary quantum states and gates. In Sec.~\ref{sec:Majorana-diagrams}, we demonstrated that a Majorana diagram can encode matchgate quantum circuits by employing the dense encoding, i.e., encoding $1$ qubit using $2$ Majoranas. We also explained that the SWAP gate is not encodable using the dense encoding; however, as shown in Sec.~\ref{sec:background-manifold}, especially in Eq.~\eqref{eq:SWAP-quon-diagram}, the SWAP gate can be realized using the background manifold. Here, we utilize the parity-even projections imposed by the background manifold Eq.~\eqref{eq:parity-even-projection} to encode $1$ qubit using $4$ Majoranas following the approach originally adopted using the 3D Quon language~\cite{liu2017quon, liu2019quon}, now using the 2D Quon language. Such an encoding is also referred to as the \textit{sparse encoding} in the literature~\cite{sarma2015majorana}, but now enriched by the additional flexibility afforded by the diagrammatic rewriting rules of Quon. 

In the remainder of this section, we use the term ``logical'' to denote sparse encoding in order to distinguish it from the dense encoding discussed in Sec.~\ref{sec:Majorana-diagrams}. However, whenever the meaning is clear from the context, we omit the term ``logical.'' As a proof-of-principle demonstration, we construct 2D Quon diagrams for examples drawn from Clifford and matchgate computations. 

As in Sec.~\ref{sec:Quon-diagrams}, time flows vertically from top to bottom. The $\dagger$-operation functions similarly, taking horizontal reflections of the diagrams, applying complex conjugation to scalar coefficients, negating the scattering angles. The tensor product corresponds to stacking in the horizontal direction, and the composition corresponds to gluing in the vertical direction. 

\subsubsection{States}
We first represent the logical computational basis states using 2D Quon diagrams\footnote{We adopt a different convention from Ref.~\onlinecite{liu2017quon}. Our convention differs by a Hadamard basis change and is particularly useful when representing matchgate computations.}: 
\begin{equation}
\label{eq:logical-qubit-4-majoranas}
\vert 0 \rangle_\textrm{L} := \frac{1}{\sqrt{2}} \,\, 
\raisebox{-0.2cm}{
\tikz{
\draw[thick, blue, fill=blue!10] (0, 0)--(0, 0.3) arc (180:0:0.5)--(1, 0);

\draw (0.2, 0)--(0.2, 0.3) arc (180:0:0.3)--(0.8, 0);
\draw (0.4, 0)--(0.4, 0.3) arc (180:0:0.1)--(0.6, 0);
}} \,\, \quad \textrm{and} \quad 
\vert 1 \rangle_\textrm{L} := \frac{1}{\sqrt{2}} \,\, 
\raisebox{-0.2cm}{
\tikz{
\draw[thick, blue, fill=blue!10] (0, 0)--(0, 0.3) arc (180:0:0.5)--(1, 0);

\draw (0.2, 0)--(0.2, 0.3) arc (180:0:0.3)--(0.8, 0);
\draw (0.4, 0)--(0.4, 0.3) arc (180:0:0.1)--(0.6, 0);

\node at (0.6, 0.2) [circle, fill, inner sep=1pt] {};
\node at (0.8, 0.2) [circle, fill, inner sep=1pt] {};
}} \,\, , 
\end{equation}
where $1/\sqrt{2}$ is a normalization factor due to Majorana loop amplitude appearing in TABLE.~\eqref{tab:majorana-rewriting-rules} and~\eqref{eq:eval-empty-background}. Under the $\dagger$-operation, we have the corresponding ket states: 
\begin{equation}
\langle 0 \vert_\textrm{L} = \frac{1}{\sqrt{2}} \,\, 
\raisebox{-0.35cm}{\tikz{
\draw[thick, blue, fill=blue!10] (0, 0.8)--(0, 0.5) arc (-180:0:0.5)--(1, 0.8);

\draw (0.2, 0.8)--(0.2, 0.5) arc (-180:0:0.3)--(0.8, 0.8);
\draw (0.4, 0.8)--(0.4, 0.5) arc (-180:0:0.1)--(0.6, 0.8);
}} \,\, \quad \textrm{and} \quad 
\langle 1 \vert_\textrm{L} = \frac{1}{\sqrt{2}} \,\, 
\raisebox{-0.35cm}{\tikz{
\draw[thick, blue, fill=blue!10] (0, 0.8)--(0, 0.5) arc (-180:0:0.5)--(1, 0.8);

\draw (0.2, 0.8)--(0.2, 0.5) arc (-180:0:0.3)--(0.8, 0.8);
\draw (0.4, 0.8)--(0.4, 0.5) arc (-180:0:0.1)--(0.6, 0.8);

\node at (0.6, 0.6) [circle, fill, inner sep=1pt] {};
\node at (0.8, 0.6) [circle, fill, inner sep=1pt] {};
}} \,\, . 
\end{equation}
The logical $\vert + \rangle_\textrm{L}$ and $\vert - \rangle_\textrm{L}$ states are represented as 
\begin{equation}
\label{eq:logical-qubit-pm-4-majoranas}
\vert + \rangle_\textrm{L} = \frac{1}{\sqrt{2}} \,\, 
\raisebox{-0.2cm}{
\tikz{
\draw[thick, blue, fill=blue!10] (0, 0)--(0, 0.3) arc (180:0:0.5)--(1, 0);

\draw (0.2, 0)--(0.2, 0.3) arc (180:0:0.1)--(0.4, 0);
\draw (0.6, 0)--(0.6, 0.3) arc (180:0:0.1)--(0.8, 0);
}} \,\, \quad \textrm{and} \quad 
\vert - \rangle_\textrm{L} = \frac{1}{\sqrt{2}} \,\, 
\raisebox{-0.2cm}{
\tikz{
\draw[thick, blue, fill=blue!10] (0, 0)--(0, 0.3) arc (180:0:0.5)--(1, 0);

\draw (0.2, 0)--(0.2, 0.3) arc (180:0:0.1)--(0.4, 0);
\draw (0.6, 0)--(0.6, 0.3) arc (180:0:0.1)--(0.8, 0);

\node at (0.4, 0.2) [circle, fill, inner sep=1pt] {};
\node at (0.6, 0.2) [circle, fill, inner sep=1pt] {};
}} \,\, , 
\end{equation}
which can be demonstrated, e.g., by taking the inner product with the states in Eq.~\eqref{eq:logical-qubit-4-majoranas} and applying the diagrammatic rules in TABLE.~\eqref{tab:majorana-rewriting-rules} to evaluate the diagrams. 

The tensor product is represented by the horizontal stacking of diagrams; for example: 
\begin{equation}
\vert 0, 0 \rangle_\textrm{L} = \frac{1}{2} \,\,\,\, 
\raisebox{-0.1cm}{\tikz{
\draw[thick, blue, fill=blue!10] (1, 0.5) arc (0:180:0.5);
\draw[thick, blue, fill=blue!10] (2.4, 0.5) arc (0:180:0.5);

\draw (0.6, 0.5) arc (0:180:0.1);
\draw (0.8, 0.5) arc (0:180:0.3);

\draw (2, 0.5) arc (0:180:0.1);
\draw (2.2, 0.5) arc (0:180:0.3);
}} \,\, , 
\end{equation}
where we also multiply the corresponding scalar coefficients. 

\subsubsection{Gates}
Next, we represent a set of $1$- and $2$-qubit unitary gates using 2D Quon diagrams, where each unitary is either Clifford or matchgate. Moreover, this set forms a universal generating set for arbitrary unitaries. Combined with the Quon diagrammatic representations of basis states, these results imply that the 2D Quon language is universal for arbitrary quantum computation. 

First of all, the \textit{identity} operator is represented by \textit{$4$ parallel Majorana strings}: 
\begin{equation}
I_\textrm{L} = 
\raisebox{-0.2cm}{\tikz{
\draw[fill, blue!10] (0, 0) rectangle (1, 0.6);

\draw[thick, blue] (0, 0)--(0, 0.6);
\draw[thick, blue] (1, 0)--(1, 0.6);

\draw (0.2, 0)--(0.2, 0.6);
\draw (0.4, 0)--(0.4, 0.6);
\draw (0.6, 0)--(0.6, 0.6);
\draw (0.8, 0)--(0.8, 0.6);
}} \,\, , 
\end{equation}
and the \textit{resolution of the identity}, a familiar identity $I_\textrm{L} = \vert 0 \rangle \langle 0 \vert_\textrm{L} + \vert 1 \rangle \langle 1 \vert_\textrm{L}$ expressed in diagrams, can be derived as follows: 
\begin{align}
\label{eq:resolution-of-id-1}
\raisebox{-0.55cm}{\tikz{
\draw[fill, blue!10] (0, 0) rectangle (1, 1.2);
\draw[thick, blue] (0, 0)--(0, 1.2);
\draw[thick, blue] (1, 0)--(1, 1.2);
\draw (0.2, 0)--(0.2, 1.2);
\draw (0.4, 0)--(0.4, 1.2);
\draw (0.6, 0)--(0.6, 1.2);
\draw (0.8, 0)--(0.8, 1.2);
}} \,\, &= \frac{1}{2} \sum_{k, l = 0, 1} \raisebox{-0.55cm}{\tikz{
\draw[thick, blue] (0.1, 0.1)--(0.9, 0.1);
\draw[thick, blue] (0.1, 1.1)--(0.9, 1.1);
\draw (0.2, 0)--(0.2, 0.2) arc (180:0:0.3)--(0.8, 0);
\draw (0.4, 0)--(0.4, 0.2) arc (180:0:0.1)--(0.6, 0);
\draw (0.2, 1.2)--(0.2, 1) arc (-180:0:0.3)--(0.8, 1.2);
\draw (0.4, 1.2)--(0.4, 1) arc (-180:0:0.1)--(0.6, 1.2);
\node[label={[xshift=-0.11cm, yshift=-0.23cm] \tiny $k$}] at (0.6, 0.2) [circle, fill, inner sep=1pt] {};
\node[label={[xshift=-0.11cm, yshift=-0.23cm] \tiny $k$}] at (0.6, 1) [circle, fill, inner sep=1pt] {};
\node[label={[xshift=-0.11cm, yshift=-0.23cm] \tiny $l$}] at ({0.5 + 0.3*cos(35)}, {0.2 + 0.3*sin(35)}) [circle, fill, inner sep=1pt] {};
\node[label={[xshift=-0.11cm, yshift=-0.23cm] \tiny $l$}] at ({0.5 + 0.3*cos(35)}, {1 - 0.3*sin(35)}) [circle, fill, inner sep=1pt] {};
}} \,\, = \frac{1}{2} \sum_{k = 0, 1} \raisebox{-0.55cm}{\tikz{
\draw[thick, blue] (0.1, 0.1)--(0.9, 0.1);
\draw[thick, blue] (0.1, 1.1)--(0.9, 1.1);
\draw (0.2, 0)--(0.2, 0.2) arc (180:0:0.3)--(0.8, 0);
\draw (0.4, 0)--(0.4, 0.2) arc (180:0:0.1)--(0.6, 0);
\draw (0.2, 1.2)--(0.2, 1) arc (-180:0:0.3)--(0.8, 1.2);
\draw (0.4, 1.2)--(0.4, 1) arc (-180:0:0.1)--(0.6, 1.2);
\node[label={[xshift=-0.11cm, yshift=-0.23cm] \tiny $k$}] at (0.6, 0.2) [circle, fill, inner sep=1pt] {};
\node[label={[xshift=-0.11cm, yshift=-0.23cm] \tiny $k$}] at (0.6, 1) [circle, fill, inner sep=1pt] {};
\node[label={[xshift=-0.11cm, yshift=-0.23cm] \tiny $k$}] at ({0.5 + 0.3*cos(35)}, {0.2 + 0.3*sin(35)}) [circle, fill, inner sep=1pt] {};
\node[label={[xshift=-0.11cm, yshift=-0.23cm] \tiny $k$}] at ({0.5 + 0.3*cos(35)}, {1 - 0.3*sin(35)}) [circle, fill, inner sep=1pt] {};
}} \nonumber \\ 
&= \frac{1}{2} \Bigg[ \,\, \raisebox{-0.55cm}{\tikz{
\draw[thick, blue, fill=blue!10] (0, 1.2) arc (-180:0:0.5);
\draw[thick, blue, fill=blue!10] (0, 0) arc (180:0:0.5);
\draw (0.2, 0)--(0.2, 0.1) arc (180:0:0.3)--(0.8, 0);
\draw (0.4, 0)--(0.4, 0.1) arc (180:0:0.1)--(0.6, 0);
\draw (0.2, 1.2)--(0.2, 1.1) arc (-180:0:0.3)--(0.8, 1.2);
\draw (0.4, 1.2)--(0.4, 1.1) arc (-180:0:0.1)--(0.6, 1.2);
}} \,\, + \,\, \raisebox{-0.55cm}{\tikz{
\draw[thick, blue, fill=blue!10] (0, 1.2) arc (-180:0:0.5);
\draw[thick, blue, fill=blue!10] (0, 0) arc (180:0:0.5);
\draw (0.2, 0)--(0.2, 0.1) arc (180:0:0.3)--(0.8, 0);
\draw (0.4, 0)--(0.4, 0.1) arc (180:0:0.1)--(0.6, 0);
\draw (0.2, 1.2)--(0.2, 1.1) arc (-180:0:0.3)--(0.8, 1.2);
\draw (0.4, 1.2)--(0.4, 1.1) arc (-180:0:0.1)--(0.6, 1.2);
\node at (0.6, 0.1) [circle, fill, inner sep=1pt] {};
\node at (0.8, 0.1) [circle, fill, inner sep=1pt] {};
\node at (0.6, 1.1) [circle, fill, inner sep=1pt] {};
\node at (0.8, 1.1) [circle, fill, inner sep=1pt] {};
}} \,\, \Bigg] , 
\end{align}
where we used a diagrammatic expansion in TABLE~\ref{tab:Majorana-diagram-expansion} in the first equality and imposed the parity-even projection in the second equality, ensuring that the term vanishes when $k \ne l$. 

The Pauli-$X$, -$Y$, and -$Z$ operators can be represented using \textit{dots}: 
\begin{equation}
X_\textrm{L} = 
\raisebox{-0.2cm}{\tikz{
\draw[fill, blue!10] (0, 0) rectangle (1, 0.6);
\draw[thick, blue] (0, 0)--(0, 0.6);
\draw[thick, blue] (1, 0)--(1, 0.6);

\draw (0.2, 0)--(0.2, 0.6);
\draw (0.4, 0)--(0.4, 0.6);
\draw (0.6, 0)--(0.6, 0.6);
\draw (0.8, 0)--(0.8, 0.6);

\node at (0.6, 0.3) [circle, fill, inner sep=1pt] {};
\node at (0.8, 0.3) [circle, fill, inner sep=1pt] {};
}} \,\, \textrm{,} \quad 
Y_\textrm{L} = 
\raisebox{-0.2cm}{\tikz{
\draw[fill, blue!10] (0, 0) rectangle (1, 0.6);
\draw[thick, blue] (0, 0)--(0, 0.6);
\draw[thick, blue] (1, 0)--(1, 0.6);

\draw (0.2, 0)--(0.2, 0.6);
\draw (0.4, 0)--(0.4, 0.6);
\draw (0.6, 0)--(0.6, 0.6);
\draw (0.8, 0)--(0.8, 0.6);

\node at (0.4, 0.3) [circle, fill, inner sep=1pt] {};
\node at (0.8, 0.3) [circle, fill, inner sep=1pt] {};
}} \,\, , \quad 
Z_\textrm{L} = 
\raisebox{-0.2cm}{\tikz{
\draw[fill, blue!10] (0, 0) rectangle (1, 0.6);
\draw[thick, blue] (0, 0)--(0, 0.6);
\draw[thick, blue] (1, 0)--(1, 0.6);

\draw (0.2, 0)--(0.2, 0.6);
\draw (0.4, 0)--(0.4, 0.6);
\draw (0.6, 0)--(0.6, 0.6);
\draw (0.8, 0)--(0.8, 0.6);

\node at (0.4, 0.3) [circle, fill, inner sep=1pt] {};
\node at (0.6, 0.3) [circle, fill, inner sep=1pt] {};
}} 
\end{equation}

The \textit{phase} ($S$) and the \textit{Hadamard gate} ($H$) can be represented using \textit{braids}: 
\begin{equation}
\label{eq:Phase-gate-quon}
\textrm{S}_\textrm{L} = e^{-i \frac{\pi}{8}} \, 
\raisebox{-0.2cm}{\tikz{ 
\draw[fill, blue!10] (0, 0) rectangle (1, 0.6);
\draw[thick, blue] (0, 0)--(0, 0.6);
\draw[thick, blue] (1, 0)--(1, 0.6);

\draw (0.2, 0)--(0.2, 0.6);

\draw (0.4, 0) to[out=90, in=-110] (0.47, 0.25);
\draw (0.53, 0.35) to[out=70, in=-90] (0.6, 0.6);

\draw (0.6, 0) to[out=90, in=-90] (0.4, 0.6);
\draw (0.8, 0)--(0.8, 0.6);
}} 
\end{equation}
and
\begin{equation}
\label{eq:Hadamard-gate-quon}
H_\textrm{L} = e^{i \frac{\pi}{8}} \, 
\raisebox{-0.2cm}{\tikz{
\draw[fill, blue!10] (0, 0) rectangle (1, 0.6);
\draw[thick, blue] (0, 0)--(0, 0.6);
\draw[thick, blue] (1, 0)--(1, 0.6);

\draw (0.2, 0) to[out=90, in=-90] (0.8, 0.6);

\draw (0.4, 0) to[out=90, in=-70] (0.36, 0.18);
\draw (0.31, 0.27) to[out=110, in=-90] (0.2, 0.6);

\draw (0.6, 0) to[out=90, in=-70] (0.52, 0.26);
\draw (0.47, 0.35) to[out=110, in=-90] (0.4, 0.6);

\draw (0.8, 0) to[out=90, in=-60] (0.68, 0.33);
\draw (0.63, 0.43) to[out=120, in=-90] (0.6, 0.6);
}} \,\, , 
\end{equation}
which together generate all $1$-qubit Clifford gates. Since the Quon diagram for the Hadamard gate Eq.~\eqref{eq:Hadamard-gate-quon} involves three braids, it is often convenient to consider the following Quon diagram for the following $X$-rotation gate, which involves a single braid: 
\begin{equation}
e^{-i \frac{\pi}{4} X_\textrm{L}} = e^{i \frac{\pi}{8}} \,\, \raisebox{-0.2cm}{\tikz{ 
\draw[fill, blue!10] (0, 0) rectangle (1, 0.6);
\draw[thick, blue] (0, 0)--(0, 0.6);
\draw[thick, blue] (1, 0)--(1, 0.6);

\draw (0.2, 0) to[out=90, in=-90] (0.4, 0.6);

\draw (0.4, 0) to[out=90, in=-70] (0.33, 0.25);
\draw (0.27, 0.35) to[out=110, in=-90] (0.2, 0.6);

\draw (0.6, 0)--(0.6, 0.6);
\draw (0.8, 0)--(0.8, 0.6);
}} = e^{i \frac{\pi}{8}} \,\, \raisebox{-0.2cm}{\tikz{ 
\draw[fill, blue!10] (0, 0) rectangle (1, 0.6);
\draw[thick, blue] (0, 0)--(0, 0.6);
\draw[thick, blue] (1, 0)--(1, 0.6);

\draw (0.2, 0)--(0.2, 0.6);
\draw (0.4, 0)--(0.4, 0.6);

\draw (0.6, 0) to[out=90, in=-90] (0.8, 0.6);

\draw (0.8, 0) to[out=90, in=-70] (0.73, 0.25);
\draw (0.67, 0.35) to[out=110, in=-90] (0.6, 0.6);
}} \,\, , 
\end{equation}
where we also note that $S$ gate is a $Z$-rotation gate with the same rotation angle, i.e., $S = e^{i \frac{\pi}{4}} e^{-i \frac{\pi}{4} Z}$. By parsing the Hadamard Quon diagram into three parts, one can immediately show the following identity: $H = e^{i \frac{\pi}{4}} (e^{i \frac{\pi}{4} X}) (S^{-1}) (e^{i \frac{\pi}{4} X})$. 

The \textit{Pauli-$Z$ rotation}, a matchgate $1$-qubit unitary gate, can be represented using a \textit{scattering} element: 
\begin{equation}
\label{eq:Pauli-Z-rotation-quon}
e^{i \frac{\theta}{2}} e^{-i \frac{\theta}{2} Z_\textrm{L}} = 
\raisebox{-0.2cm}{\tikz{
\draw[fill, blue!10] (0, 0) rectangle (1, 0.6);
\draw[thick, blue] (0, 0)--(0, 0.6);
\draw[thick, blue] (1, 0)--(1, 0.6);

\draw (0.2, 0)--(0.2, 0.6);

\draw (0.35, 0) to[out=90, in=-120] (0.4, 0.13);
\draw (0.4, 0.47) to[out=120, in=-90] (0.35, 0.6);

\draw (0.65, 0) to[out=90, in=-60] (0.6, 0.13);
\draw (0.6, 0.47) to[out=60, in=-90] (0.65, 0.6);

\draw (0.8, 0)--(0.8, 0.6);

\draw (0.7, 0.3) arc (0:360:0.2);

\node at (0.5, 0.3) {\scriptsize $\theta_{\updownarrow}$};
}} \,\, , 
\end{equation}
where $\theta \in [0, 2\pi)$ to ensure unitarity. At $\theta = \pm \pi/2$, the gate reduces to $S$ and $S^{-1}$, respectively, and at $\theta = \pm \pi$, the gate reduces to $Z$ and $Z^{-1}$, respectively.  

Using the above diagrammatic representations, it is clear that the Quon is \textit{universal} for $1$-qubit gates.

New, we present the Quon diagrams for two-qubit gates. The $\textrm{CNOT}$ gate has the following Quon representation: 
\begin{equation}
\textrm{CNOT}_{1,2} = 
\raisebox{-0.85cm}{\tikz{
\draw[fill, blue!10] (0, -1.2)--(0, 0.6)--(1, 0.6)--(1, 0.15) to[out=-90, in=-90] (1.8, 0.15)--(1.8, 0.6)--(2.8, 0.6)--(2.8, -1.2)--(1.8, -1.2)--(1.8, -0.75) to[out=90, in=90] (1, -0.75)--(1, -1.2);

\draw[thick, blue] (0, -1.2)--(0, 0.6);
\draw[thick, blue] (1, 0.6)--(1, 0.15) to[out=-90, in=-90] (1.8, 0.15)--(1.8, 0.6);
\draw[thick, blue] (1, -1.2)--(1, -0.75) to[out=90, in=90] (1.8, -0.75)--(1.8, -1.2);
\draw[thick, blue] (2.8, -1.2)--(2.8, 0.6);

\draw (0.8, 0.6) to[out=-90, in=90] (0.2, 0)--(0.2, -0.6) to[out=-90, in=110] (0.31, -1.2+0.27);
\draw (0.31, 0.27) to[out=110, in=-90] (0.2, 0.6);
\draw (0.47, 0.35) to[out=110, in=-90] (0.4, 0.6);
\draw (0.63, 0.43) to[out=120, in=-90] (0.6, 0.6);

\draw (0.36, 0.18) to[out=-70, in=90] (0.4, 0)--(0.4, -0.6) to[out=-90, in=110] (0.47, -1.2+0.35);
\draw (0.52, 0.26) to[out=-70, in=170, looseness=1.7] (1.32, -0.28);
\draw (0.68, 0.33) to[out=-70, in=170, looseness=1.7] (1.43, -0.21);
\draw (0.63, -1.2+0.43) to[out=120, in=170, looseness=1.5] (1.2, -0.37);

\draw (0.2, -1.2) to[out=90, in=-135] (0.8, -0.7) to[out=45, in=-100] (2, 0.4) to[out=80, in=-90] (2, 0.6);
\draw (0.4, -1.2) to[out=90, in=-70] (0.36, -1.2+0.18);
\draw (0.6, -1.2) to[out=90, in=-70] (0.52, -1.2+0.26);
\draw (0.8, -1.2) to[out=90, in=-60] (0.68, -1.2+0.33);

\draw (1.35, -0.39) to[out=-10, in=90, looseness=1.5] (2, -1.2);
\draw (1.45, -0.31) to[out=-10, in=90, looseness=1.5] (2.2, -1.2);
\draw (1.57, -0.23) to[out=-10, in=-90, looseness=1.5] (2.2, 0.6);
\draw (2.4, -1.2)--(2.4, 0.6);
\draw (2.6, -1.2)--(2.6, 0.6);
}} \,\, , 
\end{equation}
where the qubit associated with the left (right) is the control (target) qubit. The equation can be shown by computing its components with respect to the computational basis states Eq.~\eqref{eq:logical-qubit-4-majoranas}. Since the Quon diagram for the $\textrm{CNOT}$-gate involves multiple braids, it is useful to consider the \textit{two-qubit gate} $e^{i \frac{\pi}{4} X \otimes X}$ involving a single \textit{braid}: 
\begin{equation}
\label{eq:matchgate-XX-rotation}
e^{i \frac{\pi}{4} X_\textrm{L} \otimes X_\textrm{L}} = e^{-i \frac{\pi}{8}} \, 
\raisebox{-0.6cm}{\tikz{
\fill[blue!10] (-1.3, 0.6)--(-1.3, -0.6)--(-0.25, -0.6) arc (180:0:0.25)--(1.3, -0.6)--(1.3, 0.6)--(0.25, 0.6) arc (0:-180:0.25);

\draw[thick,blue] (-1.3, -0.6)--(-1.3, 0.6);
\draw[thick,blue] (1.3, -0.6)--(1.3, 0.6);
\draw[thick,blue] (0.25, 0.6) arc (0:-180:0.25);
\draw[thick,blue] (0.25, -0.6) arc (0:180:0.25);

\draw (-1.1, 0.6)--(-1.1, -0.6);
\draw (-0.9, 0.6)--(-0.9, -0.6);
\draw (0.9, 0.6)--(0.9, -0.6);
\draw (1.1, 0.6)--(1.1, -0.6);

\draw (-0.7, 0.6) to[out=-90, in=160] (-0.1, 0.04);
\draw (0.1, -0.04) to[out=-20, in=90] (0.7, -0.6);

\draw (-0.7, -0.6) to[out=90, in=-90] (0.7, 0.6);

\draw[looseness=1.3] (-0.5, -0.6) to[out=90, in=90] (0.5, -0.6);
\draw[looseness=1.3] (-0.5, 0.6) to[out=-90, in=-90] (0.5, 0.6);
}} \,\, , 
\end{equation}
which will be proven for a more general case below. Note that, together with one-qubit Clifford gates, $e^{i \frac{\pi}{4} X_1 \otimes X_2}$ generates the $\textrm{CNOT}$-gate: 
\begin{equation}
\textrm{CNOT}_{1 \to 2} = 
\raisebox{-0.85cm}{\tikz{
\draw[fill, blue!10] (0, -1.2)--(0, 0.6)--(1, 0.6)--(1, 0.15) to[out=-90, in=-90] (1.8, 0.15)--(1.8, 0.6)--(2.8, 0.6)--(2.8, -1.2)--(1.8, -1.2)--(1.8, -0.75) to[out=90, in=90] (1, -0.75)--(1, -1.2);

\draw[thick,blue] (0, -1.2)--(0, 0.6);
\draw[thick,blue] (1, 0.6)--(1, 0.15) to[out=-90, in=-90] (1.8, 0.15)--(1.8, 0.6);
\draw[thick,blue] (1, -1.2)--(1, -0.75) to[out=90, in=90] (1.8, -0.75)--(1.8, -1.2);
\draw[thick, blue] (2.8, -1.2)--(2.8, 0.6);

\draw (0.8, 0.6) to[out=-90, in=90] (0.2, 0)--(0.2, -0.6) to[out=-90, in=110] (0.33, -1.0);
\draw (0.31, 0.27) to[out=110, in=-90] (0.2, 0.6);
\draw (0.47, 0.35) to[out=110, in=-90] (0.4, 0.6);
\draw (0.63, 0.43) to[out=120, in=-90] (0.6, 0.6);

\draw (0.36, 0.18) to[out=-70, in=110] (0.5, -1.2+0.24);
\draw (0.52, 0.26) to[out=-70, in=160, looseness=1.7] (1.2, -0.25);
\draw (0.68, 0.33) to[out=-60, in=180, looseness=1.2] (1.4, -0.18) to[out=0, in=-130] (2.07, 0.15);
\draw (0.75, -0.68) to[out=60, in=180, looseness=1.2] (1.4, -0.45) to[out=0, in=90, looseness=1.5] (2, -1.2);

\draw (0.2, -1.2) to[out=90, in=-90] (0.8, -0.8) to[out=90, in=-120] (0.6, -0.6) to[out=60, in=-50, looseness=0.8] (2.2, 0.1) to[out=130, in=-90] (2, 0.6);

\draw (0.4, -1.2) to[out=90, in=-70] (0.36, -1.2+0.13);
\draw (0.6, -1.2) to[out=90, in=-70] (0.54, -1.2+0.17);
\draw (0.8, -1.2) to[out=90, in=-60] (0.72, -1.2+0.22);

\draw (0.68, -1.2+0.3) to[out=120, in=-120] (0.7, -0.75);

\draw (1.5, -0.33) to[out=-20, in=90, looseness=1.5] (2.2, -1.2);

\draw (2.2, 0.6) to[out=-90, in=50] (2.13, 0.25);

\draw (2.4, -1.2)--(2.4, 0.6);
\draw (2.6, -1.2)--(2.6, 0.6);
}} \,\, , 
\end{equation}
which implies that 
\begin{equation}
e^{i \frac{\pi}{4} X_1 \otimes X_2} = e^{- i \frac{\pi}{4}} \big((e^{i \frac{\pi}{4} X_1} H_1) \otimes I_2 \big) \textrm{CNOT}_{1 \to 2} (H_1 \otimes e^{i \frac{\pi}{4} X_2}) . 
\end{equation}

The logical SWAP gate between two logical qubits can be represented by a Quon diagram in Eq.~\eqref{eq:SWAP-quon-diagram} with $n=m=4$. Since the SWAP gate is equivalent to three CNOT gates applied in an appropriate order, it is instructive to start from the Quon diagram for three CNOT gates and derive the Quon diagram for the SWAP gate, with the help of the string-genus relation Eq.~\eqref{eq:string-genus}. 

We show that $2$-qubit \textit{matchgate unitary} $e^{i \frac{\theta}{2}} e^{- i \frac{\theta}{2} X \otimes X}$, where $\theta \in [0, 2 \pi)$, has the following Quon representation involving a single \textit{scattering element}\footnote{Note that at $\theta = -\pi/2$, Eq.~\eqref{eq:e-to-the-i-theta-XX-v1} reduces to Eq.~\eqref{eq:matchgate-XX-rotation}. Similar to Eq.~\eqref{eq:Pauli-Z-rotation-quon}, the gate becomes a Clifford gate when $\theta$ is equal to an integer multiple of $\frac{\pi}{2}$.}: 
\begin{equation}
\label{eq:e-to-the-i-theta-XX-v1}
e^{i \frac{\theta}{2}} e^{-i \frac{\theta}{2} X_\textrm{L} \otimes X_\textrm{L}} = \sqrt{2} \,\, 
\raisebox{-0.6cm}{\tikz{
\fill[blue!10] (-1.3, -0.7)--(-1.3, 0.7)--(-0.25, 0.7) arc (-180:0:0.25)--(1.3, 0.7)--(1.3, -0.7)--(0.25, -0.7) arc (0:180:0.25);

\draw[thick, blue] (-1.3, -0.7)--(-1.3, 0.7);
\draw[thick, blue] (1.3, -0.7)--(1.3, 0.7);
\draw[thick, blue] (0.25, 0.7) arc (0:-180:0.25);
\draw[thick, blue] (0.25, -0.7) arc (0:180:0.25);

\draw (-1.1, 0.7)--(-1.1, -0.7);
\draw (-0.9, 0.7)--(-0.9, -0.7);
\draw (0.9, 0.7)--(0.9, -0.7);
\draw (1.1, 0.7)--(1.1, -0.7);

\draw[looseness=1.3] (-0.5, 0.7) to[out=-90, in=-90] (0.5, 0.7);
\draw[looseness=1.3] (-0.5, -0.7) to[out=90, in=90] (0.5, -0.7);

\draw (-0.7, 0.7) to[out=-90, in=150] (-{0.1*sqrt(2)}, {0.1*sqrt(2)});
\draw (0.7, 0.7) to[out=-90, in=30] (+{0.1*sqrt(2)}, {0.1*sqrt(2)});

\draw (-0.7, -0.7) to[out=90, in=-150] (-{0.1*sqrt(2)}, -{0.1*sqrt(2)});
\draw (0.7, -0.7) to[out=90, in=-30] (+{0.1*sqrt(2)}, -{0.1*sqrt(2)});

\draw (0.2, 0) arc (0:360:0.2);
\node at (0, 0) {\scriptsize $\theta_{\updownarrow}$};
}} 
\end{equation}
\begin{proof} 
If we expand the RHS using a diagrammatic expansion in TABLE~\ref{tab:Majorana-diagram-expansion}, 
\begin{align}
& \frac{1 + e^{i \theta}}{\sqrt{2}} \,\, \raisebox{-0.6cm}{\tikz{
\fill[blue!10] (-1.3, -0.7)--(-1.3, 0.7)--(-0.25, 0.7) arc (-180:0:0.25)--(1.3, 0.7)--(1.3, -0.7)--(0.25, -0.7) arc (0:180:0.25);
\draw[thick, blue] (-1.3, -0.7)--(-1.3, 0.7);
\draw[thick, blue] (1.3, -0.7)--(1.3, 0.7);
\draw[thick, blue] (0.25, 0.7) arc (0:-180:0.25);
\draw[thick, blue] (0.25, -0.7) arc (0:180:0.25);
\draw (-1.1, 0.7)--(-1.1, -0.7);
\draw (-0.9, 0.7)--(-0.9, -0.7);
\draw (0.9, 0.7)--(0.9, -0.7);
\draw (1.1, 0.7)--(1.1, -0.7);
\draw[looseness=1.3] (-0.5, 0.7) to[out=-90, in=-90] (0.5, 0.7);
\draw[looseness=1.3] (-0.5, -0.7) to[out=90, in=90] (0.5, -0.7);
\draw (-0.7, 0.7) to[out=-90, in=90] (-0.2, 0) to[out=-90, in=90] (-0.7, -0.7);
\draw (0.7, 0.7) to[out=-90, in=90] (0.2, 0) to[out=-90, in=90] (0.7, -0.7);
}} \,\, + \frac{1 - e^{i \theta}}{\sqrt{2}} \,\, \raisebox{-0.6cm}{\tikz{
\fill[blue!10] (-1.3, -0.7)--(-1.3, 0.7)--(-0.25, 0.7) arc (-180:0:0.25)--(1.3, 0.7)--(1.3, -0.7)--(0.25, -0.7) arc (0:180:0.25);
\draw[thick, blue] (-1.3, -0.7)--(-1.3, 0.7);
\draw[thick, blue] (1.3, -0.7)--(1.3, 0.7);
\draw[thick, blue] (0.25, 0.7) arc (0:-180:0.25);
\draw[thick, blue] (0.25, -0.7) arc (0:180:0.25);
\draw (-1.1, 0.7)--(-1.1, -0.7);
\draw (-0.9, 0.7)--(-0.9, -0.7);
\draw (0.9, 0.7)--(0.9, -0.7);
\draw (1.1, 0.7)--(1.1, -0.7);
\draw[looseness=1.3] (-0.5, 0.7) to[out=-90, in=-90] (0.5, 0.7);
\draw[looseness=1.3] (-0.5, -0.7) to[out=90, in=90] (0.5, -0.7);
\draw (-0.7, 0.7) to[out=-90, in=90] (-0.2, 0) to[out=-90, in=90] (-0.7, -0.7);
\draw (0.7, 0.7) to[out=-90, in=90] (0.2, 0) to[out=-90, in=90] (0.7, -0.7);
\node at (-0.2, 0) [circle, fill, inner sep=1pt] {};
\node at (0.2, 0) [circle, fill, inner sep=1pt] {};
}} \nonumber \\ 
&= \frac{1 + e^{i \theta}}{\sqrt{2}} \,\, \raisebox{-0.6cm}{\tikz{
\fill[blue!10] (-1.3, -0.7)--(-1.3, 0.7)--(-0.25, 0.7) arc (-180:0:0.25)--(1.3, 0.7)--(1.3, -0.7)--(0.25, -0.7) arc (0:180:0.25);
\draw[thick, blue] (-1.3, -0.7)--(-1.3, 0.7);
\draw[thick, blue] (1.3, -0.7)--(1.3, 0.7);
\draw[thick, blue] (0.25, 0.7) arc (0:-180:0.25);
\draw[thick, blue] (0.25, -0.7) arc (0:180:0.25);
\draw (-1.1, 0.7)--(-1.1, -0.7);
\draw (-0.9, 0.7)--(-0.9, -0.7);
\draw (0.9, 0.7)--(0.9, -0.7);
\draw (1.1, 0.7)--(1.1, -0.7);
\draw[looseness=1.3] (-0.5, 0.7) to[out=-90, in=-90] (0.5, 0.7);
\draw[looseness=1.3] (-0.5, -0.7) to[out=90, in=90] (0.5, -0.7);
\draw (-0.7, 0.7)--(-0.7, -0.7);
\draw (0.7, 0.7)--(0.7, -0.7);
}} \,\, + \frac{1 - e^{i \theta}}{\sqrt{2}} \,\, \raisebox{-0.6cm}{\tikz{
\fill[blue!10] (-1.3, -0.7)--(-1.3, 0.7)--(-0.25, 0.7) arc (-180:0:0.25)--(1.3, 0.7)--(1.3, -0.7)--(0.25, -0.7) arc (0:180:0.25);
\draw[thick, blue] (-1.3, -0.7)--(-1.3, 0.7);
\draw[thick, blue] (1.3, -0.7)--(1.3, 0.7);
\draw[thick, blue] (0.25, 0.7) arc (0:-180:0.25);
\draw[thick, blue] (0.25, -0.7) arc (0:180:0.25);
\draw (-1.1, 0.7)--(-1.1, -0.7);
\draw (-0.9, 0.7)--(-0.9, -0.7);
\draw (0.9, 0.7)--(0.9, -0.7);
\draw (1.1, 0.7)--(1.1, -0.7);
\draw[looseness=1.3] (-0.5, 0.7) to[out=-90, in=-90] (0.5, 0.7);
\draw[looseness=1.3] (-0.5, -0.7) to[out=90, in=90] (0.5, -0.7);
\draw (-0.7, 0.7)--(-0.7, -0.7);
\draw (0.7, 0.7)--(0.7, -0.7);
\node at (-1.1, -0.1) [circle, fill, inner sep=1pt] {};
\node at (-0.9, -0.1) [circle, fill, inner sep=1pt] {};
\node at (1.1, 0.1) [circle, fill, inner sep=1pt] {};
\node at (0.9, 0.1) [circle, fill, inner sep=1pt] {};
}} \nonumber \\ 
&= \frac{1 + e^{i \theta}}{2} \,\, \raisebox{-0.6cm}{\tikz{
\fill [blue!10] (-1.3, -0.7) rectangle (-0.25, 0.7);
\fill [blue!10] (1.3, -0.7) rectangle (0.25, 0.7);
\draw[thick, blue] (-1.3, -0.7)--(-1.3, 0.7);
\draw[thick, blue] (1.3, -0.7)--(1.3, 0.7);
\draw[thick, blue] (-0.25, 0.7)--(-0.25, -0.7);
\draw[thick, blue] (0.25, -0.7)--(0.25, 0.7);
\draw (-1.1, 0.7)--(-1.1, -0.7);
\draw (-0.9, 0.7)--(-0.9, -0.7);
\draw (0.9, 0.7)--(0.9, -0.7);
\draw (1.1, 0.7)--(1.1, -0.7);
\draw (-0.5, 0.7)--(-0.5, -0.7);
\draw (0.5, -0.7)--(0.5, 0.7);
\draw (-0.7, 0.7)--(-0.7, -0.7);
\draw (0.7, 0.7)--(0.7, -0.7);
}} \,\, + \frac{1 - e^{i \theta}}{\sqrt{2}} \,\, \raisebox{-0.6cm}{\tikz{
\fill[blue!10] (-1.3, -0.7)--(-1.3, 0.7)--(-0.25, 0.7) arc (-180:0:0.25)--(1.3, 0.7)--(1.3, -0.7)--(0.25, -0.7) arc (0:180:0.25);
\draw[thick, blue] (-1.3, -0.7)--(-1.3, 0.7);
\draw[thick, blue] (1.3, -0.7)--(1.3, 0.7);
\draw[thick, blue] (0.25, 0.7) arc (0:-180:0.25);
\draw[thick, blue] (0.25, -0.7) arc (0:180:0.25);
\draw (-1.1, 0.7)--(-1.1, -0.7);
\draw (-0.9, 0.7)--(-0.9, -0.7);
\draw (0.9, 0.7)--(0.9, -0.7);
\draw (1.1, 0.7)--(1.1, -0.7);
\draw[looseness=1.3] (-0.5, 0.7) to[out=-90, in=-90] (0.5, 0.7);
\draw[looseness=1.3] (-0.5, -0.7) to[out=90, in=90] (0.5, -0.7);
\draw (-0.7, 0.7)--(-0.7, -0.7);
\draw (0.7, 0.7)--(0.7, -0.7);
\node at (-1.1, -0.6) [circle, fill, inner sep=1pt] {};
\node at (-0.9, -0.6) [circle, fill, inner sep=1pt] {};
\node at (1.1, -0.6) [circle, fill, inner sep=1pt] {};
\node at (0.9, -0.6) [circle, fill, inner sep=1pt] {};
}} \nonumber \\ 
&= \frac{1 + e^{i \theta}}{2} I_\textrm{L} \otimes I_\textrm{L} + \frac{1 - e^{i \theta}}{2} \,\, \raisebox{-0.25cm}{\tikz{
\fill [blue!10] (-1.3, -0.3) rectangle (-0.25, 0.3);
\fill [blue!10] (1.3, -0.3) rectangle (0.25, 0.3);
\draw[thick, blue] (-1.3, -0.3)--(-1.3, 0.3);
\draw[thick, blue] (1.3, -0.3)--(1.3, 0.3);
\draw[thick, blue] (0.25, 0.3)--(0.25, -0.3);
\draw[thick, blue] (-0.25, 0.3)--(-0.25, -0.3);
\draw (-1.1, 0.3)--(-1.1, -0.3);
\draw (-0.9, 0.3)--(-0.9, -0.3);
\draw (0.9, 0.3)--(0.9, -0.3);
\draw (1.1, 0.3)--(1.1, -0.3);
\draw (0.5, 0.3)--(0.5, -0.3);
\draw (-0.5, -0.3)--(-0.5, 0.3);
\draw (-0.7, 0.3)--(-0.7, -0.3);
\draw (0.7, 0.3)--(0.7, -0.3);
\node at (-1.1, 0) [circle, fill, inner sep=1pt] {};
\node at (-0.9, 0) [circle, fill, inner sep=1pt] {};
\node at (1.1, 0) [circle, fill, inner sep=1pt] {};
\node at (0.9, 0) [circle, fill, inner sep=1pt] {};
}} \nonumber \\ 
&= \frac{1 + e^{i \theta}}{2} I_\textrm{L} \otimes I_\textrm{L} + \frac{1 - e^{i \theta}}{2} X_\textrm{L} \otimes X_\textrm{L} , 
\end{align}
where we used Eq.~\eqref{eq:charges-out-of-nowhere} in the first equality, Eqs.~\eqref{eq:shrink-bdy-4} and~\eqref{eq:push-neutral-diagram} in the second equality, and the last line is equal to the LHS. 
\end{proof}

In sum, using the above 2D Quon diagrammatic representations, it is immediate to show that the 2D Quon language is universal for arbitrary quantum states and gates, and is even universal for quantum circuits with postselections. In particular, we note that among the gates we discussed, $\{e^{\pm i \frac{\pi}{4} X}, e^{\pm i \frac{\pi}{4} Z}, e^{\pm i \frac{\pi}{4} X \otimes X} \}$ forms a universal generating set for Clifford unitaries and $\{X\} \cup \{ e^{i \frac{\theta}{2}} e^{-i \frac{\theta}{2} Z}, e^{i \frac{\theta}{2}} e^{-i \frac{\theta}{2} X \otimes X} \}_{\theta \in [0, 2\pi)}$ forms a universal generating set for matchgate unitaries\footnote{In the literature, the matchgate unitaries usually refer to parity-preserving ones, thereby excluding the $X$-gate. Here, we include the $X$ gate as well, since it satisfies the matchgate identity as a $2$-leg tensor, thereby ensuring that the circuit remains classically simulable even with the inclusion of $X$ gates (see Appendix~\ref{app:matchgate} for more details).}.

\subsubsection{Examples}
As an application of the Quon representations from the previous section, we explore examples of Quon diagrammatic representations for Clifford and matchgate quantum computations. While relegating the pictorial characterizations of Cliffords and matchgates in terms of Quon diagrams to Sec.~\ref{sec:clifford-matchgate-quon-diagrams}, we instead highlight their key diagrammatic properties through illustrative examples. 

The first example is the following Clifford computation in which every gate is a Clifford gate: 
\begin{align}
\langle 0, 0, 0, 0 \vert &\big( e^{i \frac{\pi}{4} X_1 X_2} e^{i \frac{\pi}{4} X_2 X_3} e^{i \frac{\pi}{4} X_3 X_4} \big) \nonumber \\
& \big( e^{-i \frac{\pi}{4} X_1} S_2 e^{-i \frac{\pi}{4} X_3} S_4 \big) \nonumber \\ 
& \big( e^{i \frac{\pi}{4} X_1 X_2} e^{i \frac{\pi}{4} X_2 X_3} e^{i \frac{\pi}{4} X_3 X_4} \big) \vert 1, 0, 0, 0 \rangle , 
\end{align}
which can be represented by the following 2D Quon diagram: 
\begin{equation}
\raisebox{-0.88cm}{
\begin{tikzpicture}[scale=0.7]
\begin{scope}[even odd rule]
\clip (0, -1.5) arc (-180:0:0.5) arc (180:0:0.15) arc (-180:0:0.5) arc (180:0:0.15) arc (-180:0:0.5) arc (180:0:0.15) arc (-180:0:0.5)--(4.9, 0.2) arc (0:180:0.5) arc (0:-180:0.15) arc (0:180:0.5) arc (0:-180:0.15) arc (0:180:0.5) arc (0:-180:0.15) arc (0:180:0.5)  (1.3, -0.5) arc (0:180:0.15)--(1, -0.8) arc (-180:0:0.15)--(1.3, -0.5);
\clip (0, -1.5) arc (-180:0:0.5) arc (180:0:0.15) arc (-180:0:0.5) arc (180:0:0.15) arc (-180:0:0.5) arc (180:0:0.15) arc (-180:0:0.5)--(4.9, 0.2) arc (0:180:0.5) arc (0:-180:0.15) arc (0:180:0.5) arc (0:-180:0.15) arc (0:180:0.5) arc (0:-180:0.15) arc (0:180:0.5)  (2.6, -0.5) arc (0:180:0.15)--(2.3, -0.8) arc (-180:0:0.15)--(2.6, -0.5);
\clip (0, -1.5) arc (-180:0:0.5) arc (180:0:0.15) arc (-180:0:0.5) arc (180:0:0.15) arc (-180:0:0.5) arc (180:0:0.15) arc (-180:0:0.5)--(4.9, 0.2) arc (0:180:0.5) arc (0:-180:0.15) arc (0:180:0.5) arc (0:-180:0.15) arc (0:180:0.5) arc (0:-180:0.15) arc (0:180:0.5)  (3.9, -0.5) arc (0:180:0.15)--(3.6, -0.8) arc (-180:0:0.15)--(3.9, -0.5);

\fill[blue!10] (0, -1.5) arc (-180:0:0.5) arc (180:0:0.15) arc (-180:0:0.5) arc (180:0:0.15) arc (-180:0:0.5) arc (180:0:0.15) arc (-180:0:0.5)--(4.9, 0.2) arc (0:180:0.5) arc (0:-180:0.15) arc (0:180:0.5) arc (0:-180:0.15) arc (0:180:0.5) arc (0:-180:0.15) arc (0:180:0.5);
\end{scope}

\draw[thick, blue] (0, 0.2)--(0, -1.5) arc (-180:0:0.5) arc (180:0:0.15) arc (-180:0:0.5) arc (180:0:0.15) arc (-180:0:0.5) arc (180:0:0.15) arc (-180:0:0.5)--(4.9, 0.2) arc (0:180:0.5) arc (0:-180:0.15) arc (0:180:0.5) arc (0:-180:0.15) arc (0:180:0.5) arc (0:-180:0.15) arc (0:180:0.5);

\draw[thick, blue] (1.3, -0.5) arc (0:180:0.15)--(1, -0.8) arc (-180:0:0.15)--(1.3, -0.5);
\draw[thick, blue] (2.6, -0.5) arc (0:180:0.15)--(2.3, -0.8) arc (-180:0:0.15)--(2.6, -0.5);
\draw[thick, blue] (3.9, -0.5) arc (0:180:0.15)--(3.6, -0.8) arc (-180:0:0.15)--(3.9, -0.5);

\draw (0.25, -0.6) to[out=140, in=-90] (0.2, -0.5)--(0.2, 0.2) arc (180:0:0.3) to[out=-90, in=-90, looseness=1.2] (1.5, 0.2) arc (180:0:0.3) to[out=-90, in=-90, looseness=1.2] (2.8, 0.2) arc (180:0:0.3) to[out=-90, in=-90, looseness=1.2] (4.1, 0.2) arc (180:0:0.3)--(4.7, -1.5) arc (0:-180:0.3) to[out=90, in=90, looseness=1.2] (3.4, -1.5) arc (0:-180:0.3) to[out=90, in=90, looseness=1.2] (2.1, -1.5) arc (0:-180:0.3) to[out=90, in=90, looseness=1.2] (0.8, -1.5) arc (0:-180:0.3)--(0.2, -0.8) to[out=90, in=-90] (0.4, -0.5)--(0.4, 0.2) arc (180:0:0.1) to[out=-90, in=170] (1.05, -0.13);

\draw (1.25, -0.17) to[out=-10, in=90] (1.7, -0.5) to[out=-90, in=90] (1.9, -0.8) to[out=-90, in=170] (2.35, -1.13);
\draw (2.55, -1.17) to[out=-10, in=90] (3, -1.5) arc (-180:0:0.1) to[out=90, in=-90, looseness=0.7] (4.3, -0.8) to[out=90, in=-140] (4.35, -0.7);

\draw (3.85, -1.17) to[out=-10, in=90] (4.3, -1.5) arc (-180:0:0.1)--(4.5, -0.8) to[out=90, in=-90] (4.3, -0.5) to[out=90, in=-10] (3.85, -0.17);
\draw (3.65, -0.13) to[out=170, in=-90] (3.2, 0.2) arc (0:180:0.1) to[out=-90, in=90] (1.9, -0.5) to[out=-90, in=40] (1.85, -0.6);

\draw (1.75, -0.7) to[out=-140, in=90] (1.7, -0.8) to[out=-90, in=90, looseness=0.7] (0.6, -1.5) arc (0:-180:0.1)--(0.4, -0.8) to[out=90, in=-40] (0.35, -0.7);
\draw (1.05, -1.13) to[out=170, in=-90] (0.6, -0.8)--(0.6, -0.5) to[out=90, in=-90, looseness=0.7] (1.7, 0.2) arc (180:0:0.1) (1.9, 0.2) to[out=-90, in=170] (2.35, -0.13);

\draw (2.55, -0.17) to[out=-10, in=90] (3, -0.5) to[out=-90, in=90] (2.8, -0.8) to[out=-90, in=-90, looseness=1.2] (2.1, -0.8)--(2.1, -0.5) to[out=90, in=90, looseness=1.2] (2.8, -0.5) to[out=-90, in=140] (2.85, -0.6);
\draw (2.95, -0.7) to[out=-40, in=90] (3, -0.8) to[out=-90, in=90] (1.9, -1.5) arc (0:-180:0.1) to[out=90, in=-10] (1.25, -1.17);

\draw (3.65, -1.13) to[out=170, in=-90] (3.2, -0.8)--(3.2, -0.5) to[out=90, in=-90] (4.3, 0.2) arc (180:0:0.1)--(4.5, -0.5) to[out=-90, in=40] (4.45, -0.6);

\draw[red] (0.8, -0.5) to[out=90, in=90, looseness=1.2] (1.5, -0.5)--(1.5, -0.8) to[out=-90, in=-90, looseness=1.2] (0.8, -0.8)--(0.8, -0.5);
\draw[red] (3.4, -0.5) to[out=90, in=90, looseness=1.2] (4.1, -0.5)--(4.1, -0.8) to[out=-90, in=-90, looseness=1.2] (3.4, -0.8)--(3.4, -0.5);

\node at (0.6, 0.2) [circle, fill, inner sep=1pt] {};
\node at (0.8, 0.2) [circle, fill, inner sep=1pt] {};
\end{tikzpicture}
} = \frac{1}{2} \raisebox{-0.88cm}{
\begin{tikzpicture}[scale=0.7]
\begin{scope}[even odd rule]
\clip (0, -1.5) arc (-180:0:0.5) arc (180:0:0.15) arc (-180:0:0.5) arc (180:0:0.15) arc (-180:0:0.5) arc (180:0:0.15) arc (-180:0:0.5)--(4.9, 0.2) arc (0:180:0.5) arc (0:-180:0.15) arc (0:180:0.5) arc (0:-180:0.15) arc (0:180:0.5) arc (0:-180:0.15) arc (0:180:0.5)  (2.6, -0.5) arc (0:180:0.15)--(2.3, -0.8) arc (-180:0:0.15)--(2.6, -0.5);

\fill[blue!10] (0, -1.5) arc (-180:0:0.5) arc (180:0:0.15) arc (-180:0:0.5) arc (180:0:0.15) arc (-180:0:0.5) arc (180:0:0.15) arc (-180:0:0.5)--(4.9, 0.2) arc (0:180:0.5) arc (0:-180:0.15) arc (0:180:0.5) arc (0:-180:0.15) arc (0:180:0.5) arc (0:-180:0.15) arc (0:180:0.5);
\end{scope}

\draw[thick, blue] (0, 0.2)--(0, -1.5) arc (-180:0:0.5) arc (180:0:0.15) arc (-180:0:0.5) arc (180:0:0.15) arc (-180:0:0.5) arc (180:0:0.15) arc (-180:0:0.5)--(4.9, 0.2) arc (0:180:0.5) arc (0:-180:0.15) arc (0:180:0.5) arc (0:-180:0.15) arc (0:180:0.5) arc (0:-180:0.15) arc (0:180:0.5);

\draw[thick, blue] (2.6, -0.5) arc (0:180:0.15)--(2.3, -0.8) arc (-180:0:0.15)--(2.6, -0.5);

\draw (0.25, -0.6) to[out=140, in=-90] (0.2, -0.5)--(0.2, 0.2) arc (180:0:0.3) to[out=-90, in=-90, looseness=1.2] (1.5, 0.2) arc (180:0:0.3) to[out=-90, in=-90, looseness=1.2] (2.8, 0.2) arc (180:0:0.3) to[out=-90, in=-90, looseness=1.2] (4.1, 0.2) arc (180:0:0.3)--(4.7, -1.5) arc (0:-180:0.3) to[out=90, in=90, looseness=1.2] (3.4, -1.5) arc (0:-180:0.3) to[out=90, in=90, looseness=1.2] (2.1, -1.5) arc (0:-180:0.3) to[out=90, in=90, looseness=1.2] (0.8, -1.5) arc (0:-180:0.3)--(0.2, -0.8) to[out=90, in=-90] (0.4, -0.5)--(0.4, 0.2) arc (180:0:0.1) to[out=-90, in=170] (1.05, -0.13);

\draw (1.25, -0.17) to[out=-10, in=90] (1.7, -0.5) to[out=-90, in=90] (1.9, -0.8) to[out=-90, in=170] (2.35, -1.13);
\draw (2.55, -1.17) to[out=-10, in=90] (3, -1.5) arc (-180:0:0.1) to[out=90, in=-90, looseness=0.7] (4.3, -0.8) to[out=90, in=-140] (4.35, -0.7);

\draw (3.85, -1.17) to[out=-10, in=90] (4.3, -1.5) arc (-180:0:0.1)--(4.5, -0.8) to[out=90, in=-90] (4.3, -0.5) to[out=90, in=-10] (3.85, -0.17);
\draw (3.65, -0.13) to[out=170, in=-90] (3.2, 0.2) arc (0:180:0.1) to[out=-90, in=90] (1.9, -0.5) to[out=-90, in=40] (1.85, -0.6);

\draw (1.75, -0.7) to[out=-140, in=90] (1.7, -0.8) to[out=-90, in=90, looseness=0.7] (0.6, -1.5) arc (0:-180:0.1)--(0.4, -0.8) to[out=90, in=-40] (0.35, -0.7);
\draw (1.05, -1.13) to[out=170, in=-90] (0.6, -0.8)--(0.6, -0.5) to[out=90, in=-90, looseness=0.7] (1.7, 0.2) arc (180:0:0.1) (1.9, 0.2) to[out=-90, in=170] (2.35, -0.13);

\draw (2.55, -0.17) to[out=-10, in=90] (3, -0.5) to[out=-90, in=90] (2.8, -0.8) to[out=-90, in=-90, looseness=1.2] (2.1, -0.8)--(2.1, -0.5) to[out=90, in=90, looseness=1.2] (2.8, -0.5) to[out=-90, in=140] (2.85, -0.6);
\draw (2.95, -0.7) to[out=-40, in=90] (3, -0.8) to[out=-90, in=90] (1.9, -1.5) arc (0:-180:0.1) to[out=90, in=-10] (1.25, -1.17);

\draw (3.65, -1.13) to[out=170, in=-90] (3.2, -0.8)--(3.2, -0.5) to[out=90, in=-90] (4.3, 0.2) arc (180:0:0.1)--(4.5, -0.5) to[out=-90, in=40] (4.45, -0.6);

\node at (0.6, 0.2) [circle, fill, inner sep=1pt] {};
\node at (0.8, 0.2) [circle, fill, inner sep=1pt] {};
\end{tikzpicture}
} \,\, , 
\end{equation}
where we simplified the diagram using the string-genus relation Eq.~\eqref{eq:string-genus} for the Majorana strings colored in red and their enclosing holes. From the final diagram, we observe key diagrammatic characterization for Clifford: there exists a Quon diagrammatic representation where the Majorana diagram contains no scattering elements other than braids. The background manifold may, as in this case, contain holes. 

The second example is the following matchgate computation in which every gate is matchgate: 
\begin{align}
\langle 0, 0, 1, 1 \vert & (e^{i \phi_4 X_1 X_2} e^{i \phi_5 X_2 X_3} e^{i \phi_6 X_3 X_4}) \nonumber \\
& \times (e^{i \theta_1 Z_1} e^{i \theta_2 Z_2} e^{i \theta_3 Z_3} e^{i \theta_4 Z_4}) \nonumber \\
& \times (e^{i \phi_1 X_1 X_2} e^{i \phi_2 X_2 X_3} e^{i \phi_3 X_3 X_4}) \vert 0, 0, 0, 0 \rangle , 
\end{align}
which can be represented by the following 2D Quon diagram: 
\begin{equation}
\label{eq:matchgate-circuit-example}
\raisebox{-1.15cm}{
\begin{tikzpicture}[scale=0.7]
\begin{scope}[even odd rule]
\clip (0, -2.4) arc (-180:0:0.5) arc (180:0:0.15) arc (-180:0:0.5) arc (180:0:0.15) arc (-180:0:0.5) arc (180:0:0.15) arc (-180:0:0.5)--(4.9, 0.2) arc (0:180:0.5) arc (0:-180:0.15) arc (0:180:0.5) arc (0:-180:0.15) arc (0:180:0.5) arc (0:-180:0.15) arc (0:180:0.5)  (1.3, -0.95) arc (0:180:0.15)--(1, -1.25) arc (-180:0:0.15)--(1.3, -0.95);
\clip (0, -2.4) arc (-180:0:0.5) arc (180:0:0.15) arc (-180:0:0.5) arc (180:0:0.15) arc (-180:0:0.5) arc (180:0:0.15) arc (-180:0:0.5)--(4.9, 0.2) arc (0:180:0.5) arc (0:-180:0.15) arc (0:180:0.5) arc (0:-180:0.15) arc (0:180:0.5) arc (0:-180:0.15) arc (0:180:0.5)  (2.6, -0.95) arc (0:180:0.15)--(2.3, -1.25) arc (-180:0:0.15)--(2.6, -0.95);
\clip (0, -2.4) arc (-180:0:0.5) arc (180:0:0.15) arc (-180:0:0.5) arc (180:0:0.15) arc (-180:0:0.5) arc (180:0:0.15) arc (-180:0:0.5)--(4.9, 0.2) arc (0:180:0.5) arc (0:-180:0.15) arc (0:180:0.5) arc (0:-180:0.15) arc (0:180:0.5) arc (0:-180:0.15) arc (0:180:0.5)  (3.9, -0.95) arc (0:180:0.15)--(3.6, -1.25) arc (-180:0:0.15)--(3.9, -0.95);

\fill [blue!10] (0, -2.4) arc (-180:0:0.5) arc (180:0:0.15) arc (-180:0:0.5) arc (180:0:0.15) arc (-180:0:0.5) arc (180:0:0.15) arc (-180:0:0.5)--(4.9, 0.2) arc (0:180:0.5) arc (0:-180:0.15) arc (0:180:0.5) arc (0:-180:0.15) arc (0:180:0.5) arc (0:-180:0.15) arc (0:180:0.5);
\end{scope}

\draw [thick, blue] (0, 0.2)--(0, -2.4) arc (-180:0:0.5) arc (180:0:0.15) arc (-180:0:0.5) arc (180:0:0.15) arc (-180:0:0.5) arc (180:0:0.15) arc (-180:0:0.5)--(4.9, 0.2) arc (0:180:0.5) arc (0:-180:0.15) arc (0:180:0.5) arc (0:-180:0.15) arc (0:180:0.5) arc (0:-180:0.15) arc (0:180:0.5);

\draw[thick, blue] (1.3, -0.95) arc (0:180:0.15)--(1, -1.25) arc (-180:0:0.15)--(1.3, -0.95);
\draw[thick, blue] (2.6, -0.95) arc (0:180:0.15)--(2.3, -1.25) arc (-180:0:0.15)--(2.6, -0.95);
\draw[thick, blue] (3.9, -0.95) arc (0:180:0.15)--(3.6, -1.25) arc (-180:0:0.15)--(3.9, -0.95);

\draw[red] (0.2, 0.2) arc (180:0:0.3) (0.8, 0.2) to[out=-90, in=-90, looseness=1.2] (1.5, 0.2) arc (180:0:0.3) to[out=-90, in=-90, looseness=1.2] (2.8, 0.2) arc (180:0:0.3) to[out=-90, in=-90, looseness=1.2] (4.1, 0.2) arc (180:0:0.3)--(4.7, -2.4) arc (0:-180:0.3) to[out=90, in=90, looseness=1.2] (3.4, -2.4) arc (0:-180:0.3) to[out=90, in=90, looseness=1.2] (2.1, -2.4) arc (0:-180:0.3) to[out=90, in=90, looseness=1.2] (0.8, -2.4) arc (0:-180:0.3)--(0.2, 0.2);

\draw (0.4, -0.87)--(0.4, 0.2) arc (180:0:0.1) to[out=-90, in=135] (+{1.15-0.25*cos(45)}, {-0.35+0.25*sin(45)});

\draw (+{1.15+0.25*cos(45)}, {-0.35+0.25*sin(45)}) to[out=45, in=-90] (1.7, 0.2) arc (180:0:0.1) to[out=-90, in=135] (+{1.15+1.3-0.25*cos(45)}, {-0.35+0.25*sin(45)});
\draw (+{1.15+1.3+0.25*cos(45)}, {-0.35+0.25*sin(45)}) to[out=45, in=-90] (3, 0.2) arc (180:0:0.1) to[out=-90, in=135] (+{1.15+2.6-0.25*cos(45)}, {-0.35+0.25*sin(45)});

\draw (+{1.15+2.6+0.25*cos(45)}, {-0.35+0.25*sin(45)}) to[out=45, in=-90] (4.3, 0.2) arc (180:0:0.1)--(4.5, -0.87);

\draw (0.6, -0.87) to[out=60, in=-135] (+{1.15-0.25*cos(45)}, {-0.35-0.25*sin(45)});
\draw (1.7, -0.87) to[out=120, in=-45] (+{1.15+0.25*cos(45)}, {-0.35-0.25*sin(45)});
\draw (0.6+1.3, -0.87) to[out=60, in=-135] (+{1.15+1.3-0.25*cos(45)}, {-0.35-0.25*sin(45)});
\draw (1.7+1.3, -0.87) to[out=120, in=-45] (+{1.15+1.3+0.25*cos(45)}, {-0.35-0.25*sin(45)});
\draw (0.6+2.6, -0.87) to[out=60, in=-135] (+{1.15+2.6-0.25*cos(45)}, {-0.35-0.25*sin(45)});
\draw (1.7+2.6, -0.87) to[out=120, in=-45] (+{1.15+2.6+0.25*cos(45)}, {-0.35-0.25*sin(45)});

\draw (0.6, -1.33) to[out=-60, in=135] (+{1.15-0.25*cos(45)}, {-1.5-0.35+0.25*sin(45)});
\draw (1.7, -1.33) to[out=-120, in=45] (+{1.15+0.25*cos(45)}, {-1.5-0.35+0.25*sin(45)});
\draw (0.6+1.3, -1.33) to[out=-60, in=135] (+{1.3+1.15-0.25*cos(45)}, {-1.5-0.35+0.25*sin(45)});
\draw (1.7+1.3, -1.33) to[out=-120, in=45] (+{1.3+1.15+0.25*cos(45)}, {-1.5-0.35+0.25*sin(45)});
\draw (0.6+2.6, -1.33) to[out=-60, in=135] (+{2.6+1.15-0.25*cos(45)}, {-1.5-0.35+0.25*sin(45)});
\draw (1.7+2.6, -1.33) to[out=-120, in=45] (+{2.6+1.15+0.25*cos(45)}, {-1.5-0.35+0.25*sin(45)});

\draw (0.4, -1.33)--(0.4, -2.4) arc (-180:0:0.1) to[out=90, in=-135] (+{1.15-0.25*cos(45)}, {-1.5-0.35-0.25*sin(45)});
\draw (+{1.15+0.25*cos(45)}, {-1.5-0.35-0.25*sin(45)}) to[out=-45, in=90] (1.7, -2.4) arc (-180:0:0.1) to[out=90, in=-135] (+{1.3+1.15-0.25*cos(45)}, {-1.5-0.35-0.25*sin(45)});
\draw (+{1.3+1.15+0.25*cos(45)}, {-1.5-0.35-0.25*sin(45)}) to[out=-45, in=90] (3, -2.4) arc (-180:0:0.1) to[out=90, in=-135] (+{2.6+1.15-0.25*cos(45)}, {-1.5-0.35-0.25*sin(45)});
\draw (+{2.6+1.15+0.25*cos(45)}, {-1.5-0.35-0.25*sin(45)}) to[out=-45, in=90] (4.3, -2.4) arc (-180:0:0.1)--(4.5, -1.33);

\draw [red, looseness=1.7] (0.9, -0.95) to[out=90, in=90] (1.4, -0.95)--(1.4, -1.25) to[out=-90, in=-90] (0.9, -1.25)--cycle;
\draw [red, looseness=1.7] (2.2, -0.95) to[out=90, in=90] (2.7, -0.95)--(2.7, -1.25) to[out=-90, in=-90] (2.2, -1.25)--cycle;
\draw [red, looseness=1.7] (3.5, -0.95) to[out=90, in=90] (4, -0.95)--(4, -1.25) to[out=-90, in=-90] (3.5, -1.25)--cycle;

\draw (1.4, -0.35) arc (0:360:0.25);
\node at (1.15, -0.35) {\scriptsize $\phi_1$};

\draw (1.4+1.3, -0.35) arc (0:360:0.25);
\node at (1.15+1.3, -0.35) {\scriptsize $\phi_2$};

\draw (1.4+2.6, -0.35) arc (0:360:0.25);
\node at (1.15+2.6, -0.35) {\scriptsize $\phi_3$};

\draw (0.75, -1.1) arc (0:360:0.25);
\node at (0.5, -1.1) {\scriptsize $\theta_1$};

\draw (0.75+1.3, -1.1) arc (0:360:0.25);
\node at (0.5+1.3, -1.1) {\scriptsize $\theta_2$};

\draw (0.75+2.6, -1.1) arc (0:360:0.25);
\node at (0.5+2.6, -1.1) {\scriptsize $\theta_3$};

\draw (0.75+3.9, -1.1) arc (0:360:0.25);
\node at (0.5+3.9, -1.1) {\scriptsize $\theta_4$};

\draw (1.4, -0.3-1.55) arc (0:360:0.25);
\node at (1.15, -0.3-1.55) {\scriptsize $\phi_4$};

\draw (1.4+1.3, -0.3-1.55) arc (0:360:0.25);
\node at (1.15+1.3, -0.3-1.55) {\scriptsize $\phi_5$};

\draw (1.4+2.6, -0.3-1.55) arc (0:360:0.25);
\node at (1.15+2.6, -0.3-1.55) {\scriptsize $\phi_6$};

\node at (0.6+2.6, -2.4) [circle, fill, inner sep=1pt] {};
\node at (0.8+2.6, -2.4) [circle, fill, inner sep=1pt] {};

\node at (0.6+3.9, -2.4) [circle, fill, inner sep=1pt] {};
\node at (0.8+3.9, -2.4) [circle, fill, inner sep=1pt] {};
\end{tikzpicture}
} = \frac{1}{2 \sqrt{2}} 
\raisebox{-1.3cm}{
\begin{tikzpicture}[scale=0.7]
\draw[thick, blue, fill=blue!10] (0, -2) to[out=-90, in=-90, looseness=0.6] (4.9, -2)--(4.9, -0.2) to[out=90, in=90, looseness=0.6] (0, -0.2)--cycle;

\draw[red] (0.15, -1.8) to[out=-90, in=-90, looseness=0.6] (4.75, -1.8)--(4.75, -0.4) to[out=90, in=90, looseness=0.6] (0.15, -0.4)--cycle;

\draw (0.4, -0.87) to[out=120, in=135, looseness=1.7] (+{1.15-0.25*cos(45)}, {-0.35+0.25*sin(45)});
\draw (+{1.15+0.25*cos(45)}, {-0.35+0.25*sin(45)}) to[out=45, in=135] (+{1.15+1.3-0.25*cos(45)}, {-0.35+0.25*sin(45)});
\draw (+{1.15+1.3+0.25*cos(45)}, {-0.35+0.25*sin(45)}) to[out=45, in=135] (+{1.15+2.6-0.25*cos(45)}, {-0.35+0.25*sin(45)});
\draw (+{1.15+2.6+0.25*cos(45)}, {-0.35+0.25*sin(45)}) to[out=45, in=60, looseness=1.7] (4.5, -0.87);

\draw (0.6, -0.87) to[out=60, in=-135] (+{1.15-0.25*cos(45)}, {-0.35-0.25*sin(45)});
\draw (1.7, -0.87) to[out=120, in=-45] (+{1.15+0.25*cos(45)}, {-0.35-0.25*sin(45)});
\draw (0.6+1.3, -0.87) to[out=60, in=-135] (+{1.15+1.3-0.25*cos(45)}, {-0.35-0.25*sin(45)});
\draw (1.7+1.3, -0.87) to[out=120, in=-45] (+{1.15+1.3+0.25*cos(45)}, {-0.35-0.25*sin(45)});
\draw (0.6+2.6, -0.87) to[out=60, in=-135] (+{1.15+2.6-0.25*cos(45)}, {-0.35-0.25*sin(45)});
\draw (1.7+2.6, -0.87) to[out=120, in=-45] (+{1.15+2.6+0.25*cos(45)}, {-0.35-0.25*sin(45)});

\draw (0.6, -1.33) to[out=-60, in=135] (+{1.15-0.25*cos(45)}, {-1.5-0.35+0.25*sin(45)});
\draw (1.7, -1.33) to[out=-120, in=45] (+{1.15+0.25*cos(45)}, {-1.5-0.35+0.25*sin(45)});
\draw (0.6+1.3, -1.33) to[out=-60, in=135] (+{1.3+1.15-0.25*cos(45)}, {-1.5-0.35+0.25*sin(45)});
\draw (1.7+1.3, -1.33) to[out=-120, in=45] (+{1.3+1.15+0.25*cos(45)}, {-1.5-0.35+0.25*sin(45)});
\draw (0.6+2.6, -1.33) to[out=-60, in=135] (+{2.6+1.15-0.25*cos(45)}, {-1.5-0.35+0.25*sin(45)});
\draw (1.7+2.6, -1.33) to[out=-120, in=45] (+{2.6+1.15+0.25*cos(45)}, {-1.5-0.35+0.25*sin(45)});

\draw (0.4, -1.33) to[out=-120, in=-135, looseness=1.7] (+{1.15-0.25*cos(45)}, {-1.5-0.35-0.25*sin(45)});
\draw (+{1.15+0.25*cos(45)}, {-1.5-0.35-0.25*sin(45)}) to[out=-45, in=-135] (+{1.3+1.15-0.25*cos(45)}, {-1.5-0.35-0.25*sin(45)});
\draw (+{1.3+1.15+0.25*cos(45)}, {-1.5-0.35-0.25*sin(45)}) to[out=-45, in=-135] (+{2.6+1.15-0.25*cos(45)}, {-1.5-0.35-0.25*sin(45)});
\draw (+{2.6+1.15+0.25*cos(45)}, {-1.5-0.35-0.25*sin(45)}) to[out=-45, in=-60, looseness=1.7] (4.5, -1.33);

\draw (1.4, -0.35) arc (0:360:0.25);
\node at (1.15, -0.35) {\scriptsize $\phi_1$};

\draw (1.4+1.3, -0.35) arc (0:360:0.25);
\node at (1.15+1.3, -0.35) {\scriptsize $\phi_2$};

\draw (1.4+2.6, -0.35) arc (0:360:0.25);
\node at (1.15+2.6, -0.35) {\scriptsize $\phi_3$};

\draw (0.75, -1.1) arc (0:360:0.25);
\node at (0.5, -1.1) {\scriptsize $\theta_1$};

\draw (0.75+1.3, -1.1) arc (0:360:0.25);
\node at (0.5+1.3, -1.1) {\scriptsize $\theta_2$};

\draw (0.75+2.6, -1.1) arc (0:360:0.25);
\node at (0.5+2.6, -1.1) {\scriptsize $\theta_3$};

\draw (0.75+3.9, -1.1) arc (0:360:0.25);
\node at (0.5+3.9, -1.1) {\scriptsize $\theta_4$};

\draw (1.4, -0.3-1.55) arc (0:360:0.25);
\node at (1.15, -0.3-1.55) {\scriptsize $\phi_4$};

\draw (1.4+1.3, -0.3-1.55) arc (0:360:0.25);
\node at (1.15+1.3, -0.3-1.55) {\scriptsize $\phi_5$};

\draw (1.4+2.6, -0.3-1.55) arc (0:360:0.25);
\node at (1.15+2.6, -0.3-1.55) {\scriptsize $\phi_6$};

\node at (3.45, -2.12) [circle, fill, inner sep=1pt] {};
\node at (4.05, -2.12) [circle, fill, inner sep=1pt] {};
\end{tikzpicture}
} , 
\end{equation}
where we removed the holes using the string-genus relation Eq.~\eqref{eq:string-genus}. Note that one could, in principle, simplify the outermost red Majorana line by first shrinking it into a small circle using Eq.~\eqref{eq:YBE-two-braids} and then removing it from the diagram by applying the Majorana loop amplitude in TABLE~\ref{tab:majorana-rewriting-rules}. However, we have not performed this simplification to elaborate key diagrammatic characterizations of matchgate: along every boundary component of the background manifold, there exists a ``boundary-tracking'' Majorana lines (the highlighted outermost Majorana line in this case), and the background manifold contains no hole. See Sec.~\ref{sec:matchgate-TN-using-quon} for a complete discussion on the importance of boundary-tracking Majorana lines.

\subsection{Arbitrary Tensor Networks}
In Sec.~\ref{sec:quon-for-q-states-gates}, we demonstrated how Quon diagrams can represent arbitrary quantum states and unitary gates. In this section, we extend those results by establishing the \textit{universality} of the 2D Quon language for \textit{planar tensor networks}. Therefore, we lift the unitarity constraints that were imposed in the previous subsections. As expected, the planar nature of 2D Quon diagrams is naturally compatible with planar tensor networks. 

We begin by reviewing planar tensor networks, with a focus on the concept called the \textit{planar region}, and provide detailed explanations of tensor operations. Next, we present how matchgate tensor networks can be naturally understood as planar tensor networks, and introduce a new tensor network class called the \textit{punctured matchgate tensor networks} as a generalization of matchgate tensor networks. By using the concept of the punctured matchgate networks, we prove the \textit{decomposition theorem}, which states that any tensor network can be rewritten as the contraction of a single Clifford tensor with a single matchgate tensor. Furthermore, inspired by this decomposition theorem, we introduce a new tensor network ansatz called \textit{hybrid Clifford-matchgate-MPS}. Subsequently, we show how planar tensor operations can be directly translated into operations on 2D Quon diagrams. Finally, we establish the universality of the 2D Quon language by providing diagrammatic representations for an elementary generating set of tensors in planar tensor networks.

\subsubsection{Planar tensor networks and tensor operations}
\label{sec:planar-TN}
A tensor network can be viewed as a graph where nodes correspond to tensors and edges correspond to tensor legs. A \textit{planar tensor network} is a tensor network whose corresponding graph is a planar graph. As demonstrated in FIG.~\ref{fig:planar-TN-via-SWAP}, a generic tensor network can always be expressed as a planar tensor network at the expense of introducing additional SWAP gates. Thus, it is possible to consider only planar tensor networks without compromising the full capability of tensor networks. Furthermore, given a planar network, we consider not only the abstract connectivity between the tensors, but also their specific layout on the plane (or on the surface of a two-dimensional sphere). Throughout the rest of the paper, we assume that every tensor leg index can take only the values $0$ or $1$, i.e., each tensor leg corresponds to a qubit. 

\begin{figure*}[t]
\centering
\includegraphics[width=\textwidth]{./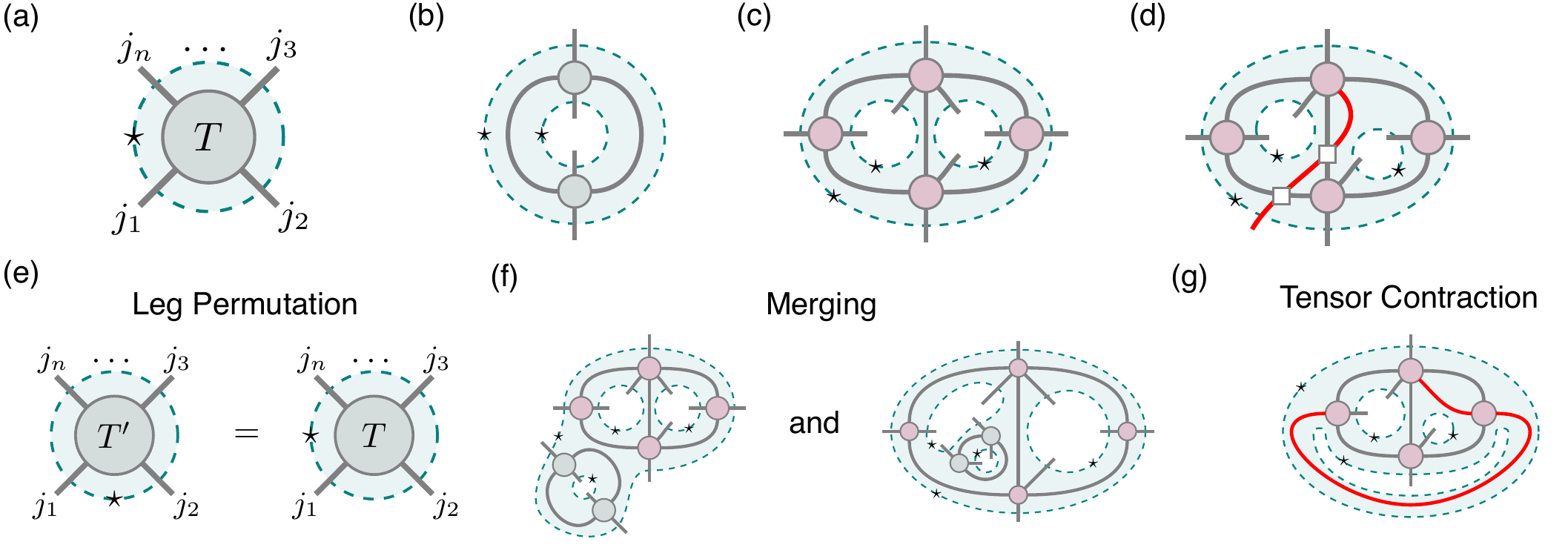}
\caption{(a) A single $n$-leg tensor $T$ in a planar tensor network, where $j_1, \ldots, j_n = 0, 1$ are tensor indices, and the planar region is indicated as a shaded region with its boundary indicated by a dashed line. Here, we use thick gray lines to represent tensor legs, in order to distinguish them from Majorana lines, which are drawn with thin black or red lines. Tensor legs are ordered counterclockwise, starting from $\star$ in the boundary of the planar region. (b) Example of a planar tensor network with one puncture in the planar region. (c) Example of a planar tensor network with two punctures in the planar region. We place $\star$ to each punctures, in order to provide the counterclockwise ordering among the tensor legs lying in the puncture. (d) Starting from a planar tensor network in (c), one can pull out a leg, highlighted in red, at the expense of using the SWAP gates, denoted as white boxes. \textit{Various planar tensor network operations}: (e) (\textit{Leg Permutation}) In a planar tensor network, one can cyclically permute the tensor legs lying in the same boundary component of the planar region, as indicated by shifting the position of $\star$. For example, the $n$-leg tensor $T'$, defined in terms of its components as $T'_{j_2, \ldots, j_n, j_1} = T_{j_1, j_2, \ldots, j_n}$, is obtained from $T$ in (a) by cyclically permuting its legs. (f) (\textit{Merging}) Merging of a two planar tensor networks, enabled by merging their planar regions. Here, we merge the tensor networks in (b) and (c). We merge two planar regions either by joining segments of their disk boundaries (as illustrated on the left), or by joining a segment of a puncture boundary in one region to a segment of the disk boundary of another region (as illustrated on the right). As a technical remark, one may have to adjust the locations of $\star$s on the two boundaries to bring them together before merging, at the expense of using the leg cyclic permutations presented in (e). (g) (\textit{Tensor Contractions}) Examples of neighboring and non-neighboring self-contractions of tensors, with the contracted legs highlighted in red.}
\label{fig:planar-tensor}
\end{figure*}

When visually displaying a planar tensor network, we always indicate a \textit{planar region}---shown as shaded areas in Fig.~\ref{fig:planar-tensor}---defined as follows: a planar region is a connected region\footnote{To avoid clutter, we assume that the planar region always has a single connected component} on a plane (or on the surface of a sphere), onto which the network is embedded, where the tensor nodes and all closed (internal) legs, those that are contracted over, are located within its interior, while all open (external) legs extend transversally across its boundary. While this may initially seem unnecessary, we will demonstrate later that indicating the planar region is, in fact, conceptually useful for understanding specific types of tensor networks, notably matchgate tensor networks. We note that the planar regions and the planar tensor network operations introduced here can be essentially understood in terms of the \textit{planar algebra} from the mathmatics literature~\cite{jones2021planar}. The planar region can have one or multiple boundary components, as shown in FIG.~\ref{fig:planar-tensor} (a)--(c). In the former case, the planar region is topologically equivalent to a disk, and in the latter, it is equivalent to a punctured disk\footnote{We will use the term ``puncture'' when referring to a hole in the planar region, to distinguish it from a hole in the background manifold of a 2D Quon diagram.}. As a remark, one can pull out a leg in one boundary component to another at the expense of using the SWAP gates, as demonstrated in FIG.~\ref{fig:planar-tensor} (d). Therefore, the planarity and the presence of punctures in the planar region become particularly relevant when the SWAP gate is not freely available, as in matchgate tensor networks~\cite{valiant2001quantum, cai2009theory, bravyi2009contraction}. We note that the planar region is often related to the background manifold of Quon, but serves a different role. At each boundary component of the planar region, we place a $\star$ to indicate relative counterclockwise ordering of the open legs associated with that boundary. 

We now explain three planar tensor network operations, namely (1) \textit{leg permutation}, (2) \textit{merging}, and (3) \textit{tensor contraction}, which are summarized in FIG.~\ref{fig:planar-tensor} (e)--(g). 

In a \textit{leg permutation}, owing to the planar embedding, we only consider a cyclic permutation of legs belonging to the same boundary component of the planar region, which corresponds to shifting the position of $\star$. 

A \textit{merging} amount to combining two planar tensor networks into a single planar tensor network. As part of this process, the corresponding planar regions are also merged\footnote{Note that the planar region is assumed to have a single connected component.}. As an operation in the planar tensor networks, we allow merging only when the resulting tensor network and the planar region remain planar, as demonstrated in FIG.~\ref{fig:planar-tensor} (f). After a merging operation, the number of boundary components, as well as the number of $\star$s, decreases by one. More general merging operation would result in a \textit{surface tensor network}, where the tensor network is embedded on a surface, potentially with genus. We defer the exploration of surface tensor networks to future work. 

The third tensor operation, \textit{tensor contraction}, is defined as a \textit{self-contraction} between legs that lie on the same boundary component of the planar region. Therefore, if necessary, we need to merge two planar regions before performing the tensor contraction to enable a self-contraction. This definition also implies that the planar tensor contraction between arbitrary pairs of legs is not generally possible, as such an operation could result in a surface tensor network. There are two types of self-contraction: one between \textit{neighboring legs} and the other between \textit{non-neighboring legs}, both of which are demonstrated in FIG.~\ref{fig:planar-tensor} (g). Note that the non-neighboring contraction changes the topology of the planar region by increasing the number of punctures by one. 

\begin{figure*}[t]
\centering
\includegraphics[width=\textwidth]{./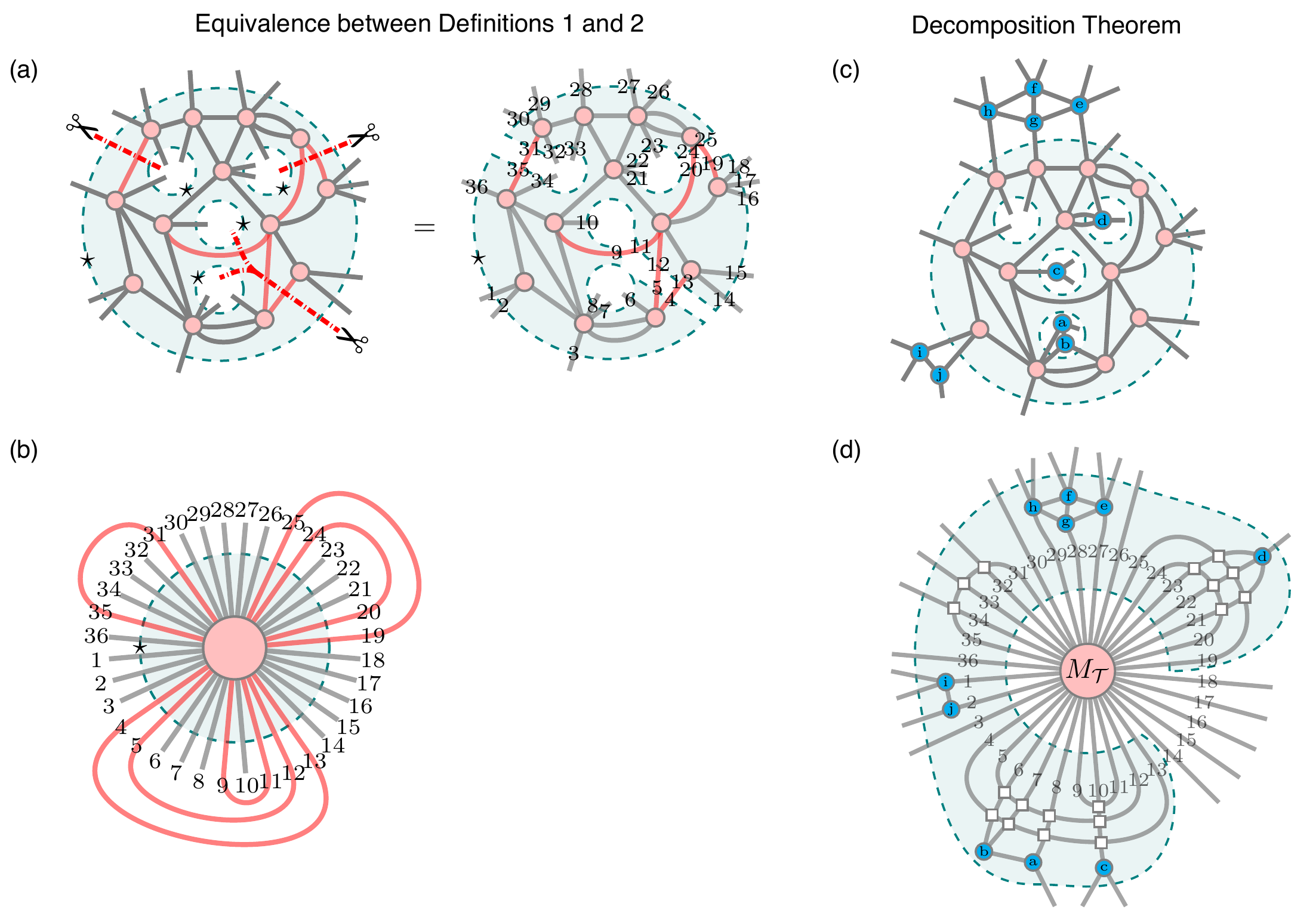}
\caption{(a)--(b) Tensor network diagrams used to prove the equivalence between Definitions~\ref{def:punctured-matchgate-tensor-network} and~\ref{def:punctured-matchgate-tensor}. (c)--(d) Tensor network diagrams used to prove Theorem~\ref{thm:Clifford-matchgate-decomposition}, the decomposition theorem. See the main text for detailed explanations of each diagram.}
\label{fig:punctured-decomposition}
\end{figure*}

Using the concept of the planar region, we now extend matchgate tensor networks by introducing a new class called \textit{punctured matchgate tensor networks}, which contain matchgates as a strict subset. Recall that a tensor network is called \textit{matchgate} if (1) it is a planar tensor network, (2) each constituent tensor is a matchgate tensor (see Appendix~\ref{app:matchgate} for the definition of a matchgate tensor), and (3) the planar region of the network is topologically equivalent to a (non-punctured) disk. In a matchgate tensor network, the resulting contracted tensor (involving only neighboring contractions) remains a matchgate tensor. We refer to the conventional matchgate tensor networks as the \textit{(non-punctured) matchgate tensor networks} and define a new, broader class of tensor networks, the \textit{punctured matchgate tensor networks}, by relaxing condition (3). This leads to the following definition of a \textit{punctured matchgate} tensor: 
\begin{definition}
\label{def:punctured-matchgate-tensor-network}
A tensor is called a \textbf{punctured matchgate tensor} if it can be obtained from (planar) contracting a punctured matchgate tensor network. 
\end{definition}
Alternatively, there exists an equivalent definition of a punctured matchgate tensor: 
\begin{definition}
\label{def:punctured-matchgate-tensor}
A tensor is called a \textbf{punctured matchgate tensor} if it can be obtained from (possibly non-neighboring but planar) self-contractions of a single matchgate tensor. 
\end{definition}
The equivalence of Definition~\ref{def:punctured-matchgate-tensor-network} and Definition~\ref{def:punctured-matchgate-tensor} is, in fact, far from obvious. Naively, one might merge all matchgate tensors in a punctured matchgate tensor network into a single tensor, and then perform the tensor contractions in the network. However, this merged tensor generally fails to be a matchgate tensor because the ordering of tensor legs---crucial for the matchgate structure---can become permuted. Therefore, we present a proof of the equivalence between two definitions below. 

A punctured matchgate tensor from Definition~\ref{def:punctured-matchgate-tensor} trivially aligns with that from Definition~\ref{def:punctured-matchgate-tensor-network}. To show the converse, consider a punctured tensor network. If we contract all the closed legs, the resulting tensor becomes a punctured matchgate tensor according to Definition~\ref{def:punctured-matchgate-tensor-network}. Now, in the punctured tensor network, consider cuts from each puncture in the planar region to the boundary of the disk, in a way that they intersect only closed tensor legs, as depicted in FIG.~\ref{fig:punctured-decomposition} (a), where the closed tensor legs that intersect the cuts are highlighted in red. In the RHS, we cut the planar region along these cuts, producing a puncture-free planar region; accordingly, we cyclically label tensor legs that cross the boundary of the planar region transversally. Furthermore, contracting the closed tensor legs lying within this puncture-free planar region yields in a matchgate tensor. Then, contracting the remaining tensor legs (those highlighted in red) produces FIG.~\ref{fig:punctured-decomposition} (b), which is a tensor obtained from non-neighboring planar contractions of a matchgate tensor and thus qualifies as a punctured matchgate per Definition~\ref{def:punctured-matchgate-tensor}. 

It turns out that the newly introduced punctured matchgates have a simple diagrammatic characterization in terms of the Quon language; see Sec.~\ref{sec:matchgate-TN-using-quon} for further details. 

We remark that given a punctured matchgate tensor, if we ``fill in'' the punctures with matchgates of appropriate size, the resulting tensor becomes a matchgate. This observation motivates the design a matchgate shadow-type protocol~\cite{wan2023matchgate} as a method of handling punctured matchgates. We emphasize that punctured matchgates include matchgates as a strict subset. Similar to matchgate tensors, computing the components of a punctured matchgate tensor remains efficient (since computational basis states are matchgates). However, when viewing a punctured matchgate tensor as a quantum state, evaluating the expectation value of an arbitrary (local) observable is generally intractable, with brute-force computation scaling \textit{exponentially} with the number of punctures---in contrast to matchgates\footnote{Any local observable can be expressed as a linear combination of Pauli string operators, i.e., as a sum of a small number of matchgates. Each expectation value on a matchgate state can be expressed as a closed matchgate tensor network: represent the ket state as a matchgate tensor network on the northern hemisphere and the bra state on the southern hemisphere. By gluing two hemispheres along their equators and inserting dots to represent the Pauli operators at the junction, one obtains a closed matchgate tensor network on the sphere. Therefore, the expectation value of a local observable on a matchgate state can be evaluated efficiently. In contrast, even computing the norm of a punctured matchgate state is generally \textit{intractable}, since, using the same construction as above, the norm is expressed as a \textit{surface} matchgate tensor network whose number of genus equals the number of punctures.}. 

We note that computing a component of a punctured matchgate can be expressed as a Pfaffian of a matrix, whose entries now encode multi-particle correlations, unlike the single-particle structure in matchgates. This feature bears some similarities with recent neural-network-based variational wavefunctions~\cite{PhysRevLett.122.226401, acevedo2020vandermonde, von2022self, chen2023exact, geier2025attention}, which generalize the Hartree-Fock wavefunction (or matchgate). In those neural network wavefunctions, each component is given by a single determinant of a matrix whose entries depends on multi-particle coordinates, rather than on a single-particle orbitals as in traditional Hartree-Fock. However, punctured matchgates offer a richer geometric picture inherited from matchgates, enabling more structured correlations than those in neural-network variational states. We leave the investigation of these encoded correlations for future work.

Using the concepts from punctured matchgates, we can prove the following decomposition theorem: 
\begin{theorem}
\label{thm:Clifford-matchgate-decomposition}
Any tensor network can be expressed as the tensor contractions between a single Clifford tensor and a single (non-punctured) matchgate tensor.
\end{theorem}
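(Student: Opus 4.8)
The plan is to carry out the decomposition entirely at the level of planar tensor networks, using the two characterizations of punctured matchgates (Definitions~\ref{def:punctured-matchgate-tensor-network} and~\ref{def:punctured-matchgate-tensor}) and their equivalence established above. First I would invoke universality: by the generating-set construction of Sec.~\ref{sec:quon-universal}, every tensor of the given network can be rewritten as a small sub-network built from an elementary generating set, and that set may be taken to consist only of Clifford unitaries, matchgate unitaries, matchgate caps/cups and computational-basis projections, and the SWAP gate, which is itself Clifford. Hence, after planarizing the whole network via the SWAP trick of FIG.~\ref{fig:planar-TN-via-SWAP} and absorbing the auxiliary SWAPs among the Clifford nodes, one obtains an equivalent planar tensor network in which every node is \emph{either} a Clifford tensor or a matchgate tensor. (From the Quon point of view this is the statement that the diagram splits into braids, caps, cups, dots and holes---the Clifford features---together with generic scattering elements---the matchgate features.)

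Next I would cleave this network into a Clifford part and a matchgate part. Contracting all the Clifford nodes among themselves yields a single Clifford tensor, since a contraction of Clifford tensors is again Clifford (equivalently: gluing Quon diagrams with no generic scattering elements produces one with no generic scattering elements). The matchgate nodes, by contrast, generally sit in the plane separated by the now-excised clusters of Clifford nodes; surrounding each such cluster by a puncture and drawing a cut from it to the disk boundary exactly as in FIG.~\ref{fig:punctured-decomposition}(a), the matchgate nodes together with their mutual closed legs constitute a \emph{punctured} matchgate tensor network, and therefore contract to a punctured matchgate tensor $\tilde{\mathcal{M}}$ in the sense of Definition~\ref{def:punctured-matchgate-tensor-network}.

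The decisive step is then to pay for the punctures using Clifford resources alone. By the equivalence with Definition~\ref{def:punctured-matchgate-tensor}, $\tilde{\mathcal{M}}$ is a single \emph{non-punctured} matchgate tensor subjected to a collection of planar, possibly non-neighboring, self-contractions. A neighboring self-contraction of a matchgate tensor is again a matchgate tensor, so those can simply be folded in, leaving a single matchgate tensor $\mathcal{M}$; each \emph{non}-neighboring self-contraction, on the other hand, is realized by a ladder of SWAP gates that routes one leg next to the other followed by a neighboring contraction, and both the SWAPs and the neighboring contraction (a gluing of the Bell-state ``cup'', which is a stabilizer and hence Clifford tensor) are Clifford. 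Thus $\tilde{\mathcal{M}}$ equals a Clifford network applied to $\mathcal{M}$. Merging this routing network with the Clifford tensor obtained earlier---once more using that a contraction of Clifford tensors is Clifford---produces a single Clifford tensor $\mathcal{C}$, and by construction the original network equals $\mathcal{C}$ contracted with $\mathcal{M}$ along their shared closed legs, with the external legs of the original network distributed between $\mathcal{C}$ and $\mathcal{M}$; see FIG.~\ref{fig:punctured-decomposition}(c)--(d).

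I expect the main obstacle to be the planar bookkeeping in the middle step: one must check that excising the Clifford clusters genuinely leaves the matchgate nodes on a \emph{punctured disk} rather than on a higher-genus surface region (for which no analogue of Definition~\ref{def:punctured-matchgate-tensor} is available), that the cuts to the boundary can always be chosen to meet only closed legs and pairwise non-crossing, and that the repeated merging of regions never silently breaks planarity---each of these being repaired, if necessary, by inserting further SWAP gates, which are Clifford and therefore free of charge in the final accounting. The conceptual point underlying the whole argument, and what makes the punctures harmless, is that the obstruction to contracting matchgate tensors non-planarly is \emph{precisely} a SWAP (Clifford) resource, so every puncture can be charged entirely to the Clifford side of the decomposition.
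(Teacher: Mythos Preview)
Your proposal is correct and follows essentially the same route as the paper: planarize with SWAPs, separate the network into Clifford clusters and a punctured matchgate region, invoke the equivalence of Definitions~\ref{def:punctured-matchgate-tensor-network} and~\ref{def:punctured-matchgate-tensor} (the cut trick of FIG.~\ref{fig:punctured-decomposition}(a)--(b)) to extract a single non-punctured matchgate, and charge the non-neighboring self-contractions to the Clifford side via SWAP ladders before merging all Cliffords into one. The only difference is cosmetic ordering---the paper contracts the matchgate part first and then the Cliffords, whereas you excise the Clifford clusters first and merge the routing SWAPs back at the end---but the key ideas and the planar bookkeeping you flag as the main obstacle coincide.
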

\begin{proof}
Suppose we are given a tensor network $\mathcal{T}$ where each tensor is either Clifford or matchgate. If a tensor in the network is neither Clifford nor matchgate, we replace it with an equivalent ``mini'' tensor network constructed solely from Clifford and matchgate tensors. Furthermore, we assume that the tensor network is planar by incorporating the SWAP gates following the procedure outlined in FIG.~\ref{fig:planar-TN-via-SWAP}, which are Cliffords. Under these assumptions, an example of $\mathcal{T}$ is given by FIG.~\ref{fig:punctured-decomposition} (c), where we label Clifford tensors by letters, facilitating the tracking of each tensor in later stages of the proof, and use a ``planar region'' to separate the matchgate tensors from Clifford tensors. One can immediately notice that this planar region and the enclosed matchgate tensors form a punctured matchgate tensor network, denoted by $\mathcal{T}_M$. 

Using the procedures outlined in FIG.~\ref{fig:punctured-decomposition} (a) and (b), we first contract all the matchgate tensors in $\mathcal{T}_M$ to form a single matchgate tensor, denoted by $M_\mathcal{T}$. We then ``pull out'' tensor legs from each puncture, and then reinsert the Clifford tensors from $\mathcal{T}$ into $\mathcal{T}_M$, leading to FIG.~\ref{fig:punctured-decomposition} (d), where we use white boxes to denote the SWAP gates,  which are generated by ``pulling out'' tensor legs from the punctures in $\mathcal{T}_M$. We also indicate another ``planar region'' to enclose all the \textit{Clifford} tensors in the network. Finally, we contract the Clifford tensors lying within this planar region to yield a \textit{Clifford} tensor $C_\mathcal{T}$. This produces a decomposition of tensor network $\mathcal{T}$ in terms of a single Clifford tensor $C_\mathcal{T}$ a single matchgate tensor $M_\mathcal{T}$: 
\begin{equation}
\mathcal{T} = \raisebox{-0.9cm}{
\begin{tikzpicture}
\draw[ultra thick, gray] (-1, 0)--(1, 0);
\draw[ultra thick, gray] (-1, 0)--(-2, -1.2);
\draw[ultra thick, gray] (1, 0)--(2, -1.2);

\draw[thick, gray, fill=pink] (1.5, 0) arc (0:360:0.5);
\draw[thick, gray, fill=cyan] (-1.5, 0) arc (180:-180:0.5);

\node at (1, 0) {$M_\mathcal{T}$};
\node at (-1, 0) {$C_\mathcal{T}$};

\node at (0, 0.2) {\tiny $\chi = 2^{31}$};
\node at (2, -0.5) {\tiny $\chi = 2^5$};
\node at (-1.9, -0.5) {\tiny $\chi = 2^{20}$};
\end{tikzpicture}
} \,\, , 
\end{equation}
where we denote the bond dimension $\chi$ to each tensor leg. Since all the tensor contractions involved are tractable, the decomposition can be constructed efficiently. 
\end{proof}

\begin{figure}[t]
\centering
\includegraphics[width=0.45\textwidth]{./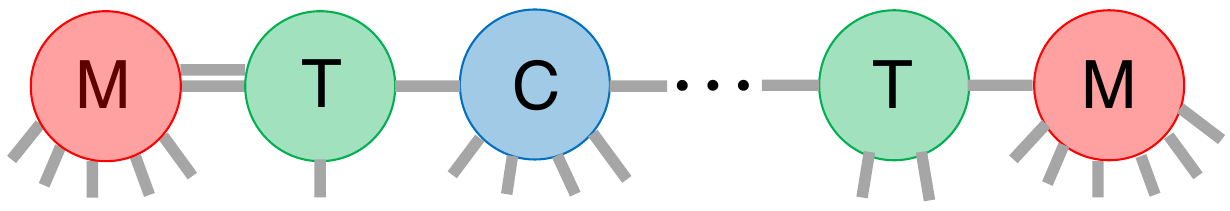}
\caption{Newly introduced hybrid Clifford-matchgate-MPS features tensors arranged in an architecture identical to that of MPS. Here, C, M, and T denote a Clifford tensor, a matchgate tensor, and a low-rank tensor, respectively. More general tree-tensor network replacing the MPS architecture is also possible. Each tensor in the network can be a Clifford, matchgate, or low-rank tensor. The number of legs connecting two tensor is assumed to be small, while the number of open legs attached to a Clifford or matchgate can be large. By design, our tensor network spatially separates the regions corresponding to Clifford and matchgate, enabling classical simulability when the bond dimensions are small.}
\label{fig:hybrid-cmt}
\end{figure}

Several remarks are in order. First, given an arbitrary tensor network, we provide an efficient algorithm to construct a Clifford and matchgate tensor as stated in the theorem. This may appear counterintuitive given that computing tensor network contractions of, e.g., a closed tensor network (i.e., one with no open legs) is $\textsc{\#P}$-hard~\cite{PhysRevLett.98.140506, PhysRevResearch.2.013010}. Our approach focuses on performing as many classically tractable tensor leg contractions as possible, ultimately reducing the network to a single Clifford tensor connected to a single matchgate tensor via the remaining closed tensor legs. The remaining closed tensor legs can be contracted through brute-force summations; however, this process scales exponentially with the number of legs connecting the Clifford tensor and the matchgate tensor. 

Second, Theorem~\ref{thm:Clifford-matchgate-decomposition} naturally motivates us to consider a new class of ansatz called the \textit{hybrid-Clifford-matchgate-MPS}, that remains classically simulable, despite typically exhibiting high non-Cliffordness, high non-matchgateness, and large bipartite entanglement entropy. Our \textit{hybrid Clifford-matchgate-MPS}, illustrated in FIG.~\ref{fig:hybrid-cmt}, retains the same tensor network architecture as a standard MPS, while allowing each tensor in the network to be a Clifford or a matchgate, along with a low-rank tensor. We note that more general tree-tensor networks can be used instead of a linear MPS arrangement, but we focus on the MPS architecture for concreteness. Surprisingly, we note that the classically simulability of non-planar dimer models defined on single-crossing minor-free graphs studied in the computer-science literature~\cite{vazirani1989nc, straub2016counting, curticapean2014counting, likhosherstov2020tractable} is precisely captured by our ansatz with a tree-tensor network architecture. To ensure classical simulability, the bond dimension, or the number of legs connecting two tensors\footnote{As a reminder, we assumed that each tensor leg has a bond dimension $2$.}, is kept small. Notably, a Clifford or matchgate tensor may have a large number of open legs, potentially scaling as $\mathrm{O}(n)$, where $n$ denotes the total number of open legs in the network. Using the bipartite cut through a Clifford or matchgate, we have a high bipartite entanglement entropy, a feature that is absent in the MPS with small bond dimensions. Consequently, a state representable by a hybrid Clifford-matchgate-MPS exhibits high non-Cliffordness and high non-matchgateness due to the presence of Clifford and matchgate tensors, yet remains efficiently contractable thanks to the geometrical separation of these tensors. Our hybrid Clifford-matchgate-MPS shares some similarities with other hybrid ansatze in the literature, including Clifford augmented MPS structure~\cite{PhysRevLett.133.190402}, Clifford transformed matchgate (Hartree-Fock) state~\cite{mishmash2023hierarchical, projansky2024extending}, Clifford conjugated matchgates~\cite{projansky2024extending}, in that geometrical separation is strategically employed to maintain tractability. We leave the exploration of the hybrid Clifford-matchgate-MPS as variational states for quantum many-body and quantum chemistry simulations to future work. 

Finally, we remark a potential of representing the third level of the matchgate hierarchy~\cite{matchgatehierarchiy}, which consists of unitaries that transform a Pauli string into a matchgate under conjugation, in terms of singly-punctured matchgates.

\subsubsection{Quon representations and tensor operations}
We now describe how planar tensor networks and their operations can be translated to the 2D Quon language. To achieve this, we introduce a diagrammatic element called the \textit{basis encoder}, which conveniently translates a tensor leg index into a 2D Quon diagram, defined as follows: 
\begin{equation}
\label{eq:basis-encoder}
\raisebox{-0.4cm}{\tikz{
\fill [blue!10] (0, 0.8) rectangle (1.5, 1);

\fill [blue!10] (1.5, 1) arc (0:180:0.75);

\draw[thick,blue] (0, 0.8)--(0, 1);
\draw[thick, blue] (1.5, 1) arc (0:180:0.75);
\draw[thick,blue] (1.5, 0.8)--(1.5, 1);

\draw (0.2, 0.8)--(0.2,1);
\draw (0.4, 0.8)--(0.4,1);
\draw (1.1, 0.8)--(1.1,1);
\draw (1.3, 0.8)--(1.3,1);

\draw (1.3, 1) arc (0:180:0.55);
\draw (1.1, 1) arc (0:180:0.35);

\draw[ultra thick, gray] (0.75, 1)--(0.75, 2);
\draw (0.75, 1)--(1.3+0.07, 1);

\node at (0.75, 1) [circle, fill, minimum size=4pt, inner sep=1pt] {};
\draw (1.3+0.07, 1) arc (0:360:0.07);
\draw (1.1+0.07, 1) arc (0:360:0.07);

\node at (0.75, 2.2) {$j$};
}} \,\, := \begin{cases}
\raisebox{-0.2cm}{\tikz{
\fill [blue!10] (0, 0.8) rectangle (1.5, 1);

\fill [blue!10] (1.5, 1) arc (0:180:0.75);

\draw[thick,blue] (0, 0.8)--(0, 1);
\draw[thick, blue] (1.5, 1) arc (0:180:0.75);
\draw[thick,blue] (1.5, 0.8)--(1.5, 1);

\draw (0.2, 0.8)--(0.2,1);
\draw (0.4, 0.8)--(0.4,1);
\draw (1.1, 0.8)--(1.1,1);
\draw (1.3, 0.8)--(1.3,1);

\draw (1.3, 1) arc (0:180:0.55);
\draw (1.1, 1) arc (0:180:0.35);
}} \,\, , \qquad \textrm{if } j = 0 \\ \\ 
\raisebox{-0.2cm}{\tikz{
\fill [blue!10] (0, 0.8) rectangle (1.5, 1);

\fill [blue!10] (1.5, 1) arc (0:180:0.75);

\draw[thick, blue] (0, 0.8)--(0, 1);
\draw[thick, blue] (1.5, 1) arc (0:180:0.75);
\draw[thick, blue] (1.5, 0.8)--(1.5, 1);

\draw (0.2, 0.8)--(0.2,1);
\draw (0.4, 0.8)--(0.4,1);
\draw (1.1, 0.8)--(1.1,1);
\draw (1.3, 0.8)--(1.3,1);

\draw (1.3, 1) arc (0:180:0.55);
\draw (1.1, 1) arc (0:180:0.35);

\node at (1.1, 1) [circle, fill, inner sep=1pt] {};
\node at (1.3, 1) [circle, fill, inner sep=1pt] {};
}} \,\, , \qquad \textrm{if } j = 1
\end{cases} , 
\end{equation}
where $j$ denotes the tensor leg index, and the tensor leg is depicted as a thick gray line. We often use the following simplified diagram: 
\begin{equation}
\label{eq:basis-encoder-simplified}
\raisebox{-0.4cm}{\tikz{
\draw (0.2, 0.8)--(0.2,1);
\draw (0.4, 0.8)--(0.4,1);
\draw (1.1, 0.8)--(1.1,1);
\draw (1.3, 0.8)--(1.3,1);

\draw (1.3, 1) arc (0:180:0.55);
\draw (1.1, 1) arc (0:180:0.35);

\fill [gray, fill opacity=0.6] (0, 0.8) rectangle (1.5, 1);

\fill[gray, fill opacity=0.6, draw=blue, thick] (1.5, 1) arc (0:180:0.75);

\draw[thick, blue] (0, 0.8)--(0, 1);
\draw[thick, blue] (1.5, 0.8)--(1.5, 1);

\draw[ultra thick, gray] (0.75, 1.75)--(0.75, 2);
}} \,\, := 
\raisebox{-0.4cm}{\tikz{
\fill [blue!10] (0, 0.8) rectangle (1.5, 1);

\fill [blue!10] (1.5, 1) arc (0:180:0.75);

\draw[thick, blue] (0, 0.8)--(0, 1);
\draw[thick, blue] (1.5, 1) arc (0:180:0.75);
\draw[thick, blue] (1.5, 0.8)--(1.5, 1);

\draw (0.2, 0.8)--(0.2,1);
\draw (0.4, 0.8)--(0.4,1);
\draw (1.1, 0.8)--(1.1,1);
\draw (1.3, 0.8)--(1.3,1);

\draw (1.3, 1) arc (0:180:0.55);
\draw (1.1, 1) arc (0:180:0.35);

\draw[ultra thick, gray] (0.75, 1)--(0.75, 2);
\draw (0.75, 1)--(1.3+0.07, 1);

\node at (0.75, 1) [circle, fill, minimum size=4pt, inner sep=1pt] {};
\draw (1.3+0.07, 1) arc (0:360:0.07);
\draw (1.1+0.07, 1) arc (0:360:0.07);
}} \,\, = \raisebox{-0.4cm}{\tikz{
\fill [blue!10] (0, 0.8) rectangle (1.5, 1);

\fill [blue!10] (1.5, 1) arc (0:180:0.75);

\draw[thick,blue] (0, 0.8)--(0, 1);
\draw[thick, blue] (1.5, 1) arc (0:180:0.75);
\draw[thick,blue] (1.5, 0.8)--(1.5, 1);

\draw (0.2, 0.8)--(0.2,1);
\draw (0.4, 0.8)--(0.4,1);
\draw (1.1, 0.8)--(1.1,1);
\draw (1.3, 0.8)--(1.3,1);

\draw (1.3, 1) arc (0:180:0.55);
\draw (1.1, 1) arc (0:180:0.35);

\draw[ultra thick, gray] (0.75, 1)--(0.75, 2);
\draw (0.75, 1)--(0.2-0.07, 1);

\node at (0.75, 1) [circle, fill, minimum size=4pt, inner sep=1pt] {};
\draw (0.2+0.07, 1) arc (0:360:0.07);
\draw (0.4+0.07, 1) arc (0:360:0.07);
}} \,\, . 
\end{equation}
Using the basis encoder, the \textit{resolution of the identity} Eq.~\eqref{eq:resolution-of-id-1} becomes 
\begin{equation}
\label{eq:resolution-of-id-2}
\raisebox{-0.55cm}{\tikz{
\fill[blue!10] (0, -0.6) rectangle (1, 0.6);

\draw[thick, blue] (0, 0.6)--(0, -0.6);
\draw[thick, blue] (1, 0.6)--(1, -0.6);

\draw (0.2, -0.6)--(0.2, 0.6);
\draw (0.4, -0.6)--(0.4, 0.6);
\draw (0.6, -0.6)--(0.6, 0.6);
\draw (0.8, -0.6)--(0.8, 0.6);
}} \,\, = \frac{1}{2} \,\, \raisebox{-0.55cm}{\tikz{
\draw (0.4, 0.6) arc (-180:0:0.1);
\draw (0.2, 0.6) arc (-180:0:0.3);

\draw (0.4, -0.6) arc (180:0:0.1);
\draw (0.2, -0.6) arc (180:0:0.3);

\fill[gray, fill opacity=0.6, draw=blue, thick] (0, 0.6) arc (-180:0:0.5);
\fill[gray, fill opacity=0.6, draw=blue, thick] (0, -0.6) arc (180:0:0.5);

\draw [ultra thick, gray] (0.5, -0.1)--(0.5, 0.1);
}} \,\, . 
\end{equation}
This resolution of the identity allows one to relate tensor network contractions with the gluing of the corresponding 2D Quon diagrams. As the name suggests, one can glue the basis encoder to an open boundary of a 2D Quon diagram, where four Majoarnas terminate. Doing these gluing clearly illustrates how 2D Quon diagrammatic representations serve as inner structures of planar tensors and planar tensor networks (see Table~\ref{tab:generating-tensors-quon-diagrams} for various examples).

Throughout the remainder of this section, we consider only 2D Quon diagrams with the basis encoders are glued, ensuring that the tensor legs are explicitly shown. For example, we properly glue the basis encoders for 2D Quon diagrams presented in Sec.~\ref{sec:quon-for-q-states-gates}. When more than four Majorana lines terminate at an open interval---implying that the open interval encodes more than one qubit---a corresponding gluing of basis encoders can be performed, as detailed in Appendix~\ref{app:encoding-general}. 

\begin{figure*}[t]
\centering
\includegraphics[width=\textwidth]{./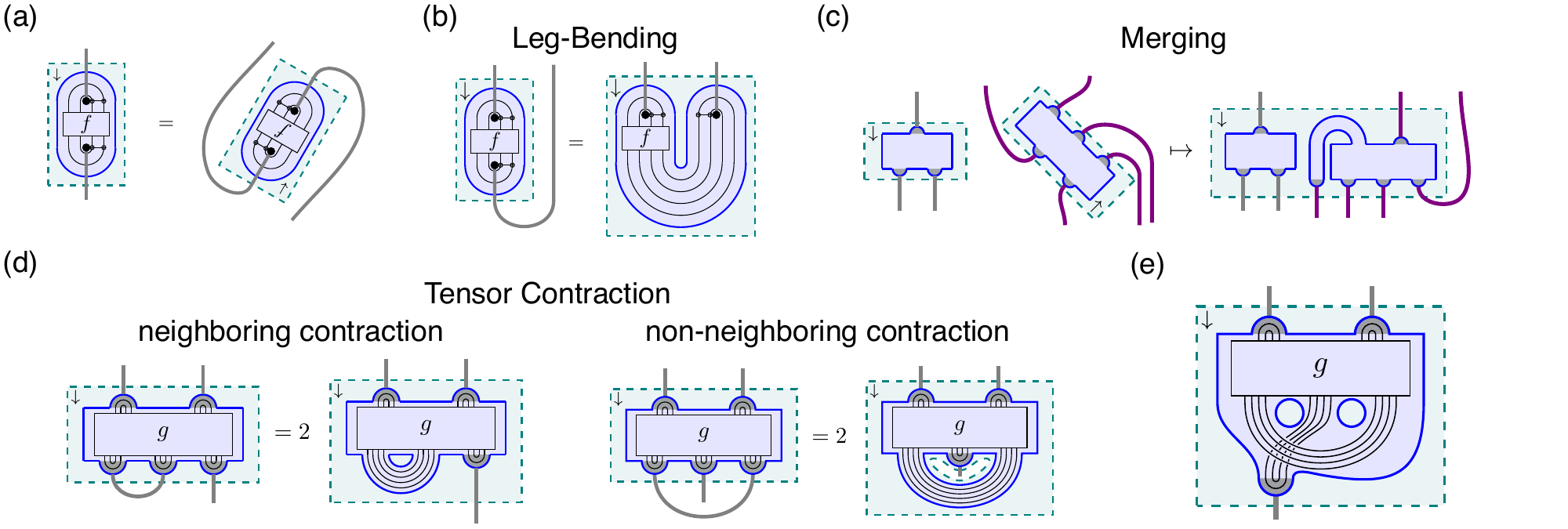}
\caption{(a) We can freely rotate a tensor within a planar tensor network while keeping the endpoints of the tensor legs fixed. When doing so, its 2D Quon diagrammatic representation, the planar region, and the time direction rotate together. Here, $f$ denotes a Majorana diagram and the arrow in the planar region, indicated as $\downarrow$, denotes the time direction. (b) The directionality of a tensor leg can be changed by using a \textit{leg-bending}. (c) As a result of a \textit{merging} operation between two planar tensor networks, the corresponding Quon diagrams are also merged as follows: First, we rotate one Quon diagram to align its time direction with that of the other diagram, then merge the two. If necessary, leg-bendings are performed prior to merging. (d) Effects of neighboring and non-neighboring tensor contractions on Quon diagrams, where leg-bendings and the resolution of the identity Eq.~\eqref{eq:resolution-of-id-2} are used. $g$ denotes a Majorana diagram in both examples. Note that a hole is introduced in the Quon diagram in both contractions; however, a puncture is introduced in the planar region only in the case of non-neighboring contraction. (e) A tensor leg can be moved from one puncture to outside the disk boundary with the help of the SWAP gate Eq.~\eqref{eq:SWAP-quon-diagram}.}
\label{fig:quon-tn}
\end{figure*}

In FIG.~\ref{fig:quon-tn}, we demonstrate how planar tensor network diagrams equipped with 2D Quon representations transform under various planar tensor network operations. As in Sec.~\ref{sec:planar-TN}, we explicitly indicate the \textit{planar region} in the tensor network diagrams. Moreover, we also indicate the \textit{time arrow}, denoted by an arrow inside the planar region, associated with the 2D Quon diagram in each tensor. For simplicity, we often omit placing $\star$s in the boundaries of the planar region. We also note that the time arrow can be removed when the 2D Quon diagram contains no dots (See Appendix~\ref{app:dots-and-time-direction} for further discussion). To avoid any ambiguity, we assume that tensor legs are not perpendicular to the time direction; thus, tensor legs are always classified as pointing either with or against the time flow.

\subsubsection{Quon representations of a generating set}
\label{sec:quon-for-tensor-networks}
Finally, we present an elementary generating set of tensors for arbitrary tensor networks, along with their 2D Quon diagrammatic representations in Table~\ref{tab:generating-tensors-quon-diagrams}. The results in TABLE~\ref{tab:generating-tensors-quon-diagrams}, together with planar tensor operations presented in FIG.~\ref{fig:quon-tn}, establish that the 2D Quon language is \textit{universal} for tensor network, meaning that any tensor network can be \textit{efficiently} represented by a \textit{single} 2D Quon diagram. Efficient construction of a 2D Quon diagrammatic representation proceeds as follows: Start with a tensor network in which each tensor is efficiently represented---for example, either as a Clifford tensor, a matchgate tensor, or a small-rank tensor with its component explicitly provided. First, decompose each tensor into a mini tensor network of elementary Clifford and matchgate tensors; this decomposition remains efficient even for a small-rank tensor, as their limited dimensionality allows a straightforward brute-force decomposition. Then, replace each tensor with its corresponding 2D Quon diagram from TABLE~\ref{tab:generating-tensors-quon-diagrams}, and perform the planar tensor operations following FIG.~\ref{fig:quon-tn} to assemble a single 2D Quon diagram representing the entire network. 

In TABLE~\ref{tab:generating-tensors-quon-diagrams}, we also incorporate the ZX-calculus diagrammatic notation~\cite{coecke2007graphical, coecke2008interacting, coecke2011interacting, coecke2010compositional, van2020zx} when presenting elementary tensors. One can view the representing Quon diagram as an internal structure for the ZX-diagram, offering finer-grained information and additional flexibility in manipulating diagrams. We note that all of the axioms of the ZX-calculus can be derived from Quon rewriting rules, notably the bialgebra rule~\cite{liu2017quon}. Most Quon representations in TABLE~\ref{tab:generating-tensors-quon-diagrams} are reformulations of the results from Sec.~\ref{sec:quon-for-q-states-gates} with the basis encoder Eq.~\eqref{eq:basis-encoder} merely glued. The main difference is the use of the $3$-leg \textit{parity tensor} $P$, instead of the $2$-qubit gates considered previously. In the following, we first discuss the properties of the parity tensor and then demonstrate that the $3$-leg parity tensor, in combination with $1$- and $2$-leg tensors in TABLE~\ref{tab:generating-tensors-quon-diagrams}, can generate the $2$-qubit gates considered in Sec.~\ref{sec:quon-for-q-states-gates}.

\begin{table*}[ht!]
\centering
\begin{tabular}{|c|c|c|c|c|c|c|c|} \hline
 & $1$-leg & \multicolumn{5}{c|}{$2$-leg} & $3$-leg \\ 
\hline
Tensor & $\vert 0 \rangle$ & $I$ & $X$ & $e^{i \frac{\pi}{4}} e^{-i \frac{\pi}{4} X}$ & $e^{-i \frac{\pi}{4}} e^{i \frac{\pi}{4} X}$ & $e^{i \frac{\theta}{2}} e^{-i \frac{\theta}{2} Z}$ & $P$ \\
\hline
ZX-diagram & $\frac{1}{\sqrt{2}}$ 
\raisebox{-0.55cm}{
\begin{tikzpicture}
\draw [dashed, thick, teal, fill=teal!20, fill opacity=0.4] (0.4, 0.7) arc (0:360:0.4);

\draw[ultra thick, gray] (0, 0.7)--(0, 0);

\draw [thick, gray, fill=zx_red] (0.2, 0.7) arc (0:360:0.2);
\node at (0, 0.7) {\scriptsize $0$};
\end{tikzpicture}
} & \raisebox{-0.65cm}{
\begin{tikzpicture}
\draw [dashed, thick, teal, fill=teal!20, fill opacity=0.4] (0.4, 0.7) arc (0:360:0.4);

\draw[ultra thick, gray] (0, 0)--(0, 1.4);

\end{tikzpicture}
} & \raisebox{-0.65cm}{
\begin{tikzpicture}
\draw [dashed, thick, teal, fill=teal!20, fill opacity=0.4] (0.4, 0.7) arc (0:360:0.4);

\draw[ultra thick, gray] (0, 0)--(0, 1.4);

\draw [thick, gray, fill=zx_red] (0.2, 0.7) arc (0:360:0.2);
\node at (0, 0.7) {\scriptsize $\pi$};
\end{tikzpicture}
} & \raisebox{-0.65cm}{
\begin{tikzpicture}
\draw [dashed, thick, teal, fill=teal!20, fill opacity=0.4] (0.4, 0.7) arc (0:360:0.4);

\draw[ultra thick, gray] (0, 0)--(0, 1.4);

\draw [thick, gray, fill=zx_red] (0.2, 0.7) arc (0:360:0.2);
\node at (0, 0.7) {\scriptsize $\frac{\pi}{2}$};
\end{tikzpicture}
} & \raisebox{-0.65cm}{
\begin{tikzpicture}
\draw [dashed, thick, teal, fill=teal!20, fill opacity=0.4] (0.4, 0.7) arc (0:360:0.4);

\draw[ultra thick, gray] (0, 0)--(0, 1.4);

\draw [thick, gray, fill=zx_red] (0.25, 0.7) arc (0:360:0.25);
\node at (0, 0.7) {\scriptsize $-\frac{\pi}{2}$};
\end{tikzpicture}
} & \raisebox{-0.65cm}{
\begin{tikzpicture}
\draw [dashed, thick, teal, fill=teal!20, fill opacity=0.4] (0.4, 0.7) arc (0:360:0.4);

\draw[ultra thick, gray] (0, 0)--(0, 1.4);

\draw [thick, gray, fill=zx_green] (0.2, 0.7) arc (0:360:0.2);
\node at (0, 0.7) {\scriptsize $\theta$};
\end{tikzpicture}
} & $\sqrt{2}$ \raisebox{-0.5cm}{
\begin{tikzpicture}
\draw [dashed, thick, teal, fill=teal!20, fill opacity=0.4] (0.4, 0.7) arc (0:360:0.4);

\draw[ultra thick, gray] (0, 0.7)--(0, 1.4);
\draw[ultra thick, gray] (0.5, 0.2) arc (0:180:0.5);

\draw [thick, gray, fill=zx_red] (0.2, 0.7) arc (0:360:0.2);

\node at (0, 0.7) {\scriptsize $0$};
\end{tikzpicture}
} \\
\hline
Quon Diagram & $\frac{1}{2}$ \raisebox{-0.75cm}{
\begin{tikzpicture}[scale=0.7]
\draw[dashed, thick, teal, fill=teal!20, fill opacity=0.4] (-0.2, -0.1) rectangle (1.7, 1.9);
\node at (-0.05, 1.7) {\scriptsize $\downarrow$};

\fill [blue!10] (1.5, 1) arc (0:180:0.75);
\fill [blue!10] (1.5, 0.8) arc (0:-180:0.75);

\fill [blue!10] (0, 0.8) rectangle (1.5, 1);

\draw[thick, blue] (0, 0.8)--(0, 1);
\draw[thick, blue] (1.5, 0.8)--(1.5, 1);
\draw[thick, blue] (1.5, 1) arc (0:180:0.75);
\draw[thick, blue] (1.5, 0.8) arc (0:-180:0.75);

\draw[red] (0.2, 0.8)--(0.2, 1);
\draw (0.4, 0.8)--(0.4, 1);
\draw (1.1, 0.8)--(1.1, 1);
\draw[red] (1.3, 0.8)--(1.3, 1);

\draw[red] (1.3, 1) arc (0:180:0.55);
\draw (1.1, 1) arc (0:180:0.35);

\draw[red] (1.3, 0.8) arc (0:-180:0.55);
\draw (1.1, 0.8) arc (0:-180:0.35);

\draw[ultra thick, gray] (0.75, 0.8)--(0.75, -0.3);
\draw (0.75, 0.8)--(1.3+0.07, 0.8);

\node at (0.75, 0.8) [circle, fill, minimum size=4pt, inner sep=1pt] {};
\draw (1.3+0.07, 0.8) arc (0:360:0.07);
\draw (1.1+0.07, 0.8) arc (0:360:0.07);
\end{tikzpicture}
} & $\frac{1}{2}$ \raisebox{-1cm}{
\begin{tikzpicture}[scale=0.8]
\draw[dashed, thick, teal, fill=teal!20, fill opacity=0.4] (-0.2, -0.1) rectangle (1.7, 2.1);
\node at (-0.05, 1.9) {\scriptsize $\downarrow$};

\fill [blue!10] (1.5, 1.2) arc (0:180:0.75);
\fill [blue!10] (1.5, 0.8) arc (0:-180:0.75);

\fill [blue!10] (0, 0.8) rectangle (1.5, 1.2);

\draw[thick, blue] (0, 0.8)--(0, 1.2);
\draw[thick, blue] (1.5, 0.8)--(1.5, 1.2);
\draw[thick, blue] (1.5, 1.2) arc (0:180:0.75);
\draw[thick, blue] (1.5, 0.8) arc (0:-180:0.75);

\draw[red] (0.2, 0.8)--(0.2, 1.2);
\draw (0.4, 0.8)--(0.4, 1.2);
\draw (1.1, 0.8)--(1.1, 1.2);
\draw[red] (1.3, 0.8)--(1.3, 1.2);

\draw[red] (1.3, 1.2) arc (0:180:0.55);
\draw (1.1, 1.2) arc (0:180:0.35);

\draw[red] (1.3, 0.8) arc (0:-180:0.55);
\draw (1.1, 0.8) arc (0:-180:0.35);

\draw[ultra thick, gray] (0.75, 1.2)--(0.75, 2.3);
\draw (0.75, 1.2)--(1.3+0.07, 1.2);

\node at (0.75, 1.2) [circle, fill, minimum size=4pt, inner sep=1pt] {};
\draw (1.3+0.07, 1.2) arc (0:360:0.07);
\draw (1.1+0.07, 1.2) arc (0:360:0.07);

\draw[ultra thick, gray] (0.75, 0.8)--(0.75, -0.3);
\draw (0.75, 0.8)--(1.3+0.07, 0.8);

\node at (0.75, 0.8) [circle, fill, minimum size=4pt, inner sep=1pt] {};
\draw (1.3+0.07, 0.8) arc (0:360:0.07);
\draw (1.1+0.07, 0.8) arc (0:360:0.07);
\end{tikzpicture}
} & $\frac{1}{2}$ \raisebox{-1cm}{
\begin{tikzpicture}[scale=0.8]
\draw[dashed, thick, teal, fill=teal!20, fill opacity=0.4] (-0.2, -0.1) rectangle (1.7, 2.1);
\node at (-0.05, 1.9) {\scriptsize $\downarrow$};

\fill [blue!10] (1.5, 1.2) arc (0:180:0.75);
\fill [blue!10] (1.5, 0.8) arc (0:-180:0.75);

\fill [blue!10] (0, 0.8) rectangle (1.5, 1.2);

\draw[thick, blue] (0, 0.8)--(0, 1.2);
\draw[thick, blue] (1.5, 0.8)--(1.5, 1.2);
\draw[thick, blue] (1.5, 1.2) arc (0:180:0.75);
\draw[thick, blue] (1.5, 0.8) arc (0:-180:0.75);

\draw[red] (0.2, 0.8)--(0.2, 1.2);
\draw (0.4, 0.8)--(0.4, 1.2);
\draw (1.1, 0.8)--(1.1, 1.2);
\draw[red] (1.3, 0.8)--(1.3, 1.2);

\draw[red] (1.3, 1.2) arc (0:180:0.55);
\draw (1.1, 1.2) arc (0:180:0.35);

\draw[red] (1.3, 0.8) arc (0:-180:0.55);
\draw (1.1, 0.8) arc (0:-180:0.35);

\draw[ultra thick, gray] (0.75, 1.2)--(0.75, 2.3);
\draw (0.75, 1.2)--(1.3+0.07, 1.2);

\node at (0.75, 1.2) [circle, fill, minimum size=4pt, inner sep=1pt] {};
\draw (1.3+0.07, 1.2) arc (0:360:0.07);
\draw (1.1+0.07, 1.2) arc (0:360:0.07);

\node at (1.3, 1) [circle, fill, inner sep=1pt] {};
\node at (1.1, 1) [circle, fill, inner sep=1pt] {};

\draw[ultra thick, gray] (0.75, 0.8)--(0.75, -0.3);
\draw (0.75, 0.8)--(1.3+0.07, 0.8);

\node at (0.75, 0.8) [circle, fill, minimum size=4pt, inner sep=1pt] {};
\draw (1.3+0.07, 0.8) arc (0:360:0.07);
\draw (1.1+0.07, 0.8) arc (0:360:0.07);
\end{tikzpicture}
} & $\frac{1}{2}$ \raisebox{-1cm}{
\begin{tikzpicture}[scale=0.8]
\draw[dashed, thick, teal, fill=teal!20, fill opacity=0.4] (-0.2, -0.1) rectangle (1.7, 2.1);
\node at (-0.05, 1.9) {\scriptsize $\downarrow$};

\draw[thick, blue, fill=blue!10] (0, 0.8)--(0, 1.2) arc (180:0:0.75)--(1.5, 0.8) arc(0:-180:0.75);

\draw (0.2, 0.8) to[out=90, in=-90] (0.4, 1.2) arc(180:0:0.35)--(1.1, 0.8) arc(0:-180:0.35) to[out=90, in=-50] (0.34, 0.95);
\draw (0.26, 1.05) to[out=130, in=-90] (0.2, 1.2) arc(180:0:0.55)--(1.3, 0.8) arc(0:-180:0.55); 

\draw[ultra thick, gray] (0.75, 1.2)--(0.75, 2.3);
\draw (0.75, 1.2)--(1.3+0.07, 1.2);

\node at (0.75, 1.2) [circle, fill, minimum size=4pt, inner sep=1pt] {};
\draw (1.3+0.07, 1.2) arc (0:360:0.07);
\draw (1.1+0.07, 1.2) arc (0:360:0.07);

\draw[ultra thick, gray] (0.75, 0.8)--(0.75, -0.3);
\draw (0.75, 0.8)--(1.3+0.07, 0.8);

\node at (0.75, 0.8) [circle, fill, minimum size=4pt, inner sep=1pt] {};
\draw (1.3+0.07, 0.8) arc (0:360:0.07);
\draw (1.1+0.07, 0.8) arc (0:360:0.07);
\end{tikzpicture}
} & $\frac{1}{2}$ \raisebox{-1cm}{
\begin{tikzpicture}[scale=0.8]
\draw[dashed, thick, teal, fill=teal!20, fill opacity=0.4] (-0.2, -0.1) rectangle (1.7, 2.1);
\node at (-0.05, 1.9) {\scriptsize $\downarrow$};

\draw[thick, blue, fill=blue!10] (0, 0.8)--(0, 1.2) arc (180:0:0.75)--(1.5, 0.8) arc(0:-180:0.75);

\draw (0.34, 1.05) to[out=50, in=-90] (0.4, 1.2) arc(180:0:0.35)--(1.1, 0.8) arc(0:-180:0.35) to[out=90, in=-90] (0.2, 1.2)arc(180:0:0.55)--(1.3, 0.8) arc(0:-180:0.55) to[out=90, in=-130] (0.26, 0.95);

\draw[ultra thick, gray] (0.75, 1.2)--(0.75, 2.3);
\draw (0.75, 1.2)--(1.3+0.07, 1.2);

\node at (0.75, 1.2) [circle, fill, minimum size=4pt, inner sep=1pt] {};
\draw (1.3+0.07, 1.2) arc (0:360:0.07);
\draw (1.1+0.07, 1.2) arc (0:360:0.07);

\draw[ultra thick, gray] (0.75, 0.8)--(0.75, -0.3);
\draw (0.75, 0.8)--(1.3+0.07, 0.8);

\node at (0.75, 0.8) [circle, fill, minimum size=4pt, inner sep=1pt] {};
\draw (1.3+0.07, 0.8) arc (0:360:0.07);
\draw (1.1+0.07, 0.8) arc (0:360:0.07);
\end{tikzpicture}
} & $\frac{1}{2}$ \raisebox{-1.25cm}{
\begin{tikzpicture}[scale=0.8]
\draw[dashed, thick, teal, fill=teal!20, fill opacity=0.4] (-0.2, -0.7) rectangle (1.7, 2.1);
\node at (-0.05, 1.9) {\scriptsize $\downarrow$};

\fill [blue!10] (1.5, 1.2) arc (0:180:0.75);
\fill [blue!10] (1.5, 0.2) arc (0:-180:0.75);

\fill [blue!10] (0, 0.2) rectangle (1.5, 1.2);

\draw[thick, blue] (0, 0.2)--(0, 1.2);
\draw[thick, blue] (1.5, 0.2)--(1.5, 1.2);
\draw[thick, blue] (1.5, 1.2) arc (0:180:0.75);
\draw[thick, blue] (1.5, 0.2) arc (0:-180:0.75);

\draw[red] (0.2, 0.2)--(0.2, 1.2);
\draw[red] (1.3, 0.2)--(1.3, 1.2);

\draw (1, 0.7) arc (0:360:0.25);
\node at (0.75, 0.7) {\scriptsize $\theta_\updownarrow$};

\draw (0.4, 1.2) to[out=-90, in=135] (+{0.75-0.25*cos(45)}, +{0.7+0.25*sin(45)});
\draw (1.1, 1.2) to[out=-90, in=45] (+{0.75+0.25*cos(45)}, +{0.7+0.25*sin(45)});

\draw (0.4, 0.2) to[out=90, in=-135] (+{0.75-0.25*cos(45)}, +{0.7-0.25*sin(45)});
\draw (1.1, 0.2) to[out=90, in=-45] (+{0.75+0.25*cos(45)}, +{0.7-0.25*sin(45)});

\draw[red] (1.3, 1.2) arc (0:180:0.55);
\draw (1.1, 1.2) arc (0:180:0.35);

\draw[red] (1.3, 0.2) arc (0:-180:0.55);
\draw (1.1, 0.2) arc (0:-180:0.35);

\draw[ultra thick, gray] (0.75, 1.2)--(0.75, 2.3);
\draw (0.75, 1.2)--(1.3+0.07, 1.2);

\node at (0.75, 1.2) [circle, fill, minimum size=4pt, inner sep=1pt] {};
\draw (1.3+0.07, 1.2) arc (0:360:0.07);
\draw (1.1+0.07, 1.2) arc (0:360:0.07);

\draw[ultra thick, gray] (0.75, 0.2)--(0.75, -0.9);
\draw (0.75, 0.2)--(1.3+0.07, 0.2);

\node at (0.75, 0.2) [circle, fill, minimum size=4pt, inner sep=1pt] {};
\draw (1.3+0.07, 0.2) arc (0:360:0.07);
\draw (1.1+0.07, 0.2) arc (0:360:0.07);
\end{tikzpicture}
} & $\frac{1}{2}$ \raisebox{-1cm}{
\begin{tikzpicture}[scale=0.5]
\draw[dashed, thick, teal, fill=teal!20, fill opacity=0.4] (-1.3, -0.9) rectangle (2.8, 2.4);
\node at (-1, 2.1) {\scriptsize $\downarrow$};

\begin{scope}[even odd rule]
\clip (0, 0) rectangle (1.5, 1.5)  (1.1, 0) arc (0:180:0.35);

\fill[blue!10] (0, 0) rectangle (1.5, 1.5);
\end{scope}

\fill[blue!10] (0, 1.5) to[out=-90, in=90] (-1.1, 0)--(0, 0)--(0, 1.5);
\fill[blue!10] (1.5, 1.5) to[out=-90, in=90] (2.6, 0)--(1.5, 0)--(1.5, 1.5);

\draw[thick, blue] (0, 1.5) to[out=-90, in=90] (-1.1, 0);
\draw[thick, blue] (1.5, 1.5) to[out=-90, in=90] (2.6, 0);
\draw[thick, blue] (1.1, 0) arc (0:180:0.35);

\draw[thick, blue, fill=blue!10] (1.5, 1.5) arc (0:180:0.75);
\draw[red] (1.3, 1.5) arc (0:180:0.55);
\draw (1.1, 1.5) arc (0:180:0.35);

\draw[thick, blue, fill=blue!10] (0.4, 0) arc (0:-180:0.75);
\draw[red] (0.2, 0) arc (0:-180:0.55);
\draw (0, 0) arc (0:-180:0.35);

\draw[thick, blue, fill=blue!10] (1.1, 0) arc (-180:0:0.75);
\draw[red] (2.4, 0) arc (0:-180:0.55);
\draw (2.2, 0) arc (0:-180:0.35);

\draw[red, looseness=0.8] (0.2, 1.5) to[out=-90, in=90] (-0.9, 0);
\draw[red, looseness=0.8] (1.3, 1.5) to[out=-90, in=90] (2.4, 0);
\draw[red, looseness=1.5] (1.3, 0) to[out=90, in=90] (0.2, 0);

\draw[looseness=0.8] (0.4, 1.5) to[out=-90, in=90] (-0.7, 0);
\draw[looseness=0.8] (1.1, 1.5) to[out=-90, in=90] (2.2, 0);
\draw[looseness=1.5] (1.5, 0) to[out=90, in=90] (0, 0);

\draw[ultra thick, gray] (0.75, 1.5)--(0.75, 2.8);
\draw (0.75, 1.5)--(1.3+0.07, 1.5);

\node at (0.75, 1.5) [circle, fill, minimum size=4pt, inner sep=1pt] {};
\draw (1.3+0.07, 1.5) arc (0:360:0.07);
\draw (1.1+0.07, 1.5) arc (0:360:0.07);

\draw[ultra thick, gray] (-0.35, 0)--(-0.35, -1.3);
\draw (-0.35, 0)--(0.2+0.07, 0);

\node at (-0.35, 0) [circle, fill, minimum size=4pt, inner sep=1pt] {};
\draw (0.2+0.07, 0) arc (0:360:0.07);
\draw (0.07, 0) arc (0:360:0.07);

\draw[ultra thick, gray] (1.85, 0)--(1.85, -1.3);
\draw (1.85, 0)--(2.4+0.07, 0);

\node at (1.85, 0) [circle, fill, minimum size=4pt, inner sep=1pt] {};
\draw (2.4+0.07, 0) arc (0:360:0.07);
\draw (2.2+0.07, 0) arc (0:360:0.07);
\end{tikzpicture}
} \\
\hline
Clifford & \checkmark & \checkmark & \checkmark & \checkmark & \checkmark & generally $\times$ & \checkmark \\
\hline
Matchgate & \checkmark & \checkmark & \checkmark & $\times$ & $\times$ & \checkmark & \checkmark \\ \hline 
\end{tabular}
\caption{A set of elementary generating tensors, along with their ZX-diagram notations and 2D Quon diagrams, with indications of whether each tensor is Clifford or matchgate. By slight abuse of notation, we use $\vert 0 \rangle$ to denote the $1$-leg tensor that equals $1$ when its leg index is $0$ and vanishes otherwise. Note that the $2$-leg tensor $e^{i \frac{\theta}{2}} e^{- i \frac{\theta}{2} Z}$ is generally not a Clifford tensor, except at special angles (integer multiples of $\pi/2$). Here, the $3$-leg tensor $P$ denotes the $3$-leg parity tensor.}
\label{tab:generating-tensors-quon-diagrams}
\end{table*}

The $n$-leg parity tensor (also called the XOR tensor~\cite{biamonte2011categorical}), defined for every integer $n$, takes the value $1$ if the sum of the leg indices is even and $0$ (so it vanishes) if the sum is odd. In particular, the cases $n=1$ and $n=2$ correspond to $\vert 0 \rangle$ and the identity tensor $I$, respectively, in TABLE~\ref{tab:generating-tensors-quon-diagrams}. Notably, the parity tensors satisfy the renormalization-group (RG) invariant property. For instance, the $4$-leg parity tensor can be decomposed into two 3-leg parity tensors: 
\begin{align}
&\raisebox{-0.65cm}{
\begin{tikzpicture}
\draw [dashed, thick, teal, fill=teal!20, fill opacity=0.4] (0.9, 0.7) arc (0:360:0.4);
\draw[ultra thick, gray] (0, 0)--(1, 1.4);
\draw[ultra thick, gray] (1, 0)--(0, 1.4);
\draw [thick, gray, fill=white] (0.7, 0.7) arc (0:360:0.2);
\node at (0.5, 0.7) {\scriptsize $P$};
\end{tikzpicture}
} = \raisebox{-0.65cm}{
\begin{tikzpicture}
\draw [dashed, thick, teal, fill=teal!20, fill opacity=0.4] (0.5, 0.75) ellipse (0.9cm and 0.4cm);
\draw[ultra thick, gray] (0, 0)--(0, 1.4);
\draw[ultra thick, gray] (1, 0)--(1, 1.4);
\draw[ultra thick, gray] (0, 0.8) to[out=-60, in=-120] (1, 0.8);
\draw [thick, gray, fill=white] (0.2, 0.8) arc (0:360:0.2);
\draw [thick, gray, fill=white] (1.2, 0.8) arc (0:360:0.2);
\node at (0, 0.8) {\scriptsize $P$};
\node at (1, 0.8) {\scriptsize $P$};
\end{tikzpicture}
} = 2 \raisebox{-0.65cm}{
\begin{tikzpicture}
\draw [dashed, thick, teal, fill=teal!20, fill opacity=0.4] (0.5, 0.75) ellipse (0.9cm and 0.4cm);
\draw[ultra thick, gray] (0, 0)--(0, 1.4);
\draw[ultra thick, gray] (1, 0)--(1, 1.4);
\draw[ultra thick, gray] (0, 0.8) to[out=-60, in=-120] (1, 0.8);
\draw [thick, gray, fill=zx_red] (0.15, 0.8) arc (0:360:0.15);
\draw [thick, gray, fill=zx_red] (1.15, 0.8) arc (0:360:0.15);
\node at (0, 0.8) {\scriptsize $0$};
\node at (1, 0.8) {\scriptsize $0$};
\end{tikzpicture}
} = 2 \raisebox{-0.65cm}{
\begin{tikzpicture}
\draw [dashed, thick, teal, fill=teal!20, fill opacity=0.4] (0.9, 0.7) arc (0:360:0.4);
\draw[ultra thick, gray] (0, 0)--(1, 1.4);
\draw[ultra thick, gray] (0, 1.4)--(1, 0);
\draw [thick, gray, fill=zx_red] (0.65, 0.7) arc (0:360:0.15);
\node at (0.5, 0.7) {\scriptsize $0$};
\end{tikzpicture}
} \nonumber \\ 
&= \frac{1}{4} \,\, \raisebox{-1.1cm}{
\begin{tikzpicture}[scale=0.5]
\draw[dashed, thick, teal, fill=teal!20, fill opacity=0.4] (-1.3, -0.9) rectangle (2.8, 2.4);
\node at (-1, 2.1) {\scriptsize $\downarrow$};
\begin{scope}[even odd rule]
\clip (0, 0) rectangle (1.5, 1.5)  (1.1, 0) arc (0:180:0.35);
\fill[blue!10] (0, 0) rectangle (1.5, 1.5);
\end{scope}
\fill[blue!10] (0, 1.5) to[out=-90, in=90] (-1.1, 0)--(0, 0)--(0, 1.5);
\fill[blue!10] (1.5, 1.5) to[out=-90, in=90] (2.6, 0)--(1.5, 0)--(1.5, 1.5);
\draw[thick, blue] (0, 1.5) to[out=-90, in=90] (-1.1, 0);
\draw[thick, blue] (1.5, 1.5) to[out=-90, in=90] (2.6, 0);
\draw[thick, blue] (1.1, 0) arc (0:180:0.35);
\draw[thick, blue, fill=blue!10] (1.5, 1.5) arc (0:180:0.75);
\draw[red] (1.3, 1.5) arc (0:180:0.55);
\draw (1.1, 1.5) arc (0:180:0.35);
\draw[thick, blue, fill=blue!10] (0.4, 0) arc (0:-180:0.75);
\draw[red] (0.2, 0) arc (0:-180:0.55);
\draw (0, 0) arc (0:-180:0.35);
\draw[thick, blue, fill=blue!10] (1.1, 0) arc (-180:0:0.75);
\draw[red] (2.4, 0) arc (0:-180:0.55);
\draw (2.2, 0) arc (0:-180:0.35);
\draw[red, looseness=0.8] (0.2, 1.5) to[out=-90, in=90] (-0.9, 0);
\draw[red, looseness=0.8] (1.3, 1.5) to[out=-90, in=90] (2.4, 0);
\draw[red, looseness=1.5] (1.3, 0) to[out=90, in=90] (0.2, 0);
\draw[looseness=0.8] (0.4, 1.5) to[out=-90, in=90] (-0.7, 0);
\draw[looseness=0.8] (1.1, 1.5) to[out=-90, in=90] (2.2, 0);
\draw[looseness=1.5] (1.5, 0) to[out=90, in=90] (0, 0);
%
\draw[ultra thick, gray] (0.75, 1.5)--(0.75, 2.8);
\draw (0.75, 1.5)--(1.3+0.07, 1.5);
\node at (0.75, 1.5) [circle, fill, minimum size=4pt, inner sep=1pt] {};
\draw (1.3+0.07, 1.5) arc (0:360:0.07);
\draw (1.1+0.07, 1.5) arc (0:360:0.07);
%
\draw[ultra thick, gray] (-0.35, 0)--(-0.35, -1.3);
\draw (-0.35, 0)--(0.2+0.07, 0);
\node at (-0.35, 0) [circle, fill, minimum size=4pt, inner sep=1pt] {};
\draw (0.2+0.07, 0) arc (0:360:0.07);
\draw (0.07, 0) arc (0:360:0.07);
%
\draw (1.85, 0)--(2.4+0.07, 0);
%
\draw (2.4+0.07, 0) arc (0:360:0.07);
\draw (2.2+0.07, 0) arc (0:360:0.07);
%
%
\draw[dashed, thick, teal, fill=teal!20, fill opacity=0.4] (4.5-1.3, -0.9) rectangle (4.5+2.8, 2.4);
\node at (-1+4.5, 2.1) {\scriptsize $\downarrow$};
\begin{scope}[even odd rule]
\clip (4.5+0, 0) rectangle (4.5+1.5, 1.5)  (4.5+1.1, 0) arc (0:180:0.35);
\fill[blue!10] (4.5+0, 0) rectangle (4.5+1.5, 1.5);
\end{scope}
\fill[blue!10] (4.5+0, 1.5) to[out=-90, in=90] (4.5-1.1, 0)--(4.5+0, 0)--(4.5+0, 1.5);
\fill[blue!10] (4.5+1.5, 1.5) to[out=-90, in=90] (4.5+2.6, 0)--(4.5+1.5, 0)--(4.5+1.5, 1.5);
\draw[thick, blue] (4.5+0, 1.5) to[out=-90, in=90] (4.5-1.1, 0);
\draw[thick, blue] (4.5+1.5, 1.5) to[out=-90, in=90] (4.5+2.6, 0);
\draw[thick, blue] (4.5+1.1, 0) arc (0:180:0.35);
\draw[thick, blue, fill=blue!10] (4.5+1.5, 1.5) arc (0:180:0.75);
\draw[red] (4.5+1.3, 1.5) arc (0:180:0.55);
\draw (4.5+1.1, 1.5) arc (0:180:0.35);
\draw[thick, blue, fill=blue!10] (4.5+0.4, 0) arc (0:-180:0.75);
\draw[red] (4.5+0.2, 0) arc (0:-180:0.55);
\draw (4.5+0, 0) arc (0:-180:0.35);
\draw[thick, blue, fill=blue!10] (4.5+1.1, 0) arc (-180:0:0.75);
\draw[red] (4.5+2.4, 0) arc (0:-180:0.55);
\draw (4.5+2.2, 0) arc (0:-180:0.35);
\draw[red, looseness=0.8] (4.5+0.2, 1.5) to[out=-90, in=90] (4.5-0.9, 0);
\draw[red, looseness=0.8] (4.5+1.3, 1.5) to[out=-90, in=90] (4.5+2.4, 0);
\draw[red, looseness=1.5] (4.5+1.3, 0) to[out=90, in=90] (4.5+0.2, 0);
\draw[looseness=0.8] (4.5+0.4, 1.5) to[out=-90, in=90] (4.5-0.7, 0);
\draw[looseness=0.8] (4.5+1.1, 1.5) to[out=-90, in=90] (4.5+2.2, 0);
\draw[looseness=1.5] (4.5+1.5, 0) to[out=90, in=90] (4.5+0, 0);
%
\draw[ultra thick, gray] (4.5+0.75, 1.5)--(4.5+0.75, 2.8);
\draw (4.5+0.75, 1.5)--(4.5+1.3+0.07, 1.5);
\node at (4.5+0.75, 1.5) [circle, fill, minimum size=4pt, inner sep=1pt] {};
\draw (4.5+1.3+0.07, 1.5) arc (0:360:0.07);
\draw (4.5+1.1+0.07, 1.5) arc (0:360:0.07);
%
\draw[ultra thick, gray, looseness=1.8] (4.5-0.35, 0) to[out=-90, in=-90] (1.85, 0);
\draw (4.5-0.35, 0)--(4.5+0.2+0.07, 0);
\node at (1.85, 0) [circle, fill, minimum size=4pt, inner sep=1pt] {};
\node at (4.5-0.35, 0) [circle, fill, minimum size=4pt, inner sep=1pt] {};
\draw (4.5+0.2+0.07, 0) arc (0:360:0.07);
\draw (4.5+0.07, 0) arc (0:360:0.07);
%
\draw[ultra thick, gray] (4.5+1.85, 0)--(4.5+1.85, -1.3);
\draw (4.5+1.85, 0)--(4.5+2.4+0.07, 0);
\node at (4.5+1.85, 0) [circle, fill, minimum size=4pt, inner sep=1pt] {};
\draw (4.5+2.4+0.07, 0) arc (0:360:0.07);
\draw (4.5+2.2+0.07, 0) arc (0:360:0.07);
\end{tikzpicture}
} = \frac{1}{2} \raisebox{-0.9cm}{
\begin{tikzpicture}[scale=0.5]
\draw[dashed, thick, teal, fill=teal!20, fill opacity=0.4] (-1.5, -0.9) rectangle (2.9, 2.4);
\node at (-1.2, 2.1) {\scriptsize $\downarrow$};
\begin{scope}[even odd rule]
\clip (-1.1, 0) rectangle (2.6, 1.5)  (1.1, 0) arc (0:180:0.35);
\clip (-1.1, 0) rectangle (2.6, 1.5)  (1.1, 1.5) arc (0:-180:0.35);
\fill[blue!10] (-1.1, 0) rectangle (2.6, 1.5);
\end{scope}
\draw [thick, blue] (-1.1, 0)--(-1.1, 1.5);
\draw [thick, blue] (2.6, 0)--(2.6, 1.5);
\draw [thick, blue] (1.1, 0) arc (0:180:0.35);
\draw [thick, blue] (1.1, 1.5) arc (0:-180:0.35);
\draw[thick, blue, fill=blue!10] (0.4, 1.5) arc (0:180:0.75);
\draw[red] (0.2, 1.5) arc (0:180:0.55);
\draw (0, 1.5) arc (0:180:0.35);
\draw[thick, blue, fill=blue!10] (1.1, 1.5) arc (180:0:0.75);
\draw[red] (2.4, 1.5) arc (0:180:0.55);
\draw (2.2, 1.5) arc (0:180:0.35);
\draw[thick, blue, fill=blue!10] (0.4, 0) arc (0:-180:0.75);
\draw[red] (0.2, 0) arc (0:-180:0.55);
\draw (0, 0) arc (0:-180:0.35);
\draw[thick, blue, fill=blue!10] (1.1, 0) arc (-180:0:0.75);
\draw[red] (2.4, 0) arc (0:-180:0.55);
\draw (2.2, 0) arc (0:-180:0.35);
\draw[red] (-0.9, 0)--(-0.9, 1.5);
\draw[red] (2.4, 1.5)--(2.4, 0);
\draw (-0.7, 1.5)--(-0.7, 0);
\draw (2.2, 1.5)--(2.2, 0);
\draw[red, looseness=1.5] (1.3, 0) to[out=90, in=90] (0.2, 0);
\draw[red, looseness=1.5] (1.3, 1.5) to[out=-90, in=-90] (0.2, 1.5);
\draw[looseness=1.5] (1.5, 0) to[out=90, in=90] (0, 0);
\draw[looseness=1.5] (1.5, 1.5) to[out=-90, in=-90] (0, 1.5);
\draw[ultra thick, gray] (-0.35, 1.5)--(-0.35, 2.8);
\draw (-0.35, 1.5)--(0.2+0.07, 1.5);
\node at (-0.35, 1.5) [circle, fill, minimum size=4pt, inner sep=1pt] {};
\draw (0.2+0.07, 1.5) arc (0:360:0.07);
\draw (0.07, 1.5) arc (0:360:0.07);
%
\draw[ultra thick, gray] (1.85, 1.5)--(1.85, 2.8);
\draw (1.85, 1.5)--(2.4+0.07, 1.5);
\node at (1.85, 1.5) [circle, fill, minimum size=4pt, inner sep=1pt] {};
\draw (2.4+0.07, 1.5) arc (0:360:0.07);
\draw (2.2+0.07, 1.5) arc (0:360:0.07);
%
\draw[ultra thick, gray] (-0.35, 0)--(-0.35, -1.3);
\draw (-0.35, 0)--(0.2+0.07, 0);
\node at (-0.35, 0) [circle, fill, minimum size=4pt, inner sep=1pt] {};
\draw (0.2+0.07, 0) arc (0:360:0.07);
\draw (0.07, 0) arc (0:360:0.07);
%
\draw[ultra thick, gray] (1.85, 0)--(1.85, -1.3);
\draw (1.85, 0)--(2.4+0.07, 0);
\node at (1.85, 0) [circle, fill, minimum size=4pt, inner sep=1pt] {};
\draw (2.4+0.07, 0) arc (0:360:0.07);
\draw (2.2+0.07, 0) arc (0:360:0.07);
\end{tikzpicture}
} \,\, , 
\end{align}
where we denote the parity tensor by P, presenting it alongside with its ZX-calculus notation and its Quon diagrammatic representation. Similarly the $n(>4)$-leg parity tensors can be generated by $3$-leg parity tensors. 

The two-qubit gate $e^{i \frac{\theta}{2}} e^{-i \frac{\theta}{2} X \otimes X}$, considered in Sec.~\ref{sec:quon-for-q-states-gates}, can be generated by $2$- and $3$-leg tensors from TABLE~\ref{tab:generating-tensors-quon-diagrams}: 
\begin{align}
&e^{i \frac{\theta}{2}} e^{-i \frac{\theta}{2} X \otimes X} = \frac{1}{2} \,\, 
\raisebox{-1.2cm}{
\begin{tikzpicture}[scale=0.8]
\draw[dashed, thick, teal, fill=teal!20, fill opacity=0.4] (-1.4, -1.4) rectangle (1.4, 1.4);
\node at (-1.2, 1.2) {\scriptsize $\downarrow$};
\begin{scope}[even odd rule]
\clip (-1.15, -0.8) rectangle (1.15, 0.8)  (0.25, 0.8) arc (0:-180:0.25);
\clip (-1.15, -0.8) rectangle (1.15, 0.8)  (0.25, -0.8) arc (0:180:0.25);
\fill [blue!10] (-1.15, -0.8) rectangle (1.15, 0.8);
\end{scope}
\draw[thick, blue] (-1.15, -0.8)--(-1.15, 0.8);
\draw[thick, blue] (1.15, -0.8)--(1.15, 0.8);
\draw[thick, blue] (0.25, 0.8) arc (0:-180:0.25);
\draw[thick, blue] (0.25, -0.8) arc (0:180:0.25);
%
\draw[red] (-1, -0.8) arc (-180:0:0.3);
\draw (-0.8, -0.8) arc (-180:0:0.1);
\fill[gray, fill opacity=0.6, draw=blue, thick] (-1.15, -0.8) arc (-180:0:0.45);
\draw[ultra thick, gray] (-0.7, -1.25)--(-0.7, -1.6);
%
\draw[red] (1, -0.8) arc (0:-180:0.3);
\draw (0.8, -0.8) arc (0:-180:0.1);
\fill[gray, fill opacity=0.6, draw=blue, thick] (1.15, -0.8) arc (0:-180:0.45);
\draw[ultra thick, gray] (0.7, -1.25)--(0.7, -1.6);
%
\draw[red] (1, 0.8) arc (0:180:0.3);
\draw (0.8, 0.8) arc (0:180:0.1);
\fill[gray, fill opacity=0.6, draw=blue, thick] (1.15, 0.8) arc (0:180:0.45);
\draw[ultra thick, gray] (0.7, 1.25)--(0.7, 1.6);
%
\draw[red] (-1, 0.8) arc (180:0:0.3);
\draw (-0.8, 0.8) arc (180:0:0.1);
\fill[gray, fill opacity=0.6, draw=blue, thick] (-1.15, 0.8) arc (180:0:0.45);
\draw[ultra thick, gray] (-0.7, 1.25)--(-0.7, 1.6);
\draw[red] (-1, 0.8)--(-1, -0.8);
\draw (-0.8, 0.8)--(-0.8, -0.8);
\draw (0.8, 0.8)--(0.8, -0.8);
\draw[red] (1, 0.8)--(1, -0.8);
\draw[red, looseness=1.5] (-0.4, 0.8) to[out=-90, in=-90] (0.4, 0.8);
\draw[red, looseness=1.5] (-0.4, -0.8) to[out=90, in=90] (0.4, -0.8);
\draw (-0.6, 0.8) to[out=-90, in=150] (-{0.15*sqrt(2)}, {0.15*sqrt(2)});
\draw (0.6, 0.8) to[out=-90, in=30] (+{0.15*sqrt(2)}, {0.15*sqrt(2)});
\draw (-0.6, -0.8) to[out=90, in=-150] (-{0.15*sqrt(2)}, -{0.15*sqrt(2)});
\draw (0.6, -0.8) to[out=90, in=-30] (+{0.15*sqrt(2)}, -{0.15*sqrt(2)});
\draw (0.3, 0) arc (0:360:0.3);
\node at (0, 0) {\scriptsize $\theta_{\updownarrow}$};
\end{tikzpicture}
} = \frac{1}{2} \,\, 
\raisebox{-1.2cm}{
\begin{tikzpicture}[scale=0.8]
\draw[dashed, thick, teal, fill=teal!20, fill opacity=0.4] (-1.4, -1.4) rectangle (1.4, 1.4);
\node at (-1.2, 1.2) {\scriptsize $\downarrow$};
\begin{scope}[even odd rule]
\clip (-1.15, -0.8) rectangle (1.15, 0.8)  (0.25, 0.8) arc (0:-180:0.25);
\clip (-1.15, -0.8) rectangle (1.15, 0.8)  (0.25, -0.8) arc (0:180:0.25);
\fill [blue!10] (-1.15, -0.8) rectangle (1.15, 0.8);
\end{scope}
\draw[thick, blue] (-1.15, -0.8)--(-1.15, 0.8);
\draw[thick, blue] (1.15, -0.8)--(1.15, 0.8);
\draw[thick, blue] (0.25, 0.8) arc (0:-180:0.25);
\draw[thick, blue] (0.25, -0.8) arc (0:180:0.25);
%
\draw[red] (-1, -0.8) arc (-180:0:0.3);
\draw (-0.8, -0.8) arc (-180:0:0.1);
\fill[gray, fill opacity=0.6, draw=blue, thick] (-1.15, -0.8) arc (-180:0:0.45);
\draw[ultra thick, gray] (-0.7, -1.25)--(-0.7, -1.6);
%
\draw[red] (1, -0.8) arc (0:-180:0.3);
\draw (0.8, -0.8) arc (0:-180:0.1);
\fill[gray, fill opacity=0.6, draw=blue, thick] (1.15, -0.8) arc (0:-180:0.45);
\draw[ultra thick, gray] (0.7, -1.25)--(0.7, -1.6);
%
\draw[red] (1, 0.8) arc (0:180:0.3);
\draw (0.8, 0.8) arc (0:180:0.1);
\fill[gray, fill opacity=0.6, draw=blue, thick] (1.15, 0.8) arc (0:180:0.45);
\draw[ultra thick, gray] (0.7, 1.25)--(0.7, 1.6);
%
\draw[red] (-1, 0.8) arc (180:0:0.3);
\draw (-0.8, 0.8) arc (180:0:0.1);
\fill[gray, fill opacity=0.6, draw=blue, thick] (-1.15, 0.8) arc (180:0:0.45);
\draw[ultra thick, gray] (-0.7, 1.25)--(-0.7, 1.6);
\draw[red] (-1, 0.8)--(-1, -0.8);
\draw (-0.8, 0.8)--(-0.8, -0.8);
\draw (0.8, 0.8)--(0.8, -0.8);
\draw[red] (1, 0.8)--(1, -0.8);
\draw[red, looseness=1.5] (-0.4, 0.8) to[out=-90, in=-90] (0.4, 0.8);
\draw[red, looseness=1.5] (-0.4, -0.8) to[out=90, in=90] (0.4, -0.8);
\draw[looseness=1.7] (-0.6, 0.8) to[out=-90, in=30] (+{0.15*sqrt(2)}, {0.15*sqrt(2)});
\draw (0.6, 0.8) to[out=-90, in=-30] (+{0.15*sqrt(2)}, -{0.15*sqrt(2)});
\draw (-0.6, -0.8) to[out=90, in=150] (-{0.15*sqrt(2)}, +{0.15*sqrt(2)});
\draw[looseness=1.7] (0.6, -0.8) to[out=90, in=-150] (-{0.15*sqrt(2)}, -{0.15*sqrt(2)});
\draw (0.3, 0) arc (0:360:0.3);
\node at (0, 0) {\scriptsize $\theta_{\leftrightarrow}$};
\end{tikzpicture}
} \nonumber \\ 
&= \frac{A}{2} \,\, 
\raisebox{-1.2cm}{
\begin{tikzpicture}[scale=0.8]
\draw[dashed, thick, teal, fill=teal!20, fill opacity=0.4] (-1.5, -1.4) rectangle (1.5, 1.4);
\node at (-1.3, 1.2) {\scriptsize $\downarrow$};
\begin{scope}[even odd rule]
\clip (-1.25, -0.8) rectangle (1.25, 0.8)  (0.25, 0.8) arc (0:-180:0.25);
\clip (-1.25, -0.8) rectangle (1.25, 0.8)  (0.25, -0.8) arc (0:180:0.25);
\fill [blue!10] (-1.25, -0.8) rectangle (1.25, 0.8);
\end{scope}
\draw[thick, blue] (-1.25, -0.8)--(-1.25, 0.8);
\draw[thick, blue] (1.25, -0.8)--(1.25, 0.8);
\draw[thick, blue] (0.25, 0.8) arc (0:-180:0.25);
\draw[thick, blue] (0.25, -0.8) arc (0:180:0.25);
%
\draw[red] (-1.1, -0.8) arc (-180:0:0.35);
\draw (-0.85, -0.8) arc (-180:0:0.1);
\fill[gray, fill opacity=0.6, draw=blue, thick] (-1.25, -0.8) arc (-180:0:0.5);
\draw[ultra thick, gray] (-0.75, -1.25)--(-0.75, -1.6);
%
\draw[red] (1.1, -0.8) arc (0:-180:0.35);
\draw (0.85, -0.8) arc (0:-180:0.1);
\fill[gray, fill opacity=0.6, draw=blue, thick] (1.25, -0.8) arc (0:-180:0.5);
\draw[ultra thick, gray] (0.75, -1.25)--(0.75, -1.6);
%
\draw[red] (1.1, 0.8) arc (0:180:0.35);
\draw (0.85, 0.8) arc (0:180:0.1);
\fill[gray, fill opacity=0.6, draw=blue, thick] (1.25, 0.8) arc (0:180:0.5);
\draw[ultra thick, gray] (0.75, 1.25)--(0.75, 1.6);
%
\draw[red] (-1.1, 0.8) arc (180:0:0.35);
\draw (-0.85, 0.8) arc (180:0:0.1);
\fill[gray, fill opacity=0.6, draw=blue, thick] (-1.25, 0.8) arc (180:0:0.5);
\draw[ultra thick, gray] (-0.75, 1.25)--(-0.75, 1.6);
\draw[red] (-1.1, 0.8)--(-1.1, -0.8);
\draw (-0.85, 0.8)--(-0.85, -0.8);
\draw (0.85, 0.8)--(0.85, -0.8);
\draw[red] (1.1, 0.8)--(1.1, -0.8);
\draw[red, looseness=1.5] (0.4, 0.8) to[out=-90, in=90] (0.5, -0.1) to[out=-90, in=-90] (0.4, -0.1) to[out=90, in=-20] (0, 0.48) to[out=160, in=-90] (-0.4, 0.8);
\draw[red, looseness=1.5] (-0.4, -0.8) to[out=90, in=-90] (-0.5, 0.1) to[out=90, in=90] (-0.4, 0.1) to[out=-90, in=160] (0, -0.48) to[out=-20, in=90] (0.4, -0.8);
\draw[looseness=0.8] (-0.65, 0.8) to[out=-90, in=90] (-0.65, 0.6) to[out=-90, in=70] (+{0.3*cos(45)}, {0.3*sin(45)});
\draw[looseness=1.5] (0.65, 0.8) to[out=-90, in=-45] (+{0.15*sqrt(2)}, -{0.15*sqrt(2)});
\draw[looseness=1.5] (-0.65, -0.8) to[out=90, in=135] (-{0.15*sqrt(2)}, +{0.15*sqrt(2)});
\draw[looseness=0.8] (0.65, -0.8) to[out=90, in=-90] (0.65, -0.6) to[out=90, in=-110] (-{0.15*sqrt(2)}, -{0.15*sqrt(2)});
\draw (0.3, 0) arc (0:360:0.3);
\node at (0, 0) {\scriptsize $\phi_{\updownarrow}$};
\end{tikzpicture}
} = \frac{A}{2} \,\, \raisebox{-1.7cm}{\tikz{
\draw[dashed, thick, teal, fill=teal!20, fill opacity=0.4] (-1.7, -1.6) rectangle (1.7, 1.6);
\node at (-1.55, 1.4) {\scriptsize $\downarrow$};
\fill[blue!10] (-1.5, 1)--(-1.5, -1)--(-0.5, -1)--(-0.5, -0.5) to[out=90, in=90] (0.5, -1)--(1.5, -1)--(1.5, 1)--(0.5, 1)--(0.5, 0.5) to[out=-90, in=-90] (-0.5, 1)--(-1.5, 1);
\draw[thick, blue] (-1.5, 1)--(-1.5, -1);
\draw[thick, blue] (-0.5, -1)--(-0.5, -0.5) to[out=90, in=90] (0.5, -1);
\draw[thick, blue] (0.5, 1)--(0.5, 0.5) to[out=-90, in=-90] (-0.5, 1);
\draw[thick, blue] (1.5, 1)--(1.5, -1);
\draw[red] (-1.3, -1)--(-1.3, 1);
\draw (-1.1, -1)--(-1.1, 1);
\draw (1.1, -1)--(1.1, 1);
\draw[red] (1.3, -1)--(1.3, 1);
\draw[red] (-0.7, -1)--(-0.7, -0.5) to[out=90, in=170] (-0.4, -0.2) to[out=-10, in=140] (0.4, -0.6) to[out=-40, in=90] (0.7, -1);
\draw[red] (0.7, 1)--(0.7, 0.5) to[out=-90, in=10] (0.4, 0.2) to[out=-170, in=-40] (-0.4, 0.6) to[out=140, in=-90] (-0.7, 1);
\draw (0.2, 0) arc (0:360:0.2);
\node at (0, 0) {\rotatebox{60}{\scriptsize $\phi_{\updownarrow}$}};
\draw (-0.9, -1)--(-0.9, -0.3) to[out=90, in=160] (-0.2, 0);
\draw (0.9, 1)--(0.9, 0.3) to[out=-90, in=-20] (0.2, 0);
\draw (-0.9, 1) to[out=-90, in=135] (-0.14, 0.15);
\draw (0.9, -1) to[out=90, in=-45] (0.14, -0.15);
%
%
\draw[red] (-1.3, -1) arc (-180:0:0.3);
\draw (-1.1, -1) arc (-180:0:0.1);
\fill[gray, fill opacity=0.6, draw=blue, thick] (-1.5, -1) arc (-180:0:0.5);
\draw[ultra thick, gray] (-1, -1.5)--(-1, -1.75);
%
\draw[red] (1.3, -1) arc (0:-180:0.3);
\draw (1.1, -1) arc (0:-180:0.1);
\fill[gray, fill opacity=0.6, draw=blue, thick] (1.5, -1) arc (0:-180:0.5);
\draw[ultra thick, gray] (1, -1.5)--(1, -1.75);
%
\draw[red] (1.3, 1) arc (0:180:0.3);
\draw (1.1, 1) arc (0:180:0.1);
\fill[gray, fill opacity=0.6, draw=blue, thick] (1.5, 1) arc (0:180:0.5);
\draw[ultra thick, gray] (1, 1.5)--(1, 1.75);
%
\draw[red] (-1.3, 1) arc (180:0:0.3);
\draw (-1.1, 1) arc (180:0:0.1);
\fill[gray, fill opacity=0.6, draw=blue, thick] (-1.5, 1) arc (180:0:0.5);
\draw[ultra thick, gray] (-1, 1.5)--(-1, 1.75);
}} \nonumber \\ 
&= \frac{A}{8} \,\, \raisebox{-2.7cm}{\tikz{
\draw[dashed, thick, teal, fill=teal!20, fill opacity=0.4] (-1, -0.6) rectangle (1.3, 1.4);
\node at (-0.8, 1.2) {\scriptsize $\downarrow$};
%
\fill[blue!10] (-0.25, 0.8) to[out=-90, in=90] (-0.8, 0)--(0.1, 0) arc (180:0:0.1) -- (1.2, 0) to[out=90, in=-90] (0.65, 0.8);
\draw[thick, blue] (-0.25, 0.8) to[out=-90, in=90] (-0.8, 0);
\draw[thick, blue] (0.1, 0) arc (180:0:0.1);
\draw[thick, blue] (1.2, 0) to[out=90, in=-90] (0.65, 0.8);
%
\draw[red] (-0.05, 0) arc (0:-180:0.3);
\draw (-0.25, 0) arc (0:-180:0.1);
\fill[gray, fill opacity=0.6, draw=blue, thick] (0.1, 0) arc (0:-180:0.45);
\draw[ultra thick, gray] (-0.35, -0.45)--(-0.35, -0.8);
%
\draw[red] (1.05, 0) arc (0:-180:0.3);
\draw (0.85, 0) arc (0:-180:0.1);
\fill[gray, fill opacity=0.6, draw=blue, thick] (1.2, 0) arc (0:-180:0.45);
\draw[ultra thick, gray] (1.1, -0.3)--(1.6, -0.72);
%
\draw[red] (0.5, 0.8) arc (0:180:0.3);
\draw (0.3, 0.8) arc (0:180:0.1);
\fill[gray, fill opacity=0.6, draw=blue, thick] (0.65, 0.8) arc (0:180:0.45);
\draw[ultra thick, gray] (0.2, 1.25)--(0.2, 1.6);
%
\draw[red] (-0.1, 0.8) to[out=-90, in=90] (-0.65, 0);
\draw (0.1, 0.8) to[out=-90, in=90] (-0.45, 0);
\draw (-0.25, 0) to[out=90, in=90] (0.65, 0);
\draw[red, looseness=1.3] (-0.05, 0) to[out=90, in=90] (0.45, 0);
\draw (0.3, 0.8) to[out=-90, in=90] (0.85, 0);
\draw[red] (0.5, 0.8) to[out=-90, in=90] (1.05, 0);
%
%
\draw[dashed, thick, teal, fill=teal!20, fill opacity=0.4] (1.1, -1)--(2.4, -2.3)--(3.25, -1.45)--(1.95, -0.15)--(1.1, -1);
\node[rotate=45] at (1.35, -1.05) {\scriptsize $\downarrow$};
%
\fill[blue!10] (2.2, -0.7)--(2.7, -1.2)--({2.7-0.4*sqrt(2)}, {-1.2-0.4*sqrt(2)})--({2.2-0.4*sqrt(2)}, {-0.7-0.4*sqrt(2)});
\draw[thick, blue] (2.2, -0.7)--(2.7, -1.2);
\draw[thick, blue] ({2.2-0.4*sqrt(2)}, {-0.7-0.4*sqrt(2)})--({2.7-0.4*sqrt(2)}, {-1.2-0.4*sqrt(2)});
%
\draw[red] ({2.2-0.05*sqrt(2)}, {-0.7-0.05*sqrt(2)}) arc (45:225:0.3);
\draw ({2.2-0.1*sqrt(2)}, {-0.7-0.1*sqrt(2)}) arc (45:225:0.2);
\fill[gray, fill opacity=0.6, draw=blue, thick] (2.2, -0.7) arc (45:225:0.4);
%
\draw[red] ({2.7-0.05*sqrt(2)}, {-1.2-0.05*sqrt(2)}) arc (45:-135:0.3);
\draw ({2.7-0.1*sqrt(2)}, {-1.2-0.1*sqrt(2)}) arc (45:-135:0.2);
\fill[gray, fill opacity=0.6, draw=blue, thick] (2.7, -1.2) arc (45:-135:0.4);
%
\draw[red] ({2.2-0.05*sqrt(2)}, {-0.7-0.05*sqrt(2)})--({2.7-0.05*sqrt(2)}, {-1.2-0.05*sqrt(2)});
\draw[red] ({2.2-0.35*sqrt(2)}, {-0.7-0.35*sqrt(2)})--({2.7-0.35*sqrt(2)}, {-1.2-0.35*sqrt(2)});
\draw ({2.2-0.1*sqrt(2)}, {-0.7-0.1*sqrt(2)}) to[out=-45, in=100] (2.13, -1.02);
\draw (2.38, -1.28) to[out=-20, in=135] (+{2.7-0.1*sqrt(2)}, {-1.2-0.1*sqrt(2)});
\draw ({2.2-0.3*sqrt(2)}, {-0.7-0.3*sqrt(2)}) to[out=-45, in=170] (2, -1.2);
\draw (2.2, -1.4) to[out=-60, in=135] (+{2.7-0.3*sqrt(2)}, {-1.2-0.3*sqrt(2)});
\draw (2.4, -1.2) arc (0:360:0.2);
\node at (2.2, -1.2) {\rotatebox{45}{\scriptsize $\phi_{\updownarrow}$}};
%
%
\draw[dashed, thick, teal, fill=teal!20, fill opacity=0.4] (3, -3.8) rectangle (5.4, -1.8);
\node at (3.2, -3.55) {\scriptsize $\downarrow$};
%
\fill[blue!10] (3.75, -3.2) to[out=90, in=-90] (3.2, -2.4)--(4.1, -2.4) arc (-180:0:0.1) -- (5.2, -2.4) to[out=-90, in=90] (4.65, -3.2);
\draw[thick, blue] (3.75, -3.2) to[out=90, in=-90] (3.2, -2.4);
\draw[thick, blue] (4.1, -2.4) arc (-180:0:0.1);
\draw[thick, blue] (5.2, -2.4) to[out=-90, in=90] (4.65, -3.2);
%
\draw[red] (4.5, -3.2) arc (0:-180:0.3);
\draw (4.3, -3.2) arc (0:-180:0.1);
\fill[gray, fill opacity=0.6, draw=blue, thick] (4.65, -3.2) arc (0:-180:0.45);
\draw[ultra thick, gray] (4.2, -3.65)--(4.2, -4);
%
\draw[red] (5.05, -2.4) arc (0:180:0.3);
\draw (4.85, -2.4) arc (0:180:0.1);
\fill[gray, fill opacity=0.6, draw=blue, thick] (5.2, -2.4) arc (0:180:0.45);
\draw[ultra thick, gray] (4.75, -1.95)--(4.75, -1.6);
%
\draw[red] (3.95, -2.4) arc (0:180:0.3);
\draw (3.75, -2.4) arc (0:180:0.1);
\fill[gray, fill opacity=0.6, draw=blue, thick] (4.1, -2.4) arc (0:180:0.45);
\draw[ultra thick, gray] (3.3, -2.1)--(2.7, -1.75);
%
\draw[red] (3.35, -2.4) to[out=-90, in=90] (3.9, -3.2);
\draw (3.55, -2.4) to[out=-90, in=90] (4.1, -3.2);
\draw (3.75, -2.4) to[out=-90, in=-90] (4.65, -2.4);
\draw[red, looseness=1.3] (3.95, -2.4) to[out=-90, in=-90] (4.45, -2.4);
\draw (4.3, -3.2) to[out=90, in=-90] (4.85, -2.4);
\draw[red] (4.5, -3.2) to[out=90, in=-90] (5.05, -2.4);
}} \,\, , 
\end{align}
where we apply the space-time duality in TABLE~\ref{tab:majorana-rewriting-rules} from the first line to the second line, define $A = \frac{1 + e^{i \theta}}{\sqrt{2}}$, and introduce $\phi$ satisfying $e^{i \phi} = \frac{1 - e^{i \theta}}{1 + e^{i \theta}}$.

\section{Efficient Diagrammatic Characterizations of Clifford, Matchgate, and Punctured Matchgate Tensor Networks}
\label{sec:clifford-matchgate-quon-diagrams}
In this section, we present efficient diagrammatic characterizations of Clifford, matchgate, and punctured matchgate tensor networks in terms of the 2D Quon language. We assume that a Clifford tensor network consists solely of Clifford tensors, and similarly, a matchgate or punctured matchgate tensor network consists solely of matchgate tensors. Under these assumptions, constructing a 2D Quon diagram for a given Clifford, matchgate, or punctured matchgate tensor network that satisfies our diagrammatic characterizations is also \textit{efficient}. 

Consequently, we do not directly address the question of testing whether a given tensor network is Clifford or (punctured) matchgate. Efficient testing of this kind is believed to be computationally intractable in general, as even the non-identity check on a given circuit is known to be hard~\cite{janzing2005non, ji2009non}. See recent advances on learning the T-count of T-doped Clifford circuits~\cite{PhysRevLett.133.010601, PhysRevLett.133.020601, leone2024learning} and in testing non-Gaussianity~\cite{lyu2024fermionic, coffman2025measuring}. In contrast, we highlight the potential of \textit{obfuscating} a given quantum circuit or tensor network~\cite{alagic2016quantum, alagic2021impossibility, broadbent2021constructions, bartusek2022indistinguishability, bartusek2023obfuscation, coladangelo2024use, bartusek2024quantum} via the Quon diagrammatic rewriting rules. See Sec.~\ref{subsec:factory} for further discussions along this direction. 

In the following, we show that any Clifford tensor network can be represented by a 2D Quon diagram containing no generic scattering elements. Furthermore, the converse holds: a 2D Quon diagram that does not contain any generic scattering elements represents a Clifford tensor network. This pictorial characterization of Clifford tensor networks was previously discussed using the 3D Quon language in Ref.~\onlinecite{liu2017quon}, and is now presented using the 2D Quon language. We also show that any matchgate tensor network can be represented by a 2D Quon diagram that satisfies the \textit{boundary-tracking} property and has a hole-free background manifold. Conversely, a 2D Quon diagram satisfying these two conditions always represents a matchgate tensor network. Finally, a punctured matchgate tensor network can be represented by a 2D Quon diagram that satisfies the boundary-tracking property, and vice versa. 

\subsection{Clifford Tensor Network using Quon Diagrams}
Suppose that we are given a \textit{Clifford} tensor network, consisting of the tensors from the following Clifford generating set $\{ \vert 0 \rangle, I, e^{\pm i \frac{\pi}{4} X}, e^{\pm i \frac{\pi}{4} Z}, P \}$, each of which is listed in TABLE~\ref{tab:generating-tensors-quon-diagrams}. We can further assume that the tensor network is \textit{planar}, since the SWAP gate---a resource gate for transforming an arbitrary tensor network into a planar one---is itself a Clifford gate and is therefore generated by the Clifford generating set. Using TABLE~\ref{tab:generating-tensors-quon-diagrams}, we observe that the Quon diagrams for the Clifford generating tensors do not contain any generic scattering elements. This observation is consistent with the previous diagrammatic characterizations of Clifford tensor networks in terms of the 3D Quon language~\cite{liu2017quon}. We therefore state the following diagrammatic characterization along with a sketch of its proof. 

\begin{theorem}
\label{thm:quon-clifford}
A planar tensor network is Clifford if and only if it can be represented by a 2D Quon diagram that does not contain any generic scattering elements. 
\end{theorem}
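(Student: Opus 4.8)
The plan is to prove the two implications separately, leaning on the generating-set representations in TABLE~\ref{tab:generating-tensors-quon-diagrams} and the fact, established in FIG.~\ref{fig:quon-tn}, that the planar tensor-network operations translate faithfully into operations on 2D Quon diagrams. The proof closely parallels the 3D argument, the new ingredient being the bookkeeping of scattering angles.

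For the ``only if'' direction, I would start from a planar Clifford tensor network and first decompose each constituent Clifford tensor into a mini tensor network built from the elementary Clifford generators $\{\vert 0\rangle, I, X, e^{\pm i\pi/4 X}, e^{\pm i\pi/4 Z}, P\}$. This is possible because, as noted in Sec.~\ref{sec:quon-for-q-states-gates}, $\{e^{\pm i\pi/4 X}, e^{\pm i\pi/4 Z}, e^{\pm i\pi/4 X\otimes X}\}$ generates the Clifford unitaries (and $e^{\pm i\pi/4 X\otimes X}$ is in turn generated by $P$ together with $1$- and $2$-leg tensors), while the $\vert 0\rangle$ state and the parity tensor $P$ supply the state-preparation and (co)monoid structure needed to assemble an arbitrary stabilizer tensor via leg-bending and the Choi correspondence; the SWAP gates used for planarization are themselves Clifford. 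Next, I substitute each elementary generator by its 2D Quon diagram from TABLE~\ref{tab:generating-tensors-quon-diagrams} and observe that every such diagram contains only caps, cups, dots, and braids -- no generic scattering element. Finally, I glue these diagrams together following FIG.~\ref{fig:quon-tn}. The crucial point is that none of the gluing operations -- rotation, leg-bending, merging, and neighboring or non-neighboring contraction via the resolution of the identity Eq.~\eqref{eq:resolution-of-id-2}, together with the SWAP gates Eq.~\eqref{eq:SWAP-quon-diagram} used to pull legs out of punctures -- ever introduces a generic scattering: they add only caps, cups, dots, braids, and possibly holes in the background manifold. Hence the single 2D Quon diagram representing the whole network contains no generic scattering element.

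For the ``if'' direction, suppose a 2D Quon diagram $D$ (with basis encoders glued to its open intervals) contains no generic scattering element. Since the scattering angle is $2\pi$-periodic, ``not generic'' means every scattering sits at $\theta\in\{0,\pm\pi/2,\pi\}$, which by TABLE~\ref{tab:majorana-rewriting-rules} reduces to parallel strings, a braid, or a pair of dots respectively. I would then parse $D$ into an ordinary tensor network whose nodes are the elementary pieces it is composed of -- individual braids, dots, caps, cups, basis encoders, and the local segments of the background manifold imposing parity-even projections -- and argue that each such piece, evaluated through the dense (Jordan--Wigner) encoding of the Majorana Fock space, is a Clifford tensor: a braid is $\tfrac{e^{\mp i\pi/8}}{\sqrt 2}(1\pm i\gamma_j\gamma_{j+1})$, a Clifford unitary; dots are Majorana operators whose parity-even products are Paulis; caps and cups are the pair-creation and pair-annihilation maps $\mathcal{F}_{n,j},\mathcal{F}_{n,j}^\dagger$, which are stabilizer (co)isometries up to scalar factors; a parity-even projection $\tfrac{\openone+\hat P}{2}$ onto a contiguous block of Majoranas is a Clifford projector; and, as discussed in Sec.~\ref{sec:quon-for-q-states-gates}, each basis encoder is built from caps and dots and is therefore Clifford. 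A hole in the background manifold only contributes a further (closed) parity-even projection, again Clifford. Consequently $D$, parsed this way, is literally a Clifford tensor network, and so the tensor network it represents is Clifford.

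The main obstacle I anticipate is the bookkeeping in the ``if'' direction: one must check that the identification of the Majorana Fock space with a qubit Hilbert space used to certify each elementary piece as Clifford is consistent with the sparse (logical) encoding attached at the open intervals -- in particular that the basis encoders correctly mediate between logical legs and groups of four Majoranas while themselves remaining Clifford -- and that the parity-even projections carried by the background manifold, including those enclosing holes, are handled uniformly. A secondary subtlety in the ``only if'' direction is justifying the decomposition of an arbitrary Clifford tensor, not merely a Clifford unitary, into the stated planar generating set using only planar tensor operations; this is essentially completeness of the Clifford fragment, which I would either invoke directly or reduce to via the Choi isomorphism and leg-bending.
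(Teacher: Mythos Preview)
Your ``only if'' direction matches the paper's argument: decompose into the Clifford generating set, substitute the Quon diagrams from TABLE~\ref{tab:generating-tensors-quon-diagrams}, and observe that none of the gluing operations of FIG.~\ref{fig:quon-tn} introduce a generic scattering.

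For the ``if'' direction, your route and the paper's genuinely differ. You parse the diagram at the Majorana level into atomic pieces (braids, dots, caps, cups, parity projections, basis encoders), Jordan--Wigner each piece into a Clifford tensor in the dense encoding, and conclude that their composition is Clifford. The paper instead stays in the sparse encoding throughout: it first invokes the string-genus relation Eq.~\eqref{eq:string-genus} to reorganize the background manifold so that Majoranas are grouped into fours, interpreting the result as a thickened graph with four Majoranas per edge; then each edge is identified with a $2$-leg Clifford tensor from TABLE~\ref{tab:generating-tensors-quon-diagrams} and each vertex with a parity tensor of the appropriate degree. The paper's structural approach sidesteps precisely the dense/sparse bookkeeping you flag as your main obstacle, and as a bonus compiles the diagram directly back into the same elementary generating set used in the forward direction. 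Your approach is more elementary in that it needs no topological reorganization of the diagram, but it does require the compatibility check between the Jordan--Wigner parsing in the bulk and the logical encoding at the open intervals --- a genuine issue, which the paper's four-Majoranas-per-leg picture avoids by construction.
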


\begin{proof}
First, consider a Clifford tensor network composed of tensors from the generating set $\{ \vert 0 \rangle, I, e^{\pm i \frac{\pi}{4} X}, e^{\pm i \frac{\pi}{4} Z}, P \}$. The 2D Quon diagrams of these generators (see TABLE~\ref{tab:generating-tensors-quon-diagrams}), as well as the planar tensor operations [see FIG.~\ref{fig:quon-tn} (e)-(g)], do not introduce generic scattering elements. Consequently, performing planar tensor operations yields a 2D Quon diagram representation of the given Clifford tensor network with no generic scattering elements. 

To show the converse, consider a 2D Quon diagram that contains no generic scattering elements. By employing the string-genus relation Eq.~\eqref{eq:string-genus} when necessary, one can always group Majoranas into groups of four. The background manifold of the resulting Quon diagram can be interpreted as a thickened graph, where each edge supports $4$ Majoranas. Using TABLE~\ref{tab:generating-tensors-quon-diagrams}, we replace every edge with a $2$-leg Clifford tensor. Then the vertices in the thickened graph can be substituted by the parity tensors of appropriate degree, which are themselves Clifford tensors. Therefore, we convert the Quon diagram into a Clifford tensor network. 
\end{proof}

\subsection{Matchgate and Punctured Matchgate Tensor Networks using Quon Diagrams}
\label{sec:matchgate-TN-using-quon}
As explained in Sec.~\ref{sec:planar-TN}, a matchgate (punctured matchgate) tensor network can be expressed as a planar tensor network in which the constituent tensors are matchgates and the planar region of the network is topologically equivalent to a disk (punctured disk). In the following, we provide diagrammatic characterizations of matchgate and punctured matchgate tensor networks using the 2D Quon language. 

\subsubsection{Diagrammatic characterizations of non-punctured matchgate}
Here, we show that a matchgate tensor network admits a 2D Quon diagrammatic representation which (i) satisfies the \textit{boundary-tracking} property and (ii) has a hole-free background manifold. The \textit{boundary-tracking} property asserts that for every closed interval along the boundary of the background manifold, there exists an isolated Majorana string spatially adjacent to it, tracing the closed interval without scattering or braiding with other Majorana strings. See FIG.~\ref{fig:matchgate-example} (a) for an example and non-example of Quon diagrams exhibiting the boundary-tracking property. As shown in TABLE~\ref{tab:generating-tensors-quon-diagrams}, each tensor in the following matchgate generating set $\{ \vert 0 \rangle, I, X, P \} \cup \{ e^{i \frac{\theta}{2}} e^{-i \frac{\theta}{2} Z} \}_{\theta \in [0, 2\pi)}$ admits a 2D Quon diagrammatic representation that satisfies the boundary-tracking property---with the corresponding boundary-tracking Majorana strings highlighted in red---and the background manifold is hole-free. In contrast, the non-matchgate tensor $e^{-\frac{i \pi}{4} X}$ violates the boundary-tracking property, as is evident in TABLE~\ref{tab:generating-tensors-quon-diagrams}. 

Suppose our matchgate tensor network contains loops when the network is viewed as a graph. Then, for each loop, we introduce a hole in the background manifold after performing tensor contractions. This seems to violate the hole-free background manifold condition. However, due to the boundary-tracking property, any newly generated hole can be removed, along with its enclosing boundary-tracking Majorana loop, via the string-genus relation Eq.~\eqref{eq:string-genus}, as exemplified below: 
\begin{align}
\label{eq:matchgate-loop-example}
&\raisebox{-0.6cm}{\tikz{
\draw [dashed, thick, teal, fill=teal!20, fill opacity=0.4] (0.85, 0.2) arc (0:360:0.85);
\draw[ultra thick, gray] (0, 0.7)--(0, 1.3);
\draw[ultra thick, gray] (0, 0.7)--(0.5, 0);
\draw[ultra thick, gray] (0, 0.7)--(-0.5, 0);
\draw[ultra thick, gray] (0.5, 0)--(-0.5, 0);
\draw[ultra thick, gray] (-0.5, 0)--(-1, -0.3);
\draw[ultra thick, gray] (0.5, 0)--(1, -0.3);
\draw [thick, gray, fill=white] (0.2, 0.7) arc (0:360:0.2);
\draw [thick, gray, fill=white] (0.7, 0) arc (0:360:0.2);
\draw [thick, gray, fill=white] (-0.3, 0) arc (0:360:0.2);
\node at (0, 0.7) {\scriptsize $P$};
\node at (0.5, 0) {\scriptsize $P$};
\node at (-0.5, 0) {\scriptsize $P$};
}} \,\, = \frac{1}{8} \,\, \raisebox{-2.1cm}{
\begin{tikzpicture}[scale=0.5]
\coordinate (A) at (0, 0);
%
\draw[dashed, thick, teal, fill=teal!20, fill opacity=0.4] ($(-1.3, -0.9) + (A)$) rectangle ($(2.8, 2.4) + (A)$);
\node at ($(-1, 2)+(A)$) {\scriptsize $\downarrow$};
\begin{scope}[even odd rule]
\clip ($(0, 0) + (A)$) rectangle ($(1.5, 1.5) + (A)$)  ($(1.1, 0) + (A)$) arc (0:180:0.35);
\fill[blue!10] ($(0, 0) + (A)$) rectangle ($(1.5, 1.5) + (A)$);
\end{scope}
\fill[blue!10] ($(0, 1.5) + (A)$) to[out=-90, in=90] ($(-1.1, 0) + (A)$)--($(0, 0) + (A)$)--($(0, 1.5) + (A)$);
\fill[blue!10] ($(1.5, 1.5) + (A)$) to[out=-90, in=90] ($(2.6, 0) + (A)$)--($(1.5, 0) + (A)$)--($(1.5, 1.5) + (A)$);
\draw[thick, blue] ($(0, 1.5) + (A)$) to[out=-90, in=90] ($(-1.1, 0) + (A)$);
\draw[thick, blue] ($(1.5, 1.5) + (A)$) to[out=-90, in=90] ($(2.6, 0) + (A)$);
\draw[thick, blue] ($(1.1, 0) + (A)$) arc (0:180:0.35);
\draw[thick, blue, fill=blue!10] ($(1.5, 1.5) + (A)$) arc (0:180:0.75);
\draw[red] ($(1.3, 1.5) + (A)$) arc (0:180:0.55);
\draw ($(1.1, 1.5) + (A)$) arc (0:180:0.35);
\draw[thick, blue, fill=blue!10] ($(0.4, 0) + (A)$) arc (0:-180:0.75);
\draw[red] ($(0.2, 0) + (A)$) arc (0:-180:0.55);
\draw ($(0, 0) + (A)$) arc (0:-180:0.35);
\draw[thick, blue, fill=blue!10] ($(1.1, 0) + (A)$) arc (-180:0:0.75);
\draw[red] ($(2.4, 0) + (A)$) arc (0:-180:0.55);
\draw ($(2.2, 0) + (A)$) arc (0:-180:0.35);
\draw[red, looseness=0.8] ($(0.2, 1.5) + (A)$) to[out=-90, in=90] ($(-0.9, 0) + (A)$);
\draw[red, looseness=0.8] ($(1.3, 1.5) + (A)$) to[out=-90, in=90] ($(2.4, 0) + (A)$);
\draw[red, looseness=1.5] ($(1.3, 0) + (A)$) to[out=90, in=90] ($(0.2, 0) + (A)$);
\draw[looseness=0.8] ($(0.4, 1.5) + (A)$) to[out=-90, in=90] ($(-0.7, 0) + (A)$);
\draw[looseness=0.8] ($(1.1, 1.5) + (A)$) to[out=-90, in=90] ($(2.2, 0) + (A)$);
\draw[looseness=1.5] ($(1.5, 0) + (A)$) to[out=90, in=90] ($(0, 0) + (A)$);
%
\draw[ultra thick, gray] ($(0.75, 1.5) + (A)$)--($(0.75, 2.8) + (A)$);
\draw ($(0.75, 1.5) + (A)$)--($(1.3+0.07, 1.5) + (A)$);
\node at ($(0.75, 1.5) + (A)$) [circle, fill, minimum size=4pt, inner sep=1pt] {};
\draw ($(1.3+0.07, 1.5) + (A)$) arc (0:360:0.07);
\draw ($(1.1+0.07, 1.5) + (A)$) arc (0:360:0.07);
%
\draw ($(-0.35, 0) + (A)$)--($(0.2+0.07, 0) + (A)$);
%
\draw ($(0.2+0.07, 0) + (A)$) arc (0:360:0.07);
\draw ($(0.07, 0) + (A)$) arc (0:360:0.07);
%
\draw ($(1.85, 0) + (A)$)--($(2.4+0.07, 0) + (A)$);
%
\draw ($(2.4+0.07, 0) + (A)$) arc (0:360:0.07);
\draw ($(2.2+0.07, 0) + (A)$) arc (0:360:0.07);
%
%
\coordinate (A) at (-2.3, -3.6);
%
\draw[dashed, thick, teal, fill=teal!20, fill opacity=0.4] ($(-1.3, -0.9) + (A)$) rectangle ($(2.8, 2.4) + (A)$);
\node at ($(-1, 2)+(A)$) {\scriptsize $\downarrow$};
\begin{scope}[even odd rule]
\clip ($(0, 0) + (A)$) rectangle ($(1.5, 1.5) + (A)$)  ($(1.1, 0) + (A)$) arc (0:180:0.35);
\fill[blue!10] ($(0, 0) + (A)$) rectangle ($(1.5, 1.5) + (A)$);
\end{scope}
\fill[blue!10] ($(0, 1.5) + (A)$) to[out=-90, in=90] ($(-1.1, 0) + (A)$)--($(0, 0) + (A)$)--($(0, 1.5) + (A)$);
\fill[blue!10] ($(1.5, 1.5) + (A)$) to[out=-90, in=90] ($(2.6, 0) + (A)$)--($(1.5, 0) + (A)$)--($(1.5, 1.5) + (A)$);
\draw[thick, blue] ($(0, 1.5) + (A)$) to[out=-90, in=90] ($(-1.1, 0) + (A)$);
\draw[thick, blue] ($(1.5, 1.5) + (A)$) to[out=-90, in=90] ($(2.6, 0) + (A)$);
\draw[thick, blue] ($(1.1, 0) + (A)$) arc (0:180:0.35);
\draw[thick, blue, fill=blue!10] ($(1.5, 1.5) + (A)$) arc (0:180:0.75);
\draw[red] ($(1.3, 1.5) + (A)$) arc (0:180:0.55);
\draw ($(1.1, 1.5) + (A)$) arc (0:180:0.35);
\draw[thick, blue, fill=blue!10] ($(0.4, 0) + (A)$) arc (0:-180:0.75);
\draw[red] ($(0.2, 0) + (A)$) arc (0:-180:0.55);
\draw ($(0, 0) + (A)$) arc (0:-180:0.35);
\draw[thick, blue, fill=blue!10] ($(1.1, 0) + (A)$) arc (-180:0:0.75);
\draw[red] ($(2.4, 0) + (A)$) arc (0:-180:0.55);
\draw ($(2.2, 0) + (A)$) arc (0:-180:0.35);
\draw[red, looseness=0.8] ($(0.2, 1.5) + (A)$) to[out=-90, in=90] ($(-0.9, 0) + (A)$);
\draw[red, looseness=0.8] ($(1.3, 1.5) + (A)$) to[out=-90, in=90] ($(2.4, 0) + (A)$);
\draw[red, looseness=1.5] ($(1.3, 0) + (A)$) to[out=90, in=90] ($(0.2, 0) + (A)$);
\draw[looseness=0.8] ($(0.4, 1.5) + (A)$) to[out=-90, in=90] ($(-0.7, 0) + (A)$);
\draw[looseness=0.8] ($(1.1, 1.5) + (A)$) to[out=-90, in=90] ($(2.2, 0) + (A)$);
\draw[looseness=1.5] ($(1.5, 0) + (A)$) to[out=90, in=90] ($(0, 0) + (A)$);
%
\draw ($(0.75, 1.5) + (A)$)--($(1.3+0.07, 1.5) + (A)$);
%
\draw ($(1.3+0.07, 1.5) + (A)$) arc (0:360:0.07);
\draw ($(1.1+0.07, 1.5) + (A)$) arc (0:360:0.07);
%
\draw[ultra thick, gray] ($(-0.35, 0) + (A)$)--($(-0.35, -1.3) + (A)$);
\draw ($(-0.35, 0) + (A)$)--($(0.2+0.07, 0) + (A)$);
\node at ($(-0.35, 0) + (A)$) [circle, fill, minimum size=4pt, inner sep=1pt] {};
\draw ($(0.2+0.07, 0) + (A)$) arc (0:360:0.07);
\draw ($(0.07, 0) + (A)$) arc (0:360:0.07);
%
\draw ($(1.85, 0) + (A)$)--($(2.4+0.07, 0) + (A)$);
%
\draw ($(2.4+0.07, 0) + (A)$) arc (0:360:0.07);
\draw ($(2.2+0.07, 0) + (A)$) arc (0:360:0.07);
%
%
\coordinate (A) at (2.3, -3.6);
%
\draw[dashed, thick, teal, fill=teal!20, fill opacity=0.4] ($(-1.3, -0.9) + (A)$) rectangle ($(2.8, 2.4) + (A)$);
\node at ($(-1, 2)+(A)$) {\scriptsize $\downarrow$};
\begin{scope}[even odd rule]
\clip ($(0, 0) + (A)$) rectangle ($(1.5, 1.5) + (A)$)  ($(1.1, 0) + (A)$) arc (0:180:0.35);
\fill[blue!10] ($(0, 0) + (A)$) rectangle ($(1.5, 1.5) + (A)$);
\end{scope}
\fill[blue!10] ($(0, 1.5) + (A)$) to[out=-90, in=90] ($(-1.1, 0) + (A)$)--($(0, 0) + (A)$)--($(0, 1.5) + (A)$);
\fill[blue!10] ($(1.5, 1.5) + (A)$) to[out=-90, in=90] ($(2.6, 0) + (A)$)--($(1.5, 0) + (A)$)--($(1.5, 1.5) + (A)$);
\draw[thick, blue] ($(0, 1.5) + (A)$) to[out=-90, in=90] ($(-1.1, 0) + (A)$);
\draw[thick, blue] ($(1.5, 1.5) + (A)$) to[out=-90, in=90] ($(2.6, 0) + (A)$);
\draw[thick, blue] ($(1.1, 0) + (A)$) arc (0:180:0.35);
\draw[thick, blue, fill=blue!10] ($(1.5, 1.5) + (A)$) arc (0:180:0.75);
\draw[red] ($(1.3, 1.5) + (A)$) arc (0:180:0.55);
\draw ($(1.1, 1.5) + (A)$) arc (0:180:0.35);
\draw[thick, blue, fill=blue!10] ($(0.4, 0) + (A)$) arc (0:-180:0.75);
\draw[red] ($(0.2, 0) + (A)$) arc (0:-180:0.55);
\draw ($(0, 0) + (A)$) arc (0:-180:0.35);
\draw[thick, blue, fill=blue!10] ($(1.1, 0) + (A)$) arc (-180:0:0.75);
\draw[red] ($(2.4, 0) + (A)$) arc (0:-180:0.55);
\draw ($(2.2, 0) + (A)$) arc (0:-180:0.35);
\draw[red, looseness=0.8] ($(0.2, 1.5) + (A)$) to[out=-90, in=90] ($(-0.9, 0) + (A)$);
\draw[red, looseness=0.8] ($(1.3, 1.5) + (A)$) to[out=-90, in=90] ($(2.4, 0) + (A)$);
\draw[red, looseness=1.5] ($(1.3, 0) + (A)$) to[out=90, in=90] ($(0.2, 0) + (A)$);
\draw[looseness=0.8] ($(0.4, 1.5) + (A)$) to[out=-90, in=90] ($(-0.7, 0) + (A)$);
\draw[looseness=0.8] ($(1.1, 1.5) + (A)$) to[out=-90, in=90] ($(2.2, 0) + (A)$);
\draw[looseness=1.5] ($(1.5, 0) + (A)$) to[out=90, in=90] ($(0, 0) + (A)$);
%
\draw ($(0.75, 1.5) + (A)$)--($(1.3+0.07, 1.5) + (A)$);
%
\draw ($(1.3+0.07, 1.5) + (A)$) arc (0:360:0.07);
\draw ($(1.1+0.07, 1.5) + (A)$) arc (0:360:0.07);
%
\draw ($(-0.35, 0) + (A)$)--($(0.2+0.07, 0) + (A)$);
%
\draw ($(0.2+0.07, 0) + (A)$) arc (0:360:0.07);
\draw ($(0.07, 0) + (A)$) arc (0:360:0.07);
%
\draw[ultra thick, gray] ($(1.85, 0) + (A)$)--($(1.85, -1.3) + (A)$);
\draw ($(1.85, 0) + (A)$)--($(2.4+0.07, 0) + (A)$);
\node at ($(1.85, 0) + (A)$) [circle, fill, minimum size=4pt, inner sep=1pt] {};
\draw ($(2.4+0.07, 0) + (A)$) arc (0:360:0.07);
\draw ($(2.2+0.07, 0) + (A)$) arc (0:360:0.07);
%
%
\draw[ultra thick, gray] (-0.35, 0)--(0.75-2.3, 1.5-3.6);
\draw[ultra thick, gray] (1.85, 0)--(0.75+2.3, 1.5-3.6);
\draw[ultra thick, gray, looseness=1.8] (1.85-2.3, 0-3.6) to[out=-90, in=-90] (-0.35+2.3, 0-3.6);
%
\node at (-0.35, 0) [circle, fill, minimum size=4pt, inner sep=1pt] {};
\node at (1.85, 0) [circle, fill, minimum size=4pt, inner sep=1pt] {};
\node at (0.75-2.3, 1.5-3.6) [circle, fill, minimum size=4pt, inner sep=1pt] {};
\node at (0.75+2.3, 1.5-3.6) [circle, fill, minimum size=4pt, inner sep=1pt] {};
\node at (1.85-2.3, 0-3.6) [circle, fill, minimum size=4pt, inner sep=1pt] {};
\node at (-0.35+2.3, 0-3.6) [circle, fill, minimum size=4pt, inner sep=1pt] {};
\end{tikzpicture}
} \nonumber \\ 
& = \raisebox{-1.8cm}{
\begin{tikzpicture}[scale=0.8]
\coordinate (A) at (0, 0);
%
\draw[dashed, thick, teal, fill=teal!20, fill opacity=0.4] ($(-1.3, -0.9) + (A)$) rectangle ($(2.8, 2.9) + (A)$);
\node at ($(-1, 2.6)+(A)$) {\scriptsize $\downarrow$};
\begin{scope}[even odd rule]
\clip ($(0, 0) + (A)$) rectangle ($(1.5, 2) + (A)$)  ($(1.1, 0) + (A)$) arc (0:180:0.35);
\clip ($(0, 0) + (A)$) rectangle ($(1.5, 2) + (A)$)  (1, 1.2) arc (0:360:0.25);
\fill[blue!10] ($(0, 0) + (A)$) rectangle ($(1.5, 2) + (A)$);
\end{scope}
\fill[blue!10] ($(0, 2) + (A)$) to[out=-90, in=90] ($(-1.1, 0) + (A)$)--($(0, 0) + (A)$)--($(0, 2) + (A)$);
\fill[blue!10] ($(1.5, 2) + (A)$) to[out=-90, in=90] ($(2.6, 0) + (A)$)--($(1.5, 0) + (A)$)--($(1.5, 2) + (A)$);
\draw[thick, blue] ($(0, 2) + (A)$) to[out=-90, in=90] ($(-1.1, 0) + (A)$);
\draw[thick, blue] ($(1.5, 2) + (A)$) to[out=-90, in=90] ($(2.6, 0) + (A)$);
\draw[thick, blue] ($(1.1, 0) + (A)$) arc (0:180:0.35);
\draw[thick, blue] (1, 1.2) arc (0:360:0.25);
\draw[red] (1.1, 1.2) arc (0:360:0.35);
\draw (1.2, 1.2) arc (0:360:0.45);
\draw[thick, blue, fill=blue!10] ($(1.5, 2) + (A)$) arc (0:180:0.75);
\draw[red] ($(1.3, 2) + (A)$) arc (0:180:0.55);
\draw ($(1.1, 2) + (A)$) arc (0:180:0.35);
\draw[thick, blue, fill=blue!10] ($(0.4, 0) + (A)$) arc (0:-180:0.75);
\draw[red] ($(0.2, 0) + (A)$) arc (0:-180:0.55);
\draw ($(0, 0) + (A)$) arc (0:-180:0.35);
\draw[thick, blue, fill=blue!10] ($(1.1, 0) + (A)$) arc (-180:0:0.75);
\draw[red] ($(2.4, 0) + (A)$) arc (0:-180:0.55);
\draw ($(2.2, 0) + (A)$) arc (0:-180:0.35);
\draw[red, looseness=0.8] ($(0.2, 2) + (A)$) to[out=-90, in=90] ($(-0.9, 0) + (A)$);
\draw[red, looseness=0.8] ($(1.3, 2) + (A)$) to[out=-90, in=90] ($(2.4, 0) + (A)$);
\draw[red, looseness=1.5] ($(1.3, 0) + (A)$) to[out=90, in=90] ($(0.2, 0) + (A)$);
\draw[looseness=0.8] ($(0.4, 2) + (A)$) to[out=-90, in=90] ($(-0.7, 0) + (A)$);
\draw[looseness=0.8] ($(1.1, 2) + (A)$) to[out=-90, in=90] ($(2.2, 0) + (A)$);
\draw[looseness=1.5] ($(1.5, 0) + (A)$) to[out=90, in=90] ($(0, 0) + (A)$);
%
\draw[ultra thick, gray] ($(0.75, 2) + (A)$)--($(0.75, 3.3) + (A)$);
\draw ($(0.75, 2) + (A)$)--($(1.3+0.07, 2) + (A)$);
\node at ($(0.75, 2) + (A)$) [circle, fill, minimum size=4pt, inner sep=1pt] {};
\draw ($(1.3+0.07, 2) + (A)$) arc (0:360:0.07);
\draw ($(1.1+0.07, 2) + (A)$) arc (0:360:0.07);
%
\draw[ultra thick, gray] ($(-0.35, 0) + (A)$)--($(-0.35, -1.3) + (A)$);
\draw ($(-0.35, 0) + (A)$)--($(0.2+0.07, 0) + (A)$);
\node at ($(-0.35, 0) + (A)$) [circle, fill, minimum size=4pt, inner sep=1pt] {};
\draw ($(0.2+0.07, 0) + (A)$) arc (0:360:0.07);
\draw ($(0.07, 0) + (A)$) arc (0:360:0.07);
%
\draw[ultra thick, gray] ($(1.85, 0) + (A)$)--($(1.85, -1.3) + (A)$);
\draw ($(1.85, 0) + (A)$)--($(2.4+0.07, 0) + (A)$);
\node at ($(1.85, 0) + (A)$) [circle, fill, minimum size=4pt, inner sep=1pt] {};
\draw ($(2.4+0.07, 0) + (A)$) arc (0:360:0.07);
\draw ($(2.2+0.07, 0) + (A)$) arc (0:360:0.07);
\end{tikzpicture}
} = \raisebox{-1.8cm}{
\begin{tikzpicture}[scale=0.8]
\coordinate (A) at (0, 0);
%
\draw[dashed, thick, teal, fill=teal!20, fill opacity=0.4] ($(-1.3, -0.9) + (A)$) rectangle ($(2.8, 2.9) + (A)$);
\node at ($(-1, 2.6)+(A)$) {\scriptsize $\downarrow$};
\begin{scope}[even odd rule]
\clip ($(0, 0) + (A)$) rectangle ($(1.5, 2) + (A)$)  ($(1.1, 0) + (A)$) arc (0:180:0.35);
\fill[blue!10] ($(0, 0) + (A)$) rectangle ($(1.5, 2) + (A)$);
\end{scope}
\fill[blue!10] ($(0, 2) + (A)$) to[out=-90, in=90] ($(-1.1, 0) + (A)$)--($(0, 0) + (A)$)--($(0, 2) + (A)$);
\fill[blue!10] ($(1.5, 2) + (A)$) to[out=-90, in=90] ($(2.6, 0) + (A)$)--($(1.5, 0) + (A)$)--($(1.5, 2) + (A)$);
\draw[thick, blue] ($(0, 2) + (A)$) to[out=-90, in=90] ($(-1.1, 0) + (A)$);
\draw[thick, blue] ($(1.5, 2) + (A)$) to[out=-90, in=90] ($(2.6, 0) + (A)$);
\draw[thick, blue] ($(1.1, 0) + (A)$) arc (0:180:0.35);
%
%
\draw[thick, blue, fill=blue!10] ($(1.5, 2) + (A)$) arc (0:180:0.75);
\draw[red] ($(1.3, 2) + (A)$) arc (0:180:0.55);
\draw ($(1.1, 2) + (A)$) arc (0:180:0.35);
\draw[thick, blue, fill=blue!10] ($(0.4, 0) + (A)$) arc (0:-180:0.75);
\draw[red] ($(0.2, 0) + (A)$) arc (0:-180:0.55);
\draw ($(0, 0) + (A)$) arc (0:-180:0.35);
\draw[thick, blue, fill=blue!10] ($(1.1, 0) + (A)$) arc (-180:0:0.75);
\draw[red] ($(2.4, 0) + (A)$) arc (0:-180:0.55);
\draw ($(2.2, 0) + (A)$) arc (0:-180:0.35);
\draw[red, looseness=0.8] ($(0.2, 2) + (A)$) to[out=-90, in=90] ($(-0.9, 0) + (A)$);
\draw[red, looseness=0.8] ($(1.3, 2) + (A)$) to[out=-90, in=90] ($(2.4, 0) + (A)$);
\draw[red, looseness=1.5] ($(1.3, 0) + (A)$) to[out=90, in=90] ($(0.2, 0) + (A)$);
\draw[looseness=0.8] ($(0.4, 2) + (A)$) to[out=-90, in=90] ($(-0.7, 0) + (A)$);
\draw[looseness=0.8] ($(1.1, 2) + (A)$) to[out=-90, in=90] ($(2.2, 0) + (A)$);
\draw[looseness=1.5] ($(1.5, 0) + (A)$) to[out=90, in=90] ($(0, 0) + (A)$);
%
\draw[ultra thick, gray] ($(0.75, 2) + (A)$)--($(0.75, 3.3) + (A)$);
\draw ($(0.75, 2) + (A)$)--($(1.3+0.07, 2) + (A)$);
\node at ($(0.75, 2) + (A)$) [circle, fill, minimum size=4pt, inner sep=1pt] {};
\draw ($(1.3+0.07, 2) + (A)$) arc (0:360:0.07);
\draw ($(1.1+0.07, 2) + (A)$) arc (0:360:0.07);
%
\draw[ultra thick, gray] ($(-0.35, 0) + (A)$)--($(-0.35, -1.3) + (A)$);
\draw ($(-0.35, 0) + (A)$)--($(0.2+0.07, 0) + (A)$);
\node at ($(-0.35, 0) + (A)$) [circle, fill, minimum size=4pt, inner sep=1pt] {};
\draw ($(0.2+0.07, 0) + (A)$) arc (0:360:0.07);
\draw ($(0.07, 0) + (A)$) arc (0:360:0.07);
%
\draw[ultra thick, gray] ($(1.85, 0) + (A)$)--($(1.85, -1.3) + (A)$);
\draw ($(1.85, 0) + (A)$)--($(2.4+0.07, 0) + (A)$);
\node at ($(1.85, 0) + (A)$) [circle, fill, minimum size=4pt, inner sep=1pt] {};
\draw ($(2.4+0.07, 0) + (A)$) arc (0:360:0.07);
\draw ($(2.2+0.07, 0) + (A)$) arc (0:360:0.07);
\end{tikzpicture}
} \,\, . 
\end{align}
These observations naturally lead to the following diagrammatic characterizations of the matchgate tensor networks. 

\begin{theorem}
\label{thm:quon-matchgate}
A planar tensor network is \textbf{matchgate} if and only if it has a 2D Quon diagrammatic representation that (i) satisfies the boundary-tracking property and (ii) has a hole-free background manifold. 
\end{theorem}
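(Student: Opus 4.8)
The plan is to follow the template set by Theorem~\ref{thm:quon-clifford}, proving the two implications separately and relying on the matchgate generating set $\{\vert 0\rangle, I, X, P\}\cup\{e^{i\frac{\theta}{2}}e^{-i\frac{\theta}{2}Z}\}_{\theta\in[0,2\pi)}$ of TABLE~\ref{tab:generating-tensors-quon-diagrams} together with the planar tensor operations of FIG.~\ref{fig:quon-tn}.

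For the ``only if'' direction I would first read off the base case from TABLE~\ref{tab:generating-tensors-quon-diagrams}: each generator in the matchgate generating set admits a 2D Quon representation that satisfies the boundary-tracking property (with the tracking Majoranas drawn in red) and whose background manifold is hole-free; in particular $e^{i\frac{\theta}{2}}e^{-i\frac{\theta}{2}Z}$, the only generator carrying a generic scattering, has that scattering confined between two idle boundary-tracking strands, so (i) still holds. Then I would show that (i) and (ii) survive the operations used to assemble a non-punctured matchgate tensor network from these generators, namely leg permutation, merging, and---since the planar region of such a network is a disk---only \emph{neighboring} self-contractions. Leg permutation is a relabeling. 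Merging glues two disk-shaped background manifolds along a boundary arc, so the boundary-tracking Majoranas of the two pieces simply concatenate into boundary-tracking Majoranas of the merged diagram and no hole is created. A neighboring self-contraction does create a hole in the background manifold (FIG.~\ref{fig:quon-tn}(d)), but the boundary-tracking strands attached to the two contracted legs join, upon gluing, into an isolated Majorana loop encircling precisely that hole; the string-genus relation Eq.~\eqref{eq:string-genus} then deletes both the loop and the hole, restoring (ii), while the remaining closed intervals stay individually tracked, so (i) persists. The worked example Eq.~\eqref{eq:matchgate-loop-example} is the exact prototype of this step, and induction on the number of operations finishes this direction.

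For the ``if'' direction, suppose $D$ is a 2D Quon diagram with properties (i) and (ii); I may assume its background manifold $M$ is connected (otherwise work component-by-component and take tensor products). A connected hole-free surface embedded in the plane is a topological disk, so $\partial M$ is a single circle of cyclically alternating open intervals $O_1,\dots,O_n$ and closed intervals $C_1,\dots,C_n$, with four Majoranas ending on each $O_j$ and an isolated boundary-tracking Majorana $\beta_j$ alongside each $C_j$. I would present $M$ as a thickened planar graph and, using (i), normalize the Majorana content so that every edge tube carries exactly four Majorana worldlines whose outer two are segments of the $\beta_j$ carrying no dots or scatterings, while the inner two carry all the dots, braids, and scatterings, interacting only with inner Majoranas of adjacent tubes at the vertex regions. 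Replacing each edge tube (with the small cell holding any scattering it contains) by the appropriate $2$-leg matchgate tensor from $\{I,X\}\cup\{e^{i\frac{\theta}{2}}e^{-i\frac{\theta}{2}Z}\}_{\theta}$ and each vertex region by a parity tensor $P$---all matchgate tensors---reads $D$ off as a planar tensor network whose planar region is the disk $M$, i.e.\ a matchgate tensor network. One should double-check that isolated dots and caps/cups on the inner strands are absorbed by the $X$ and $\vert 0\rangle$ tensors and by $P$.

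I expect the main obstacle to be the ``only if'' direction, specifically verifying that each hole produced by a neighboring contraction is genuinely bounded by an \emph{isolated} boundary-tracking loop eligible for Eq.~\eqref{eq:string-genus}: one must confirm that it is the boundary-tracking strands of the two contracted legs (and not some inner strand) that get glued, that the resulting loop is disjoint from the rest of the Majorana diagram, and that the contraction neither merges nor orphans the boundary-tracking strands of the other closed intervals. A secondary technical point is the normalization step in the converse---achieving exactly four Majoranas per edge tube with the boundary-tracking pair on the outside---which in the hole-free setting cannot invoke the string-genus relation and must instead be carried out by local isotopy and by absorbing stray pieces into neighboring tiles; keeping the planar region a disk throughout this rearrangement is precisely what makes the resulting network matchgate rather than punctured matchgate.
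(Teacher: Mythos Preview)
Your forward direction is essentially the paper's argument: generators from TABLE~\ref{tab:generating-tensors-quon-diagrams} satisfy (i) and (ii), and each hole created by a contraction is encircled by a boundary-tracking loop and removed via Eq.~\eqref{eq:string-genus}.

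Your converse direction, however, has a real gap. You propose to view the disk $M$ as a ``thickened planar graph'' with four Majoranas per edge tube, but a hole-free disk has no such multi-edge graph structure: topologically it is a thickened point (or at best a thickened tree), regardless of how complicated the Majorana diagram inside is. You explicitly rule out invoking the string-genus relation in this direction, yet that is precisely the missing move. The paper runs Eq.~\eqref{eq:string-genus} \emph{in reverse}: it introduces an alternating shading of the regions cut out by the Majorana diagram, and then \emph{inserts} a string--hole pair into every unshaded region. Only after these insertions does the background manifold acquire the thickened-graph structure you are after; each scattering element can then be isolated between two freshly inserted boundary-tracking strands and replaced (via the resolution of identity) by a $Z$-rotation tensor, while each shaded region becomes a parity tensor. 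The paper also has a preliminary dot-elimination step---pair-annihilate what you can, then push the surviving pairs (one member of which necessarily sits on the boundary-tracking line) to the basis encoders and absorb them as Pauli-$X$'s---which deserves more than a parenthetical ``double-check,'' since without it the shaded regions would not cleanly yield parity tensors. Your intuition that the final parsing is ``edge $\to$ $2$-leg matchgate, vertex $\to$ $P$'' is right, but the mechanism that creates those edges and vertices on a disk is the inverse string-genus relation, not local isotopy.
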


\begin{proof} 
Suppose that we are given a matchgate tensor network, expressed as a planar tensor network consisting of tensors in an elementary matchgate generating set $\{ \vert 0 \rangle, I, X, P \} \cup \{ e^{i \frac{\theta}{2}} e^{-i \frac{\theta}{2} Z} \}_{\theta \in [0, 2\pi)}$. First, we note that each generating tensor can be represented by a 2D Quon diagram satisfying both conditions (i) and (ii), as explicitly presented in TABLE~\ref{tab:generating-tensors-quon-diagrams}. When viewed as a graph, the tensor network may contain a loop, which introduces a hole in the background manifold of the Quon diagram. While this would violate the condition (ii), the boundary-tracking property (condition (i)) allows any newly introduced hole to be removed via the string-genus relation Eq.~\eqref{eq:string-genus}, thereby restoring the condition (ii), as demonstrated in Eq.~\eqref{eq:matchgate-loop-example}. 

To prove the converse, let us consider a 2D Quon diagram satisfying conditions (i) and (ii). The proof is essentially consist of parsing the diagram and identifying each component as a tensor from the matchgate generating set. We further assume that the Majorana diagram and the background manifold each has a single connected component, so that the latter is topologically equivalent to a circle. If there are multiple connected components, we can address each one individually. The boundary-tracking property implies a unique boundary-tracking Majorana line that follows the sole boundary of the background manifold. Thus, the Quon diagram takes the following form: 
\begin{equation}
\raisebox{-1cm}{
\begin{tikzpicture}
\draw[dashed, thick, teal, fill=teal!20, fill opacity=0.4] (-0.3, -0.4) rectangle (3.3, 1.4);
\node at (-0.1, 1.2) {\scriptsize $\downarrow$};
%
\fill[blue!10] (0, 0) rectangle (3, 1);
\draw[thick, blue] (0, 0)--(0, 1);
\draw[thick, blue] (3, 0)--(3, 1);
\draw[thick, blue] (0, 0)--(0.3, 0);
\draw[thick, blue] (0.8, 0)--(1.25, 0);
\draw[thick, blue] (1.75, 0)--(2.2, 0);
\draw[thick, blue] (2.7, 0)--(3, 0);
\draw[thick, blue] (0, 1)--(0.5, 1);
\draw[thick, blue] (1, 1)--(2, 1);
\draw[thick, blue] (2.5, 1)--(3, 1);
%
\fill[gray, fill opacity=0.7, draw=blue, thick] (0.3, 0) arc (180:360:0.25);
\fill[gray, fill opacity=0.7, draw=blue, thick] (1.25, 0) arc (180:360:0.25);
\fill[gray, fill opacity=0.7, draw=blue, thick] (2.2, 0) arc (180:360:0.25);
\fill[gray, fill opacity=0.7, draw=blue, thick] (0.5, 1) arc (180:0:0.25);
\fill[gray, fill opacity=0.7, draw=blue, thick] (2, 1) arc (180:0:0.25);
%
\draw[red] (0.4, 0) to[out=90, in=-90] (0.1, 0.25)--(0.1, 0.7) to[out=90, in=-90] (0.6, 1);
\draw[red] (0.7, 0) to[out=90, in=90] (1.35, 0);
\draw[red] (0.4, 0) arc (-180:0:0.15);
\draw (0.5, 0)--(0.5, 0.25);
\draw (0.6, 0)--(0.6, 0.25);
\draw (0.5, 0) arc (-180:0:0.05);
%
\draw[red] (1.65, 0) to[out=90, in=90] (2.3, 0);
\draw[red] (1.35, 0) arc (-180:0:0.15);
\draw (1.45, 0)--(1.45, 0.25);
\draw (1.55, 0)--(1.55, 0.25);
\draw (1.45, 0) arc (-180:0:0.05);
%
\draw[red] (2.6, 0) to[out=90, in=-90] (2.9, 0.25)--(2.9, 0.7) to[out=90, in=-90] (2.4, 1);
\draw[red] (2.3, 0) arc (-180:0:0.15);
\draw (2.4, 0)--(2.4, 0.25);
\draw (2.5, 0)--(2.5, 0.25);
\draw (2.4, 0) arc (-180:0:0.05);
%
\draw[red, looseness=0.5] (2.1, 1) to[out=-90, in=-90] (0.9, 1);
\draw[red] (2.1, 1) arc (180:0:0.15);
\draw (2.2, 0.75)--(2.2, 1);
\draw (2.3, 0.75)--(2.3, 1);
\draw (2.2, 1) arc (180:0:0.05);
%
\draw[red] (0.6, 1) arc (180:0:0.15);
\draw (0.7, 0.75)--(0.7, 1);
\draw (0.8, 0.75)--(0.8, 1);
\draw (0.7, 1) arc (180:0:0.05);
%
\node at (1.5, 0.5) {$f$};
\draw (0.2, 0.25) rectangle (2.8, 0.75);
%
\draw[ultra thick, gray] (0.55, -0.25)--(0.55, -0.8);
\draw[ultra thick, gray] (1.5, -0.25)--(1.5, -0.8);
\draw[ultra thick, gray] (2.45, -0.25)--(2.45, -0.8);
\draw[ultra thick, gray] (2.25, 1.25)--(2.25, 1.8);
\draw[ultra thick, gray] (0.75, 1.25)--(0.75, 1.8);
\end{tikzpicture}
} \,\, , 
\end{equation}
where $f$ denotes a Majorana diagram. Focusing on the Majorana diagram, an example of $f$, including the outermost boundary-tracking Majorana line, is depicted as: 
\begin{equation}
\raisebox{-2.2cm}{
\begin{tikzpicture}[scale=2.3]

\draw[fill=gray!10, even odd rule] (-0.2, 0)--(-0.2, 0.3) to[out=90, in=90] (0, 0.3)--(0, 0.1) to[out=-90, in=170] (0.4, -0.1) to[out=-10, in=90] (1.4, -0.7) to[out=-90, in=90] (1.2, -1.2)--(1.2, -1.5) to[out=-90, in=-90] (1, -1.5)--(1, -1.2) to[out=90, in=0] (-0.6, -0.2) to[out=180, in=90] (-0.8, -0.8) to[out=-90, in=90] (-0.5, -1.2)--(-0.5, -1.5) to[out=-90, in=-90] (-0.3, -1.5)--(-0.3, -1.2) to[out=90, in=-90] (0.7, 0)--(0.7, 0.3) to[out=90, in=90] (0.9, 0.3)--(0.9, 0) to[out=-90, in=0] (0.4, -1) to[out=180, in=90] (-0.6, -0.8) to[out=-90, in=90] (0.2, -1.2)--(0.2, -1.5) to[out=-90, in=-90] (0.4, -1.5)--(0.4, -1.2) to[out=90, in=-90] (-0.2, 0);

\draw[thick, dotted] (-1, -1.3) rectangle (1.6, 0.1);

\draw[red] (-0.25, 0.3) to[out=-90, in=120] (-0.25, -0.1) to[out=-60, in=60] (-0.8, -0.2) to[out=-120, in=90] (-0.85, -0.9) to[out=-90, in=90] (-0.6, -1.2) to[out=-90, in=90] (-0.55, -1.5) to[out=-90, in=-90] (-0.25, -1.5) to[out=90, in=180] (-0.1, -1.05) to[out=0, in=90] (0.15, -1.5) to[out=-90, in=-90] (0.45, -1.5) to[out=90, in=200] (0.6, -1) to[out=20, in=120] (0.85, -0.8) to[out=-60, in=90] (0.95, -1.5) to[out=-90, in=-90] (1.25, -1.5) to[out=90, in=-120] (1.3, -1) to[out=60, in=-90] (1.45, -0.7) to[out=90, in=-10] (1.1, -0.23) to[out=170, in=-90] (0.95, 0.3) to[out=90, in=90] (0.65, 0.3) to[out=-90, in=-20] (0.55, -0.08) to[out=160, in=-90] (0.05, 0.15) to[out=90, in=-90] (0.05, 0.3) to[out=90, in=90] (-0.25, 0.3);

\draw[very thick, teal, fill=white] (0.72, -0.15) arc (0:360:0.05);
\draw[very thick, teal, fill=white] (0.95, -0.2) arc (0:360:0.05);
\draw[very thick, teal, fill=white] (-0.11, -0.23) arc (0:360:0.05);
\draw[very thick, teal, fill=white] (0.52, -0.42) arc (0:360:0.05);
\draw[very thick, teal, fill=white] (0.18, -0.64) arc (0:360:0.05);
\draw[very thick, teal, fill=white] (-0.07, -0.82) arc (0:360:0.05);
\draw[very thick, teal, fill=white] (-0.2, -1) arc (0:360:0.05);
\draw[very thick, teal, fill=white] (0.42, -1) arc (0:360:0.05);
\draw[very thick, teal, fill=white] (0.85, -0.68) arc (0:360:0.05);

\node at (-0.283, 0.05) [circle, fill, inner sep=1pt] {};
\node at (-0.2, 0.05) [circle, fill, inner sep=1pt] {};

\node at (-0.79, -0.4) [circle, fill, inner sep=1pt] {};
\node at (-0.07, -0.4) [circle, fill, inner sep=1pt] {};

\node at (0.27, -0.8) [circle, fill, inner sep=1pt] {};
\node at (0.9, -0.8) [circle, fill, inner sep=1pt] {};
\end{tikzpicture}
} \,\, , 
\end{equation}
where the scattering elements are shown as ``crossings'' with circles while omitting the scattering angles for simplicity. Additionally, we introduce alternating shading to the diagram, with the convention that the outermost, unbounded region remains unshaded. Such alternating shading always uniquely exists. We first ``pair-annihilates'' as many as dots possible using diagrammatic rewriting rules involving a dot passing through scattering in TABLE~\ref{tab:majorana-rewriting-rules}. Then, for each remaining pair of dots, one of the dot must lie on the boundary-tracking (red) Majorana line. Next, we push each pair to the vicinity of one of the basis encoders. By appropriately applying Pauli gates, which consists of a pair of dots and are matchgates, to tensor legs, we can eliminate all the dots in the Majorana diagram. Returning to the 2D Quon diagram, we apply the string-genus relation Eq.~\eqref{eq:string-genus} to insert the string-hole pair into each unshaded region (the tensor legs and the planar region are suppressed for simplicity): 
\begin{equation}
\raisebox{-2.5cm}{
\begin{tikzpicture}[scale=2.3]
\begin{scope}[even odd rule]
\clip (-1.2, -1.45) rectangle (1.8, 0.25)  (0.18, -0.43) arc (0:360:0.05);
\clip (-1.2, -1.45) rectangle (1.8, 0.25)  (0.79, -0.4) arc (0:360:0.06);
\clip (-1.2, -1.45) rectangle (1.8, 0.25)  (0.19, -0.84) arc (0:360:0.04);
\clip (-1.2, -1.45) rectangle (1.8, 0.25)  (-0.34, -0.84) arc (0:360:0.06);

\fill[blue!10] (-1.2, -1.45) rectangle (1.8, 0.25);
\end{scope}

\draw[thick, blue] (0.18, -0.43) arc (0:360:0.05);
\draw[thick, blue] (0.79, -0.4) arc (0:360:0.06);
\draw[thick, blue] (0.19, -0.84) arc (0:360:0.04);
\draw[thick, blue] (-0.34, -0.84) arc (0:360:0.06);

\draw[thick, blue] (-0.35, 0.25)--(-1.2, 0.25);
\draw[thick, blue] (-1.2, 0.25)--(-1.2, -1.45);
\draw[thick, blue] (-1.2, -1.45)--(-0.65, -1.45);

\draw[thick, blue] (-0.15, -1.45)--(0.05, -1.45);
\draw[thick, blue] (0.55, -1.45)--(0.85, -1.45);

\draw[thick, blue] (1.35, -1.45)--(1.8, -1.45);
\draw[thick, blue] (1.8, -1.45)--(1.8, 0.25);
\draw[thick, blue] (1.8, 0.25)--(1.05, 0.25);

\draw[thick, blue] (0.15, 0.25)--(0.55, 0.25);

\draw[fill=gray!10, even odd rule] (-0.2, 0)--(-0.2, 0.3) to[out=90, in=90] (0, 0.3)--(0, 0.1) to[out=-90, in=170] (0.4, -0.1) to[out=-10, in=90] (1.4, -0.7) to[out=-90, in=90] (1.2, -1.2)--(1.2, -1.5) to[out=-90, in=-90] (1, -1.5)--(1, -1.2) to[out=90, in=0] (-0.6, -0.2) to[out=180, in=90] (-0.8, -0.8) to[out=-90, in=90] (-0.5, -1.2)--(-0.5, -1.5) to[out=-90, in=-90] (-0.3, -1.5)--(-0.3, -1.2) to[out=90, in=-90] (0.7, 0)--(0.7, 0.3) to[out=90, in=90] (0.9, 0.3)--(0.9, 0) to[out=-90, in=0] (0.4, -1) to[out=180, in=90] (-0.6, -0.8) to[out=-90, in=90] (0.2, -1.2)--(0.2, -1.5) to[out=-90, in=-90] (0.4, -1.5)--(0.4, -1.2) to[out=90, in=-90] (-0.2, 0);

\draw[thick, dotted] (-1, -1.3) rectangle (1.6, 0.1);

\draw[very thick, teal, fill=white] (0.72, -0.15) arc (0:360:0.05);
\draw[very thick, teal, fill=white] (0.95, -0.2) arc (0:360:0.05);
\draw[very thick, teal, fill=white] (-0.11, -0.23) arc (0:360:0.05);
\draw[very thick, teal, fill=white] (0.52, -0.42) arc (0:360:0.05);
\draw[very thick, teal, fill=white] (0.18, -0.64) arc (0:360:0.05);
\draw[very thick, teal, fill=white] (-0.07, -0.82) arc (0:360:0.05);
\draw[very thick, teal, fill=white] (-0.2, -1) arc (0:360:0.05);
\draw[very thick, teal, fill=white] (0.42, -1) arc (0:360:0.05);
\draw[very thick, teal, fill=white] (0.85, -0.68) arc (0:360:0.05);

\draw[red] (0.73, -0.38) ellipse (0.1cm and 0.15cm);
\draw[rotate=60, red] (-0.3, -0.32) ellipse (0.1cm and 0.15cm);
\draw[rotate=60, red] (-0.65, -0.54) ellipse (0.07cm and 0.13cm);
\draw[red] (-0.3, -0.84) arc (0:360:0.1);

\draw[red] (-0.25, 0.3) to[out=-90, in=120] (-0.25, -0.1) to[out=-60, in=60] (-0.8, -0.2) to[out=-120, in=90] (-0.85, -0.9) to[out=-90, in=90] (-0.6, -1.2) to[out=-90, in=90] (-0.55, -1.5) to[out=-90, in=-90] (-0.25, -1.5) to[out=90, in=180] (-0.1, -1.05) to[out=0, in=90] (0.15, -1.5) to[out=-90, in=-90] (0.45, -1.5) to[out=90, in=200] (0.6, -1) to[out=20, in=120] (0.85, -0.8) to[out=-60, in=90] (0.95, -1.5) to[out=-90, in=-90] (1.25, -1.5) to[out=90, in=-120] (1.3, -1) to[out=60, in=-90] (1.45, -0.7) to[out=90, in=-10] (1.1, -0.23) to[out=170, in=-90] (0.95, 0.3) to[out=90, in=90] (0.65, 0.3) to[out=-90, in=-20] (0.55, -0.08) to[out=160, in=-90] (0.05, 0.15) to[out=90, in=-90] (0.05, 0.3) to[out=90, in=90] (-0.25, 0.3);

\fill[gray, fill opacity=0.5, draw=blue, thick] (0.15, 0.25) arc (0:180:0.25);
\fill[gray, fill opacity=0.5, draw=blue, thick] (1.05, 0.25) arc (0:180:0.25);

\fill[gray, fill opacity=0.5, draw=blue, thick] (-0.15, -1.45) arc (0:-180:0.25);
\fill[gray, fill opacity=0.5, draw=blue, thick] (0.55, -1.45) arc (0:-180:0.25);
\fill[gray, fill opacity=0.5, draw=blue, thick] (1.35, -1.45) arc (0:-180:0.25);
\end{tikzpicture}
} 
\end{equation}
Now, for each scattering element, we apply the resolution of identity Eq.~\eqref{eq:resolution-of-id-2} and replace the scattering with a $Z$-rotation using TABLE~\ref{tab:generating-tensors-quon-diagrams}: 
\begin{equation}
\raisebox{-0.9cm}{
\begin{tikzpicture}
\fill[blue!10] (-0.8, 1) to[out=-70, in=70] (-0.8, -1)--(0.8, -1) to[out=110, in=-110] (0.8, 1);

\fill[gray!10] (-0.5, 1)--(0.5, 1)--(-0.5, -1)--(0.5, -1)--(-0.5, 1);
\draw (-0.5, 1)--(0.5, -1);
\draw (0.5, 1)--(-0.5, -1);

\draw[red] (-0.65, 1) to[out=-70, in=70] (-0.65, -1);
\draw[red] (0.65, 1) to[out=-110, in=110] (0.65, -1);

\draw[thick, blue] (-0.8, 1) to[out=-70, in=70] (-0.8, -1);
\draw[thick, blue] (0.8, 1) to[out=-110, in=110] (0.8, -1);

\draw[very thick, teal, fill=white] (0.3, 0) arc (0:360:0.3);

\node at (0, 0) {$\theta_\updownarrow$};
\end{tikzpicture}
} = \frac{1}{2} \,\, 
\raisebox{-1.2cm}{
\begin{tikzpicture}
\draw[ultra thick, gray, fill=zx_green] (0.4, 0) arc (0:360:0.4);
\node at (0, 0) {$\theta$};

\draw[ultra thick, gray] (0, 0.4)--(0, 0.8);
\draw[ultra thick, gray] (0, -0.4)--(0, -0.8);

\fill[gray, fill opacity=0.7, draw=blue, very thick] (0.5, 1.3) arc (0:-180:0.5);
\fill[gray, fill opacity=0.7, draw=blue, very thick] (0.5, -1.3) arc (0:180:0.5);

\draw[red] (0.35, 1.3) arc (0:-180:0.35);
\draw (0.2, 1.3) arc (0:-180:0.2);

\draw[red] (0.35, -1.3) arc (0:180:0.35);
\draw (0.2, -1.3) arc (0:180:0.2);
\end{tikzpicture}
} 
\end{equation}
and similarly
\begin{equation}
\raisebox{-0.8cm}{
\begin{tikzpicture}[rotate=90]
\fill[blue!10] (-0.8, 1) to[out=-70, in=70] (-0.8, -1)--(0.8, -1) to[out=110, in=-110] (0.8, 1);

\fill[gray!10] (-0.5, 1)--(0.5, 1)--(-0.5, -1)--(0.5, -1)--(-0.5, 1);
\draw (-0.5, 1)--(0.5, -1);
\draw (0.5, 1)--(-0.5, -1);

\draw[red] (-0.65, 1) to[out=-70, in=70] (-0.65, -1);
\draw[red] (0.65, 1) to[out=-110, in=110] (0.65, -1);

\draw[thick, blue] (-0.8, 1) to[out=-70, in=70] (-0.8, -1);
\draw[thick, blue] (0.8, 1) to[out=-110, in=110] (0.8, -1);

\draw[very thick, teal, fill=white] (0.3, 0) arc (0:360:0.3);

\node at (0, 0) {$\theta_\leftrightarrow$};
\end{tikzpicture}
} = \frac{1}{2} \,\, 
\raisebox{-0.4cm}{
\begin{tikzpicture}[rotate=90]
\draw[ultra thick, gray, fill=zx_green] (0.4, 0) arc (0:360:0.4);
\node at (0, 0) {$\theta$};

\draw[ultra thick, gray] (0, 0.4)--(0, 0.8);
\draw[ultra thick, gray] (0, -0.4)--(0, -0.8);

\fill[gray, fill opacity=0.7, draw=blue, very thick] (0.5, 1.3) arc (0:-180:0.5);
\fill[gray, fill opacity=0.7, draw=blue, very thick] (0.5, -1.3) arc (0:180:0.5);

\draw[red] (0.35, 1.3) arc (0:-180:0.35);
\draw (0.2, 1.3) arc (0:-180:0.2);

\draw[red] (0.35, -1.3) arc (0:180:0.35);
\draw (0.2, -1.3) arc (0:180:0.2);
\end{tikzpicture}
} \, \, , 
\end{equation}
where the arrow in each scattering element are oriented to follow the direction connecting adjacent shaded regions, which can always be ensured by applying the space-time duality presented in TABLE~\ref{tab:majorana-rewriting-rules}, if necessary. Note that we use the ZX-calculus notation to indicate the $Z$-rotation tensors in the RHS. After these replacements, each shaded region is replaced by a parity tensor, which is a matchgate: 
\begin{equation}
\raisebox{-1.1cm}{
\begin{tikzpicture}[scale=0.8]
\fill[blue!10] (-0.7, 1.35) to[out=-50, in=-130, looseness=0.3] (0.7, 1.35)--(1.3, 0.75) to[out=-130, in=130, looseness=0.5] (1.45, -0.3)--(0.98, -1.1) to[out=120, in=120] (0.4, -1.4)--(-0.4, -1.4) to[out=60, in=60] (-0.98, -1.1)--(-1.45, -0.3) to[out=50, in=-50, looseness=0.5] (-1.3, 0.75)--(-0.7, 1.35);

\draw[thick, blue, looseness=0.3] (-0.7, 1.35) to[out=-50, in=-130] (0.7, 1.35);
\draw[thick, blue, looseness=0.5] (1.3, 0.75) to[out=-130, in=130] (1.45, -0.3);
\draw[thick, blue] (0.98, -1.1) to[out=120, in=120] (0.4, -1.4);
\draw[thick, blue] (-0.98, -1.1) to[out=60, in=60] (-0.4, -1.4);
\draw[thick, blue, looseness=0.5] (-1.3, 0.75) to[out=-50, in=50] (-1.45, -0.3);

\draw[red] (-0.8, 1.25) to[out=-50, in=-150] (-0.5, 1) to [out=30, in=150] (0.5, 1) to[out=-30, in=-130] (0.8, 1.25);
\draw[red] (1.25, 0.8) to[out=-140, in=120] (0.95, 0.6) to[out=-60, in=70] (1.05, -0.25) to[out=-110, in=160] (1.37, -0.4);
\draw[red] (1.03, -1) to[out=120, in=60] (0.8, -0.77) to[out=-120, in=20] (0.25, -1.1) to[out=-160, in=120] (0.3, -1.4);
\draw[red] (-1.03, -1) to[out=60, in=120] (-0.8, -0.77) to[out=-60, in=160] (-0.25, -1.1) to[out=-20, in=60] (-0.3, -1.4);
\draw[red] (-1.25, 0.8) to[out=-40, in=60] (-0.95, 0.6) to[out=-120, in=110] (-1.05, -0.25) to[out=-70, in=20] (-1.37, -0.4);

\draw[fill=gray!10] (1, 0) arc (0:360:1);

\fill[gray!10] (0.7, 0.7)--(0.9, 1.15)--(1.15, 0.9);
\draw (0.7, 0.7)--(0.9, 1.15);
\draw (0.7, 0.7)--(1.15, 0.9);
\draw[very thick, teal, fill=white] (0.85, 0.7) arc (0:360:0.15);

\fill[gray!10] (0.85, -0.5)--(1.1, -0.9)--(1.3, -0.55);
\draw (0.85, -0.5)--(1.1, -0.9);
\draw (0.85, -0.5)--(1.3, -0.55);
\draw[very thick, teal, fill=white] (1, -0.5) arc (0:360:0.15);

\fill[gray!10] (0, -1)--(0.2, -1.4)--(-0.2, -1.4);
\draw (0, -1)--(0.2, -1.4);
\draw (0, -1)--(-0.2, -1.4);
\draw[very thick, teal, fill=white] (0.15, -1) arc (0:360:0.15);

\fill[gray!10] (-0.85, -0.5)--(-1.1, -0.9)--(-1.3, -0.55);
\draw (-0.85, -0.5)--(-1.1, -0.9);
\draw (-0.85, -0.5)--(-1.3, -0.55);
\draw[very thick, teal, fill=white] (-0.7, -0.5) arc (0:360:0.15);

\fill[gray!10] (-0.7, 0.7)--(-0.9, 1.15)--(-1.15, 0.9);
\draw (-0.7, 0.7)--(-0.9, 1.15);
\draw (-0.7, 0.7)--(-1.15, 0.9);
\draw[very thick, teal, fill=white] (-0.55, 0.7) arc (0:360:0.15);
\end{tikzpicture}
} \mapsto \raisebox{-1.1cm}{
\begin{tikzpicture}[scale=0.8]
\fill[gray, fill opacity=0.7, draw=blue, thick] (0.55, 0.3) arc (-30:150:0.2);
\fill[gray, fill opacity=0.7, draw=blue, thick] (-0.55, 0.3) arc (210:30:0.2);
\fill[gray, fill opacity=0.7, draw=blue, thick] (0.65, -0.1) arc (60:-120:0.2);
\fill[gray, fill opacity=0.7, draw=blue, thick] (-0.65, -0.1) arc (120:300:0.2);
\fill[gray, fill opacity=0.7, draw=blue, thick] (0.2, -0.6) arc (0:-180:0.2);

\fill[gray!10] ({+0.55-0.2*sqrt(3)}, 0.5) to[out=-120, in=-60] (+{-0.55+0.2*sqrt(3)}, 0.5)--(-0.55, 0.3) to[out=-60, in=30] (-0.65, -0.1)--(-0.65+0.2, +{-0.1-0.2*sqrt(3)}) to[out=30, in=90] (-0.2, -0.6)--(0.2, -0.6) to[out=90, in=150] (0.65-0.2, {-0.1-0.2*sqrt(3)})--(0.65, -0.1) to[out=150, in=-120] (0.55, 0.3);

\draw[thick, blue] ({+0.55-0.2*sqrt(3)}, 0.5) to[out=-120, in=-60] (+{-0.55+0.2*sqrt(3)}, 0.5);
\draw[thick, blue] (-0.55, 0.3) to[out=-60, in=30] (-0.65, -0.1);
\draw[thick, blue] (-0.65+0.2, +{-0.1-0.2*sqrt(3)}) to[out=30, in=90] (-0.2, -0.6);
\draw[thick, blue] (0.2, -0.6) to[out=90, in=150] (0.65-0.2, {-0.1-0.2*sqrt(3)});
\draw[thick, blue] (0.65, -0.1) to[out=150, in=-120] (0.55, 0.3);

\draw[red] ({0.55-0.04*sqrt(3)}, 0.3+0.04) arc (-30:150:0.12) to[out=-120, in=-60] (+{-0.55+0.16*sqrt(3)}, 0.3+0.16) arc (30:210:0.12) to[out=-60, in=30] (-0.65+0.04, +{-0.1-0.04*sqrt(3)}) arc (120:300:0.12) to[out=30, in=90] (-0.12, -0.6) arc (-180:0:0.12) to[out=90, in=150] (0.65-0.16, {-0.1-0.16*sqrt(3)}) arc (-120:60:0.12) to[out=150, in=-120] (+{0.55-0.04*sqrt(3)}, 0.3+0.04);

\draw ({0.55-0.08*sqrt(3)}, 0.3+0.08) arc (-30:150:0.04) to[out=-120, in=-60] (+{-0.55+0.12*sqrt(3)}, 0.3+0.12) arc (30:210:0.04) to[out=-60, in=30] (-0.65+0.08, +{-0.1-0.08*sqrt(3)}) arc (120:300:0.04) to[out=30, in=90] (-0.04, -0.6) arc (-180:0:0.04) to[out=90, in=150] (0.65-0.12, {-0.1-0.12*sqrt(3)}) arc (-120:60:0.04) to[out=150, in=-120] (+{0.55-0.08*sqrt(3)}, 0.3+0.08);

\fill[gray, fill opacity=0.7, draw=blue, thick] (0.95, {0.3+0.4*sqrt(3)}) arc (-30:-210:0.2);
\fill[gray, fill opacity=0.7, draw=blue, thick] (-0.95, {0.3+0.4*sqrt(3)}) arc (-150:30:0.2);
\fill[gray, fill opacity=0.7, draw=blue, thick] (+{0.65+0.4*sqrt(3)}, -0.5) arc (60:240:0.2);
\fill[gray, fill opacity=0.7, draw=blue, thick] (+{-0.65-0.4*sqrt(3)}, -0.5) arc (120:-60:0.2);
\fill[gray, fill opacity=0.7, draw=blue, thick] (0.2, -1.4) arc (0:180:0.2);

\draw[red] ({0.95-0.04*sqrt(3)}, {0.3+0.04+0.4*sqrt(3)}) arc (-30:-210:0.12);
\draw[red] (+{-0.95+0.04*sqrt(3)}, {0.3+0.04+0.4*sqrt(3)}) arc (-150:30:0.12);
\draw[red] (+{-0.65+0.04-0.4*sqrt(3)}, +{-0.5-0.04*sqrt(3)}) arc(120:-60:0.12);
\draw[red] (+{0.65-0.04+0.4*sqrt(3)}, +{-0.5-0.04*sqrt(3)}) arc(60:240:0.12);
\draw[red] (0.12, -1.4) arc (0:180:0.12);

\draw ({0.95-0.08*sqrt(3)}, {0.3+0.08+0.4*sqrt(3)}) arc (-30:-210:0.04);
\draw (+{-0.95+0.08*sqrt(3)}, {0.3+0.08+0.4*sqrt(3)}) arc (-150:30:0.04);
\draw (+{-0.65+0.08-0.4*sqrt(3)}, +{-0.5-0.08*sqrt(3)}) arc(120:-60:0.04);
\draw (+{0.65-0.08+0.4*sqrt(3)}, +{-0.5-0.08*sqrt(3)}) arc(60:240:0.04);
\draw (0.04, -1.4) arc (0:180:0.04);

\draw[very thick, gray] (+{0.65-0.1*sqrt(3)}, {0.4+0.1*sqrt(3)})--(+{0.85-0.1*sqrt(3)}, {0.4+0.3*sqrt(3)});
\draw[very thick, gray] (+{-0.65+0.1*sqrt(3)}, {0.4+0.1*sqrt(3)})--(+{-0.85+0.1*sqrt(3)}, {0.4+0.3*sqrt(3)});
\draw[very thick, gray] (+{-0.55-0.1*sqrt(3)}, +{-0.2-0.1*sqrt(3)})--(+{-0.55-0.3*sqrt(3)}, +{-0.4-0.1*sqrt(3)});
\draw[very thick, gray] (+{0.55+0.1*sqrt(3)}, +{-0.2-0.1*sqrt(3)})--(+{0.55+0.3*sqrt(3)}, +{-0.4-0.1*sqrt(3)});
\draw[very thick, gray] (0, -0.8)--(0, -1.2);

\draw[very thick, gray, fill=zx_green] (+{0.7-0.1*sqrt(3)}, {0.4+0.15*sqrt(3)}) arc (-120:240:0.1);
\draw[very thick, gray, fill=zx_green] (+{-0.7+0.1*sqrt(3)}, {0.4+0.15*sqrt(3)}) arc (-60:300:0.1);
\draw[very thick, gray, fill=zx_green] (+{0.55+0.15*sqrt(3)}, +{-0.25-0.1*sqrt(3)}) arc (150:150+360:0.1);
\draw[very thick, gray, fill=zx_green] (+{-0.55-0.15*sqrt(3)}, +{-0.25-0.1*sqrt(3)}) arc (30:30+360:0.1);
\draw[very thick, gray, fill=zx_green] (0, -0.9) arc (90:90+360:0.1);
\end{tikzpicture}
} = 2 \raisebox{-1.1cm}{
\begin{tikzpicture}[scale=0.8]
\draw[very thick, gray] (0, 0)--(+{0.85-0.1*sqrt(3)}, {0.4+0.3*sqrt(3)});
\draw[very thick, gray] (0, 0)--(+{-0.85+0.1*sqrt(3)}, {0.4+0.3*sqrt(3)});
\draw[very thick, gray] (0, 0)--(+{-0.55-0.3*sqrt(3)}, +{-0.4-0.1*sqrt(3)});
\draw[very thick, gray] (0, 0)--(+{0.55+0.3*sqrt(3)}, +{-0.4-0.1*sqrt(3)});
\draw[very thick, gray] (0, 0)--(0, -1.2);

\fill[gray, fill opacity=0.7, draw=blue, thick] (0.95, {0.3+0.4*sqrt(3)}) arc (-30:-210:0.2);
\fill[gray, fill opacity=0.7, draw=blue, thick] (-0.95, {0.3+0.4*sqrt(3)}) arc (-150:30:0.2);
\fill[gray, fill opacity=0.7, draw=blue, thick] (+{0.65+0.4*sqrt(3)}, -0.5) arc (60:240:0.2);
\fill[gray, fill opacity=0.7, draw=blue, thick] (+{-0.65-0.4*sqrt(3)}, -0.5) arc (120:-60:0.2);
\fill[gray, fill opacity=0.7, draw=blue, thick] (0.2, -1.4) arc (0:180:0.2);

\draw[red] ({0.95-0.04*sqrt(3)}, {0.3+0.04+0.4*sqrt(3)}) arc (-30:-210:0.12);
\draw[red] (+{-0.95+0.04*sqrt(3)}, {0.3+0.04+0.4*sqrt(3)}) arc (-150:30:0.12);
\draw[red] (+{-0.65+0.04-0.4*sqrt(3)}, +{-0.5-0.04*sqrt(3)}) arc(120:-60:0.12);
\draw[red] (+{0.65-0.04+0.4*sqrt(3)}, +{-0.5-0.04*sqrt(3)}) arc(60:240:0.12);
\draw[red] (0.12, -1.4) arc (0:180:0.12);

\draw ({0.95-0.08*sqrt(3)}, {0.3+0.08+0.4*sqrt(3)}) arc (-30:-210:0.04);
\draw (+{-0.95+0.08*sqrt(3)}, {0.3+0.08+0.4*sqrt(3)}) arc (-150:30:0.04);
\draw (+{-0.65+0.08-0.4*sqrt(3)}, +{-0.5-0.08*sqrt(3)}) arc(120:-60:0.04);
\draw (+{0.65-0.08+0.4*sqrt(3)}, +{-0.5-0.08*sqrt(3)}) arc(60:240:0.04);
\draw (0.04, -1.4) arc (0:180:0.04);

\draw [thick, gray, fill=white] (0.3, 0) arc (0:360:0.3);
\node at (0, 0) {$P$};

\draw[very thick, gray, fill=zx_green] (0.38, 0.5) arc (-120:240:0.1);
\draw[very thick, gray, fill=zx_green] (-0.38, 0.5) arc (-60:300:0.1);
\draw[very thick, gray, fill=zx_green] (0.55, -0.3) arc (150:150+360:0.1);
\draw[very thick, gray, fill=zx_green] (-0.55, -0.3) arc (30:30+360:0.1);
\draw[very thick, gray, fill=zx_green] (0, -0.65) arc (90:90+360:0.1);
\end{tikzpicture}
}
\end{equation}

As a series of these replacements, we obtain a planar tensor network where the constituent tensors are matchgates. Since the planar region of the network remains topologically equivalent to a disk, this qualifies the network as a matchgate tensor network. 
\end{proof}

Earlier in Sec.~\ref{sec:Majorana-diagrams}, we explained how matchgate unitaries can be represented by Majorana diagrams. This may appear puzzling when compared with the diagrammatic characterization of matchgates in terms of Quon diagrams in Theorem~\ref{thm:quon-matchgate}, since the boundary-tracking property was not imposed in Sec.~\ref{sec:Majorana-diagrams} to represent matchgates. The discrepancy arises from the different qubit encodings used: the dense encoding for Majorana diagrams versus the sparse encoding for Quon diagrams. We recall that the dense encoding utilizes the parity-even subspaces, enabling the representation of the entire Clifford group, a capability absent in the dense encoding. Notably, the Quon language enables switching from a sparse encoding to a dense encoding: in the trick outlined in Eq.~\eqref{eq:quon-diagram-to-sum-of-majorana-diagrams}, we rewrite a closed Quon diagram as a sum of closed Majorana diagrams, thereby translating a computation from the sparse encoding into the dense encoding. 

\subsubsection{Diagrammatic characterization of punctured matchgate}
In addition the diagrammatic characterizations of matchgate, we also provide a diagrammatic characterization of punctured matchgate tensor networks: 
\begin{theorem}
\label{thm:quon-punctured-matchgate}
A planar tensor network is a \textbf{punctured matchgate} if and only if it has a Quon diagrammatic representation that satisfies the boundary-tracking property. 
\end{theorem}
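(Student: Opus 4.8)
The plan is to prove the two directions by bootstrapping off the matchgate characterization, Theorem~\ref{thm:quon-matchgate}, and the puncture bookkeeping from Sec.~\ref{sec:planar-TN}. The only structural difference between a matchgate and a punctured matchgate tensor network is that the planar region is allowed to be a punctured disk, so the diagrammatic characterization should differ from Theorem~\ref{thm:quon-matchgate} only by dropping the hole-free condition (ii) while keeping the boundary-tracking condition (i). Throughout I would use that each constituent matchgate tensor has a Quon representation satisfying both (i) and (ii) (TABLE~\ref{tab:generating-tensors-quon-diagrams}) and that the planar tensor operations act on Quon diagrams as in FIG.~\ref{fig:quon-tn}.

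For the forward direction, I would start from a punctured matchgate tensor network $\mathcal{T}$ whose planar region is a disk with $k$ punctures. First I would cut each puncture to the outer boundary along arcs crossing only (closed) tensor legs, exactly as in FIG.~\ref{fig:punctured-decomposition}(a)--(b), turning $\mathcal{T}$ into a genuine matchgate tensor network $\mathcal{T}'$; Theorem~\ref{thm:quon-matchgate} then supplies a Quon diagram for $\mathcal{T}'$ with properties (i) and (ii). I would then re-contract the cut legs: by FIG.~\ref{fig:quon-tn}(d)--(e) each such non-neighboring contraction introduces exactly one hole into the background manifold and one puncture into the planar region, and it splices the boundary-tracking Majorana lines of the two glued legs into Majorana lines that still track the newly created closed interval along the hole without scattering or braiding. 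Hence the resulting diagram represents $\mathcal{T}$ and satisfies the boundary-tracking property while carrying $k$ irremovable holes. (Equivalently, one can build the diagram directly from the per-tensor representations of TABLE~\ref{tab:generating-tensors-quon-diagrams} and perform all merges and contractions following FIG.~\ref{fig:quon-tn}, checking at each step that gluings of boundary-tracking lines stay boundary-tracking; neighboring contractions permit hole removal via the string-genus relation Eq.~\eqref{eq:string-genus}, as in Eq.~\eqref{eq:matchgate-loop-example}, whereas non-neighboring contractions leave holes whose boundaries remain tracked.)

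For the converse, I would take a Quon diagram $D$ satisfying the boundary-tracking property and cut it open along arcs joining each hole to the outer boundary (after grouping all Majoranas into quadruples by Eq.~\eqref{eq:string-genus}, so each side of a cut is a legitimate open interval), producing a hole-free Quon diagram $D'$ that still satisfies (i); by Theorem~\ref{thm:quon-matchgate}, $D'$ is a matchgate Quon diagram, and $D$ is recovered from it by re-gluing the cuts, i.e.\ by non-neighboring self-contractions, which reinstate the punctures, so $D$ represents a punctured matchgate tensor network. Alternatively, one can rerun the parsing argument in the converse of Theorem~\ref{thm:quon-matchgate} essentially verbatim — it only used the boundary-tracking property, not hole-freeness — with the single modification that the background manifold now has a boundary-tracking Majorana line along each boundary component; parsing the bulk between these components into matchgate tensors yields a planar tensor network whose planar region is homeomorphic to the background manifold of $D$, namely a disk with one puncture per hole. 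If one first deletes every hole enclosed by an isolated Majorana loop via Eq.~\eqref{eq:string-genus}, the output is in fact a matchgate, consistent with the inclusion of matchgates in punctured matchgates.

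The main obstacle I anticipate is the bookkeeping that makes the puncture-hole dictionary precise and shows that the boundary-tracking property is genuinely stable under non-neighboring self-contractions. Concretely, one must verify that gluing two legs carrying boundary-tracking Majorana strings again produces a Quon diagram in which every closed interval of every (possibly newly created) boundary component has an adjacent Majorana string free of scatterings and braids, that the cut-open operation can always be arranged to cross a valid number of Majorana strings, and that the number of surviving holes matches the number of punctures of the planar region — including the edge cases of multiple connected components and of graph loops that close via neighboring contractions (which, as in Eq.~\eqref{eq:matchgate-loop-example}, give removable holes and hence no puncture). Once this stability and the homeomorphism between the planar region and the background manifold are in hand, the theorem follows by combining them with Theorem~\ref{thm:quon-matchgate}.
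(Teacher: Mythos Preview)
Your proposal is correct and, in its ``rerun the parsing argument verbatim'' alternative, is exactly the paper's own proof: the paper simply states that the proof of Theorem~\ref{thm:quon-matchgate} applies unchanged, with the planar region now allowed to be a punctured disk. Your cut-and-reglue alternatives are also valid but more elaborate than necessary; the paper's point is precisely your observation that the converse parsing argument of Theorem~\ref{thm:quon-matchgate} never actually used hole-freeness, only boundary-tracking, so dropping condition~(ii) costs nothing.
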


\begin{proof}
The proof of Theorem~\ref{thm:quon-matchgate} also applies here; however, in this case, the planar region of the network is topologically equivalent to a disk which might be punctured. 
\end{proof}

Example of a 2D Quon diagram for a punctured matchgate tensor network can be found in FIG.~\ref{fig:punctured-matchgate-example}. As mentioned earlier, the tensor contraction of a punctured matchgate becomes tractable if we fill in the punctures with matchgates of appropriate size, as the resulting tensor network then becomes a matchgate. This fact becomes readily apparent from the perspective of Quon. Since the holes in the background manifold are tied to the punctures in the planar region---a unique property of punctured matchgate tensor networks---all holes in the background manifold are removed via the string-genus relation when we fill in the punctures with matchgate tensors. We note that while a puncture in the planar region always implies a hole in the background manifold of its 2D Quon diagrammatic representation, the converse is not true in general. For example, the tensor network in Fig.~\ref{fig:quon-tn} (e) contains the SWAP gate, which introduces two holes that are not associated with the punctures in the planar region.

In the proof of Theorems~\ref{thm:quon-clifford} and~\ref{thm:quon-matchgate}, we present an efficient way to \textit{compile} a 2D Quon diagram into a tensor network. Using these compilations, it is also possible to efficiently compile an arbitrary 2D Quon diagram into a planar tensor network, and vice versa. 

\subsection{Computational Hardness Implications}
\label{sec:implication-on-computation}
Before concluding the section, let us discuss the implications of our diagrammatic characterizations for quantifying the computational hardness in evaluating 2D Quon diagrams. For simplicity, we consider closed 2D Quon diagrams that evaluate to complex numbers. Consider a closed 2D Quon diagram slightly deviates from the Clifford diagrammatic characterization by containing a small number of generic scattering elements. To evaluate the diagram, we treat Clifford computations as tractable computations, and use the \textit{number of generic scattering elements} in the diagram as a measure of \textit{non-Cliffordness}, due to the following observation: we can evaluate the Quon diagram by performing a brute-force summation over $2^\textrm{\# of scatterings}$ individually tractable Clifford computations, by applying diagrammatic expansions presented in TABLE~\ref{tab:Majorana-diagram-expansion} to each generic scattering element. Unless this summation is structured, e.g., when this decomposition is applied to a 2D Quon diagram satisfying the matchgate diagrammatic characterizations, the complexity of the summation grows exponentially with the number of generic scattering elements. This way of measuring non-Cliffordness is very similar to counting the $T$-gates in a $T$-gate doped Clifford circuit, namely $T$-count. Leveraging the flexibility of the diagrammatic rewriting rules of Quon, it would be compelling to investigate whether the Quon language can be used to reduce non-Clifford resource overheads, and to benchmark its performance against existing methods~\cite{amy2014polynomial, heyfron2018efficient, ruiz2025quantum, PhysRevA.102.022406, de2020fast, backens2021there}.

Consider a (closed) 2D Quon diagram that slightly deviates from the matchgate diagrammatic characterizations, in particular, a diagram with a few irremovable holes in the background manifold. To evaluate the diagram, we treat matchgate computation as tractable computations, and use the \textit{number of holes} in the background manifold as a measure of \textit{non-matchgateness}: we can evaluate the Quon diagram by performing a brute-force summation over $2^\textrm{\# of holes}$ individually matchgate tractable Majorana diagrams, by applying Eq.~\eqref{eq:parity-even-projection} to each hole, as outlined in Eq.~\eqref{eq:quon-diagram-to-sum-of-majorana-diagrams}. Unless the summation is structured, e.g., when this decomposition is applied to a 2D Quon diagram satisfying the Clifford diagrammatic characterization, the complexity of the summation grows exponentially with the number of holes. See earlier literature on the topological characterization on the non-matchgateness~\cite{cimasoni2007dimers, cimasoni2008dimers, cimasoni2009dimers, Dijkgraaf2009dimer, bravyi2009contraction}. It would be interesting to investigate the performance of this method and to benchmark against existing methods~\cite{cudby2023gaussian}.

In sum, each approach highlights a distinct classically simulable facet of the tensor network, with both being encompassed by the Quon language.

\section{Applications}
\label{sec:applications}
In this section, we present applications of the 2D Quon language by showcasing three examples: factories for tractable networks, the Kramers-Wannier duality of the two-dimensional Ising model, and the star-triangle relation of the two-dimensional Ising model. In all cases, we explicitly exploit the diagrammatic nature of the 2D Quon language. 

Factories for tractable networks offer a systematic method to generate a large class of networks with controlled tractability using three diagrammatic moves called \textit{stretching}, \textit{inserting}, and \textit{switching}. Each move transforms a 2D Quon diagram to produce new tensor networks often changing the total number of tensor legs. The first two moves preserve tractability, while the third move may increase the computational cost of the network. Specifically, if a braid is replaced by an arbitrary scattering element during a switching move, the computational cost increases by at most a factor of $2$; thus the number of such replacements, called the \textit{number of transformed scatterings}, serves as a measure of the computational cost. 

The Kramers-Wannier duality relates the classical Ising model defined on a planar graph to the model defined on its dual lattice, with the coupling constants appropriately transformed. It allows for a deeper understanding of the nature of the Ising model, including the identification of the self-dual point for the square lattice at which critical phenomena appear. Here, we reinterpret the famous Kramers-Wannier duality using the 2D Quon language. Starting from the observation that the partition function of the Ising model can be written as a matchgate tensor network, we derive the duality diagrammatically by successively applying the string-genus relation Eq.~\eqref{eq:string-genus} and the space-time duality of Majoarana scattering in TABLE~\ref{tab:majorana-rewriting-rules}. 

The star-triangle relation, also known as the Y-$\Delta$ transformation, establishes an equivalence between the partition functions of two systems that differ only in a local region where a Y-shaped configuration is substituted by a $\Delta$-shaped configuration. It is well known that the star-triangle relation is another manifestation of the Yang-Baxter equation, a cornerstone of integrability. Importantly, such local transformations integrate seamlessly with the pictorial nature of the Quon language. We show that the star-triangle relation for the two-dimensional classical Ising model is essentially caputred by the the Yang-Baxter equation of Majorana scatterings in TABLE~\ref{tab:majorana-rewriting-rules}. 

\subsection{Factories for Tractable Networks}
\label{subsec:factory}
In the following, we describe a method for transforming an input tractable network, represented as a 2D Quon diagram, into a large family of tensor networks with controlled tractability. Specifically, we introduce three diagrammatic moves---\textit{stretching}, \textit{inserting}, and \textit{switching}---each of which transforms a Quon diagram into a new one. 

Before introducing three moves in more detail, let us clarify our definition of ``tractability'' and explain how the ``number of transformed scatterings'' controls the tractability. In this subsection, a tensor network is called \textit{tractable} if the evaluation of every tensor network component, i.e., the amplitude corresponding to any given bit-string on the open tensor-leg indices, can be performed efficiently. When the number of transformed scattering elements, denoted $n_S$ is nonzero, the evaluation of the tensor network components reduces to a brute-force summation over $2^{n_S}$ terms, each of which can be computed efficiently. Therefore, by keeping $n_S$ small, the computational cost remains controlled, analogous to maintaining a low $T$-count in a $T$-gate-doped Clifford circuit or a small bond dimension in an MPS~\cite{PhysRevA.71.022316, schollwock2011density}. Our notion of tractability naturally encompasses Clifford, matchgate, and punctured matchgate tensor networks, for which $n_S = 0$ in all cases. Viewing a tensor network as a quantum state, our notion of tractability differs somewhat from the \textit{strong simulability}~\cite{terhal2004adaptive}, which is the ability to (approximately) compute any measurement outcome probability, including the marginals. While one can compute an \textit{unnormalized} probability by taking the absolute square of a component, computing the normalization is generally intractable; hence, such states are not, in general, strongly simulable. (see Sec.~\ref{sec:quon-universal} for a discussion of computing the normalization of a punctured matchgate state). Instead, by employing the Markov chain Monte Carlo (MCMC) via the Metropolis algorithm, which requires only probability ratios between two configurations, one can construct a sampler for the quantum state, achieving a \textit{weak simulation}, provided that the MCMC autocorrelation time is properly accounted for. Since we can compute not only the (unnormalized) probability but also the complex-valued amplitude for each component, our tractable networks can serve as variational wavefunctions; moreover, the aforementioned MCMC sampler can be employed in variational Monte Carlo simulations for quantum many-body problems. 

\begin{figure*}[t]
\centering
\includegraphics[width=\textwidth]{./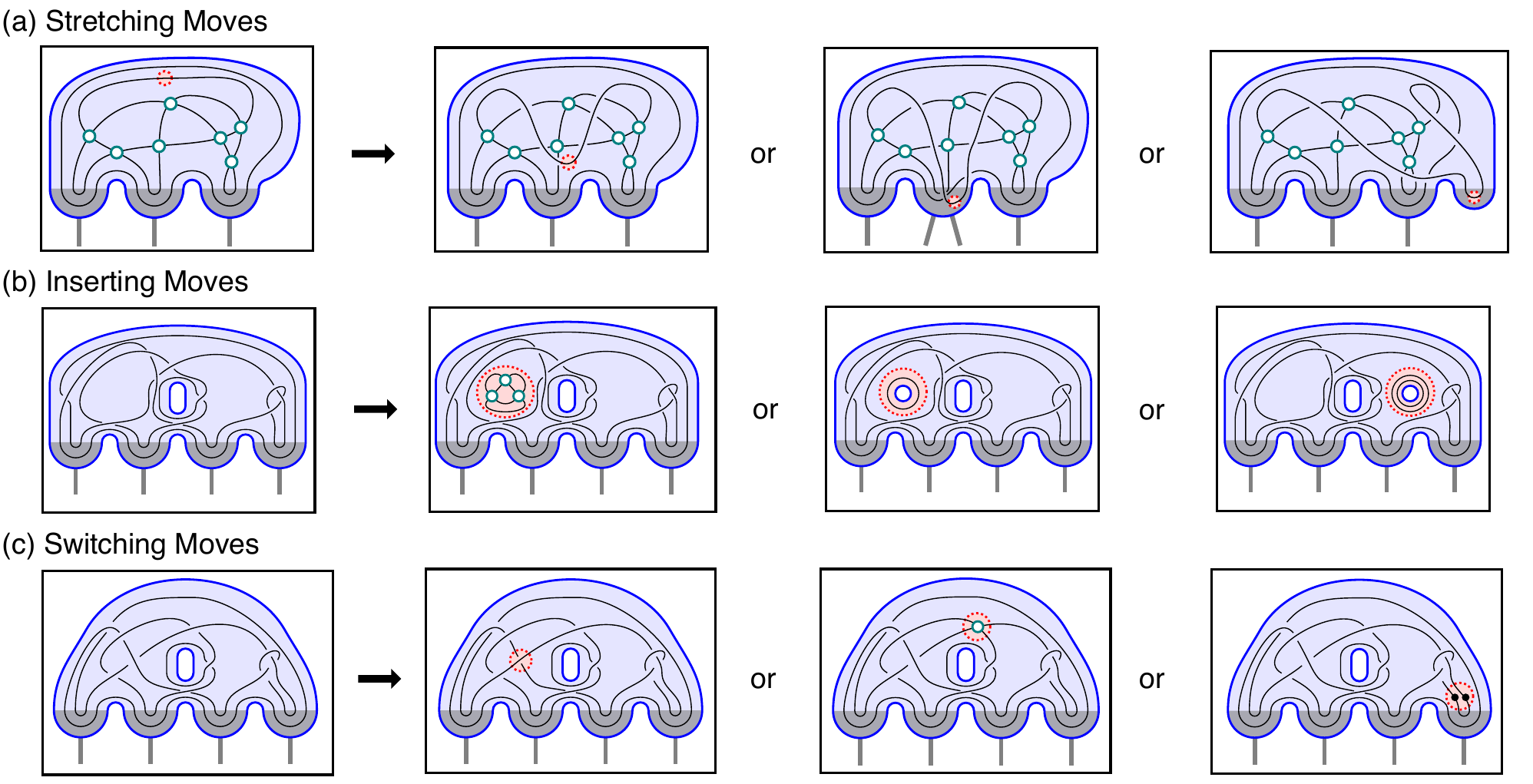}
\caption{Examples of three diagrammatic moves. We denote a scattering element as a circle with a crossing and omit the scattering angle for simplicity. (a) Examples of the \textit{stretching} moves. One stretch a Majorana string segment into a target location, which may lie in a bulk, on an existing basis encoder, or on a newly introduced basis encoder. In all cases, new braids are introduced during the move. When the target location is on an existing basis encoder, the stretching move introduces an additional tensor leg to that basis encoder. (b) Examples of the \textit{inserting} moves. An inserting move inserts an arbitrary closed Majorana diagram, a string-hole pair, or a double-string-hole pair into a small region of the Quon diagram. (c) Examples of \textit{switching} moves. A switching move either switches a braid type, a braid into a scattering element with an arbitrary scattering angle, add pair of dots, or change the scattering angle of a scattering element (the last one not shown in the figure).}
\label{fig:stretching-overlaying-switching}
\end{figure*}

We now explain three moves \textit{stretching}, \textit{inserting}, and \textit{switching} in detail. 

\subsubsection{Stretching move}
The first move is \textit{stretching}, where a segment of a Majorana string is stretched until it reaches a target location in a Quon diagram. The target location can be classified into three categories based on its position: the bulk, an existing basis encoder, or a newly created basis encoder, as illustrated in FIG.~\ref{fig:stretching-overlaying-switching} (a). During a stretching move, the Majorana string segment passes over other diagrammatic elements, introducing several braids into the Quon diagram. Crucially, stretching must occur entirely within the bulk of the Quon diagram; i.e., it cannot overpass any hole in the background manifold. When the target location lies on existing basis encoder, it introduces one additional open tensor leg to that basis encoder. For example, in the second case of FIG.~\ref{fig:stretching-overlaying-switching} (a), the number of Majorana lines contained in a basis encoder increases from $4$ to $6$, therefore the associated Hilbert space dimension increases from $2$ to $4$. In general, because each existing basis encoder contains at least $2$ Majoranas, adding two extra Majoranas introduces one additional tensor leg. As seen in the figure, the basis encoder also contains several braids, generalizing the simplest case shown in Eq.~\eqref{eq:basis-encoder-simplified}. In Eq.~\eqref{eq:basis-encoder-simplified} (or in Eq.~\eqref{eq:basis-encoder}), dots are added based on the open tensor-leg index; in Appendix~\ref{app:encoding-general}, we generalize this procedure to arbitrary basis encoders. When the marked region lies on a newly introduced basis encoder, that encoder contains only two Majorana lines, as shown in the third case of FIG.~\ref{fig:stretching-overlaying-switching} (a). In this case, no new tensor legs are introduced; however, if a subsequent stretching move stretches a Majorana string segment into this basis encoder, a new tensor leg is now introduced as the basis encoder contains $4$ Majoranas. 

Two remarks are in order. First, to ensure tractability, it is crucial to keep track of the sequence in which the braids are introduced during stretching moves. Second, if we restrict the stretching move solely to bulk regions, the tensor-network components remain unchanged, while the 2D Quon diagram potentially becomes more complex. 

\subsubsection{Inserting move}
The second move is \textit{inserting}, summarized in FIG.~\ref{fig:stretching-overlaying-switching} (b). The intuition behind the move is to obtain a new 2D Quon diagram by overlaying two 2D Quon diagrams representing tractable networks. However, simply overlaying two Quon diagrams would generally result in a non-tractable network, so a more systematic method is required. 

An inserting move is defined as follows: we insert a diagram---either an arbitrary closed Majorana diagram or one with a hole---into a small region in a 2D Quon diagram. When inserting a diagram with a hole, we insert either a string-hole pair or a double-string-hole pair by applying the string-genus relation Eq.~\eqref{eq:string-genus}. Only one of these two options leads to a valid Quon diagram, depending on the number of Majorana lines in the surrounding region. 

We remark that by multiplying an appropriate constant factor, the inserting move can be adjusted to preserve tensor components. Consequently, the inserting move preserves tensor content. Thus, it may seem that the inserting move plays a trivial role; however, when combined with the stretching move, these two moves are already powerful enough to generate a large class of tractable networks, including the punctured matchgates and beyond. The entire class of punctured matchgates\footnote{Technically speaking, this construction recovers all parity-even punctured matchgates; the parity-odd ones can simply be obtained by applying an $X$-gate to one tensor leg of the parity-even ones.} can in fact be generated from the simplest Quon diagram, namely, the one containing no Majorana diagrams [Eq.~\eqref{eq:eval-empty-background}]. One might wonder, how these two seemingly trivial looking moves are capable of generating complex tractable networks. A key insight lies in the non-trivial role of dots. Suppose that after a few rounds of the stretching and the inserting moves, several new tensor legs are introduced in the network. Now compare two components---one where all tensor-leg indices are $0$ and anoother with arbitrary indices. The corresponding Quon diagrams are identical except for the additional dots in the latter. One can imagine reducing the second diagram into the first one by pair-annihilating these extra dots. This can be achieved by moving dots from one location to another, passing several scattering elements, enabled by diagrammatic rewriting rules in TABLE~\ref{tab:majorana-rewriting-rules}. But crucially, the scattering angles change during this process, resulting in a diagram with modified scattering angles compared to that from the all-zero component. Given that punctured matchgates can be generated from the simplest Quon digram, a natural extension is to explore the family of tractable tensor networks generated by successive stretching and inserting moves, starting from a 2D Quon diagram representing a Clifford tensor network. 

\subsubsection{Switching move}
The third and final move is \textit{switching}, which locally modifies a 2D Quon diagram, as illustrated in FIG.~\ref{fig:stretching-overlaying-switching} (c). In a switching move, one considers a local region containing solely two parallel Majorana strings, a generic scattering element, or a braid. If the local region contains two parallel Majorana strings, we insert two parallel dots onto these strings. If the local region contains a generic scattering element, we change the scattering angle to an arbitrary value. Finally, if the local region contains a braid, we either switch the braid type, or transform the braid into a generic scattering element with an arbitrary scattering angle. It is only when a braid is transformed into a generic scattering element that increases the computational cost of the network, by at most a factor of $2$. Therefore, we track of such changes by the \textit{number of transformed scatterings}, often denoted $n_S$. 

Similar to the stretching moves, it is important to remember all the changes made during the switching moves. In particular, when computing a tensor component, we can expand each transformed scattering element into a weighted sum of two braids using the diagrammatic expansions in TABLE~\ref{tab:Majorana-diagram-expansion}. This yields a sum of $2^{n_S}$ Quon diagrams, each of which can be evaluated efficiently. As the number of transformed scatterings $n_S$ increases, a broader class of tensor networks is encompassed, ultimately covering the entire space as $n_S \to \infty$, albeit at increased computational cost. 

An immediate application of three moves is to employ the resulting tractable networks as variational ansatze for quantum many‑body systems. Since these networks generally exhibit high non-Cliffordness and non-matchgateness, with the Clifford and the matchgate components intertwined rather than geometrically separated, it would be interesting to evaluate the performance as variational ansatze. Additionally, factories can also be adapted for quantum circuit and tensor network \textit{obfuscation} by slightly modifying the three moves to preserve tensor components. We discuss this obfuscation application further in Sec.~\ref{sec:outlook}. 

\subsection{Kramers-Wannier Duality}
The second application of the Quon language is a simple pictorial derivation of the Kramers-Wannier (KW) duality of the two-dimensional classical Ising model. The KW duality relates the classical Ising model defined on a planar graph to the model defined on its dual lattice with modified Ising couplings. Here, we focus on the square lattice for clarity, though, the approach immediately generalizes to other two-dimensional lattices or planar graphs. We note that Jones studied the Ising model including the planar algebraic presentation of the partition function and Kramers-Wannier duality, which can be found, e.g., in Refs.~\onlinecite{jones1997introduction, evans1998quantum}. See also Ref.~\onlinecite{jiang2019block} for recent developments. 

First, we observe that the partition function can be written as a matchgate tensor network. Specifically, at each site, we assign an Ising variable $\sigma = \pm 1$, and at each link, we assign an Ising interaction $-J \sigma \sigma'$ with $J$ being the coupling constant. The partition function of the Ising model is equal to 
\begin{equation}
\sum_{ \{ \sigma_j = \pm 1 \} } e^{K \sum_{\langle \sigma, \sigma' \rangle} \sigma \sigma'} = \sum_{ \{ \sigma_j = \pm 1 \} } \prod_{\langle \sigma, \sigma' \rangle} e^{K \sigma \sigma'} , 
\end{equation}
where $\sigma_j$ denotes the Ising variable at site $j$, $K := \frac{J}{k_B T}$, and the product is taken over the links of the lattice. To re-express the partition function, consider the following tensor network: the $4$-leg Parity tensor $P$ is positioned at each site and the $2$-leg tensor $\frac{e^{K}}{\cosh \phi} e^{\phi Z}$, where $\phi$ is defined by the equation $\tanh \phi = e^{-2K}$, is positioned at each link. Since all of the tensors in the tensor network are matchgate tensors, the network forms a matchgate tensor network. In this re-expression, we identify the classical Ising variable $\pm$ with the Pauli-$X$ eigenstates $\vert \pm \rangle$, observe that the $4$-leg COPY tensor becomes the $4$-leg parity tensor $P$ after the Hadamard basis transformation on every leg, and note the following equation representing the Ising interaction: 
\begin{equation}
\langle \sigma \vert \big( e^{K} \openone + e^{-K} Z \big) \vert \sigma' \rangle = \frac{e^{K}}{\cosh \phi} \langle \sigma \vert e^{\phi Z} \vert \sigma' \rangle , 
\end{equation}
where $\vert \sigma \rangle$ and $\vert \sigma' \rangle$ denote two Ising variables associated with two nearest-neighbor sites and $\tanh \phi = e^{-2 K}$. 

Putting all these tensors together, we obtain the following 2D Quon diagram representing the partition function: 
\begin{equation}
\raisebox{-1.8cm}{
\begin{tikzpicture}[scale = 0.5]
\begin{scope}[even odd rule]
\coordinate (A) at (0, 0);

\clip ($(-4.2, 3) + (A)$) rectangle ($(-2.4, 4.8) + (A)$)  ($(-4, 4.8) + (A)$) arc (0:-180:0.2);
\clip ($(-4.2, 3) + (A)$) rectangle ($(-2.4, 4.8) + (A)$)  ($(-2.2, 4.8) + (A)$) arc (0:-180:0.2);

\clip ($(-4.2, 3) + (A)$) rectangle ($(-2.4, 4.8) + (A)$)  ($(-4, 3) + (A)$) arc (0:180:0.2);
\clip ($(-4.2, 3) + (A)$) rectangle ($(-2.4, 4.8) + (A)$)  ($(-2.2, 3) + (A)$) arc (0:180:0.2);

\fill [blue!10] ($(-4.2, 3) + (A)$) rectangle ($(-2.4, 4.8) + (A)$);
\end{scope}

\begin{scope}[even odd rule]
\coordinate (A) at (5.4, 0);

\clip ($(-4.2, 3) + (A)$) rectangle ($(-2.4, 4.8) + (A)$)  ($(-4, 4.8) + (A)$) arc (0:-180:0.2);
\clip ($(-4.2, 3) + (A)$) rectangle ($(-2.4, 4.8) + (A)$)  ($(-2.2, 4.8) + (A)$) arc (0:-180:0.2);

\clip ($(-4.2, 3) + (A)$) rectangle ($(-2.4, 4.8) + (A)$)  ($(-4, 3) + (A)$) arc (0:180:0.2);
\clip ($(-4.2, 3) + (A)$) rectangle ($(-2.4, 4.8) + (A)$)  ($(-2.2, 3) + (A)$) arc (0:180:0.2);

\fill [blue!10] ($(-4.2, 3) + (A)$) rectangle ($(-2.4, 4.8) + (A)$);
\end{scope}

\begin{scope}[even odd rule]
\coordinate (A) at (0, -4);

\clip ($(-4.2, 3) + (A)$) rectangle ($(-2.4, 4.8) + (A)$)  ($(-4, 4.8) + (A)$) arc (0:-180:0.2);
\clip ($(-4.2, 3) + (A)$) rectangle ($(-2.4, 4.8) + (A)$)  ($(-2.2, 4.8) + (A)$) arc (0:-180:0.2);

\clip ($(-4.2, 3) + (A)$) rectangle ($(-2.4, 4.8) + (A)$)  ($(-4, 3) + (A)$) arc (0:180:0.2);
\clip ($(-4.2, 3) + (A)$) rectangle ($(-2.4, 4.8) + (A)$)  ($(-2.2, 3) + (A)$) arc (0:180:0.2);

\fill [blue!10] ($(-4.2, 3) + (A)$) rectangle ($(-2.4, 4.8) + (A)$);
\end{scope}

\begin{scope}[even odd rule]
\coordinate (A) at (5.4, -4);

\clip ($(-4.2, 3) + (A)$) rectangle ($(-2.4, 4.8) + (A)$)  ($(-4, 4.8) + (A)$) arc (0:-180:0.2);
\clip ($(-4.2, 3) + (A)$) rectangle ($(-2.4, 4.8) + (A)$)  ($(-2.2, 4.8) + (A)$) arc (0:-180:0.2);

\clip ($(-4.2, 3) + (A)$) rectangle ($(-2.4, 4.8) + (A)$)  ($(-4, 3) + (A)$) arc (0:180:0.2);
\clip ($(-4.2, 3) + (A)$) rectangle ($(-2.4, 4.8) + (A)$)  ($(-2.2, 3) + (A)$) arc (0:180:0.2);

\fill [blue!10] ($(-4.2, 3) + (A)$) rectangle ($(-2.4, 4.8) + (A)$);
\end{scope}

\begin{scope}[even odd rule]
\coordinate (A) at (0, 0);

\clip ($(-0.8, 3) + (A)$) arc (0:180:1.6)  ($(-2.2, 3) + (A)$) arc (0:180:0.2);

\fill [blue!10] ($(-0.8, 3) + (A)$) arc (0:180:1.6);
\end{scope}

\begin{scope}[even odd rule]
\coordinate (A) at (1.8, 0);

\clip ($(-0.8, 3) + (A)$) arc (0:-180:1.6)  ($(-2.2, 3) + (A)$) arc (0:-180:0.2);

\fill [blue!10] ($(-0.8, 3) + (A)$) arc (0:-180:1.6);
\end{scope}

\begin{scope}[even odd rule]
\coordinate (A) at (3.6, 0);

\clip ($(-0.8, 3) + (A)$) arc (0:180:1.6)  ($(-2.2, 3) + (A)$) arc (0:180:0.2);

\fill [blue!10] ($(-0.8, 3) + (A)$) arc (0:180:1.6);
\end{scope}

\begin{scope}[even odd rule]
\coordinate (A) at (0, -4);

\clip ($(-0.8, 3) + (A)$) arc (0:180:1.6)  ($(-2.2, 3) + (A)$) arc (0:180:0.2);

\fill [blue!10] ($(-0.8, 3) + (A)$) arc (0:180:1.6);
\end{scope}

\begin{scope}[even odd rule]
\coordinate (A) at (1.8, -4);

\clip ($(-0.8, 3) + (A)$) arc (0:-180:1.6)  ($(-2.2, 3) + (A)$) arc (0:-180:0.2);

\fill [blue!10] ($(-0.8, 3) + (A)$) arc (0:-180:1.6);
\end{scope}

\begin{scope}[even odd rule]
\coordinate (A) at (3.6, -4);

\clip ($(-0.8, 3) + (A)$) arc (0:180:1.6)  ($(-2.2, 3) + (A)$) arc (0:180:0.2);

\fill [blue!10] ($(-0.8, 3) + (A)$) arc (0:180:1.6);
\end{scope}

\coordinate (A) at (3.6, -4);
\fill [blue!10] ($(-7.6, 7) + (A)$) rectangle ($(-6.2, 4.8) + (A)$);
\fill [blue!10] ($(-2.2, 7) + (A)$) rectangle ($(-0.8, 4.8) + (A)$);

\fill [blue!10] ($(-7.6, 1.2) + (A)$) rectangle ($(-6.2, 3) + (A)$);
\fill [blue!10] ($(-2.2, 1.2) + (A)$) rectangle ($(-0.8, 3) + (A)$);

\coordinate (A) at (0, 0);

\draw [thick, blue] ($(-4, 4.8) + (A)$) arc (0:-90:0.2);
\draw [thick, blue] ($(-2.6, 4.8) + (A)$) arc (180:270:0.2);
\draw [thick, blue] ($(-2.4, 4.6) + (A)$) arc (90:0:1.6);
\draw [thick, blue] ($(-0.8, 3) + (A)$) arc (180:360:0.2);
\draw [thick, blue] ($(-0.4, 3) + (A)$) arc (180:90:1.6);
\draw [thick, blue] ($(1.2, 4.6) + (A)$) arc (-90:0:0.2);
\draw [thick, blue] ($(2.8, 4.8) + (A)$) arc (180:270:0.2);

\draw [thick, blue] ($(-4, 3) + (A)$) arc (0:90:0.2);
\draw [thick, blue] ($(-2.6, 3) + (A)$) arc (180:0:0.2);
\draw [thick, blue] ($(-2.2, 3) + (A)$) arc (-180:0:1.6);
\draw [thick, blue] ($(1, 3) + (A)$) arc (180:0:0.2);
\draw [thick, blue] ($(2.8, 3) + (A)$) arc (180:90:0.2);

\draw [thick, blue] ($(-4, 3) + (A)$)--($(-4, 0.8) + (A)$);
\draw [thick, blue] ($(-2.6, 3) + (A)$)--($(-2.6, 0.8) + (A)$);

\draw [thick, blue] ($(1.4, 3) + (A)$)--($(1.4, 0.8) + (A)$);
\draw [thick, blue] ($(2.8, 3) + (A)$)--($(2.8, 0.8) + (A)$);

\draw [red] ($(-3.8, 4.8) + (A)$) arc (0:-90:0.4);
\draw ($(-3.5, 4.8) + (A)$) arc (0:-90:0.7);

\draw ($(-4.2, 3.7) + (A)$) arc (90:0:0.7);
\draw [red] ($(-4.2, 3.4) + (A)$) arc (90:0:0.4);

\draw [red] ($(-2, 3) + (A)$) arc (0:180:0.4);
\draw ($(-3.1, 3) + (A)$) arc (180:16:0.7);
\draw ($(-2.4, 4.1) + (A)$) arc (90:13:1.1);
\draw [red] ($(-1, 3) + (A)$) arc (0:90:1.4);

\draw [red] ($(-2.4, 4.4) + (A)$) arc (-90:-180:0.4);
\draw ($(-2.4, 4.1) + (A)$) arc (-90:-180:0.7);

\draw[red] ($(-3.8, 3) + (A)$)--($(-3.8, 0.8) + (A)$);

\draw ($(-3.5, 3) + (A)$)--($(-3.5, 2.23) + (A)$);
\draw ($(-3.5, 0.8) + (A)$)--($(-3.5, 1.77) + (A)$);

\draw ($(-3.1, 3) + (A)$)--($(-3.1, 2.23) + (A)$);
\draw ($(-3.1, 0.8) + (A)$)--($(-3.1, 1.77) + (A)$);

\draw[red] ($(-2.8, 3) + (A)$)--($(-2.8, 0.8) + (A)$);

\draw[red] ($(1.6, 3) + (A)$)--($(1.6, 0.8) + (A)$);

\draw ($(1.9, 3) + (A)$)--($(1.9, 2.23) + (A)$);
\draw ($(1.9, 0.8) + (A)$)--($(1.9, 1.77) + (A)$);

\draw ($(2.3, 3) + (A)$)--($(2.3, 2.23) + (A)$);
\draw ($(2.3, 0.8) + (A)$)--($(2.3, 1.77) + (A)$);

\draw[red] ($(2.6, 3) + (A)$)--($(2.6, 0.8) + (A)$);

\draw [red] ($(-1, 3) + (A)$) arc (180:360:0.4);
\draw ($(0.1, 3) + (A)$) arc (0:-164:0.7);
\draw ($(0.5, 3) + (A)$) arc (0:-167:1.1);
\draw [red] ($(-2, 3) + (A)$) arc (180:360:1.4);

\draw [red] ($(0.8, 3) + (A)$) arc (180:0:0.4);
\draw ($(0.5, 3) + (A)$) arc (180:0:0.7);
\draw ($(0.1, 3) + (A)$) arc (180:90:1.1);
\draw [red] ($(-0.2, 3) + (A)$) arc (180:90:1.4);

\draw [red] ($(1.2, 4.4) + (A)$) arc (-90:0:0.4);
\draw ($(1.2, 4.1) + (A)$) arc (-90:0:0.7);

\draw ($(2.3, 4.8) + (A)$) arc (180:270:0.7);
\draw [red] ($(2.6, 4.8) + (A)$) arc (180:270:0.4);

\draw ($(3, 3.7) + (A)$) arc (90:180:0.7);
\draw [red] ($(3, 3.4) + (A)$) arc (90:180:0.4);

\coordinate (A) at (0, -4);

\draw [thick, blue] ($(-4, 4.8) + (A)$) arc (0:-90:0.2);
\draw [thick, blue] ($(-2.6, 4.8) + (A)$) arc (180:270:0.2);
\draw [thick, blue] ($(-2.4, 4.6) + (A)$) arc (90:0:1.6);
\draw [thick, blue] ($(-0.8, 3) + (A)$) arc (180:360:0.2);
\draw [thick, blue] ($(-0.4, 3) + (A)$) arc (180:90:1.6);
\draw [thick, blue] ($(1.2, 4.6) + (A)$) arc (-90:0:0.2);
\draw [thick, blue] ($(2.8, 4.8) + (A)$) arc (180:270:0.2);

\draw [thick, blue] ($(-4, 3) + (A)$) arc (0:90:0.2);
\draw [thick, blue] ($(-2.6, 3) + (A)$) arc (180:0:0.2);
\draw [thick, blue] ($(-2.2, 3) + (A)$) arc (-180:0:1.6);
\draw [thick, blue] ($(1, 3) + (A)$) arc (180:0:0.2);
\draw [thick, blue] ($(2.8, 3) + (A)$) arc (180:90:0.2);

\draw [thick, blue] ($(-4, 3) + (A)$)--($(-4, 1.2) + (A)$);
\draw [thick, blue] ($(-2.6, 3) + (A)$)--($(-2.6, 1.2) + (A)$);

\draw [thick, blue] ($(1.4, 3) + (A)$)--($(1.4, 1.2) + (A)$);
\draw [thick, blue] ($(2.8, 3) + (A)$)--($(2.8, 1.2) + (A)$);

\draw [red] ($(-3.8, 4.8) + (A)$) arc (0:-90:0.4);
\draw ($(-3.5, 4.8) + (A)$) arc (0:-90:0.7);

\draw ($(-4.2, 3.7) + (A)$) arc (90:0:0.7);
\draw [red] ($(-4.2, 3.4) + (A)$) arc (90:0:0.4);

\draw [red] ($(-2, 3) + (A)$) arc (0:180:0.4);
\draw ($(-3.1, 3) + (A)$) arc (180:16:0.7);
\draw ($(-2.4, 4.1) + (A)$) arc (90:13:1.1);
\draw [red] ($(-1, 3) + (A)$) arc (0:90:1.4);

\draw [red] ($(-2.4, 4.4) + (A)$) arc (-90:-180:0.4);
\draw ($(-2.4, 4.1) + (A)$) arc (-90:-180:0.7);

\draw[red] ($(-3.8, 3) + (A)$)--($(-3.8, 1.2) + (A)$);

\draw ($(-3.5, 3) + (A)$)--($(-3.5, 2.23) + (A)$);
\draw ($(-3.5, 1.2) + (A)$)--($(-3.5, 1.77) + (A)$);

\draw ($(-3.1, 3) + (A)$)--($(-3.1, 2.23) + (A)$);
\draw ($(-3.1, 1.2) + (A)$)--($(-3.1, 1.77) + (A)$);

\draw[red] ($(-2.8, 3) + (A)$)--($(-2.8, 1.2) + (A)$);

\draw[red] ($(1.6, 3) + (A)$)--($(1.6, 1.2) + (A)$);

\draw ($(1.9, 3) + (A)$)--($(1.9, 2.23) + (A)$);
\draw ($(1.9, 1.2) + (A)$)--($(1.9, 1.77) + (A)$);

\draw ($(2.3, 3) + (A)$)--($(2.3, 2.23) + (A)$);
\draw ($(2.3, 1.2) + (A)$)--($(2.3, 1.77) + (A)$);

\draw[red] ($(2.6, 3) + (A)$)--($(2.6, 1.2) + (A)$);

\draw [red] ($(-1, 3) + (A)$) arc (180:360:0.4);
\draw ($(0.1, 3) + (A)$) arc (0:-164:0.7);
\draw ($(0.5, 3) + (A)$) arc (0:-167:1.1);
\draw [red] ($(-2, 3) + (A)$) arc (180:360:1.4);

\draw [red] ($(0.8, 3) + (A)$) arc (180:0:0.4);
\draw ($(0.5, 3) + (A)$) arc (180:0:0.7);
\draw ($(0.1, 3) + (A)$) arc (180:90:1.1);
\draw [red] ($(-0.2, 3) + (A)$) arc (180:90:1.4);

\draw [red] ($(1.2, 4.4) + (A)$) arc (-90:0:0.4);
\draw ($(1.2, 4.1) + (A)$) arc (-90:0:0.7);

\draw ($(2.3, 4.8) + (A)$) arc (180:270:0.7);
\draw [red] ($(2.6, 4.8) + (A)$) arc (180:270:0.4);

\draw ($(3, 3.7) + (A)$) arc (90:180:0.7);
\draw [red] ($(3, 3.4) + (A)$) arc (90:180:0.4);

\coordinate (A) at (-1.5, 3);

\draw ($(0.3, 0) + (A)$) arc (0:360:0.3);
\draw ($(0.2, 0) + (A)$) arc (0:360:0.2);

\node at ($(0, 0) + (A)$) {\tiny $\phi$};

\coordinate (A) at (-1.5, -1);

\draw ($(0.3, 0) + (A)$) arc (0:360:0.3);
\draw ($(0.2, 0) + (A)$) arc (0:360:0.2);

\node at ($(0, 0) + (A)$) {\tiny $\phi$};

\coordinate (A) at (-3.3, 2);

\draw ($(0.3, 0) + (A)$) arc (0:360:0.3);
\draw ($(0.2, 0) + (A)$) arc (0:360:0.2);

\node at ($(0, 0) + (A)$) {\tiny $\phi$};

\coordinate (A) at (-3.3, -2);

\draw ($(0.3, 0) + (A)$) arc (0:360:0.3);
\draw ($(0.2, 0) + (A)$) arc (0:360:0.2);

\node at ($(0, 0) + (A)$) {\tiny $\phi$};

\coordinate (A) at (2.1, 2);

\draw ($(0.3, 0) + (A)$) arc (0:360:0.3);
\draw ($(0.2, 0) + (A)$) arc (0:360:0.2);

\node at ($(0, 0) + (A)$) {\tiny $\phi$};

\coordinate (A) at (2.1, -2);

\draw ($(0.3, 0) + (A)$) arc (0:360:0.3);
\draw ($(0.2, 0) + (A)$) arc (0:360:0.2);

\node at ($(0, 0) + (A)$) {\tiny $\phi$};

\draw[ultra thick, teal, opacity=0.5] (-4.2, -0.1)--(3, -0.1);
\draw[ultra thick, teal, opacity=0.5] (-4.2, 3.9)--(3, 3.9);
\draw[ultra thick, teal, opacity=0.5] (-3.3, -2.8)--(-3.3, 4.8);
\draw[ultra thick, teal, opacity=0.5] (2.1, -2.8)--(2.1, 4.8);

\end{tikzpicture}
} \,\, , 
\end{equation}
where we overlay the square grid on top of the Quon diagram as a trace of the original square lattice and omit the arrows on the scattering elements, which are assumed to be oriented top-to-bottom. We now make a series of transformations: using the string-genus relation Eq.~\eqref{eq:string-genus}, remove the string-hole pair on every plaquette of the original lattice and insert the string-hole pair on every plaquette of the dual lattice: 
\begin{equation}
\raisebox{-1.6cm}{
\begin{tikzpicture}[scale=0.7]
\begin{scope}[even odd rule]
\coordinate (A) at (-4.8-0.85, 0.6-0.85);
\coordinate (B) at (0-0.95, 5.4-0.95);

\clip (A) rectangle (B)  (-2.9, 3.9) arc (0:360:0.4);
\clip (A) rectangle (B)  (-2.9+1.8, 3.9) arc (0:360:0.4);
\clip (A) rectangle (B)  (-2.9-1.8, 3.9) arc (0:360:0.4);

\clip (A) rectangle (B)  (-2.9, 3.9-1.8) arc (0:360:0.4);
\clip (A) rectangle (B)  (-2.9+1.8, 3.9-1.8) arc (0:360:0.4);
\clip (A) rectangle (B)  (-2.9-1.8, 3.9-1.8) arc (0:360:0.4);

\clip (A) rectangle (B)  (-2.9, 3.9-3.6) arc (0:360:0.4);
\clip (A) rectangle (B)  (-2.9+1.8, 3.9-3.6) arc (0:360:0.4);
\clip (A) rectangle (B)  (-2.9-1.8, 3.9-3.6) arc (0:360:0.4);

\fill [blue!10] (A) rectangle (B);
\end{scope}

\coordinate (A) at (-1.8, 0);

\draw [red] ($(-2.8, 3.9) + (A)$) arc (0:360:0.5);
\draw [thick, blue] ($(-2.9, 3.9) + (A)$) arc (0:360:0.4);

\draw ($({-3.3-0.7*cos(40)}, {3.9-0.7*sin(40)}) + (A)$) arc (220:254:0.7);
\draw ($({-3.3-0.7*cos(50)}, {3.9+0.7*sin(50)}) + (A)$) arc (130:140:0.7);

\draw ($({-3.3+0.7*cos(16)}, {3.9+0.7*sin(16)}) + (A)$) arc (16:50:0.7);
\draw ($({-3.3+0.7*cos(16)}, {3.9-0.7*sin(16)}) + (A)$) arc (-16:-74:0.7);

\coordinate (A) at (0, 0);

\draw [red] ($(-2.8, 3.9) + (A)$) arc (0:360:0.5);
\draw [thick, blue] ($(-2.9, 3.9) + (A)$) arc (0:360:0.4);

\draw ($({-3.3-0.7*cos(16)}, {3.9-0.7*sin(16)}) + (A)$) arc (196:254:0.7);
\draw ($({-3.3-0.7*cos(16)}, {3.9+0.7*sin(16)}) + (A)$) arc (164:130:0.7);

\draw ($({-3.3+0.7*cos(16)}, {3.9+0.7*sin(16)}) + (A)$) arc (16:50:0.7);
\draw ($({-3.3+0.7*cos(16)}, {3.9-0.7*sin(16)}) + (A)$) arc (-16:-74:0.7);

\coordinate (A) at (1.8, 0);

\draw [red] ($(-2.8, 3.9) + (A)$) arc (0:360:0.5);
\draw [thick, blue] ($(-2.9, 3.9) + (A)$) arc (0:360:0.4);

\draw ($({-3.3-0.7*cos(16)}, {3.9-0.7*sin(16)}) + (A)$) arc (196:254:0.7);
\draw ($({-3.3-0.7*cos(16)}, {3.9+0.7*sin(16)}) + (A)$) arc (164:130:0.7);

\draw ($({-3.3+0.7*cos(40)}, {3.9+0.7*sin(40)}) + (A)$) arc (40:50:0.7);
\draw ($({-3.3+0.7*cos(40)}, {3.9-0.7*sin(40)}) + (A)$) arc (-40:-74:0.7);

\coordinate (A) at (-1.8, -1.8);

\draw [red] ($(-2.8, 3.9) + (A)$) arc (0:360:0.5);
\draw [thick, blue] ($(-2.9, 3.9) + (A)$) arc (0:360:0.4);

\draw ($({-3.3-0.7*cos(40)}, {3.9-0.7*sin(40)}) + (A)$) arc (220:254:0.7);
\draw ($({-3.3-0.7*cos(40)}, {3.9+0.7*sin(40)}) + (A)$) arc (140:106:0.7);

\draw ($({-3.3+0.7*cos(16)}, {3.9+0.7*sin(16)}) + (A)$) arc (16:74:0.7);
\draw ($({-3.3+0.7*cos(16)}, {3.9-0.7*sin(16)}) + (A)$) arc (-16:-74:0.7);

\coordinate (A) at (0, -1.8);

\draw [red] ($(-2.8, 3.9) + (A)$) arc (0:360:0.5);
\draw [thick, blue] ($(-2.9, 3.9) + (A)$) arc (0:360:0.4);

\draw ($({-3.3-0.7*cos(16)}, {3.9-0.7*sin(16)}) + (A)$) arc (196:254:0.7);
\draw ($({-3.3-0.7*cos(16)}, {3.9+0.7*sin(16)}) + (A)$) arc (164:106:0.7);

\draw ($({-3.3+0.7*cos(16)}, {3.9+0.7*sin(16)}) + (A)$) arc (16:74:0.7);
\draw ($({-3.3+0.7*cos(16)}, {3.9-0.7*sin(16)}) + (A)$) arc (-16:-74:0.7);

\coordinate (A) at (1.8, -1.8);

\draw [red] ($(-2.8, 3.9) + (A)$) arc (0:360:0.5);
\draw [thick, blue] ($(-2.9, 3.9) + (A)$) arc (0:360:0.4);

\draw ($({-3.3-0.7*cos(16)}, {3.9-0.7*sin(16)}) + (A)$) arc (196:254:0.7);
\draw ($({-3.3-0.7*cos(16)}, {3.9+0.7*sin(16)}) + (A)$) arc (164:106:0.7);

\draw ($({-3.3+0.7*cos(40)}, {3.9+0.7*sin(40)}) + (A)$) arc (40:74:0.7);
\draw ($({-3.3+0.7*cos(74)}, {3.9-0.7*sin(74)}) + (A)$) arc (-74:-40:0.7);

\coordinate (A) at (-1.8, -3.6);

\draw [red] ($(-2.8, 3.9) + (A)$) arc (0:360:0.5);
\draw [thick, blue] ($(-2.9, 3.9) + (A)$) arc (0:360:0.4);

\draw ($({-3.3-0.7*cos(40)}, {3.9-0.7*sin(40)}) + (A)$) arc (220:230:0.7);
\draw ($({-3.3-0.7*cos(40)}, {3.9+0.7*sin(40)}) + (A)$) arc (140:106:0.7);

\draw ($({-3.3+0.7*cos(16)}, {3.9+0.7*sin(16)}) + (A)$) arc (16:74:0.7);
\draw ($({-3.3+0.7*cos(16)}, {3.9-0.7*sin(16)}) + (A)$) arc (-16:-50:0.7);

\coordinate (A) at (0, -3.6);

\draw [red] ($(-2.8, 3.9) + (A)$) arc (0:360:0.5);
\draw [thick, blue] ($(-2.9, 3.9) + (A)$) arc (0:360:0.4);

\draw ($({-3.3-0.7*cos(16)}, {3.9-0.7*sin(16)}) + (A)$) arc (196:230:0.7);
\draw ($({-3.3-0.7*cos(16)}, {3.9+0.7*sin(16)}) + (A)$) arc (164:106:0.7);

\draw ($({-3.3+0.7*cos(16)}, {3.9+0.7*sin(16)}) + (A)$) arc (16:74:0.7);
\draw ($({-3.3+0.7*cos(16)}, {3.9-0.7*sin(16)}) + (A)$) arc (-16:-50:0.7);

\coordinate (A) at (1.8, -3.6);

\draw [red] ($(-2.8, 3.9) + (A)$) arc (0:360:0.5);
\draw [thick, blue] ($(-2.9, 3.9) + (A)$) arc (0:360:0.4);

\draw ($({-3.3-0.7*cos(16)}, {3.9-0.7*sin(16)}) + (A)$) arc (196:230:0.7);
\draw ($({-3.3-0.7*cos(16)}, {3.9+0.7*sin(16)}) + (A)$) arc (164:106:0.7);

\draw ($({-3.3+0.7*cos(40)}, {3.9+0.7*sin(40)}) + (A)$) arc (40:74:0.7);
\draw ($({-3.3+0.7*cos(40)}, {3.9-0.7*sin(40)}) + (A)$) arc (-40:-50:0.7);

\coordinate (A) at (-2.4, 3.9);

\draw ($(0.3, 0) + (A)$) arc (0:360:0.3);
\draw ($(0.2, 0) + (A)$) arc (0:360:0.2);

\node at ($(0, 0) + (A)$) {\scriptsize $\phi$};

\coordinate (A) at (-4.2, 3.9);

\draw ($(0.3, 0) + (A)$) arc (0:360:0.3);
\draw ($(0.2, 0) + (A)$) arc (0:360:0.2);

\node at ($(0, 0) + (A)$) {\scriptsize $\phi$};

\coordinate (A) at (-2.4, 2.1);

\draw ($(0.3, 0) + (A)$) arc (0:360:0.3);
\draw ($(0.2, 0) + (A)$) arc (0:360:0.2);

\node at ($(0, 0) + (A)$) {\scriptsize $\phi$};

\coordinate (A) at (-4.2, 2.1);

\draw ($(0.3, 0) + (A)$) arc (0:360:0.3);
\draw ($(0.2, 0) + (A)$) arc (0:360:0.2);

\node at ($(0, 0) + (A)$) {\scriptsize $\phi$};

\coordinate (A) at (-2.4, 2.1-1.8);

\draw ($(0.3, 0) + (A)$) arc (0:360:0.3);
\draw ($(0.2, 0) + (A)$) arc (0:360:0.2);

\node at ($(0, 0) + (A)$) {\scriptsize $\phi$};

\coordinate (A) at (-4.2, 2.1-1.8);

\draw ($(0.3, 0) + (A)$) arc (0:360:0.3);
\draw ($(0.2, 0) + (A)$) arc (0:360:0.2);

\node at ($(0, 0) + (A)$) {\scriptsize $\phi$};

\coordinate (A) at (-3.3-1.8, 4.8-1.8);

\draw ($(0.3, 0) + (A)$) arc (0:360:0.3);
\draw ($(0.2, 0) + (A)$) arc (0:360:0.2);

\node at ($(0, 0) + (A)$) {\scriptsize $\theta$};

\coordinate (A) at (-3.3, 3);

\draw ($(0.3, 0) + (A)$) arc (0:360:0.3);
\draw ($(0.2, 0) + (A)$) arc (0:360:0.2);

\node at ($(0, 0) + (A)$) {\scriptsize $\theta$};

\coordinate (A) at (-1.5, 3);

\draw ($(0.3, 0) + (A)$) arc (0:360:0.3);
\draw ($(0.2, 0) + (A)$) arc (0:360:0.2);

\node at ($(0, 0) + (A)$) {\scriptsize $\theta$};

\coordinate (A) at (-3.3-1.8, 1.2);

\draw ($(0.3, 0) + (A)$) arc (0:360:0.3);
\draw ($(0.2, 0) + (A)$) arc (0:360:0.2);

\node at ($(0, 0) + (A)$) {\scriptsize $\theta$};

\coordinate (A) at (-3.3, 1.2);

\draw ($(0.3, 0) + (A)$) arc (0:360:0.3);
\draw ($(0.2, 0) + (A)$) arc (0:360:0.2);

\node at ($(0, 0) + (A)$) {\scriptsize $\theta$};

\coordinate (A) at (-1.5, 1.2);

\draw ($(0.3, 0) + (A)$) arc (0:360:0.3);
\draw ($(0.2, 0) + (A)$) arc (0:360:0.2);

\node at ($(0, 0) + (A)$) {\scriptsize $\theta$};

\draw[ultra thick, teal, opacity=0.5] (-5.65, 1.2)--(-0.95, 1.2);
\draw[ultra thick, teal, opacity=0.5] (-5.65, 3)--(-0.95, 3);

\draw[ultra thick, teal, opacity=0.5] (-4.2, -0.25)--(-4.2, 4.45);
\draw[ultra thick, teal, opacity=0.5] (-2.4, -0.25)--(-2.4, 4.45);

\draw[ultra thick, densely dashed, brown, opacity=0.5] (-5.65, 0.3)--(-0.95, 0.3);
\draw[ultra thick, densely dashed, brown, opacity=0.5] (-5.65, 2.1)--(-0.95, 2.1);
\draw[ultra thick, densely dashed, brown, opacity=0.5] (-5.65, 3.9)--(-0.95, 3.9);

\draw[ultra thick, densely dashed, brown, opacity=0.5] (-5.1, -0.25)--(-5.1, 4.45);
\draw[ultra thick, densely dashed, brown, opacity=0.5] (-3.3, -0.25)--(-3.3, 4.45);
\draw[ultra thick, densely dashed, brown, opacity=0.5] (-1.5, -0.25)--(-1.5, 4.45);

\end{tikzpicture}
} 
\,\, = \raisebox{-1.6cm}{
\begin{tikzpicture}[scale=0.7]
\begin{scope}[even odd rule]
\clip (-4.8, 0.6) rectangle (0, 5.4)  (-2.9, 3.9) arc (0:360:0.4);
\clip (-4.8, 0.6) rectangle (0, 5.4)  (-2.9+1.8, 3.9) arc (0:360:0.4);
\clip (-4.8, 0.6) rectangle (0, 5.4)  (-2.9, 3.9-1.8) arc (0:360:0.4);
\clip (-4.8, 0.6) rectangle (0, 5.4)  (-2.9+1.8, 3.9-1.8) arc (0:360:0.4);

\clip (-4.8, 0.6) rectangle (0, 5.4)  (-5.1, 4.3) arc (90:-90:0.4);
\clip (-4.8, 0.6) rectangle (0, 5.4)  (-5.1, 4.3-1.8) arc (90:-90:0.4);

\clip (-4.8, 0.6) rectangle (0, 5.4)  ({0.3-0.4*cos(42)}, {3.9+0.4*sin(42)}) arc (138:222:0.4);
\clip (-4.8, 0.6) rectangle (0, 5.4)  ({0.3-0.4*cos(42)}, {3.9-1.8+0.4*sin(42)}) arc (138:222:0.4);

\clip (-4.8, 0.6) rectangle (0, 5.4)  ({-2.9-0.4*(1-cos(48))}, {3.9+1.8-0.4*sin(48)}) arc (-48:-132:0.4);
\clip (-4.8, 0.6) rectangle (0, 5.4)  ({-1.1-0.4*(1-cos(48))}, {3.9+1.8-0.4*sin(48)}) arc (-48:-132:0.4);

\clip (-4.8, 0.6) rectangle (0, 5.4)  ({-2.9-0.4+0.4*cos(48)}, {0.3+0.4*sin(48)}) arc (48:132:0.4);
\clip (-4.8, 0.6) rectangle (0, 5.4)  ({-1.1-0.4+0.4*cos(48)}, {0.3+0.4*sin(48)}) arc (48:132:0.4);

\fill [blue!10] (-4.8, 0.6) rectangle (0, 5.4);
\end{scope}

\coordinate (A) at (0, 0);

\draw [red] ($({-4.6-0.5*(1-cos(53))}, {3.9+0.5*sin(53)}) + (A)$) arc (53:-53:0.5);
\draw [thick, blue] ($({-4.7-0.4*(1-cos(42))}, {3.9+0.4*sin(42)}) + (A)$) arc (42:-42:0.4);

\draw ($({-3.3-1.8+0.7*cos(16)}, {3.9+0.7*sin(16)}) + (A)$) arc (16:64:0.7);
\draw ($({-3.3-1.8+0.7*cos(16)}, {3.9-0.7*sin(16)}) + (A)$) arc (-16:-64:0.7);

\draw [red] ($({-4.6-0.5*(1-cos(53))}, {3.9-1.8+0.5*sin(53)}) + (A)$) arc (53:-53:0.5);
\draw [thick, blue] ($({-4.7-0.4*(1-cos(42))}, {3.9-1.8+0.4*sin(42)}) + (A)$) arc (42:-42:0.4);

\draw ($({-3.3-1.8+0.7*cos(16)}, {3.9-1.8+0.7*sin(16)}) + (A)$) arc (16:64:0.7);
\draw ($({-3.3-1.8+0.7*cos(16)}, {3.9-1.8-0.7*sin(16)}) + (A)$) arc (-16:-64:0.7);

\draw [red] ($({-2.8-0.5*(1-cos(37))}, {3.9+1.8-0.5*sin(37)}) + (A)$) arc (-37:-143:0.5);
\draw [thick, blue] ($({-2.9-0.4*(1-cos(48))}, {3.9+1.8-0.4*sin(48)}) + (A)$) arc (-48:-132:0.4);

\draw ($({-3.3+0.7*cos(26)}, {3.9+1.8-0.7*sin(26)}) + (A)$) arc (-26:-74:0.7);
\draw ($({-3.3-0.7*cos(26)}, {3.9+1.8-0.7*sin(26)}) + (A)$) arc (206:254:0.7);

\draw [red] ($({-2.8+1.8-0.5*(1-cos(37))}, {3.9+1.8-0.5*sin(37)}) + (A)$) arc (-37:-143:0.5);
\draw [thick, blue] ($({-2.9+1.8-0.4*(1-cos(48))}, {3.9+1.8-0.4*sin(48)}) + (A)$) arc (-48:-132:0.4);

\draw ($({-3.3+1.8+0.7*cos(26)}, {3.9+1.8-0.7*sin(26)}) + (A)$) arc (-26:-74:0.7);
\draw ($({-3.3+1.8-0.7*cos(26)}, {3.9+1.8-0.7*sin(26)}) + (A)$) arc (206:254:0.7);

\draw [red] ($(-2.8, 3.9) + (A)$) arc (0:360:0.5);
\draw [thick, blue] ($(-2.9, 3.9) + (A)$) arc (0:360:0.4);

\draw ($({-3.3-0.7*cos(16)}, {3.9-0.7*sin(16)}) + (A)$) arc (196:254:0.7);
\draw ($({-3.3-0.7*cos(16)}, {3.9+0.7*sin(16)}) + (A)$) arc (164:106:0.7);

\draw ($({-3.3+0.7*cos(16)}, {3.9+0.7*sin(16)}) + (A)$) arc (16:74:0.7);
\draw ($({-3.3+0.7*cos(16)}, {3.9-0.7*sin(16)}) + (A)$) arc (-16:-74:0.7);

\coordinate (A) at (1.8, 0);

\draw [red] ($(-2.8, 3.9) + (A)$) arc (0:360:0.5);
\draw [thick, blue] ($(-2.9, 3.9) + (A)$) arc (0:360:0.4);

\draw ($({-3.3-0.7*cos(16)}, {3.9-0.7*sin(16)}) + (A)$) arc (196:254:0.7);
\draw ($({-3.3-0.7*cos(16)}, {3.9+0.7*sin(16)}) + (A)$) arc (164:106:0.7);

\draw ($({-3.3+0.7*cos(16)}, {3.9+0.7*sin(16)}) + (A)$) arc (16:74:0.7);
\draw ($({-3.3+0.7*cos(16)}, {3.9-0.7*sin(16)}) + (A)$) arc (-16:-74:0.7);

\draw ($({1.8-3.3-0.7*cos(16)}, {3.9-0.7*sin(16)}) + (A)$) arc (196:244:0.7);
\draw ($({1.8-3.3-0.7*cos(16)}, {3.9+0.7*sin(16)}) + (A)$) arc (164:116:0.7);

\draw [red] ($({-1.5-0.5*cos(53)}, {3.9+0.5*sin(53)}) + (A)$) arc (127:233:0.5);
\draw [thick, blue] ($({-1.5-0.4*cos(42)}, {3.9+0.4*sin(42)}) + (A)$) arc (138:222:0.4);

\draw ($({1.8-3.3-0.7*cos(16)}, {3.9-1.8-0.7*sin(16)}) + (A)$) arc (196:244:0.7);
\draw ($({1.8-3.3-0.7*cos(16)}, {3.9-1.8+0.7*sin(16)}) + (A)$) arc (164:116:0.7);

\draw [red] ($({-1.5-0.5*cos(53)}, {3.9-1.8+0.5*sin(53)}) + (A)$) arc (127:233:0.5);
\draw [thick, blue] ($({-1.5-0.4*cos(42)}, {3.9-1.8+0.4*sin(42)}) + (A)$) arc (138:222:0.4);

\draw [red] ($({-2.8-3.6-0.5*(1-cos(37))}, {3.9+1.8-0.5*sin(37)}) + (A)$) arc (-37:-53:0.5);
\draw ($({-3.3-3.6+0.7*cos(26)}, {3.9+1.8-0.7*sin(26)}) + (A)$) arc (-26:-64:0.7);

\draw [red] ($({-2.8+1.8-0.5*(1+cos(37))}, {3.9+1.8-0.5*sin(37)}) + (A)$) arc (-143:-127:0.5);
\draw ($({-3.3+1.8-0.7*cos(26)}, {3.9+1.8-0.7*sin(26)}) + (A)$) arc (206:244:0.7);

\draw [red] ($({-2.8-3.6-0.5*(1-cos(37))}, {0.3+0.5*sin(37)}) + (A)$) arc (37:53:0.5);
\draw ($({-3.3-3.6+0.7*cos(26)}, {0.3+0.7*sin(26)}) + (A)$) arc (26:64:0.7);

\draw [red] ($({-2.8+1.8-0.5*(1+cos(37))}, {0.3+0.5*sin(37)}) + (A)$) arc (143:127:0.5);
\draw ($({-3.3+1.8-0.7*cos(26)}, {0.3+0.7*sin(26)}) + (A)$) arc (154:116:0.7);

\coordinate (A) at (0, -1.8);

\draw [red] ($(-2.8, 3.9) + (A)$) arc (0:360:0.5);
\draw [thick, blue] ($(-2.9, 3.9) + (A)$) arc (0:360:0.4);

\draw ($({-3.3-0.7*cos(16)}, {3.9-0.7*sin(16)}) + (A)$) arc (196:254:0.7);
\draw ($({-3.3-0.7*cos(16)}, {3.9+0.7*sin(16)}) + (A)$) arc (164:106:0.7);

\draw ($({-3.3+0.7*cos(16)}, {3.9+0.7*sin(16)}) + (A)$) arc (16:74:0.7);
\draw ($({-3.3+0.7*cos(16)}, {3.9-0.7*sin(16)}) + (A)$) arc (-16:-74:0.7);

\coordinate (A) at (1.8, -1.8);

\draw [red] ($(-2.8, 3.9) + (A)$) arc (0:360:0.5);
\draw [thick, blue] ($(-2.9, 3.9) + (A)$) arc (0:360:0.4);

\draw ($({-3.3-0.7*cos(16)}, {3.9-0.7*sin(16)}) + (A)$) arc (196:254:0.7);
\draw ($({-3.3-0.7*cos(16)}, {3.9+0.7*sin(16)}) + (A)$) arc (164:106:0.7);

\draw ($({-3.3+0.7*cos(16)}, {3.9+0.7*sin(16)}) + (A)$) arc (16:74:0.7);
\draw ($({-3.3+0.7*cos(16)}, {3.9-0.7*sin(16)}) + (A)$) arc (-16:-74:0.7);

\coordinate (A) at (0, -3.6);

\draw [red] ($({-2.8-0.5+0.5*cos(37)}, {3.9+0.5*sin(37)}) + (A)$) arc (37:143:0.5);
\draw [thick, blue] ($({-2.9-0.4+0.4*cos(48)}, {3.9+0.4*sin(48)}) + (A)$) arc (48:132:0.4);

\draw ($({-3.3-0.7*cos(74)}, {3.9+0.7*sin(74)}) + (A)$) arc (106:154:0.7);
\draw ($({-3.3+0.7*cos(74)}, {3.9+0.7*sin(74)}) + (A)$) arc (74:26:0.7);

\draw [red] ($({-1-0.5+0.5*cos(37)}, {3.9+0.5*sin(37)}) + (A)$) arc (37:143:0.5);
\draw [thick, blue] ($({-1.1-0.4+0.4*cos(48)}, {3.9+0.4*sin(48)}) + (A)$) arc (48:132:0.4);

\draw ($({-3.3+1.8-0.7*cos(74)}, {3.9+0.7*sin(74)}) + (A)$) arc (106:154:0.7);
\draw ($({-3.3+1.8+0.7*cos(74)}, {3.9+0.7*sin(74)}) + (A)$) arc (74:26:0.7);

\coordinate (A) at (-2.4, 3.9);

\draw ($(0.3, 0) + (A)$) arc (0:360:0.3);
\draw ($(0.2, 0) + (A)$) arc (0:360:0.2);

\node at ($(0, 0) + (A)$) {\scriptsize $\theta$};

\coordinate (A) at (-4.2, 3.9);

\draw ($(0.3, 0) + (A)$) arc (0:360:0.3);
\draw ($(0.2, 0) + (A)$) arc (0:360:0.2);

\node at ($(0, 0) + (A)$) {\scriptsize $\theta$};

\coordinate (A) at (-0.6, 3.9);

\draw ($(0.3, 0) + (A)$) arc (0:360:0.3);
\draw ($(0.2, 0) + (A)$) arc (0:360:0.2);

\node at ($(0, 0) + (A)$) {\scriptsize $\theta$};

\coordinate (A) at (-2.4, 2.1);

\draw ($(0.3, 0) + (A)$) arc (0:360:0.3);
\draw ($(0.2, 0) + (A)$) arc (0:360:0.2);

\node at ($(0, 0) + (A)$) {\scriptsize $\theta$};

\coordinate (A) at (-4.2, 2.1);

\draw ($(0.3, 0) + (A)$) arc (0:360:0.3);
\draw ($(0.2, 0) + (A)$) arc (0:360:0.2);

\node at ($(0, 0) + (A)$) {\scriptsize $\theta$};

\coordinate (A) at (-0.6, 2.1);

\draw ($(0.3, 0) + (A)$) arc (0:360:0.3);
\draw ($(0.2, 0) + (A)$) arc (0:360:0.2);

\node at ($(0, 0) + (A)$) {\scriptsize $\theta$};

\coordinate (A) at (-3.3, 4.8);

\draw ($(0.3, 0) + (A)$) arc (0:360:0.3);
\draw ($(0.2, 0) + (A)$) arc (0:360:0.2);

\node at ($(0, 0) + (A)$) {\scriptsize $\phi$};

\coordinate (A) at (-1.5, 4.8);

\draw ($(0.3, 0) + (A)$) arc (0:360:0.3);
\draw ($(0.2, 0) + (A)$) arc (0:360:0.2);

\node at ($(0, 0) + (A)$) {\scriptsize $\phi$};

\coordinate (A) at (-3.3, 3);

\draw ($(0.3, 0) + (A)$) arc (0:360:0.3);
\draw ($(0.2, 0) + (A)$) arc (0:360:0.2);

\node at ($(0, 0) + (A)$) {\scriptsize $\phi$};

\coordinate (A) at (-1.5, 3);

\draw ($(0.3, 0) + (A)$) arc (0:360:0.3);
\draw ($(0.2, 0) + (A)$) arc (0:360:0.2);

\node at ($(0, 0) + (A)$) {\scriptsize $\phi$};

\coordinate (A) at (-3.3, 1.2);

\draw ($(0.3, 0) + (A)$) arc (0:360:0.3);
\draw ($(0.2, 0) + (A)$) arc (0:360:0.2);

\node at ($(0, 0) + (A)$) {\scriptsize $\phi$};

\coordinate (A) at (-1.5, 1.2);

\draw ($(0.3, 0) + (A)$) arc (0:360:0.3);
\draw ($(0.2, 0) + (A)$) arc (0:360:0.2);

\node at ($(0, 0) + (A)$) {\scriptsize $\phi$};

\draw[ultra thick, teal, opacity=0.5] (-4.8, 2.1)--(0, 2.1);
\draw[ultra thick, teal, opacity=0.5] (-4.8, 3.9)--(0, 3.9);

\draw[ultra thick, teal, opacity=0.5] (-3.3, 0.6)--(-3.3, 5.4);
\draw[ultra thick, teal, opacity=0.5] (-1.5, 0.6)--(-1.5, 5.4);

\draw[ultra thick, densely dashed, brown, opacity=0.5] (-4.8, 1.2)--(0, 1.2);
\draw[ultra thick, densely dashed, brown, opacity=0.5] (-4.8, 3)--(0, 3);
\draw[ultra thick, densely dashed, brown, opacity=0.5] (-4.8, 4.8)--(0, 4.8);

\draw[ultra thick, densely dashed, brown, opacity=0.5] (-4.2, 0.6)--(-4.2, 5.4);
\draw[ultra thick, densely dashed, brown, opacity=0.5] (-2.4, 0.6)--(-2.4, 5.4);
\draw[ultra thick, densely dashed, brown, opacity=0.5] (-0.6, 0.6)--(-0.6, 5.4);

\end{tikzpicture}
} \,\, , 
\end{equation}
where we additionally overlay the dual lattice in dashed lines. Finally, we apply the space-time duality to the horizontal links of the dual lattice: 
\begin{equation}
\raisebox{-1.8cm}{
\begin{tikzpicture}[scale = 0.5]
\begin{scope}[even odd rule]
\coordinate (A) at (0, 0);

\clip ($(-4.2, 3) + (A)$) rectangle ($(-2.4, 4.8) + (A)$)  ($(-4, 4.8) + (A)$) arc (0:-180:0.2);
\clip ($(-4.2, 3) + (A)$) rectangle ($(-2.4, 4.8) + (A)$)  ($(-2.2, 4.8) + (A)$) arc (0:-180:0.2);

\clip ($(-4.2, 3) + (A)$) rectangle ($(-2.4, 4.8) + (A)$)  ($(-4, 3) + (A)$) arc (0:180:0.2);
\clip ($(-4.2, 3) + (A)$) rectangle ($(-2.4, 4.8) + (A)$)  ($(-2.2, 3) + (A)$) arc (0:180:0.2);

\fill [blue!10] ($(-4.2, 3) + (A)$) rectangle ($(-2.4, 4.8) + (A)$);
\end{scope}

\begin{scope}[even odd rule]
\coordinate (A) at (5.4, 0);

\clip ($(-4.2, 3) + (A)$) rectangle ($(-2.4, 4.8) + (A)$)  ($(-4, 4.8) + (A)$) arc (0:-180:0.2);
\clip ($(-4.2, 3) + (A)$) rectangle ($(-2.4, 4.8) + (A)$)  ($(-2.2, 4.8) + (A)$) arc (0:-180:0.2);

\clip ($(-4.2, 3) + (A)$) rectangle ($(-2.4, 4.8) + (A)$)  ($(-4, 3) + (A)$) arc (0:180:0.2);
\clip ($(-4.2, 3) + (A)$) rectangle ($(-2.4, 4.8) + (A)$)  ($(-2.2, 3) + (A)$) arc (0:180:0.2);

\fill [blue!10] ($(-4.2, 3) + (A)$) rectangle ($(-2.4, 4.8) + (A)$);
\end{scope}

\begin{scope}[even odd rule]
\coordinate (A) at (0, -4);

\clip ($(-4.2, 3) + (A)$) rectangle ($(-2.4, 4.8) + (A)$)  ($(-4, 4.8) + (A)$) arc (0:-180:0.2);
\clip ($(-4.2, 3) + (A)$) rectangle ($(-2.4, 4.8) + (A)$)  ($(-2.2, 4.8) + (A)$) arc (0:-180:0.2);

\clip ($(-4.2, 3) + (A)$) rectangle ($(-2.4, 4.8) + (A)$)  ($(-4, 3) + (A)$) arc (0:180:0.2);
\clip ($(-4.2, 3) + (A)$) rectangle ($(-2.4, 4.8) + (A)$)  ($(-2.2, 3) + (A)$) arc (0:180:0.2);

\fill [blue!10] ($(-4.2, 3) + (A)$) rectangle ($(-2.4, 4.8) + (A)$);
\end{scope}

\begin{scope}[even odd rule]
\coordinate (A) at (5.4, -4);

\clip ($(-4.2, 3) + (A)$) rectangle ($(-2.4, 4.8) + (A)$)  ($(-4, 4.8) + (A)$) arc (0:-180:0.2);
\clip ($(-4.2, 3) + (A)$) rectangle ($(-2.4, 4.8) + (A)$)  ($(-2.2, 4.8) + (A)$) arc (0:-180:0.2);

\clip ($(-4.2, 3) + (A)$) rectangle ($(-2.4, 4.8) + (A)$)  ($(-4, 3) + (A)$) arc (0:180:0.2);
\clip ($(-4.2, 3) + (A)$) rectangle ($(-2.4, 4.8) + (A)$)  ($(-2.2, 3) + (A)$) arc (0:180:0.2);

\fill [blue!10] ($(-4.2, 3) + (A)$) rectangle ($(-2.4, 4.8) + (A)$);
\end{scope}

\begin{scope}[even odd rule]
\coordinate (A) at (0, 0);

\clip ($(-0.8, 3) + (A)$) arc (0:180:1.6)  ($(-2.2, 3) + (A)$) arc (0:180:0.2);

\fill [blue!10] ($(-0.8, 3) + (A)$) arc (0:180:1.6);
\end{scope}

\begin{scope}[even odd rule]
\coordinate (A) at (1.8, 0);

\clip ($(-0.8, 3) + (A)$) arc (0:-180:1.6)  ($(-2.2, 3) + (A)$) arc (0:-180:0.2);

\fill [blue!10] ($(-0.8, 3) + (A)$) arc (0:-180:1.6);
\end{scope}

\begin{scope}[even odd rule]
\coordinate (A) at (3.6, 0);

\clip ($(-0.8, 3) + (A)$) arc (0:180:1.6)  ($(-2.2, 3) + (A)$) arc (0:180:0.2);

\fill [blue!10] ($(-0.8, 3) + (A)$) arc (0:180:1.6);
\end{scope}

\begin{scope}[even odd rule]
\coordinate (A) at (0, -4);

\clip ($(-0.8, 3) + (A)$) arc (0:180:1.6)  ($(-2.2, 3) + (A)$) arc (0:180:0.2);

\fill [blue!10] ($(-0.8, 3) + (A)$) arc (0:180:1.6);
\end{scope}

\begin{scope}[even odd rule]
\coordinate (A) at (1.8, -4);

\clip ($(-0.8, 3) + (A)$) arc (0:-180:1.6)  ($(-2.2, 3) + (A)$) arc (0:-180:0.2);

\fill [blue!10] ($(-0.8, 3) + (A)$) arc (0:-180:1.6);
\end{scope}

\begin{scope}[even odd rule]
\coordinate (A) at (3.6, -4);

\clip ($(-0.8, 3) + (A)$) arc (0:180:1.6)  ($(-2.2, 3) + (A)$) arc (0:180:0.2);

\fill [blue!10] ($(-0.8, 3) + (A)$) arc (0:180:1.6);
\end{scope}

\coordinate (A) at (3.6, -4);
\fill [blue!10] ($(-7.6, 7) + (A)$) rectangle ($(-6.2, 4.8) + (A)$);
\fill [blue!10] ($(-2.2, 7) + (A)$) rectangle ($(-0.8, 4.8) + (A)$);

\fill [blue!10] ($(-7.6, 1.2) + (A)$) rectangle ($(-6.2, 3) + (A)$);
\fill [blue!10] ($(-2.2, 1.2) + (A)$) rectangle ($(-0.8, 3) + (A)$);

\coordinate (A) at (0, 0);

\draw [thick, blue] ($(-4, 4.8) + (A)$) arc (0:-90:0.2);
\draw [thick, blue] ($(-2.6, 4.8) + (A)$) arc (180:270:0.2);
\draw [thick, blue] ($(-2.4, 4.6) + (A)$) arc (90:0:1.6);
\draw [thick, blue] ($(-0.8, 3) + (A)$) arc (180:360:0.2);
\draw [thick, blue] ($(-0.4, 3) + (A)$) arc (180:90:1.6);
\draw [thick, blue] ($(1.2, 4.6) + (A)$) arc (-90:0:0.2);
\draw [thick, blue] ($(2.8, 4.8) + (A)$) arc (180:270:0.2);

\draw [thick, blue] ($(-4, 3) + (A)$) arc (0:90:0.2);
\draw [thick, blue] ($(-2.6, 3) + (A)$) arc (180:0:0.2);
\draw [thick, blue] ($(-2.2, 3) + (A)$) arc (-180:0:1.6);
\draw [thick, blue] ($(1, 3) + (A)$) arc (180:0:0.2);
\draw [thick, blue] ($(2.8, 3) + (A)$) arc (180:90:0.2);

\draw [thick, blue] ($(-4, 3) + (A)$)--($(-4, 0.8) + (A)$);
\draw [thick, blue] ($(-2.6, 3) + (A)$)--($(-2.6, 0.8) + (A)$);

\draw [thick, blue] ($(1.4, 3) + (A)$)--($(1.4, 0.8) + (A)$);
\draw [thick, blue] ($(2.8, 3) + (A)$)--($(2.8, 0.8) + (A)$);

\draw [red] ($(-3.8, 4.8) + (A)$) arc (0:-90:0.4);
\draw ($(-3.5, 4.8) + (A)$) arc (0:-90:0.7);

\draw ($(-4.2, 3.7) + (A)$) arc (90:0:0.7);
\draw [red] ($(-4.2, 3.4) + (A)$) arc (90:0:0.4);

\draw [red] ($(-2, 3) + (A)$) arc (0:180:0.4);
\draw ($(-3.1, 3) + (A)$) arc (180:16:0.7);
\draw ($(-2.4, 4.1) + (A)$) arc (90:13:1.1);
\draw [red] ($(-1, 3) + (A)$) arc (0:90:1.4);

\draw [red] ($(-2.4, 4.4) + (A)$) arc (-90:-180:0.4);
\draw ($(-2.4, 4.1) + (A)$) arc (-90:-180:0.7);

\draw[red] ($(-3.8, 3) + (A)$)--($(-3.8, 0.8) + (A)$);

\draw ($(-3.5, 3) + (A)$)--($(-3.5, 2.23) + (A)$);
\draw ($(-3.5, 0.8) + (A)$)--($(-3.5, 1.77) + (A)$);

\draw ($(-3.1, 3) + (A)$)--($(-3.1, 2.23) + (A)$);
\draw ($(-3.1, 0.8) + (A)$)--($(-3.1, 1.77) + (A)$);

\draw[red] ($(-2.8, 3) + (A)$)--($(-2.8, 0.8) + (A)$);

\draw[red] ($(1.6, 3) + (A)$)--($(1.6, 0.8) + (A)$);

\draw ($(1.9, 3) + (A)$)--($(1.9, 2.23) + (A)$);
\draw ($(1.9, 0.8) + (A)$)--($(1.9, 1.77) + (A)$);

\draw ($(2.3, 3) + (A)$)--($(2.3, 2.23) + (A)$);
\draw ($(2.3, 0.8) + (A)$)--($(2.3, 1.77) + (A)$);

\draw[red] ($(2.6, 3) + (A)$)--($(2.6, 0.8) + (A)$);

\draw [red] ($(-1, 3) + (A)$) arc (180:360:0.4);
\draw ($(0.1, 3) + (A)$) arc (0:-164:0.7);
\draw ($(0.5, 3) + (A)$) arc (0:-167:1.1);
\draw [red] ($(-2, 3) + (A)$) arc (180:360:1.4);

\draw [red] ($(0.8, 3) + (A)$) arc (180:0:0.4);
\draw ($(0.5, 3) + (A)$) arc (180:0:0.7);
\draw ($(0.1, 3) + (A)$) arc (180:90:1.1);
\draw [red] ($(-0.2, 3) + (A)$) arc (180:90:1.4);

\draw [red] ($(1.2, 4.4) + (A)$) arc (-90:0:0.4);
\draw ($(1.2, 4.1) + (A)$) arc (-90:0:0.7);

\draw ($(2.3, 4.8) + (A)$) arc (180:270:0.7);
\draw [red] ($(2.6, 4.8) + (A)$) arc (180:270:0.4);

\draw ($(3, 3.7) + (A)$) arc (90:180:0.7);
\draw [red] ($(3, 3.4) + (A)$) arc (90:180:0.4);

\coordinate (A) at (0, -4);

\draw [thick, blue] ($(-4, 4.8) + (A)$) arc (0:-90:0.2);
\draw [thick, blue] ($(-2.6, 4.8) + (A)$) arc (180:270:0.2);
\draw [thick, blue] ($(-2.4, 4.6) + (A)$) arc (90:0:1.6);
\draw [thick, blue] ($(-0.8, 3) + (A)$) arc (180:360:0.2);
\draw [thick, blue] ($(-0.4, 3) + (A)$) arc (180:90:1.6);
\draw [thick, blue] ($(1.2, 4.6) + (A)$) arc (-90:0:0.2);
\draw [thick, blue] ($(2.8, 4.8) + (A)$) arc (180:270:0.2);

\draw [thick, blue] ($(-4, 3) + (A)$) arc (0:90:0.2);
\draw [thick, blue] ($(-2.6, 3) + (A)$) arc (180:0:0.2);
\draw [thick, blue] ($(-2.2, 3) + (A)$) arc (-180:0:1.6);
\draw [thick, blue] ($(1, 3) + (A)$) arc (180:0:0.2);
\draw [thick, blue] ($(2.8, 3) + (A)$) arc (180:90:0.2);

\draw [thick, blue] ($(-4, 3) + (A)$)--($(-4, 1.2) + (A)$);
\draw [thick, blue] ($(-2.6, 3) + (A)$)--($(-2.6, 1.2) + (A)$);

\draw [thick, blue] ($(1.4, 3) + (A)$)--($(1.4, 1.2) + (A)$);
\draw [thick, blue] ($(2.8, 3) + (A)$)--($(2.8, 1.2) + (A)$);

\draw [red] ($(-3.8, 4.8) + (A)$) arc (0:-90:0.4);
\draw ($(-3.5, 4.8) + (A)$) arc (0:-90:0.7);

\draw ($(-4.2, 3.7) + (A)$) arc (90:0:0.7);
\draw [red] ($(-4.2, 3.4) + (A)$) arc (90:0:0.4);

\draw [red] ($(-2, 3) + (A)$) arc (0:180:0.4);
\draw ($(-3.1, 3) + (A)$) arc (180:16:0.7);
\draw ($(-2.4, 4.1) + (A)$) arc (90:13:1.1);
\draw [red] ($(-1, 3) + (A)$) arc (0:90:1.4);

\draw [red] ($(-2.4, 4.4) + (A)$) arc (-90:-180:0.4);
\draw ($(-2.4, 4.1) + (A)$) arc (-90:-180:0.7);

\draw[red] ($(-3.8, 3) + (A)$)--($(-3.8, 1.2) + (A)$);

\draw ($(-3.5, 3) + (A)$)--($(-3.5, 2.23) + (A)$);
\draw ($(-3.5, 1.2) + (A)$)--($(-3.5, 1.77) + (A)$);

\draw ($(-3.1, 3) + (A)$)--($(-3.1, 2.23) + (A)$);
\draw ($(-3.1, 1.2) + (A)$)--($(-3.1, 1.77) + (A)$);

\draw[red] ($(-2.8, 3) + (A)$)--($(-2.8, 1.2) + (A)$);

\draw[red] ($(1.6, 3) + (A)$)--($(1.6, 1.2) + (A)$);

\draw ($(1.9, 3) + (A)$)--($(1.9, 2.23) + (A)$);
\draw ($(1.9, 1.2) + (A)$)--($(1.9, 1.77) + (A)$);

\draw ($(2.3, 3) + (A)$)--($(2.3, 2.23) + (A)$);
\draw ($(2.3, 1.2) + (A)$)--($(2.3, 1.77) + (A)$);

\draw[red] ($(2.6, 3) + (A)$)--($(2.6, 1.2) + (A)$);

\draw [red] ($(-1, 3) + (A)$) arc (180:360:0.4);
\draw ($(0.1, 3) + (A)$) arc (0:-164:0.7);
\draw ($(0.5, 3) + (A)$) arc (0:-167:1.1);
\draw [red] ($(-2, 3) + (A)$) arc (180:360:1.4);

\draw [red] ($(0.8, 3) + (A)$) arc (180:0:0.4);
\draw ($(0.5, 3) + (A)$) arc (180:0:0.7);
\draw ($(0.1, 3) + (A)$) arc (180:90:1.1);
\draw [red] ($(-0.2, 3) + (A)$) arc (180:90:1.4);

\draw [red] ($(1.2, 4.4) + (A)$) arc (-90:0:0.4);
\draw ($(1.2, 4.1) + (A)$) arc (-90:0:0.7);

\draw ($(2.3, 4.8) + (A)$) arc (180:270:0.7);
\draw [red] ($(2.6, 4.8) + (A)$) arc (180:270:0.4);

\draw ($(3, 3.7) + (A)$) arc (90:180:0.7);
\draw [red] ($(3, 3.4) + (A)$) arc (90:180:0.4);

\coordinate (A) at (-1.5, 3);

\draw ($(0.3, 0) + (A)$) arc (0:360:0.3);
\draw ($(0.2, 0) + (A)$) arc (0:360:0.2);

\node at ($(0, 0) + (A)$) {\tiny $\theta$};

\coordinate (A) at (-1.5, -1);

\draw ($(0.3, 0) + (A)$) arc (0:360:0.3);
\draw ($(0.2, 0) + (A)$) arc (0:360:0.2);

\node at ($(0, 0) + (A)$) {\tiny $\theta$};

\coordinate (A) at (-3.3, 2);

\draw ($(0.3, 0) + (A)$) arc (0:360:0.3);
\draw ($(0.2, 0) + (A)$) arc (0:360:0.2);

\node at ($(0, 0) + (A)$) {\tiny $\theta$};

\coordinate (A) at (-3.3, -2);

\draw ($(0.3, 0) + (A)$) arc (0:360:0.3);
\draw ($(0.2, 0) + (A)$) arc (0:360:0.2);

\node at ($(0, 0) + (A)$) {\tiny $\theta$};

\coordinate (A) at (2.1, 2);

\draw ($(0.3, 0) + (A)$) arc (0:360:0.3);
\draw ($(0.2, 0) + (A)$) arc (0:360:0.2);

\node at ($(0, 0) + (A)$) {\tiny $\theta$};

\coordinate (A) at (2.1, -2);

\draw ($(0.3, 0) + (A)$) arc (0:360:0.3);
\draw ($(0.2, 0) + (A)$) arc (0:360:0.2);

\node at ($(0, 0) + (A)$) {\tiny $\phi$};

\draw[ultra thick, densely dashed, brown, opacity=0.5] (-4.2, -0.1)--(3, -0.1);
\draw[ultra thick, densely dashed, brown, opacity=0.5] (-4.2, 3.9)--(3, 3.9);
\draw[ultra thick, densely dashed, brown, opacity=0.5] (-3.3, -2.8)--(-3.3, 4.8);
\draw[ultra thick, densely dashed, brown, opacity=0.5] (2.1, -2.8)--(2.1, 4.8);

\end{tikzpicture}
} \,\, , 
\end{equation}
which is the partition function of the Ising model defined on the dual lattice. 

In our derivation of the KW duality, we also have recovered the ``half translation,'' a key element in the duality~\cite{seiberg2024non} that arises from our regrouping of Majoranas when forming tensor legs. 

\subsection{Star-Triangle Relation}
A third application of the 2D Quon language is a derivation of the star-triangle relation of the two-dimensional Ising model. While the fact that the star-triangle relation is yet another form of the Yang-Baxter equation is rather well-known, our derivation using the 2D Quon language illustrates its pictorial nature. 

The star-triangle (or Y-$\Delta$) relation corresponds to the local transformation of a ``star'' (or Y) shaped region into a ``triangle'' (or $\Delta$) shaped region: 
\begin{equation}
\label{eq:star-triangle-ising}
\raisebox{-1.2cm}{
\begin{tikzpicture}
\draw[ultra thick, gray] (-1, 0)--(1, 0);
\draw[ultra thick, gray] (0, {sqrt(3)})--(1, 0);
\draw[ultra thick, gray] (0, {sqrt(3)})--(-1, 0);

\draw[ultra thick, gray] (0, {sqrt(3)})--(0, 2.3);
\draw[ultra thick, gray] (1, 0)--({1 + 0.5*(2.3 - sqrt(3))*sqrt(3)}, {-0.5*(2.3 - sqrt(3))});
\draw[ultra thick, gray] (-1, 0)--({-1 - 0.5*(2.3 - sqrt(3))*sqrt(3)}, {-0.5*(2.3 - sqrt(3))});

\draw [thick, gray, fill=white] (1.2, 0) arc (0:360:0.2);
\node at (1, 0) {$P$};

\draw [thick, gray, fill=white] (-0.8, 0) arc (0:360:0.2);
\node at (-1, 0) {$P$};

\draw [thick, gray, fill=white] (0, {sqrt(3) + 0.2}) arc (90:450:0.2);
\node at (0, {sqrt(3)}) {$P$};

\draw [thick, gray, fill=zx_green] (-0.3, {0.5*sqrt(3) - 0.2}) rectangle (-0.7, {0.5*sqrt(3) + 0.2});
\node at (-0.5, {0.5*sqrt(3)}) {$u_1$};

\draw [thick, gray, fill=zx_green] (-0.2, -0.2) rectangle (0.2, 0.2);
\node at (0, 0) {$u_2$};

\draw [thick, gray, fill=zx_green] (0.3, {0.5*sqrt(3) - 0.2}) rectangle (0.7, {0.5*sqrt(3) + 0.2});
\node at (0.5, {0.5*sqrt(3)}) {$u_3$};
\end{tikzpicture}
} = R \raisebox{-1cm}{
\begin{tikzpicture}
\draw[ultra thick, gray] (0, 0.5)--(0, 2.1);
\draw[ultra thick, gray] (0, 0.5)--({1 + 0.5*(2.1 - sqrt(3))*sqrt(3)}, {-0.5*(2.1 - sqrt(3))});
\draw[ultra thick, gray] (0, 0.5)--({-1 - 0.5*(2.1 - sqrt(3))*sqrt(3)}, {-0.5*(2.1 - sqrt(3))});

\draw [thick, gray, fill=white] (0.2, 0.5) arc (0:360:0.2);
\node at (0, 0.5) {$P$};

\draw [thick, gray, fill=zx_green] ({0.3 + 0.25*(2.1 - sqrt(3))*sqrt(3)}, {0.05-0.25*(2.1 - sqrt(3))}) rectangle ++(0.4, 0.4);
\node at ({0.5 + 0.25*(2.1 - sqrt(3))*sqrt(3)}, {0.25-0.25*(2.1 - sqrt(3))}) {$v_1$};

\draw [thick, gray, fill=zx_green] (0.2, 1.3-0.2) rectangle (-0.2, 1.3+0.2);
\node at (0, 1.3) {$v_2$};

\draw [thick, gray, fill=zx_green] ({-0.3 - 0.25*(2.1 - sqrt(3))*sqrt(3)}, {0.05-0.25*(2.1 - sqrt(3))}) rectangle ++(-0.4, 0.4);
\node at ({-0.5 - 0.25*(2.1 - sqrt(3))*sqrt(3)}, {0.25-0.25*(2.1 - sqrt(3))}) {$v_3$};
\end{tikzpicture}
} \,\, , 
\end{equation}
where in addition to the three-leg parity tensor $P$, we introduce the following (matchgate) tensor:
\begin{equation}
\raisebox{-0.45cm}{
\begin{tikzpicture}
\draw[ultra thick, gray] (0, 0)--(0, 1);

\draw [thick, gray, fill=zx_green] (-0.2, 0.3) rectangle (0.2, 0.7);
\node at (0, 0.5) {$u$};
\end{tikzpicture}
} \,\, = \openone + u Z , 
\end{equation}
with $Z$ being the Pauli-Z operator and $u_1$, $u_2$, $u_3$, $v_1$, $v_2$, $v_3$, and $R$ being complex numbers. The star-triangle relation tells us that, given a tuple of complex numbers $(u_1, u_2, u_3)$, except at a measure-zero set of points, one can always find complex numbers $(v_1, v_2, v_3; R)$ such that the relation Eq.~\eqref{eq:star-triangle-ising} holds. Conversely, given a tuple of complex numbers $(v_1, v_2, v_3)$, except at a measure-zero set of points, one can always find complex numbers $(u_1, u_2, u_3; R)$ such that the relation Eq.~\eqref{eq:star-triangle-ising} holds. 

We now give the proof of the star-triangle relation Eq.~\eqref{eq:star-triangle-ising} using the 2D Quon language. In order to avoid unnecessary complications, we suppress constant multiplication factors in the following. So, strictly speaking, the equalities below should be understood as $\propto$ sign. We first note that the LHS can be rewritten in terms of the following quon diagram: 
\begin{equation}
\raisebox{-1.6cm}{
\begin{tikzpicture}
\draw[ultra thick, gray] (0, 0)--(0, 2.2);
\draw[ultra thick, gray] (0, 0)--({-0.4*sqrt(3)}, -0.4);
\draw[ultra thick, gray] ({1.2*sqrt(3)}, 0)--({1.6*sqrt(3)}, -0.4);

\draw[ultra thick, gray] (0, 1.2) arc (150:-150:1.2);

\draw [thick, gray, fill=white] (0, 1.4) arc (90:450:0.2);
\node at (0, 1.2) {$P$};

\draw [thick, gray, fill=white] (0, 0.2) arc (90:450:0.2);
\node at (0, 0) {$P$};

\draw [thick, gray, fill=white] ({1.2*sqrt(3)}, 0.2) arc (90:450:0.2);
\node at ({1.2*sqrt(3)}, 0) {$P$};

\draw [thick, gray, fill=zx_green] (-0.2, 0.4) rectangle ++(0.4, 0.4);
\node at (0, 0.6) {$u_1$};

\draw [thick, gray, fill=zx_green] (0.4, -0.7) rectangle ++(0.4, 0.4);
\node at (0.6, -0.5) {$u_2$};

\draw [thick, gray, fill=zx_green] (0.4, 1.5) rectangle ++(0.4, 0.4);
\node at (0.6, 1.7) {$u_3$};

\end{tikzpicture}
} = \,\,  
\raisebox{-1.6cm}{
\begin{tikzpicture}
\begin{scope}[even odd rule]
\clip (-0.6, -1.6) rectangle (0, 1.6)  ({-0.5 + 1.4*(1-cos(50))}, {1.4*sin(50)}) arc (-50:-310:0.05);
\clip (-0.6, -1.6) rectangle (0, 1.6)  ({-0.5 + 1.4*(1-cos(50))}, {-1.4*sin(50)}) arc (50:310:0.05);

\clip (-0.6, -1.6) rectangle (0, 1.6)  ({-0.5 + 1.4*(1-cos(50)) - 0.05 - 0.05*cos(50)}, {1.4*sin(50) + 0.05*sin(50)}) rectangle (0, 1.6);
\clip (-0.6, -1.6) rectangle (0, 1.6)  ({-0.5 + 1.4*(1-cos(50)) - 0.05 - 0.05*cos(50)}, {-1.4*sin(50) - 0.05*sin(50)}) rectangle (0, -1.6);

\fill [blue!10] (-0.6, -1.6) rectangle (0, 1.6);
\end{scope}

\begin{scope}[even odd rule]
\clip (2.3, -1.6) rectangle ({-0.5 + 1.4*(1+cos(50))}, 0)  ({-0.5 + 1.4*(1+cos(50))}, {-1.4*sin(50)}) arc (130:-130:0.05);
\clip (2.3, -1.6) rectangle ({-0.5 + 1.4*(1+cos(50))}, 0)  ({-0.5 + 1.4*(1+cos(50))}, -1.1) rectangle ({-0.5 + 1.4*(1+cos(50)) + 0.05 + 0.05*cos(50)}, -1.6);

\fill [blue!10] (2.3, -1.6) rectangle ({-0.5 + 1.4*(1+cos(50))}, 0);
\end{scope}

\begin{scope}[even odd rule]
\clip (-0.5, 0) arc (180:-180:1.4)  (0.5, 0) arc (180:-180:0.4);

\fill [blue!10] (-0.5, 0) arc (180:-180:1.4);
\end{scope}

\draw [thick, blue] (-0.6, -1.6)--(-0.6, 1.6);
\draw [red] (-0.5, -1.6)--(-0.5, 1.6);

\draw [thick, blue] (0.5, 0) arc (180:-180:0.4);
\draw [red] (0.4, 0) arc (180:-180:0.5);

\draw ({0.05 + 0.85*(1-cos(19)}, {0.85*sin(19)}) arc (161:136:0.85);
\draw ({0.05 + 0.85*(1-cos(83)}, {0.85*sin(83)}) arc (97:-97:0.85);
\draw ({0.05 + 0.85*(1-cos(19)}, {-0.85*sin(19)}) arc (-161:-136:0.85);

\draw [looseness=1.2] ({-0.15 + 1.05*(1-cos(15.5))}, {1.05*sin(15.5)}) to[out=120, in=-90] (-0.4, 1.6);
\draw [looseness=1.5] ({-0.15 + 1.05*(1-cos(48.2))}, {1.05*sin(48.2)}) to[out=230, in=-90] (-0.3, 1.6);

\draw [looseness=1.2] ({-0.15 + 1.05*(1-cos(15.5))}, {-1.05*sin(15.5)}) to[out=-120, in=90] (-0.4, -1.6);
\draw [looseness=1.5] ({-0.15 + 1.05*(1-cos(48.2))}, {-1.05*sin(48.2)}) to[out=-230, in=90] (-0.3, -1.6);

\draw [looseness=1.2] ({-0.15 + 1.05*(1-cos(78.5))}, {-1.05*sin(78.5)}) to[out=-60, in=90] (2.05, -1) to[out=-90, in=90] (2.05, -1.6);

\draw [looseness=1.2] ({-0.15 + 1.05*(1-cos(78.5))}, {1.05*sin(78.5)}) to[out=30, in=90] (2.15, -0.5) to[out=-90, in=90] (2.15, -1.6);

\draw [thick, blue] ({-0.5 + 1.4*(1-cos(50))}, {1.4*sin(50)}) arc (130:0:1.4);
\draw [thick, blue] (2.3, 0.0)--(2.3, -1.6);

\draw [red] ({-0.4 + 1.3*(1-cos(50))}, {1.3*sin(50)}) arc (130:0:1.3);
\draw [red] (2.2, 0)--(2.2, -1.6);

\draw [thick, blue] ({-0.5 + 1.4*(1-cos(50))}, {-1.4*sin(50)}) arc (-130:-50:1.4);
\draw [red] ({-0.4 + 1.3*(1-cos(50))}, {-1.3*sin(50)}) arc (-130:-50:1.3);

\draw [thick, blue] ({-0.5 + 1.4*(1+cos(50))}, {-1.4*sin(50)}) arc (130:0:0.05);
\draw [thick, blue] ({-0.5 + 1.4*(1+cos(50)) + 0.05 + 0.05*cos(50)}, {-1.4*sin(50) - 0.05*sin(50)})--({-0.5 + 1.4*(1+cos(50)) + 0.05 + 0.05*cos(50)}, -1.6);

\draw [looseness=1.5, red] ({-0.4 + 1.3*(1+cos(50))}, {-1.3*sin(50)}) to[out=40, in=90] (2, -1.6);

\draw [thick, blue] ({-0.5 + 1.4*(1-cos(50))}, {-1.4*sin(50)}) arc (50:180:0.05);
\draw [thick, blue] ({-0.5 + 1.4*(1-cos(50)) - 0.05 - 0.05*cos(50)}, {-1.4*sin(50) - 0.05*sin(50)})--({-0.5 + 1.4*(1-cos(50)) - 0.05 - 0.05*cos(50)}, -1.6);

\draw [looseness=1.5, red] ({-0.4 + 1.3*(1-cos(50))}, {-1.3*sin(50)}) to[out=140, in=90] (-0.2, -1.6);

\draw [thick, blue] ({-0.5 + 1.4*(1-cos(50))}, {1.4*sin(50)}) arc (-50:-180:0.05);
\draw [thick, blue] ({-0.5 + 1.4*(1-cos(50)) - 0.05 - 0.05*cos(50)}, {1.4*sin(50) + 0.05*sin(50)})--({-0.5 + 1.4*(1-cos(50)) - 0.05 - 0.05*cos(50)}, 1.6);

\draw [looseness=1.5, red] ({-0.4 + 1.3*(1-cos(50))}, {1.3*sin(50)}) to[out=-140, in=-90] (-0.2, 1.6);

\draw (0.8, 0.8) arc (0:360:0.3);
\draw (0.7, 0.8) arc (0:360:0.2);

\node at (0.5, 0.8) {\scriptsize $\phi_3$};

\draw (0.3, 0) arc (0:360:0.3);
\draw (0.2, 0) arc (0:360:0.2);

\node at (0, 0) {\scriptsize $\phi_1$};

\draw (0.8, -0.8) arc (0:360:0.3);
\draw (0.7, -0.8) arc (0:360:0.2);

\node at (0.5, -0.8) {\scriptsize $\phi_2$};

\end{tikzpicture}
} \,\, , 
\end{equation}
where $\tanh \phi_i = u_i$. Now, we first apply the string-genus relation Eq.~\eqref{eq:string-genus}, then perform the space-time duality on $\phi_2$ and $\phi_3$, and apply the Yang-Baxter equation (TABLE~\ref{tab:majorana-rewriting-rules}): 
\begin{equation}
\raisebox{-1.6cm}{
\begin{tikzpicture}
\begin{scope}[even odd rule]
\clip (-0.6, -1.6) rectangle (1.2, 1.6)  (-0.1, -1.6) arc (180:0:0.4);

\fill [blue!10] (-0.6, -1.6) rectangle (1.2, 1.6);
\end{scope}

\draw [thick, blue] (-0.6, -1.6)--(-0.6, 1.6);
\draw [red] (-0.5, -1.6)--(-0.5, 1.6);

\draw (0.1, 0.28)--(0.15, 0.39);
\draw (0.45, 0.39) to[out=-50, in=50] (0.45, -0.39);
\draw (0.1, -0.28)--(0.15, -0.39);

\draw [looseness=1.2] ({-0.15 + 1.05*(1-cos(15.5))}, {1.05*sin(15.5)}) to[out=120, in=-90] (-0.3, 1.6);
\draw [looseness=1.4] (0.15, 0.91) to[out=130, in=-90] (0, 1.6);

\draw [looseness=1.2] ({-0.15 + 1.05*(1-cos(15.5))}, {-1.05*sin(15.5)}) to[out=-120, in=90] (-0.4, -1.6);
\draw [looseness=1.4] (0.15, -0.91) to[out=-130, in=90] (-0.3, -1.6);

\draw [looseness=1.2] (0.45, -0.91) to[out=-60, in=90] (0.9, -1.6);
\draw [looseness=1] (0.45, 0.91) to[out=60, in=90] (1, -1.6);

\draw [thick, blue] (1.2, -1.6)--(1.2, 1.6);
\draw [red] (1.1, -1.6)--(1.1, 1.6);

\draw [thick, blue] (-0.1, -1.6) arc (180:0:0.4);
\draw [red] (-0.2, -1.6) arc (180:0:0.5);

\draw (0.6, 0.65) arc (0:360:0.3);
\draw (0.5, 0.65) arc (0:360:0.2);

\node at (0.3, 0.65) {\scriptsize $\phi_3^*$};

\draw (0.3, 0) arc (0:360:0.3);
\draw (0.2, 0) arc (0:360:0.2);

\node at (0, 0) {\scriptsize $\phi_1$};

\draw (0.6, -0.65) arc (0:360:0.3);
\draw (0.5, -0.65) arc (0:360:0.2);

\node at (0.3, -0.65) {\scriptsize $\phi_2^*$};

\end{tikzpicture}
} = \raisebox{-1.4cm}{
\begin{tikzpicture}
\begin{scope}[even odd rule]
\clip (-0.6, -1.6) rectangle (1.2, 1.2)  (-0.1, -1.6) arc (180:0:0.4);

\fill [blue!10] (-0.6, -1.6) rectangle (1.2, 1.2);
\end{scope}

\draw [thick, blue] (-0.6, -1.6)--(-0.6, 1.2);
\draw [red] (-0.5, -1.6)--(-0.5, 1.2);

\draw (0.2, 0.28)--(0.15, 0.39);
\draw (-0.15, 0.39) to[out=230, in=-230] (-0.15, -0.39);
\draw (0.2, -0.28)--(0.15, -0.39);

\draw [looseness=1.2] (-0.15, 0.91) to[out=120, in=-90] (-0.3, 1.2);
\draw [looseness=1] (0.15, 0.91) to[out=50, in=-90] (0.2, 1.2);

\draw [looseness=1.2] (-0.15, -0.91) to[out=-120, in=90] (-0.4, -1.6);
\draw [looseness=1] (0.15, -0.91) to[out=-50, in=90] (-0.3, -1.4) to[out=-90, in=90] (-0.3, -1.6);

\draw [looseness=1.2] (0.4, -0.28) to[out=-60, in=90] (0.9, -1.6);
\draw [looseness=1] (0.4, 0.28) to[out=60, in=240] (0.6, 0.8) to[out=60, in=90] (1, -1.6);

\draw [thick, blue] (1.2, -1.6)--(1.2, 1.2);
\draw [red] (1.1, -1.6)--(1.1, 1.2);

\draw [thick, blue] (-0.1, -1.6) arc (180:0:0.4);
\draw [red] (-0.2, -1.6) arc (180:0:0.5);

\draw (0.3, 0.65) arc (0:360:0.3);
\draw (0.2, 0.65) arc (0:360:0.2);

\node at (0, 0.65) {\scriptsize $\theta_2$};

\draw (0.6, 0) arc (0:360:0.3);
\draw (0.5, 0) arc (0:360:0.2);

\node at (0.3, 0) {\scriptsize $\theta_1$};

\draw (0.3, -0.65) arc (0:360:0.3);
\draw (0.2, -0.65) arc (0:360:0.2);

\node at (0, -0.65) {\scriptsize $\theta_3$};

\end{tikzpicture}
} . 
\end{equation}
Finally, we apply the space-time duality on $\theta_1$ (TABLE~\ref{tab:majorana-rewriting-rules}): 
\begin{equation}
\raisebox{-1.6cm}{
\begin{tikzpicture}
\begin{scope}[even odd rule]
\clip (-0.6, -0.45) rectangle (0.4, 1.6)  ({2.8-1.7 - 1.7*cos(65)}, {-1.6+1.15 + 1.7*sin(65)}) arc (-65:-65+360:0.04);
\clip (-0.6, -0.45) rectangle (0.4, 1.6)  ({2.8-1.7 - 1.7*cos(65) - 0.04 - 0.04*cos(65)}, 1.1) rectangle (0.4, 1.6);

\fill [blue!10] (-0.6, -0.45) rectangle (0.4, 1.6);
\end{scope}

\begin{scope} [even odd rule]
\clip (2.8, -1.6+1.15) arc (0:180:1.7)  (0.4, -1.6+1.15) arc (180:0:0.85);

\fill [blue!10] (2.8, -1.6+1.15) arc (0:180:1.7);
\end{scope}

\draw [thick, blue] (-0.6, -0.45)--(-0.6, 1.6);
\draw [red] (-0.5, -0.45)--(-0.5, 1.6);

\draw (-0.2, 0.87) to[out=230, in=-230] (-0.2, 0.28);
\draw [looseness=1.5] (0, 0.87) to[out=-50, in=200] (0.37, 0.6);
\draw [looseness = 1.2] (0, 0.28) to[out=50, in=230] (0.43, 0.5);

\draw [looseness=1.2] (-0.2, 1.43) to[out=120, in=-90] (-0.3, 1.6);
\draw [looseness=1] (0, 1.43) to[out=50, in=-90] (0.1, 1.6);

\draw [looseness=1.2] (-0.2, -1.43+1.15) to[out=-120, in=90] (-0.3, -1.6+1.15);
\draw [looseness=1] (0, -1.43+1.15) to[out=-50, in=90] (0.1, -1.6+1.15);


\draw [thick, blue] (2.8, -1.6+1.15) arc (0:115:1.7);
\draw [thick, blue] ({2.8-1.7 - 1.7*cos(65)}, {-1.6+1.15 + 1.7*sin(65)}) arc (-65:-180:0.04);
\draw [thick, blue] ({2.8-1.7 - 1.7*cos(65) - 0.04 - 0.04*cos(65)}, {-1.6+1.15 + 1.7*sin(65) + 0.04*sin(65)})--({2.8-1.7 - 1.7*cos(65) - 0.04 - 0.04*cos(65)}, 1.6);

\draw [red] (2.7, -1.6+1.15) arc (0:115:1.6);
\draw [looseness=1.1, red] ({2.7 - 1.6 - 1.6*cos(65)}, {-1.6+1.15 + 1.6*sin(65)}) to[out=205, in=-90] (0.25, 1.6);

\draw [thick, blue] (0.4, -1.6+1.15) arc (180:0:0.85);
\draw [red] (0.3, -1.6+1.15) arc (180:0:0.95);

\draw (2.4, -1.6+1.15) arc (0:100:1.25);
\draw (2.5, -1.6+1.15) arc (0:100.5:1.4);

\draw (0.2, 1.15) arc (0:360:0.3);
\draw (0.1, 1.15) arc (0:360:0.2);

\node at (-0.1, 1.15) {\scriptsize $\theta_2$};

\draw (0.95, 0.7) arc (0:360:0.3);
\draw (0.85, 0.7) arc (0:360:0.2);

\node at (0.65, 0.7) {\scriptsize $\theta_1^*$};

\draw (0.2, 0) arc (0:360:0.3);
\draw (0.1, 0) arc (0:360:0.2);

\node at (-0.1, 0) {\scriptsize $\theta_3$};

\end{tikzpicture}
} = \raisebox{-1.6cm}{
\begin{tikzpicture}
\draw[ultra thick, gray] (0, 0.5)--(0, 2.1);
\draw[ultra thick, gray] (0, 0.5)--({-1 - 0.5*(2.1 - sqrt(3))*sqrt(3)}, {-0.5*(2.1 - sqrt(3))});

\draw[ultra thick, gray] (0, 0.5) to[out=30, in=110, looseness=1.5] (+{1 + 0.5*(2.1 - sqrt(3))*sqrt(3)}, {-0.5*(2.1 - sqrt(3))});

\draw [thick, gray, fill=white] (0.2, 0.5) arc (0:360:0.2);
\node at (0, 0.5) {$P$};

\draw [thick, gray, fill=zx_green] (0.4, 0.5) rectangle ++(0.4, 0.4);
\node at (0.6, 0.7) {$v_1$};

\draw [thick, gray, fill=zx_green] (0.2, 1.3-0.2) rectangle (-0.2, 1.3+0.2);
\node at (0, 1.3) {$v_2$};

\draw [thick, gray, fill=zx_green] ({-0.3 - 0.25*(2.1 - sqrt(3))*sqrt(3)}, {0.05-0.25*(2.1 - sqrt(3))}) rectangle ++(-0.4, 0.4);
\node at ({-0.5 - 0.25*(2.1 - sqrt(3))*sqrt(3)}, {0.25-0.25*(2.1 - sqrt(3))}) {$v_3$};
\end{tikzpicture}
} \,\, , 
\end{equation}
which is the RHS of Eq.~\eqref{eq:star-triangle-ising}.

\section{Discussion and Outlook}
\label{sec:outlook}
The Quon language has numerous potential applications beyond the ones we discussed in Sec.~\ref{sec:applications}. Below, we list a few examples.

\textit{Variational ansatz---}In Sec.~\ref{sec:quon-universal}, we introduce new classes of tractable tensor networks, namely punctured matchgate tensor networks and hybrid Clifford-matchgate-MPS. In Sec.~\ref{sec:applications}, we further describe how to construct factories for tractable networks, which are efficient methods for generating new tractable networks from existing ones. It is desirable to use these tractable networks as variational ansatze for solving quantum many-body problems, including quantum chemistry and optimization problems. Given that these ansatz networks generally exhibit high non-Cliffordness, high non-matchgateness, and large bipartite entanglement, they possess enhanced expressive power compared to conventional tractable networks. It would be particularly interesting to evaluate the performance of our ansatz states in quantum many-body systems with high correlations. 

\textit{Quantum circuit optimization and classical simulation---}In quantum circuits, resource minimization is one of the most important task. Resource minimizations arise in various contexts, such as reducing circuit depth, minimizing $T$-gate count minimization, and lowering error correction overhead. In many cases, the optimization process is captured in terms of gadgets. Gadgets help a quantum circuit transform into a form that is more amenable to classical simulation or fault-tolerant implementation. They allow one to ``translate'' a difficult gate into a combination of gates that are easier to implement or that possess favorable error-correcting properties. Because the diagrammatic compilation of a quantum circuit into a Quon diagram is efficient, we can ask whether further performing resource minimization via diagrammatic manipulation within the Quon language is possible, or even whether new gadgets can be discovered. Although a simplified Quon diagram often represents a tensor network rather than a quantum circuit, any simplification would be desirable, particularly for the classical simulation of quantum circuits. For example, suppose we have a $T$-gate-doped Clifford circuit, where the $T$-gate count often controls the classical simulability. Since the $T$-gate compiles into a scattering element in a Quon diagram, one can ask whether the number of scattering elements can be reduced by applying Quon diagrammatic rules, thereby lowering the simulation cost. Similarly, for $\textrm{SWAP}$-gate-doped matchgate quantum circuits, $\textrm{SWAP}$ gates introduces holes in the Quon diagrams, and the number of holes in a Quon diagram serves as a measure of the computational hardness of classical simulation. Moreover, unlike other graphical calculi like the ZX-calculus, which can be efficiently implemented through standard graph rewriting systems~\cite{kissingerPatternGraphRewrite2014,kissingerQuantomaticProofAssistant2015a,kissingerPyZXLargeScale2020a}, the presence of fractionalized degrees of freedom in Quon diagrams makes it challenging to represent and rewrite. Developing an efficient software implementation of the Quon language remains a valuable area of research and could provide a foundation for large-scale quantum circuit optimization and simulation.

\textit{Quantum circuit/Tensor network obfuscation---}It is often desirable to transform a given circuit or quantum circuit into a functionally equivalent one while hiding its underlying structure. This process, also known as \textit{obfuscation}, has many potential applications, especially in cryptography~\cite{alagic2016quantum, alagic2021impossibility, broadbent2021constructions, bartusek2022indistinguishability, bartusek2023obfuscation, coladangelo2024use, bartusek2024quantum}. By utilizing the Quon language, one can employ diagrammatic manipulations to transform an original circuit or tensor network into its obfuscated counterpart. For example, the three moves introduced in Sec.~\ref{sec:applications} can serve as ``obfuscators'' by being slightly modified to preserve tensor components. By keeping the details of the manipulations performed in each move secret, the tractability of the network, encoded as a 2D Quon diagram, is expected to be lost after several rounds, since no systematic recipe exists for handling a complex-looking 2D Quon diagram. Note that when obfuscating a unitary quantum circuit, the diagrammatic moves produces a 2D Quon diagram that can often be compiled efficiently into a conventional tensor network rather than a unitary quantum circuit. 

Here, we briefly explain how to modify the switching moves to preserve tensor components, as modifying the stretching and overlaying moves is straightforward. During a switching move, a constant factor may appear, which is assumed to be absorbed into the overall scalar of the 2D Quon diagram. In a modified switching move, as before, we consider a local region containing solely a braid, a generic scattering element, a Majorana string, or three scattering elements. If the local region contains a braid, we either switch the braid type according to braid type switching (TABLE~\ref{tab:majorana-rewriting-rules}) at the expense of introducing two dots, or split it into two scattering elements via the scattering Reidemeister move II (TABLE~\ref{tab:majorana-rewriting-rules}), choosing two angles whose sum equals the original braid angle. If the local region contains a generic scattering element, we split the scattering element into two scattering elements using the scattering Reidemeister move I from TABLE~\ref{tab:majorana-rewriting-rules}. If the local region contains a Majorana string, we introduce a scattering element with an arbitrary scattering angle using the scattering Reidemeister move I. Finally, if the local region contains three scattering elements, we apply the Yang-Baxter equation (TABLE~\ref{tab:majorana-rewriting-rules}), only when such a local reconfiguration is possible. Additionally, one may further enrich the move, for instance, by replacing two parallel Majorana strings with four braids through two successive applications of a rule from TABLE~\ref{tab:majorana-rewriting-rules}. 

\textit{Soundness and Completeness of Quon language---}Given intuitive yet comprehensive diagrammatic rewriting rules of Quon, one might ask, as a graphical language, whether it is \textit{sound} and \textit{complete}. A graphical language is called sound if a series of graphical rewritings never lead to a contradiction. The soundness of Quon can easily be shown by observing that when a Quon diagram is viewed as a quantum process, the two terms---on the LHS and RHS---of each rewriting rule represent the identical quantum process. The completeness of a graphical language asks whether two diagrams representing the same quantum process can be transformed one another using only a series of rewriting rules. In the ZX-calculus, it has been shown that the set of ``axioms'' is complete. In fact, using the rewriting rules of 2D Quon, one can derive all the axioms of the ZX-calculus, thereby demonstrating that the 2D Quon language is complete. Nevertheless, it would be desirable to present a completeness proof for Quon that does not relying on the completeness result of the ZX-calculus. 

\textit{Algebraic formulation of Quon language---}While the pictorial nature of Quon offers significant flexibility, developing an algebraic formulation would be highly beneficial---for instance, for implementing the Quon language in computer programming code. By algebraic formulation, we mean assigning an algebraic object, such as a polynomial, to every tensor network, so that tensor operations performed on the networks can equivalently be performed by algebraic operations on the corresponding algebraic objects. For every Clifford tensor network, we can essentially associate a quadratic polynomial in binary variables~\cite{cai2018clifford}, and for every matchgate tensor network, we can essentially associate a quadratic polynomial in Grassmann variable~\cite{PhysRevA.59.1538, bravyi2009contraction}. It would be interesting to integrate these two approaches in a manner that is compatible with Quon diagrammatic representations. 

\textit{Physical implementation---}It is possible to realize Majorana zero modes (MZMs) in real materials, such as in hybrid semiconductor-superconductor heterostructure~\cite{lutchyn2018majorana}. In MZM-based architectures, logical qubits are often realized using the \textit{tetron} or \textit{hexon} architecture, where logical quantum operations are performed using the parity measurements of MZMs~\cite{PhysRevB.95.235305, 10.21468/SciPostPhys.8.6.091, PhysRevLett.128.180504, aasen2025roadmap}. As the quantum operations on these architectures can be captured by anyon diagrams using the Ising topological quantum field theory, it would be interesting to establish a dictionary between the anyon diagrams for quantum processes in tetrons and hexons and Quon diagrams. 

\textit{Integrable models using Quon---}As discussed in Sec.~\ref{sec:applications}, the Quon language provides succint pictorial proofs of the Kramers-Wannier duality and the star-triangle relation for the two-dimensional Ising model, one of the simplest integrable models. Given its flexibility, it is compelling to explore whether Quon can also encapsulate other two-dimensional integrable models, such as those solvable by the algebraic Bethe ansatz. A central aspect of the algebraic Bethe ansatz is the Yang-Baxter equation, which can be expressed in terms of a tensor network~\cite{PhysRevB.86.045125}. It would be interesting to investigate whether Quon can give a simple derivation of the Yang-Baxter equation for integrable models beyond Ising, notably the result from a mathematics literature on generalizing exactly solvable models based on the Yang-Baxter equation~\cite{jones2007and}. Moreover, exploring whether the Quon language can generate new solvable models would be a valuable direction for future research. 

\textit{Note added}---Please refer to Ref.~\onlinecite{quonAlgoPaper}, which presents other mathematical properties of Quon diagrams and applications to classical simulation. 

\begin{acknowledgments}
We thank Julian Bender, Garnet Kin-Lic Chan, Bryan O'Gorman, Jeongwan Haah, Dominik Hangleiter, Isaac H. Kim, Joonho Lee, and Kyungjoo Noh for illuminating discussions. BK acknowledges discussions with Patrick A. Lee. BK is supported by DOE office of Basic Sciences Grant No. DE-FG02-03ER46076 (theory). BK and SC acknowledge the support from the Center for Ultracold Atoms, an NSF Physics Frontiers Center (Grant No. 2317134). CZ acknowledges support from the Wellcome Leap Quantum for Bio program. ZL would like to thank Arthur Jaffe and Yunxiang Ren for early discussions. ZL is supported by Beijing Natural Science Foundation Key Program (Grant No.Z220002). ZL was supported by Templeton Religion Trust (TRT159).
\end{acknowledgments}

\appendix
\section{General Qubit Encoding}
\label{app:encoding-general}
When $4$ Majorana lines terminate on an open interval of the boundary of the background manifold or on a basis encoder, they encode $1$ qubit. In this section, we discuss how to systematically encode qubits when more than $4$ Majoranas terminate. Specifically, when $2+2p$ Majoranas terminate on an open interval or on a basis encoder, the corresponding Hilbert space, after imposing the global parity-even projection, has dimension $2^p$. In the case of a basis encoder, it always contains the \textit{pairing data}, which captures how pairs of Majoranas are created from the vacuum, as represented by a Majorana diagram with braids including braiding processes that occur during pair creations. When the pairing data is provided, the logical qubits can be encoded in a way that respects this pairing data. We first present a method for qubit encoding without the pairing data (applicable to open intervals), and then show how to incorporate the pairing data (applicable to basis encoders). 

\subsection{Encoding without pairing data}
If there are $2+2p$ Majorana lines terminate on an open interval, a logical encoding can be achieved by by gluing the following isomorphism: 
\begin{equation}
\label{eq:encoding-map-2+2p}
\hat{\Phi}_{2+2p} := \frac{1}{(\sqrt{2})^{p+1}} \raisebox{-0.2cm}{
\tikz{
\begin{scope}[even odd rule]
\clip (-0.2, 0.4) rectangle (4.1, 1)  (0.8, 1) arc (0:-180:0.2)  (1.7, 1) arc (0:-180:0.2)  (3.5, 1) arc (0:-180:0.2);

\fill [blue!10] (-0.2, 0.4) rectangle (4.1, 1);
\end{scope}

\draw[red] (0,0.4)--(0,1);
\draw (0.1,0.4)--(0.1,1);
\draw (0.2,0.4)--(0.2,1);

\draw (1,0.4)--(1,1);
\draw (1.1,0.4)--(1.1,1);

\draw (1.9,0.4)--(1.9,1);
\draw (2,0.4)--(2,1);

\draw (2.8,0.4)--(2.8,1);
\draw (2.9,0.4)--(2.9,1);

\draw (3.7,0.4)--(3.7,1);
\draw (3.8,0.4)--(3.8,1);
\draw[red] (3.9,0.4)--(3.9,1);

\draw[red] (0.9, 1) arc (0:-180:0.3);
\draw[thick,blue] (0.8, 1) arc (0:-180:0.2);

\draw[red] (1.8, 1) arc (0:-180:0.3);
\draw[thick,blue] (1.7, 1) arc (0:-180:0.2);

\draw[red] (3.6, 1) arc (0:-180:0.3);
\draw[thick,blue] (3.5, 1) arc (0:-180:0.2);

\draw[thick,blue] (-0.2,0.4)--(-0.2,1);
\draw[thick,blue] (4.1,0.4)--(4.1,1);

\node at (2.45, 0.7) {$\cdots$};
}} \,\, , 
\end{equation}
where $2+2p$ Majoranas terminate on the lower part, while $p$ groups of $4$ Majoranas terminate on $p$ disjoint open intervals on the upper part. For clarity of presentation, we color the boundary-tracking Majoranas in red. Using the string-genus relation Eq.~\eqref{eq:string-genus}, one can indeed show that $\hat{\Phi}_{2 + 2p}$ is an isomorphism between two Hilbert spaces, each of dimension $2^p$. Note that $\hat{\Phi}_2$ and $\hat{\Phi}_4$ are merely identity maps: 
\begin{equation}
\hat{\Phi}_{2} := \raisebox{-0.2cm}{
\tikz{
\fill [blue!10] (-0.2, 0.4) rectangle (0.4, 1);

\draw[red] (0,0.4)--(0,1);
\draw[red] (0.2,0.4)--(0.2,1);

\draw[thick,blue] (-0.2,0.4)--(-0.2,1);
\draw[thick,blue] (0.4,0.4)--(0.4,1);
}} \quad \textrm{and} \quad 
\hat{\Phi}_{4} := \raisebox{-0.2cm}{
\tikz{
\fill [blue!10] (-0.2, 0.4) rectangle (0.8, 1);

\draw[red] (0, 0.4)--(0,1);
\draw (0.2, 0.4)--(0.2, 1);
\draw (0.4, 0.4)--(0.4, 1);
\draw[red] (0.6,0.4)--(0.6,1);

\draw[thick,blue] (-0.2,0.4)--(-0.2,1);
\draw[thick,blue] (0.8,0.4)--(0.8,1);
}} 
\end{equation}
on a $1$-dimensional Hilbert space and a $2$-dimensional Hilbert space, respectively. 

Using Eq.~\eqref{eq:encoding-map-2+2p}, we can construct the (logical) computational basis states. When $p=0$, there is a unique normalized state given by 
\begin{equation}
\vert 0 \rangle = \frac{1}{\sqrt{2}} \raisebox{-0.1cm}{
\tikz{
\fill [blue!10] (0.4, 1) arc (0:180:0.4);

\draw[red] (0.15,1) arc (0:180:0.15);

\draw[thick,blue] (0.4, 1) arc (0:180:0.4);
}} \,\, . 
\end{equation}
When $p=1$, the logical $\vert 0 \rangle$ and $\vert 1 \rangle$ states are presented in Eq.~\eqref{eq:logical-qubit-4-majoranas}. When $p=2$, for example, the logical $\vert 0, 1 \rangle$ state is given by 
\begin{equation}
\vert 0, 1 \rangle = \frac{1}{(\sqrt{2})^3} \raisebox{-0.4cm}{
\begin{tikzpicture}[scale=0.7]
\draw[thick, blue, fill=blue!10] (0, 0.2)--(0, 1) arc (180:0:0.75) arc (-180:0:0.3) arc (180:0:0.75)--(3.6, 0.2);

\draw[red] (0.3, 0.2)--(0.3, 1) arc (180:0:0.45) arc (-180:0:0.6) arc (180:0:0.45)--(3.3, 0.2);
\draw (0.6, 0.2)--(0.6, 1) arc (180:0:0.15)--(0.9, 0.2);
\draw (2.7, 0.2)--(2.7, 1) arc (180:0:0.15)--(3, 0.2);

\node at (3, 1) [circle, fill, inner sep=1pt] {};
\node at (3.3, 1) [circle, fill, inner sep=1pt] {};
\end{tikzpicture}
} . 
\end{equation}
More generally, the logical $\vert b_1, b_2, \ldots, b_p \rangle$ state is given by 
\begin{equation}
\frac{1}{(\sqrt{2})^p} \,\, \raisebox{-0.6cm}{
\begin{tikzpicture}[scale=0.6]
\fill[blue!10] (0, 0.2)--(0, 1) arc (180:0:0.75) arc (-180:0:0.3) arc (180:0:0.75)--(3.6, 0.2);
\fill[blue!10] (4.8, 0.2)--(4.8, 1) arc (180:0:0.75) arc (-180:0:0.3) arc (180:0:0.75)--(8.4, 0.2);

\fill[gray, fill opacity=0.7] (0, 1) arc (180:0:0.75);
\fill[gray, fill opacity=0.7] (2.1, 1) arc (180:0:0.75);
\fill[gray, fill opacity=0.7] (4.8, 1) arc (180:0:0.75);
\fill[gray, fill opacity=0.7] (6.9, 1) arc (180:0:0.75);

\draw[ultra thick, gray] (0.75, 1.75)--(0.75, 2.2);
\draw[ultra thick, gray] (2.85, 1.75)--(2.85, 2.2);
\draw[ultra thick, gray] (5.55, 1.75)--(5.55, 2.2);
\draw[ultra thick, gray] (7.65, 1.75)--(7.65, 2.2);

\node at (0.75, 2.55) {$b_1$};
\node at (2.85, 2.55) {$b_2$};
\node at (5.55, 2.55) {$b_{p-1}$};
\node at (7.65, 2.55) {$b_p$};

\draw[thick, blue] (0, 0.2)--(0, 1) arc (180:0:0.75) arc (-180:0:0.3) arc (180:0:0.75);

\draw[red] (0.3, 0.2)--(0.3, 1) arc (180:0:0.45) arc (-180:0:0.6) arc (180:0:0.45) arc (-180:-120:0.6);
\draw (0.6, 0.2)--(0.6, 1) arc (180:0:0.15)--(0.9, 0.2);
\draw (2.7, 0.2)--(2.7, 1) arc (180:0:0.15)--(3, 0.2);

\node at (4.3, 0.55) {$\cdots$};

\draw[thick, blue] (4.8, 1) arc (180:0:0.75) arc (-180:0:0.3) arc (180:0:0.75)--(8.4, 0.2);

\draw[red] (8.1, 0.2)--(8.1, 1) arc (0:180:0.45) arc (0:-180:0.6) arc (0:180:0.45) arc (0:-60:0.6);
\draw (5.4, 0.2)--(5.4, 1) arc (180:0:0.15)--(5.7, 0.2);
\draw (7.5, 0.2)--(7.5, 1) arc (180:0:0.15)--(7.8, 0.2);
\end{tikzpicture}
} \,\, , 
\end{equation}
where $b_i \in \{0, 1\}$, and we used the basis encoder Eq.~\eqref{eq:basis-encoder}. Finally, the resolution of identity can be represented as
\begin{equation}
\raisebox{-1.2cm}{
\begin{tikzpicture}
\fill[blue!10] (0, 0) rectangle (2, 1.5);

\draw[thick, blue] (0, 0)--(0, 1.5);
\draw[thick, blue] (2, 0)--(2, 1.5);

\draw[red] (0.2, 0)--(0.2, 1.5);
\draw (0.4, 0)--(0.4, 1.5);
\draw (0.6, 0)--(0.6, 1.5);

\node at (1.05, 0.75) {$\cdots$};

\draw (1.4, 0)--(1.4, 1.5);
\draw (1.6, 0)--(1.6, 1.5);
\draw[red] (1.8, 0)--(1.8, 1.5);

\draw [decorate, decoration = {brace, mirror}] (0.15, -0.05)--(1.85, -0.05);
\node at (1, -0.35) {\scriptsize $2+2p$};
\end{tikzpicture}
} = \frac{1}{(\sqrt{2})^{3p-1}} \,\, \raisebox{-0.85cm}{
\begin{tikzpicture}[scale=0.5]
\fill[blue!10] (0, 0.2)--(0, 1) arc (180:0:0.75) arc (-180:0:0.3) arc (180:0:0.75)--(3.6, 0.2);
\fill[blue!10] (0, 3.8)--(0, 3) arc (-180:0:0.75) arc (180:0:0.3) arc (-180:0:0.75)--(3.6, 3.8);
\fill[blue!10] (4.8, 0.2)--(4.8, 1) arc (180:0:0.75) arc (-180:0:0.3) arc (180:0:0.75)--(8.4, 0.2);
\fill[blue!10] (4.8, 3.8)--(4.8, 3) arc (-180:0:0.75) arc (180:0:0.3) arc (-180:0:0.75)--(8.4, 3.8);

\fill[gray, fill opacity=0.7] (0, 1) arc (180:0:0.75);
\fill[gray, fill opacity=0.7] (2.1, 1) arc (180:0:0.75);
\fill[gray, fill opacity=0.7] (4.8, 1) arc (180:0:0.75);
\fill[gray, fill opacity=0.7] (6.9, 1) arc (180:0:0.75);

\fill[gray, fill opacity=0.7] (0, 3) arc (-180:0:0.75);
\fill[gray, fill opacity=0.7] (2.1, 3) arc (-180:0:0.75);
\fill[gray, fill opacity=0.7] (4.8, 3) arc (-180:0:0.75);
\fill[gray, fill opacity=0.7] (6.9, 3 ) arc (-180:0:0.75);

\draw[ultra thick, gray] (0.75, 1.75)--(0.75, 2.25);
\draw[ultra thick, gray] (2.85, 1.75)--(2.85, 2.25);
\draw[ultra thick, gray] (5.55, 1.75)--(5.55, 2.25);
\draw[ultra thick, gray] (7.65, 1.75)--(7.65, 2.25);

\draw[thick, blue] (0, 0.2)--(0, 1) arc (180:0:0.75) arc (-180:0:0.3) arc (180:0:0.75);
\draw[thick, blue] (0, 3.8)--(0, 3) arc (-180:0:0.75) arc (180:0:0.3) arc (-180:0:0.75);
\draw[thick, blue] (4.8, 1) arc (180:0:0.75) arc (-180:0:0.3) arc (180:0:0.75)--(8.4, 0.2);
\draw[thick, blue] (4.8, 3) arc (-180:0:0.75) arc (180:0:0.3) arc (-180:0:0.75)--(8.4, 3.8);

\node at (4.3, 0.55) {$\cdots$};
\node at (4.3, 3.45) {$\cdots$};

\draw[red] (0.3, 0.2)--(0.3, 1) arc (180:0:0.45) arc (-180:0:0.6) arc (180:0:0.45) arc (-180:-120:0.6);
\draw (0.6, 0.2)--(0.6, 1) arc (180:0:0.15)--(0.9, 0.2);
\draw (2.7, 0.2)--(2.7, 1) arc (180:0:0.15)--(3, 0.2);

\draw[red] (0.3, 3.8)--(0.3, 3) arc (-180:0:0.45) arc (180:0:0.6) arc (-180:0:0.45) arc (180:120:0.6);
\draw (0.6, 3.8)--(0.6, 3) arc (-180:0:0.15)--(0.9, 3.8);
\draw (2.7, 3.8)--(2.7, 3) arc (-180:0:0.15)--(3, 3.8);

\draw[red] (8.1, 0.2)--(8.1, 1) arc (0:180:0.45) arc (0:-180:0.6) arc (0:180:0.45) arc (0:-60:0.6);
\draw (5.4, 0.2)--(5.4, 1) arc (180:0:0.15)--(5.7, 0.2);
\draw (7.5, 0.2)--(7.5, 1) arc (180:0:0.15)--(7.8, 0.2);

\draw[red] (8.1, 3.8)--(8.1, 3) arc (0:-180:0.45) arc (0:180:0.6) arc (0:-180:0.45) arc (0:60:0.6);
\draw (5.4, 3.8)--(5.4, 3) arc (-180:0:0.15)--(5.7, 3.8);
\draw (7.5, 3.8)--(7.5, 3) arc (-180:0:0.15)--(7.8, 3.8);
\end{tikzpicture}
} \,\, , 
\end{equation}
which generalizes Eq.~\eqref{eq:resolution-of-id-2}. 

When using $\Phi_{2+2p}$ Eq.~\eqref{eq:encoding-map-2+2p} to encode logical qubits, we make a particular choice in the pairing data. We will see below that, our particular pairing choice can also be presented in terms of the following basis encoder: 
\begin{equation}
\label{eq:basis-encoder-base}
\raisebox{-0.6cm}{
\begin{tikzpicture}[scale=0.6]
\fill[blue!10] (0, 0) arc (180:0:1.6);
\fill[gray, fill opacity=0.7] (0, 0) arc (180:0:1.6);
\draw[thick, blue] (0, 0) arc (180:0:1.6);

\draw[red] (0.4, 0) arc (180:0:1.2);
\draw (0.6, 0) arc (180:0:0.3);
\draw (2, 0) arc (180:0:0.3);

\node at (1.65, 0.2) {$\cdots$};

\draw[ultra thick, gray] ({1.6-0.8*sqrt(2)}, {0.8*sqrt(2)})--({1.6-1.2*sqrt(2)}, {1.2*sqrt(2)});
\draw[ultra thick, gray] ({1.6+0.8*sqrt(2)}, {0.8*sqrt(2)})--({1.6+1.2*sqrt(2)}, {1.2*sqrt(2)});

\node at (-0.25, 2.05) {$b_1$};
\node at (3.45, 2.05) {$b_p$};

\node at (1.65, 2.05) {$\cdots$};
\end{tikzpicture}
} := 
\raisebox{-0.6cm}{
\begin{tikzpicture}[scale=0.6]
\fill[blue!10] (0, 0.2)--(0, 1) arc (180:0:0.75)--(1.5, 0.2);
\fill[blue!10] (2.8, 0.2)--(2.8, 1) arc (180:0:0.75)--(4.3, 0.2);

\fill[gray, fill opacity=0.7] (0, 1) arc (180:0:0.75);
\fill[gray, fill opacity=0.7] (2.8, 1) arc (180:0:0.75);

\draw[ultra thick, gray] (0.75, 1.75)--(0.75, 2.2);
\draw[ultra thick, gray] (3.55, 1.75)--(3.55, 2.2);

\node at (0.75, 2.55) {$b_1$};
\node at (3.55, 2.55) {$b_p$};

\draw[thick, blue] (0, 0.2)--(0, 1) arc (180:0:0.75);

\draw[red] (0.3, 0.2)--(0.3, 1) arc (180:0:0.45) arc (-180:-120:0.6);
\draw (0.6, 0.2)--(0.6, 1) arc (180:0:0.15)--(0.9, 0.2);

\node at (2.2, 0.55) {$\cdots$};

\draw[thick, blue] (2.8, 1) arc (180:0:0.75)--(4.3, 0.2);

\draw[red] (4, 0.2)--(4, 1) arc (0:180:0.45) arc (0:-60:0.6);
\draw (3.4, 0.2)--(3.4, 1) arc (180:0:0.15)--(3.7, 0.2);
\end{tikzpicture}
} \,\, , 
\end{equation}
where the pairing data is shown in the shaded region on the LHS. 

\subsection{Encoding with the pairing data}
When the pairing data is provided, we can encode the logical qubits in a manner that respects the pairing data. For example, consider the following basis encoder supporting $8$ Majoranas, where the pairing data is explicitly shown: 
\begin{equation}
\raisebox{-0.5cm}{
\begin{tikzpicture}[scale=0.6]
\fill[blue!10] (0, 0) arc (180:0:1.55);
\fill[gray, fill opacity=0.7] (0, 0) arc (180:0:1.55);

\draw (0.4, 0) arc (180:80:0.4);
\draw (1.2, 0) arc (0:60:0.4);

\draw (0.8, 0) arc (180:50:0.7);
\draw (2.2, 0) arc (0:30:0.7);

\draw (1.4, 0) arc (180:0:0.3);

\draw (1.7, 0) arc (180:160:0.5);
\draw (2.7, 0) arc (0:140:0.5);

\draw[ultra thick, gray] ({1.55-0.775*sqrt(2)}, {0.775*sqrt(2)})--({1.55-1.2*sqrt(2)}, {1.2*sqrt(2)});
\draw[ultra thick, gray] (1.55, 1.55)--(1.55, 2);
\draw[ultra thick, gray] ({1.55+0.775*sqrt(2)}, {0.775*sqrt(2)})--({1.55+1.2*sqrt(2)}, {1.2*sqrt(2)});

\node at (-0.25, 2.05) {$b_1$};
\node at (1.55, 2.3) {$b_2$};
\node at (3.4, 2) {$b_3$};

\draw[thick, blue] (0, 0) arc (180:0:1.55);
\end{tikzpicture}
} = \raisebox{-0.5cm}{
\begin{tikzpicture}[scale=0.6]
\fill[blue!10] (0, 0) arc (180:0:1.55);
\fill[gray, fill opacity=0.7] (0, 0) arc (180:0:1.55);

\draw (0.4, 0.2) rectangle (2.7, 0.8);

\node at (1.55, 0.5) {\small $p$};

\draw (0.5, 0)--(0.5, 0.2);
\draw (0.8, 0)--(0.8, 0.2);
\draw (1.1, 0)--(1.1, 0.2);
\draw (1.4, 0)--(1.4, 0.2);
\draw (1.7, 0)--(1.7, 0.2);
\draw (2, 0)--(2, 0.2);
\draw (2.3, 0)--(2.3, 0.2);
\draw (2.6, 0)--(2.6, 0.2);

\draw[ultra thick, gray] ({1.55-0.775*sqrt(2)}, {0.775*sqrt(2)})--({1.55-1.2*sqrt(2)}, {1.2*sqrt(2)});
\draw[ultra thick, gray] (1.55, 1.55)--(1.55, 2);
\draw[ultra thick, gray] ({1.55+0.775*sqrt(2)}, {0.775*sqrt(2)})--({1.55+1.2*sqrt(2)}, {1.2*sqrt(2)});

\node at (-0.25, 2.05) {$b_1$};
\node at (1.55, 2.3) {$b_2$};
\node at (3.4, 2) {$b_3$};

\draw[thick, blue] (0, 0) arc (180:0:1.55);
\end{tikzpicture}
} , 
\end{equation}
where $p$ denotes the Majorana diagram for the pairing data. From the pairing data, we can determine the Majorana pairings, which in this example are $\{(1, 3), (2, 7), (4, 6), (5, 8)\}$, where we used the ordered tuples. The logical qubit encoding can be obtained by adding dots to the leftmost Majoranas of each Majorana pair, which in our previous example are $1$, $2$, $4$, and $5$: 
\begin{equation}
\label{eq:basis-encoder-general}
\raisebox{-0.5cm}{
\begin{tikzpicture}[scale=0.6]
\fill[blue!10] (0, 0) arc (180:0:1.55);
\fill[gray, fill opacity=0.7] (0, 0) arc (180:0:1.55);

\draw (0.4, 0) arc (180:80:0.4);
\draw (1.2, 0) arc (0:60:0.4);

\draw (0.8, 0) arc (180:50:0.7);
\draw (2.2, 0) arc (0:30:0.7);

\draw (1.4, 0) arc (180:0:0.3);

\draw (1.7, 0) arc (180:160:0.5);
\draw (2.7, 0) arc (0:140:0.5);

\draw[ultra thick, gray] ({1.55-0.775*sqrt(2)}, {0.775*sqrt(2)})--({1.55-1.2*sqrt(2)}, {1.2*sqrt(2)});
\draw[ultra thick, gray] (1.55, 1.55)--(1.55, 2);
\draw[ultra thick, gray] ({1.55+0.775*sqrt(2)}, {0.775*sqrt(2)})--({1.55+1.2*sqrt(2)}, {1.2*sqrt(2)});

\node at (-0.25, 2.05) {$b_1$};
\node at (1.55, 2.3) {$b_2$};
\node at (3.4, 2) {$b_3$};

\draw[thick, blue] (0, 0) arc (180:0:1.55);
\end{tikzpicture}
} = \raisebox{-0.5cm}{
\begin{tikzpicture}[scale=0.8]
\draw[thick, blue, fill=blue!10] (-0.5, -0.5) arc (180:0:2);

\draw (0.1, 0.3) rectangle (2.9, 0.8);
\node at (1.5, 0.55) {$p$};

\draw (0.1, -0.3) rectangle (2.9, 0.1);
\node at (1.5, -0.1) {$q$};

\draw (0.45, -0.5)--(0.45, -0.3);
\draw (0.45, 0.1)--(0.45, 0.3);

\draw (0.75, -0.5)--(0.75, -0.3);
\draw (0.75, 0.1)--(0.75, 0.3);

\draw (1.05, -0.5)--(1.05, -0.3);
\draw (1.05, 0.1)--(1.05, 0.3);

\draw (1.35, -0.5)--(1.35, -0.3);
\draw (1.35, 0.1)--(1.35, 0.3);

\draw (1.65, -0.5)--(1.65, -0.3);
\draw (1.65, 0.1)--(1.65, 0.3);

\draw (1.95, -0.5)--(1.95, -0.3);
\draw (1.95, 0.1)--(1.95, 0.3);

\draw (2.25, -0.5)--(2.25, -0.3);
\draw (2.25, 0.1)--(2.25, 0.3);

\draw (2.55, -0.5)--(2.55, -0.3);
\draw (2.55, 0.1)--(2.55, 0.3);
\end{tikzpicture}
} , 
\end{equation}
where $p$ is the Majorana diagram for the pairing data and $q$ is the Majorana diagram with dots: 
\begin{equation}
\raisebox{-0.5cm}{
\begin{tikzpicture}
\draw (0.1, -0.3) rectangle (2.9, 0.1);
\node at (1.5, -0.1) {$q$};

\draw (0.45, -0.7)--(0.45, -0.3);
\draw (0.45, 0.1)--(0.45, 0.5);

\draw (0.75, -0.7)--(0.75, -0.3);
\draw (0.75, 0.1)--(0.75, 0.5);

\draw (1.05, -0.7)--(1.05, -0.3);
\draw (1.05, 0.1)--(1.05, 0.5);

\draw (1.35, -0.7)--(1.35, -0.3);
\draw (1.35, 0.1)--(1.35, 0.5);

\draw (1.65, -0.7)--(1.65, -0.3);
\draw (1.65, 0.1)--(1.65, 0.5);

\draw (1.95, -0.7)--(1.95, -0.3);
\draw (1.95, 0.1)--(1.95, 0.5);

\draw (2.25, -0.7)--(2.25, -0.3);
\draw (2.25, 0.1)--(2.25, 0.5);

\draw (2.55, -0.7)--(2.55, -0.3);
\draw (2.55, 0.1)--(2.55, 0.5);
\end{tikzpicture}
} = i^{(b_\textrm{tot})^2} \raisebox{-0.5cm}{
\begin{tikzpicture}
\draw (0.45, -0.7)--(0.45, 0.5);
\draw (0.75, -0.7)--(0.75, 0.5);
\draw (1.05, -0.7)--(1.05, 0.5);
\draw (1.35, -0.7)--(1.35, 0.5);
\draw (1.65, -0.7)--(1.65, 0.5);
\draw (1.95, -0.7)--(1.95, 0.5);
\draw (2.25, -0.7)--(2.25, 0.5);
\draw (2.55, -0.7)--(2.55, 0.5);

\node at (0.45, -0.5) [circle, fill, inner sep=1pt] {};
\node at (0.75, -0.2) [circle, fill, inner sep=1pt] {};
\node at (1.35, 0.1) [circle, fill, inner sep=1pt] {};
\node at (1.65, 0.3) [circle, fill, inner sep=1pt] {};

\node at (0.2, -0.5) {\scriptsize $b_\textrm{tot}$};
\node at (0.6, -0.2) {\scriptsize $b_1$};
\node at (1.2, 0.1) {\scriptsize $b_2$};
\node at (1.5, 0.3) {\scriptsize $b_3$};
\end{tikzpicture}
} , 
\end{equation}
with $b_\textrm{tot} = b_1 + b_2 + b_3$ (the summation is over $\mathbb{Z}$). The scalar coefficient $i^{(b_\textrm{tot})^2}$ is chosen such that Eq.~\eqref{eq:basis-encoder-general} is consistent with Eq.~\eqref{eq:basis-encoder-base}. Note that two basis encoders Eq.~\eqref{eq:basis-encoder-base} and Eq.~\eqref{eq:basis-encoder-general} \textcolor{black}{differ by a unitary that is both Clifford and matchgate}. 

\section{Time-direction and Dots}
\label{app:dots-and-time-direction}
Here, we provide some technical remarks on the necessity of the time direction when dots are present in a 2D Quon diagram. As previously explained, Majorana and Quon diagrams can be interpreted as a quantum processes, with the time direction either implicitly assumed to flow from top to bottom or explicitly indicated by a time-arrow. On the other hand, given the Reidemeister moves for braids, as well as their proper generalizations for scatterings, both of which prsented in TABLE~\ref{tab:majorana-rewriting-rules}, one might ask whether it is possible to completely disregard the time direction and instead select an arbitrary time direction only when needed. This situation can be compared with the case of knot (or link) invariants. Given a knot diagram and its rotated diagram in the plane, the two diagrams evaluate to the same (framed) knot invariant because they are related by a series of Reidemeister moves (and share the same writhe). Similarly, if a closed Majorana or Quon diagram contains braids and scattering elements but \textit{no} dots, the evaluation of the diagram is identical to that of its rotated counterpart, assuming that the time direction is fixed to flow from top to bottom in both cases. 

However, when a Majorana or Quon diagram contains dots, the time direction becomes essential. Changing the time direction alters the relative ordering between dots, and correspondingly, the the scalar coefficient of the diagram changes via the anti-commutation relation. For example, consider the following closed Majorana diagram and its rotated counterpart: 
\begin{equation}
\raisebox{-0.66cm}{
\begin{tikzpicture}
\draw (0.1, -0.32) to[out=26, in=-90] (0.4, 0) arc (0:180:0.4) to[out=-90, in=90] (0.4, -0.7) arc (0:-180:0.4) to[out=90, in=-160] (-0.1, -0.39);

\node at (-0.3, -0.2) [circle, fill, inner sep=1pt] {};
\node at (0.3, -0.2) [circle, fill, inner sep=1pt] {};
\end{tikzpicture}
} \,\, \ne \raisebox{-0.45cm}{
\begin{tikzpicture}[rotate=70]
\draw (0.1, -0.32) to[out=26, in=-90] (0.4, 0) arc (0:180:0.4) to[out=-90, in=90] (0.4, -0.7) arc (0:-180:0.4) to[out=90, in=-160] (-0.1, -0.39);

\node at (-0.3, -0.2) [circle, fill, inner sep=1pt] {};
\node at (0.3, -0.2) [circle, fill, inner sep=1pt] {};
\end{tikzpicture}
} \,\, , 
\end{equation}
where the evaluations of the two diagrams differ by a factor of $i$. In general, two closed Majorana or Quon diagrams that are related by a rotation in the plane evaluate to complex numbers differing by a factor of a power of $i$. On the other hand, a pair of dots can be ``pair-up'' and replaced with braids using the following trick: 
\begin{equation}
\label{eq:pair-of-dots-to-braid}
\raisebox{-0.43cm}{
\tikz{
\draw (-1.2, 0)--(-1.2, 1);
\draw (-0.8, 0)--(-0.8, 1);

\draw (0, 0)--(0, 1);
\draw (0.4, 0)--(0.4, 1);

\node at (-0.35, 0.5) {$\cdots$};

\node at (-1.2, 0.5) [circle, fill, inner sep=1pt] {};
\node at (0.4, 0.5) [circle, fill, inner sep=1pt] {};
}} \,\, = \raisebox{-0.43cm}{
\tikz{
\draw (-1.2, 0)--(-1.2, 1);

\draw (-0.8, 0)--(-0.8, 1);
\draw (0, 0)--(0, 1);

\draw (0.4, 0) to[out=120, in=-17, looseness=0.5] (0.05, 0.16);
\draw (-0.05, 0.18) to[out=170, in=-20] (-0.15, 0.2);
\draw (-0.65, 0.29) to[out=170, in=-20] (-0.75, 0.32);
\draw (-0.85, 0.35) to[out=160, in=-160, looseness=1.8] (-0.85, 0.65);
\draw (-0.75, 0.68) to[out=20, in=-170] (-0.65, 0.71);
\draw (-0.15, 0.8) to[out=20, in=-170] (-0.05, 0.82);
\draw (0.05, 0.84) to[out=17, in=-120, looseness=0.5] (0.4, 1);

\node at (-0.35, 0.5) {$\cdots$};

\node at (-1.2, 0.5) [circle, fill, inner sep=1pt] {};
\node at (-1, 0.5) [circle, fill, inner sep=1pt] {};
}} \,\, = e^{-i \frac{\pi}{4}} \raisebox{-0.43cm}{
\tikz{
\draw (-0.8, 0)--(-0.8, 0.6);
\draw (-0.8, 0.75)--(-0.8, 1);

\draw (-0.6, 0)--(-0.6, 1);
\draw (0, 0)--(0, 1);

\draw (0.4, 0) to[out=120, in=-17, looseness=0.5] (0.05, 0.16);
\draw (-0.05, 0.18) to[out=170, in=-20] (-0.15, 0.2);
\draw (-0.45, 0.25) to[out=170, in=-20] (-0.55, 0.27);
\draw (-0.65, 0.29) to[out=170, in=-20] (-0.75, 0.32);
\draw (-0.85, 0.35) to[out=160, in=-160, looseness=1.8] (-0.85, 0.65) to[out=20, in=-170] (-0.65, 0.71);
\draw (-0.55, 0.73) to[out=20, in=-170] (-0.45, 0.75);
\draw (-0.15, 0.8) to[out=20, in=-170] (-0.05, 0.82);
\draw (0.05, 0.84) to[out=17, in=-120, looseness=0.5] (0.4, 1);

\node at (-0.3, 0.5) {\scriptsize $\cdots$};
}} \,\, , 
\end{equation}
where we replaced two parallel dots with two braids following TABLE~\ref{tab:majorana-rewriting-rules} in the last equality. Therefore, when a Majorana or Quon diagram contains an even number of dots, this trick can be applied to completely replace the dots from the diagram, allowing the time-direction to be omitted. 

Finally, we note that, an isolated dot on any time slice is not permitted in a 2D Quon diagram, because such a diagram evaluates to zero due to the parity-even projection: 
\begin{equation}
\raisebox{-0.4cm}{\tikz{
\fill [blue!10] (-0.2, 0) rectangle (1.4, 1);

\draw (0, 0)--(0, 1);
\draw (0.2, 0)--(0.2, 1);
\draw (0.4, 0)--(0.4, 1);
\draw (1.2, 0)--(1.2, 1);

\draw[thick,blue] (-0.2, 0)--(-0.2, 1);
\draw[thick,blue] (1.4, 0)--(1.4, 1);

\node at (0.2, 0.5) [circle, fill, inner sep=1pt] {};

\node at (0.85, 0.5) {$\cdots$};
}} \,\, = \raisebox{-0.4cm}{\tikz{
\draw (0, 0)--(0, 1);
\draw (0.2, 0)--(0.2, 1);
\draw (0.4, 0)--(0.4, 1);
\draw (1.2, 0)--(1.2, 1);

\node at (0.2, 0.5) [circle, fill, inner sep=1pt] {};

\draw[thick,blue] (-0.2, 0.3) --(1.4, 0.3);
\draw[thick,blue] (-0.2, 0.7) --(1.4, 0.7);

\node at (0.85, 0.5) {$\cdots$};
}} \,\, = 0 . 
\end{equation}

\section{Matchgates}
\label{app:matchgate}
In this section, we explain the notion of \textit{matchgate} in the context of tensors, tensor networks, quantum gates, and quantum circuits. The term ``matchgate'' originates from the process of counting the number of dimer coverings, hence matchings, of a graph, and its tractability on a planar graph due to the renowned Fisher-Kasteleyn-Temperley (FKT) algorithm~\cite{kasteleyn1961statistics, temperley1961dimer}. 

\subsection{Matchgate tensor and tensor network}
An $n$-leg tensor $T$ is called \textit{matchgate}, if it satisfies the matchgate identity~\cite{valiant2001quantum, cai2009theory}: 
\begin{equation}
\label{eq:MGI}
\sum_{a: x_a \ne y_a} T(x \oplus e^a) T(y \oplus e^a) (-1)^{x_1 + \ldots + x_{a-1} + y_1 + \ldots + y_{a-1}} = 0
\end{equation}
for all $x, y \in \{0, 1\}^n$, where $T(x) = T_{x_1, x_2, \ldots, x_n}$ denotes the component of $T$, and the summation in Eq.~\eqref{eq:MGI} is over indices $a$ at which $x_a$ and $y_a$ differ. While the non-linear constraints may appear complicated, they essentially capture the fact that the components are given by the pfaffians, and that the matchgate identity is, in essence, a pfaffian identity in disguise, reflecting the gaussianity of fermions. 

A planar tensor network is called matchgate, if every tensor in the network is a matchgate tensor. Moreover, matchgate tensor networks remain closed under various planar tensor operations, as explained in Sec.~\ref{sec:planar-TN}. 

\subsection{Matchgate quantum gates and quantum circuits}
Consider a quantum circuit in which qubits are arranged linearly with a well-defined ordering, i.e., they lie along an interval rather than forming a circle. If quantum gates in a quantum circuit are either $1$-qubit gates or nearest-neighbor $2$-qubit gates that satisfy the matchgate identity Eq.~\eqref{eq:MGI}, then the quantum circuit can be viewed as a planar matchgate tensor network, with $1$-qubit gates corresponding to degree-2 vertices and $2$-qubit gates corresponding to degree-4 vertices. One can indeed define matchgate quantum gates and quantum circuits in this manner. Moreover, one can further show that the set $\{X\} \cup \{ e^{i \frac{\theta}{2}} e^{-i \frac{\theta}{2} Z}, e^{i \frac{\theta}{2}} e^{-i \frac{\theta}{2} X \otimes X} \}_{\theta \in [0, 2\pi)}$ forms a universal generating set for matchgate quantum gates and circuits. However, in the conventional definition of a matchgate quantum circuit, the allowed quantum gates are presented somewhat differently. In particular, only parity-preserving gates are considered, so the $X$ gate is explicitly excluded. Since including the $X$ gate in a matchgate quantum circuit does not compromise the tractability of the tensor network, in the main text we have included the $X$ gates in matchgate quantum circuits. In the following, we demonstrate that the conventional definition of the matchgate quantum circuit is compatible with the aforementioned definition in terms of matchgate tensor network, by showing that the set $\{ e^{i \frac{\theta}{2}} e^{-i \frac{\theta}{2} Z}, e^{i \frac{\theta}{2}} e^{-i \frac{\theta}{2} X \otimes X} \}_{\theta \in [0, 2\pi)}$ generates gates denoted by $G(A, B)$ below. 

In the conventional definition, an $n$-qubit quantum circuit is called a \textit{matchgate} circuit if it is generated by nearest-neighbor $2$-qubit gates of the following form: 
\begin{equation}
\label{eq:matchgate-G-A-B}
G(A, B) := \left( \begin{array}{cccc}
A_{11} & 0 & 0 & A_{12} \\ 
0 & B_{11} & B_{12} & 0 \\ 
0 & B_{21} & B_{22} & 0 \\ 
A_{21} & 0 & 0 & A_{22} 
\end{array} \right) 
\end{equation}
where $A$ and $B$ are unitaries satisfying $\det(A) = \det (B)$, and $A$ and $B$ may differ from gate to gate. Note that, as a tensor, $G(A, B)$ satisfies the matchgate identity Eq.~\eqref{eq:MGI} if and only if $\det(A) = \det (B)$. By multiplying an overall $\mathrm{U}(1)$ factor, one can further assume that $A, B \in \mathrm{SU}(2)$. 

We now we prove that the set of $2$-qubit gates of the form $G(A, B)$ with $A, B \in \mathrm{SU}(2)$ is equal to the set of $2$-qubit gates generated by the following parity-preserving matchgate generating set 
\begin{equation}
\label{eq:pp-matchgate-generating-set}
\{ e^{i \frac{\theta}{2}} e^{-i \frac{\theta}{2} Z}, e^{i \frac{\theta}{2}} e^{-i \frac{\theta}{2} X \otimes X} \}_{\theta \in [0, 2\pi)} .
\end{equation}
It is immediate to show that the gates in the generating set, including the $1$-qubit gates, can always be written as $G(A, B)$ for some $A, B \in \mathrm{SU}(2)$. Therefore, it suffices to show that an arbitrary $G(A, B)$ with $A, B \in \mathrm{SU}(2)$ can be generated by the generating set. 

First, observe that $G(A, B) = G(A, I) G(I, B)$, where $I$ is the $2 \times 2$ identity matrix. Second, we demonstrate that $G(A, I)$ for $A \in \mathrm{SU}(2)$ is generated by the parity-preserving matchgate generating set Eq.~\eqref{eq:pp-matchgate-generating-set}. An arbitrary $A \in \mathrm{SU}(2)$ can be parametrized as 
\begin{equation}
A = \left( \begin{array}{cc}
\alpha & -\beta^* \\ 
\beta & \alpha^* 
\end{array} \right) , \nonumber 
\end{equation}
where $\alpha, \beta \in \mathbb{C}$ satisfying $|\alpha|^2 + |\beta|^2 = 1$. We further parametrize $\alpha = e^{i \phi_1} \cos \frac{\theta}{2}$ and $\beta = e^{i \phi_2} \sin \frac{\theta}{2}$, where $\theta, \phi_1, \phi_2 \in [0, 2\pi)$. After a tedious calculation, one can show that 
\begin{align}
&G(A, I) = \Big( e^{i ( - \frac{\pi}{4} + \frac{\phi_1}{2} - \frac{\phi_2}{4} ) Z_1} \otimes e^{-i \frac{\phi_2}{4} Z_2} \Big) e^{-i \frac{\theta}{4} X_1 X_2} \nonumber \\ 
& \quad \times \Big( e^{i \frac{\pi}{4} Z_1} \otimes e^{i \frac{\pi}{4} Z_2} \Big) e^{i \frac{\theta}{4} X_1 X_2} \Big( e^{i \frac{\phi_2}{4} Z_1} \otimes e^{i (-\frac{\pi}{4} + \frac{\phi_1}{2} + \frac{\phi_2}{4}) Z_2} \Big) , 
\end{align}
i.e., $G(A, I)$ is generated by the parity-preserving matchgate generating set Eq.~\eqref{eq:pp-matchgate-generating-set}. Lastly, we demonstrate that $G(I, B)$ for $B \in \mathrm{SU}(2)$ is generated by the parity-preserving matchgate generating set Eq.~\eqref{eq:pp-matchgate-generating-set}. An arbitrary $\mathrm{SU}(2)$ matrix $B$ can be parametrized as 
\begin{equation}
B = \left( \begin{array}{cc}
e^{i \phi'_1} \cos \frac{\theta'}{2} & -e^{-i \phi'_2} \sin \frac{\theta'}{2} \\ 
e^{i \phi'_2} \sin \frac{\theta'}{2} & e^{-i \phi'_1} \cos \frac{\theta'}{2}
\end{array} \right) , \nonumber 
\end{equation}
where $\theta', \phi'_1, \phi'_2 \in [0, 2\pi)$. After a tedious calculation, one can show that 
\begin{align} 
&G(I, B) = \Big( e^{i ( - \frac{\pi}{4} + \frac{\phi_1}{2} - \frac{\phi_2}{4} ) Z_1} \otimes e^{i \frac{\phi_2}{4} Z_2} \Big) e^{-i \frac{\theta}{4} X_1 X_2} \nonumber \\ 
& \quad \times \Big( e^{i \frac{\pi}{4} Z_1} \otimes e^{-i \frac{\pi}{4} Z_2} \Big) e^{i \frac{\theta}{4} X_1 X_2} \Big( e^{i \frac{\phi_2}{4} Z_1} \otimes e^{i (\frac{\pi}{4} - \frac{\phi_1}{2} - \frac{\phi_2}{4}) Z_2} \Big) 
\end{align}
holds and thus it is generated by the generating set Eq.~\eqref{eq:pp-matchgate-generating-set}. In sum, all necessary gates have been generated, thereby concluding the proof. 

\subsection{Two ways toward universality}
It is known that any non-matchgate quantum state, a state that cannot be generated by a matchgate quantum circuit from a computational basis state, can serve as a \textit{magic state} for matchgate computation~\cite{PhysRevLett.123.080503}, thereby promoting matchgate computation to universal quantum computation. With this general result in mind, we focus here on two ways to acheiving universal quantum computation. 

The first way is to employ the Hadamard gate. In this case, the Hadamard gate plays a role analogous to that of the $T$-gate for Clifford circuits. For a given $H$-gate-doped matchgate circuit (or tensor network), where $n_H$ denotes the number of Hadamard gates, we assign a bit variable $b = 0, 1$ to each Hadamard gate. Using the decomposition of the Hadamard gate
\begin{equation}
H = \frac{1}{2} \big( X + Z \big) , 
\end{equation}
we substitute $X$ for $H$ when the bit variable $b$ equals $0$, and substitute $Z$ otherwise. Then the overall computation can be expressed as a sum of $2^{n_H}$ terms, each of which is efficiently evaluated via matchgate computation. 

The second way is to employ the $\textrm{SWAP}$ gate. In this case, rather than considering a $\textrm{SWAP}$-doped-matchgate circuit, we consider a quantum circuit $\mathcal{C}$ generated by arbitrary $1$-qubit gates and nearest-neighbor $\textrm{CZ}$-gates, and simulate $\mathcal{C}$ by embedding it into a $\textrm{SWAP}$-gate-doped matchgate circuit $M_\mathcal{C}$. To construct $M_\mathcal{C}$, we double the number of qubits and embed the $i$th qubit in $\mathcal{C}$ into the parity-even subspace spanned by qubits $2i-1$ and $2i$ in $M_\mathcal{C}$, reminiscent of a logical qubit encoded in the parity-even subspace of $4$-Majorana Hilbert space in Quon. We refer to the embedded $i$th qubit as the ``logical'' qubit. 

The initial $n$-qubit computational basis state $\vert b_1, \ldots, b_n \rangle$ in $\mathcal{C}$ is mapped to a $2n$-qubit state $\vert b_1, b_1, \ldots, b_n, b_n \rangle$ in $M_\mathcal{C}$. A logical $1$-qubit gate $U$ can be implemented in the parity-even sector of the matchgate $G(U, U)$. A logical $\textrm{CZ}$-gate between two neighboring logical qubits can be implemented using a single $\textrm{SWAP}$ gate via the following $\textrm{CZ}$ \textit{gadget}~\cite{PhysRevA.84.022310}: 
\begin{equation}
\begin{quantikz}
\lstick[2]{$1_\textrm{L}$} 1 & \qw & \qw & \qw & \qw & \qw \\
2 & \gate[2]{G_H} & \gate[swap]{} & \gate[2]{G_X} & \gate[2]{G_H} & \qw \\
\lstick[2]{$2_\textrm{L}$} 3 & & & & & \qw \\ 
4 & \qw & \qw & \qw & \qw & \qw 
\end{quantikz} \,\, , 
\end{equation}
where we denote the $i$th logical qubit as $i_\textrm{L}$ and set $G_A := G(A, A)$. In fact, viewing $M_\mathcal{C}$ as a \textit{surface} matchgate tensor network, the $\textrm{SWAP}$ gate in the $\textrm{CZ}$ gadget can be implemented by the following \textit{handle-attaching} gadget: 
\begin{equation}
\raisebox{-0.8cm}{
\begin{tikzpicture}
\draw[dashed, thick, teal, fill=teal!20, fill opacity=0.4] (0,0) ellipse (2cm and 0.7cm);

\draw[ultra thick, gray!50] (-0.86, 0.55) to[out=-20, in=160] (-0.57, 0.44);

\draw[teal, fill=teal, fill opacity=0.2, shade] (-1, -0.2) arc (-160:-20:0.2) to[out=100, in=90, looseness=2.5] (0.6, 0.2) arc (-160:-20:0.18) to[out=90, in=90, looseness=2.2] (-1, -0.2);

\draw[teal, dashed] (-1, -0.2) arc (160:20:0.2);
\draw[teal, dashed] (0.6, 0.2) arc (160:20:0.18);

\draw[ultra thick, gray] (-1.5, 0.8) to[out=-20, in=160] (-0.86, 0.55);
\draw[ultra thick, gray] (-0.57, 0.44) to[out=-20, in=160] (1.8, -0.8);

\draw[ultra thick, gray] (-1.5, -0.9) to[out=40, in=-130] (-0.83, -0.33) to[out=90, in=100, looseness=2.4] (0.83, 0.1) to[out=-10, in=-140] (1.8, 0.7);
\end{tikzpicture}
} \,\, , 
\end{equation}
which introduces a \textit{genus} into the surface. Therefore, it is desirable to develop a genus-minimization algorithm for a surface matchgate tensor network. See Refs.~\onlinecite{cimasoni2007dimers, cimasoni2008dimers, cimasoni2009dimers, Dijkgraaf2009dimer, bravyi2009contraction} for an earlier work related to this direction.

\bibliography{ref}

\begin{thebibliography}{119}%
\makeatletter
\providecommand \@ifxundefined [1]{%
 \@ifx{#1\undefined}
}%
\providecommand \@ifnum [1]{%
 \ifnum #1\expandafter \@firstoftwo
 \else \expandafter \@secondoftwo
 \fi
}%
\providecommand \@ifx [1]{%
 \ifx #1\expandafter \@firstoftwo
 \else \expandafter \@secondoftwo
 \fi
}%
\providecommand \natexlab [1]{#1}%
\providecommand \enquote  [1]{``#1''}%
\providecommand \bibnamefont  [1]{#1}%
\providecommand \bibfnamefont [1]{#1}%
\providecommand \citenamefont [1]{#1}%
\providecommand \href@noop [0]{\@secondoftwo}%
\providecommand \href [0]{\begingroup \@sanitize@url \@href}%
\providecommand \@href[1]{\@@startlink{#1}\@@href}%
\providecommand \@@href[1]{\endgroup#1\@@endlink}%
\providecommand \@sanitize@url [0]{\catcode `\\12\catcode `\$12\catcode
  `\&12\catcode `\#12\catcode `\^12\catcode `\_12\catcode `\%12\relax}%
\providecommand \@@startlink[1]{}%
\providecommand \@@endlink[0]{}%
\providecommand \url  [0]{\begingroup\@sanitize@url \@url }%
\providecommand \@url [1]{\endgroup\@href {#1}{\urlprefix }}%
\providecommand \urlprefix  [0]{URL }%
\providecommand \Eprint [0]{\href }%
\providecommand \doibase [0]{https://doi.org/}%
\providecommand \selectlanguage [0]{\@gobble}%
\providecommand \bibinfo  [0]{\@secondoftwo}%
\providecommand \bibfield  [0]{\@secondoftwo}%
\providecommand \translation [1]{[#1]}%
\providecommand \BibitemOpen [0]{}%
\providecommand \bibitemStop [0]{}%
\providecommand \bibitemNoStop [0]{.\EOS\space}%
\providecommand \EOS [0]{\spacefactor3000\relax}%
\providecommand \BibitemShut  [1]{\csname bibitem#1\endcsname}%
\let\auto@bib@innerbib\@empty
\bibitem [{\citenamefont {Dalzell}\ \emph {et~al.}(2023)\citenamefont
  {Dalzell}, \citenamefont {McArdle}, \citenamefont {Berta}, \citenamefont
  {Bienias}, \citenamefont {Chen}, \citenamefont {Gily{\'e}n}, \citenamefont
  {Hann}, \citenamefont {Kastoryano}, \citenamefont {Khabiboulline},
  \citenamefont {Kubica} \emph {et~al.}}]{dalzell2023quantum}%
  \BibitemOpen
  \bibfield  {author} {\bibinfo {author} {\bibfnamefont {A.~M.}\ \bibnamefont
  {Dalzell}}, \bibinfo {author} {\bibfnamefont {S.}~\bibnamefont {McArdle}},
  \bibinfo {author} {\bibfnamefont {M.}~\bibnamefont {Berta}}, \bibinfo
  {author} {\bibfnamefont {P.}~\bibnamefont {Bienias}}, \bibinfo {author}
  {\bibfnamefont {C.-F.}\ \bibnamefont {Chen}}, \bibinfo {author}
  {\bibfnamefont {A.}~\bibnamefont {Gily{\'e}n}}, \bibinfo {author}
  {\bibfnamefont {C.~T.}\ \bibnamefont {Hann}}, \bibinfo {author}
  {\bibfnamefont {M.~J.}\ \bibnamefont {Kastoryano}}, \bibinfo {author}
  {\bibfnamefont {E.~T.}\ \bibnamefont {Khabiboulline}}, \bibinfo {author}
  {\bibfnamefont {A.}~\bibnamefont {Kubica}}, \emph {et~al.},\ }\bibfield
  {title} {\bibinfo {title} {Quantum algorithms: A survey of applications and
  end-to-end complexities},\ }\href {https://arxiv.org/abs/2310.03011}
  {\bibfield  {journal} {\bibinfo  {journal} {arXiv:2310.03011}\ } (\bibinfo
  {year} {2023})}\BibitemShut {NoStop}%
\bibitem [{\citenamefont {McArdle}\ \emph {et~al.}(2020)\citenamefont
  {McArdle}, \citenamefont {Endo}, \citenamefont {Aspuru-Guzik}, \citenamefont
  {Benjamin},\ and\ \citenamefont {Yuan}}]{RevModPhys.92.015003}%
  \BibitemOpen
  \bibfield  {author} {\bibinfo {author} {\bibfnamefont {S.}~\bibnamefont
  {McArdle}}, \bibinfo {author} {\bibfnamefont {S.}~\bibnamefont {Endo}},
  \bibinfo {author} {\bibfnamefont {A.}~\bibnamefont {Aspuru-Guzik}}, \bibinfo
  {author} {\bibfnamefont {S.~C.}\ \bibnamefont {Benjamin}},\ and\ \bibinfo
  {author} {\bibfnamefont {X.}~\bibnamefont {Yuan}},\ }\bibfield  {title}
  {\bibinfo {title} {Quantum computational chemistry},\ }\href
  {https://doi.org/10.1103/RevModPhys.92.015003} {\bibfield  {journal}
  {\bibinfo  {journal} {Rev. Mod. Phys.}\ }\textbf {\bibinfo {volume} {92}},\
  \bibinfo {pages} {015003} (\bibinfo {year} {2020})}\BibitemShut {NoStop}%
\bibitem [{\citenamefont {Gottesman}(1998)}]{gottesman1998heisenberg}%
  \BibitemOpen
  \bibfield  {author} {\bibinfo {author} {\bibfnamefont {D.}~\bibnamefont
  {Gottesman}},\ }\bibfield  {title} {\bibinfo {title} {The heisenberg
  representation of quantum computers},\ }\href
  {https://arxiv.org/abs/quant-ph/9807006} {\bibfield  {journal} {\bibinfo
  {journal} {quant-ph/9807006}\ } (\bibinfo {year} {1998})}\BibitemShut
  {NoStop}%
\bibitem [{\citenamefont {Nielsen}\ and\ \citenamefont
  {Chuang}(2001)}]{nielsen2001quantum}%
  \BibitemOpen
  \bibfield  {author} {\bibinfo {author} {\bibfnamefont {M.~A.}\ \bibnamefont
  {Nielsen}}\ and\ \bibinfo {author} {\bibfnamefont {I.~L.}\ \bibnamefont
  {Chuang}},\ }\href@noop {} {\emph {\bibinfo {title} {Quantum computation and
  quantum information}}},\ Vol.~\bibinfo {volume} {2}\ (\bibinfo  {publisher}
  {Cambridge university press Cambridge},\ \bibinfo {year} {2001})\BibitemShut
  {NoStop}%
\bibitem [{\citenamefont {Valiant}(2001)}]{valiant2001quantum}%
  \BibitemOpen
  \bibfield  {author} {\bibinfo {author} {\bibfnamefont {L.~G.}\ \bibnamefont
  {Valiant}},\ }\bibfield  {title} {\bibinfo {title} {Quantum computers that
  can be simulated classically in polynomial time},\ }in\ \href
  {https://doi.org/10.1145/380752.380785} {\emph {\bibinfo {booktitle}
  {Proceedings of the thirty-third annual ACM symposium on Theory of
  computing}}}\ (\bibinfo {year} {2001})\ pp.\ \bibinfo {pages}
  {114--123}\BibitemShut {NoStop}%
\bibitem [{\citenamefont {Terhal}\ and\ \citenamefont
  {DiVincenzo}(2002)}]{PhysRevA.65.032325}%
  \BibitemOpen
  \bibfield  {author} {\bibinfo {author} {\bibfnamefont {B.~M.}\ \bibnamefont
  {Terhal}}\ and\ \bibinfo {author} {\bibfnamefont {D.~P.}\ \bibnamefont
  {DiVincenzo}},\ }\bibfield  {title} {\bibinfo {title} {Classical simulation
  of noninteracting-fermion quantum circuits},\ }\href
  {https://doi.org/10.1103/PhysRevA.65.032325} {\bibfield  {journal} {\bibinfo
  {journal} {Phys. Rev. A}\ }\textbf {\bibinfo {volume} {65}},\ \bibinfo
  {pages} {032325} (\bibinfo {year} {2002})}\BibitemShut {NoStop}%
\bibitem [{\citenamefont {Schollw{\"o}ck}(2011)}]{schollwock2011density}%
  \BibitemOpen
  \bibfield  {author} {\bibinfo {author} {\bibfnamefont {U.}~\bibnamefont
  {Schollw{\"o}ck}},\ }\bibfield  {title} {\bibinfo {title} {The density-matrix
  renormalization group in the age of matrix product states},\ }\href
  {https://doi.org/10.1016/j.aop.2010.09.012} {\bibfield  {journal} {\bibinfo
  {journal} {Annals of physics}\ }\textbf {\bibinfo {volume} {326}},\ \bibinfo
  {pages} {96} (\bibinfo {year} {2011})}\BibitemShut {NoStop}%
\bibitem [{\citenamefont {White}(1992)}]{PhysRevLett.69.2863}%
  \BibitemOpen
  \bibfield  {author} {\bibinfo {author} {\bibfnamefont {S.~R.}\ \bibnamefont
  {White}},\ }\bibfield  {title} {\bibinfo {title} {Density matrix formulation
  for quantum renormalization groups},\ }\href
  {https://doi.org/10.1103/PhysRevLett.69.2863} {\bibfield  {journal} {\bibinfo
   {journal} {Phys. Rev. Lett.}\ }\textbf {\bibinfo {volume} {69}},\ \bibinfo
  {pages} {2863} (\bibinfo {year} {1992})}\BibitemShut {NoStop}%
\bibitem [{\citenamefont {Jaffe}\ \emph {et~al.}(2018)\citenamefont {Jaffe},
  \citenamefont {Liu},\ and\ \citenamefont
  {Wozniakowski}}]{jaffe2018holographic}%
  \BibitemOpen
  \bibfield  {author} {\bibinfo {author} {\bibfnamefont {A.}~\bibnamefont
  {Jaffe}}, \bibinfo {author} {\bibfnamefont {Z.}~\bibnamefont {Liu}},\ and\
  \bibinfo {author} {\bibfnamefont {A.}~\bibnamefont {Wozniakowski}},\
  }\bibfield  {title} {\bibinfo {title} {Holographic software for quantum
  networks},\ }\href {https://doi.org/10.1007/s11425-017-9207-3} {\bibfield
  {journal} {\bibinfo  {journal} {Science China Mathematics}\ }\textbf
  {\bibinfo {volume} {61}},\ \bibinfo {pages} {593} (\bibinfo {year}
  {2018})}\BibitemShut {NoStop}%
\bibitem [{\citenamefont {Liu}\ \emph {et~al.}(2017)\citenamefont {Liu},
  \citenamefont {Wozniakowski},\ and\ \citenamefont {Jaffe}}]{liu2017quon}%
  \BibitemOpen
  \bibfield  {author} {\bibinfo {author} {\bibfnamefont {Z.}~\bibnamefont
  {Liu}}, \bibinfo {author} {\bibfnamefont {A.}~\bibnamefont {Wozniakowski}},\
  and\ \bibinfo {author} {\bibfnamefont {A.~M.}\ \bibnamefont {Jaffe}},\
  }\bibfield  {title} {\bibinfo {title} {{Quon 3D language for quantum
  information}},\ }\href {https://doi.org/10.1073/pnas.1621345114} {\bibfield
  {journal} {\bibinfo  {journal} {Proceedings of the National Academy of
  Sciences}\ }\textbf {\bibinfo {volume} {114}},\ \bibinfo {pages} {2497}
  (\bibinfo {year} {2017})}\BibitemShut {NoStop}%
\bibitem [{\citenamefont {Jaffe}\ and\ \citenamefont
  {Liu}(2017)}]{jaffe2017planar}%
  \BibitemOpen
  \bibfield  {author} {\bibinfo {author} {\bibfnamefont {A.}~\bibnamefont
  {Jaffe}}\ and\ \bibinfo {author} {\bibfnamefont {Z.}~\bibnamefont {Liu}},\
  }\bibfield  {title} {\bibinfo {title} {{Planar Para Algebras, Reflection
  Positivity}},\ }\href {https://doi.org/10.1007/s00220-016-2779-4} {\bibfield
  {journal} {\bibinfo  {journal} {Communications in Mathematical Physics}\
  }\textbf {\bibinfo {volume} {352}},\ \bibinfo {pages} {95} (\bibinfo {year}
  {2017})}\BibitemShut {NoStop}%
\bibitem [{\citenamefont {Liu}(2019{\natexlab{a}})}]{liu2019quon}%
  \BibitemOpen
  \bibfield  {author} {\bibinfo {author} {\bibfnamefont {Z.}~\bibnamefont
  {Liu}},\ }\bibfield  {title} {\bibinfo {title} {{Quon Language: Surface
  Algebras and Fourier Duality}},\ }\href
  {https://doi.org/10.1007/s00220-019-03361-3} {\bibfield  {journal} {\bibinfo
  {journal} {Communications in Mathematical Physics}\ }\textbf {\bibinfo
  {volume} {366}},\ \bibinfo {pages} {865} (\bibinfo {year}
  {2019}{\natexlab{a}})}\BibitemShut {NoStop}%
\bibitem [{\citenamefont {Liu}\ \emph {et~al.}(2023{\natexlab{a}})\citenamefont
  {Liu}, \citenamefont {Ming}, \citenamefont {Wang},\ and\ \citenamefont
  {Wu}}]{liu20233}%
  \BibitemOpen
  \bibfield  {author} {\bibinfo {author} {\bibfnamefont {Z.}~\bibnamefont
  {Liu}}, \bibinfo {author} {\bibfnamefont {S.}~\bibnamefont {Ming}}, \bibinfo
  {author} {\bibfnamefont {Y.}~\bibnamefont {Wang}},\ and\ \bibinfo {author}
  {\bibfnamefont {J.}~\bibnamefont {Wu}},\ }\bibfield  {title} {\bibinfo
  {title} {{3-Alterfolds and Quantum Invariants}},\ }\href
  {https://arxiv.org/abs/2307.12284} {\bibfield  {journal} {\bibinfo  {journal}
  {arXiv:2307.12284}\ } (\bibinfo {year} {2023}{\natexlab{a}})}\BibitemShut
  {NoStop}%
\bibitem [{\citenamefont {Liu}\ \emph {et~al.}(2023{\natexlab{b}})\citenamefont
  {Liu}, \citenamefont {Ming}, \citenamefont {Wang},\ and\ \citenamefont
  {Wu}}]{liu2023alterfold}%
  \BibitemOpen
  \bibfield  {author} {\bibinfo {author} {\bibfnamefont {Z.}~\bibnamefont
  {Liu}}, \bibinfo {author} {\bibfnamefont {S.}~\bibnamefont {Ming}}, \bibinfo
  {author} {\bibfnamefont {Y.}~\bibnamefont {Wang}},\ and\ \bibinfo {author}
  {\bibfnamefont {J.}~\bibnamefont {Wu}},\ }\bibfield  {title} {\bibinfo
  {title} {{Alterfold Topological Quantum Field Theory}},\ }\href
  {https://arxiv.org/abs/2312.06477} {\bibfield  {journal} {\bibinfo  {journal}
  {arXiv:2312.06477}\ } (\bibinfo {year} {2023}{\natexlab{b}})}\BibitemShut
  {NoStop}%
\bibitem [{\citenamefont {Liu}(2024)}]{liu2024functional}%
  \BibitemOpen
  \bibfield  {author} {\bibinfo {author} {\bibfnamefont {Z.}~\bibnamefont
  {Liu}},\ }\bibfield  {title} {\bibinfo {title} {{Functional Integral
  Construction of Topological Quantum Field Theory}},\ }\href
  {https://arxiv.org/abs/2409.17103} {\bibfield  {journal} {\bibinfo  {journal}
  {arXiv:2409.17103}\ } (\bibinfo {year} {2024})}\BibitemShut {NoStop}%
\bibitem [{\citenamefont {Liu}\ \emph {et~al.}(2024)\citenamefont {Liu},
  \citenamefont {Ming}, \citenamefont {Wang},\ and\ \citenamefont
  {Wu}}]{liu2024alterfold}%
  \BibitemOpen
  \bibfield  {author} {\bibinfo {author} {\bibfnamefont {Z.}~\bibnamefont
  {Liu}}, \bibinfo {author} {\bibfnamefont {S.}~\bibnamefont {Ming}}, \bibinfo
  {author} {\bibfnamefont {Y.}~\bibnamefont {Wang}},\ and\ \bibinfo {author}
  {\bibfnamefont {J.}~\bibnamefont {Wu}},\ }\bibfield  {title} {\bibinfo
  {title} {{Alterfold Theory and Topological Modular Invariance}},\ }\href
  {https://arxiv.org/abs/2412.12702} {\bibfield  {journal} {\bibinfo  {journal}
  {arXiv:2412.12702}\ } (\bibinfo {year} {2024})}\BibitemShut {NoStop}%
\bibitem [{\citenamefont {Liu}(2019{\natexlab{b}})}]{liu2019quantized}%
  \BibitemOpen
  \bibfield  {author} {\bibinfo {author} {\bibfnamefont {Z.}~\bibnamefont
  {Liu}},\ }\bibfield  {title} {\bibinfo {title} {{Quantized Graphs and Quantum
  Error Correction}},\ }\href {https://arxiv.org/abs/1910.12065} {\bibfield
  {journal} {\bibinfo  {journal} {arXiv:1910.12065}\ } (\bibinfo {year}
  {2019}{\natexlab{b}})}\BibitemShut {NoStop}%
\bibitem [{\citenamefont {Shao}\ \emph {et~al.}(2024)\citenamefont {Shao},
  \citenamefont {Wei}, \citenamefont {Wei},\ and\ \citenamefont
  {Liu}}]{shao2024variational}%
  \BibitemOpen
  \bibfield  {author} {\bibinfo {author} {\bibfnamefont {Y.}~\bibnamefont
  {Shao}}, \bibinfo {author} {\bibfnamefont {F.}~\bibnamefont {Wei}}, \bibinfo
  {author} {\bibfnamefont {Z.}~\bibnamefont {Wei}},\ and\ \bibinfo {author}
  {\bibfnamefont {Z.}~\bibnamefont {Liu}},\ }\bibfield  {title} {\bibinfo
  {title} {{Variational Graphical Quantum Error Correction Codes}},\ }\href
  {https://arxiv.org/abs/2410.02608} {\bibfield  {journal} {\bibinfo  {journal}
  {arXiv:2410.02608}\ } (\bibinfo {year} {2024})}\BibitemShut {NoStop}%
\bibitem [{\citenamefont {Shi}\ \emph {et~al.}(2018)\citenamefont {Shi},
  \citenamefont {Demler},\ and\ \citenamefont {Cirac}}]{shi2018variational}%
  \BibitemOpen
  \bibfield  {author} {\bibinfo {author} {\bibfnamefont {T.}~\bibnamefont
  {Shi}}, \bibinfo {author} {\bibfnamefont {E.}~\bibnamefont {Demler}},\ and\
  \bibinfo {author} {\bibfnamefont {J.~I.}\ \bibnamefont {Cirac}},\ }\bibfield
  {title} {\bibinfo {title} {{Variational study of fermionic and bosonic
  systems with non-Gaussian states: Theory and applications}},\ }\href
  {https://doi.org/10.1016/j.aop.2017.11.014} {\bibfield  {journal} {\bibinfo
  {journal} {Annals of Physics}\ }\textbf {\bibinfo {volume} {390}},\ \bibinfo
  {pages} {245} (\bibinfo {year} {2018})}\BibitemShut {NoStop}%
\bibitem [{\citenamefont {Mishmash}\ \emph {et~al.}(2023)\citenamefont
  {Mishmash}, \citenamefont {Gujarati}, \citenamefont {Motta}, \citenamefont
  {Zhai}, \citenamefont {Chan},\ and\ \citenamefont
  {Mezzacapo}}]{mishmash2023hierarchical}%
  \BibitemOpen
  \bibfield  {author} {\bibinfo {author} {\bibfnamefont {R.~V.}\ \bibnamefont
  {Mishmash}}, \bibinfo {author} {\bibfnamefont {T.~P.}\ \bibnamefont
  {Gujarati}}, \bibinfo {author} {\bibfnamefont {M.}~\bibnamefont {Motta}},
  \bibinfo {author} {\bibfnamefont {H.}~\bibnamefont {Zhai}}, \bibinfo {author}
  {\bibfnamefont {G.~K.-L.}\ \bibnamefont {Chan}},\ and\ \bibinfo {author}
  {\bibfnamefont {A.}~\bibnamefont {Mezzacapo}},\ }\bibfield  {title} {\bibinfo
  {title} {{Hierarchical Clifford Transformations to Reduce Entanglement in
  Quantum Chemistry Wave Functions}},\ }\href
  {https://doi.org/10.1021/acs.jctc.3c00228} {\bibfield  {journal} {\bibinfo
  {journal} {Journal of chemical theory and computation}\ }\textbf {\bibinfo
  {volume} {19}},\ \bibinfo {pages} {3194} (\bibinfo {year}
  {2023})}\BibitemShut {NoStop}%
\bibitem [{\citenamefont {Qian}\ \emph {et~al.}(2024)\citenamefont {Qian},
  \citenamefont {Huang},\ and\ \citenamefont {Qin}}]{PhysRevLett.133.190402}%
  \BibitemOpen
  \bibfield  {author} {\bibinfo {author} {\bibfnamefont {X.}~\bibnamefont
  {Qian}}, \bibinfo {author} {\bibfnamefont {J.}~\bibnamefont {Huang}},\ and\
  \bibinfo {author} {\bibfnamefont {M.}~\bibnamefont {Qin}},\ }\bibfield
  {title} {\bibinfo {title} {{Augmenting Density Matrix Renormalization Group
  with Clifford Circuits}},\ }\href
  {https://doi.org/10.1103/PhysRevLett.133.190402} {\bibfield  {journal}
  {\bibinfo  {journal} {Phys. Rev. Lett.}\ }\textbf {\bibinfo {volume} {133}},\
  \bibinfo {pages} {190402} (\bibinfo {year} {2024})}\BibitemShut {NoStop}%
\bibitem [{\citenamefont {Projansky}\ \emph {et~al.}(2024)\citenamefont
  {Projansky}, \citenamefont {Necaise},\ and\ \citenamefont
  {Whitfield}}]{projansky2024extending}%
  \BibitemOpen
  \bibfield  {author} {\bibinfo {author} {\bibfnamefont {A.~M.}\ \bibnamefont
  {Projansky}}, \bibinfo {author} {\bibfnamefont {J.}~\bibnamefont {Necaise}},\
  and\ \bibinfo {author} {\bibfnamefont {J.~D.}\ \bibnamefont {Whitfield}},\
  }\bibfield  {title} {\bibinfo {title} {{Gaussianity and Simulability of
  Cliffords and Matchgates}},\ }\href {https://arxiv.org/abs/2410.10068}
  {\bibfield  {journal} {\bibinfo  {journal} {arXiv:2410.10068}\ } (\bibinfo
  {year} {2024})}\BibitemShut {NoStop}%
\bibitem [{\citenamefont {Vazirani}(1989)}]{vazirani1989nc}%
  \BibitemOpen
  \bibfield  {author} {\bibinfo {author} {\bibfnamefont {V.~V.}\ \bibnamefont
  {Vazirani}},\ }\bibfield  {title} {\bibinfo {title} {{NC algorithms for
  computing the number of perfect matchings in $K_{3,3}$-free graphs and
  related problems}},\ }\href {https://doi.org/10.1016/0890-5401(89)90017-5}
  {\bibfield  {journal} {\bibinfo  {journal} {Information and computation}\
  }\textbf {\bibinfo {volume} {80}},\ \bibinfo {pages} {152} (\bibinfo {year}
  {1989})}\BibitemShut {NoStop}%
\bibitem [{\citenamefont {Straub}\ \emph {et~al.}(2016)\citenamefont {Straub},
  \citenamefont {Thierauf},\ and\ \citenamefont {Wagner}}]{straub2016counting}%
  \BibitemOpen
  \bibfield  {author} {\bibinfo {author} {\bibfnamefont {S.}~\bibnamefont
  {Straub}}, \bibinfo {author} {\bibfnamefont {T.}~\bibnamefont {Thierauf}},\
  and\ \bibinfo {author} {\bibfnamefont {F.}~\bibnamefont {Wagner}},\
  }\bibfield  {title} {\bibinfo {title} {{Counting the Number of Perfect
  Matchings in $K_5$-Free Graphs}},\ }\href
  {https://doi.org/10.1007/s00224-015-9645-1} {\bibfield  {journal} {\bibinfo
  {journal} {Theory of Computing Systems}\ }\textbf {\bibinfo {volume} {59}},\
  \bibinfo {pages} {416} (\bibinfo {year} {2016})}\BibitemShut {NoStop}%
\bibitem [{\citenamefont {Curticapean}(2014)}]{curticapean2014counting}%
  \BibitemOpen
  \bibfield  {author} {\bibinfo {author} {\bibfnamefont {R.}~\bibnamefont
  {Curticapean}},\ }\bibfield  {title} {\bibinfo {title} {Counting perfect
  matchings in graphs that exclude a single-crossing minor},\ }\href
  {https://arxiv.org/abs/1406.4056} {\bibfield  {journal} {\bibinfo  {journal}
  {arXiv:1406.4056}\ } (\bibinfo {year} {2014})}\BibitemShut {NoStop}%
\bibitem [{\citenamefont {Likhosherstov}\ \emph {et~al.}(2020)\citenamefont
  {Likhosherstov}, \citenamefont {Maximov},\ and\ \citenamefont
  {Chertkov}}]{likhosherstov2020tractable}%
  \BibitemOpen
  \bibfield  {author} {\bibinfo {author} {\bibfnamefont {V.}~\bibnamefont
  {Likhosherstov}}, \bibinfo {author} {\bibfnamefont {Y.}~\bibnamefont
  {Maximov}},\ and\ \bibinfo {author} {\bibfnamefont {M.}~\bibnamefont
  {Chertkov}},\ }\bibfield  {title} {\bibinfo {title} {Tractable minor-free
  generalization of planar zero-field ising models},\ }\href
  {https://doi.org/10.1088/1742-5468/abcaf1} {\bibfield  {journal} {\bibinfo
  {journal} {Journal of Statistical Mechanics: Theory and Experiment}\ }\textbf
  {\bibinfo {volume} {2020}},\ \bibinfo {pages} {124007} (\bibinfo {year}
  {2020})}\BibitemShut {NoStop}%
\bibitem [{\citenamefont {Sarma}\ \emph {et~al.}(2015)\citenamefont {Sarma},
  \citenamefont {Freedman},\ and\ \citenamefont {Nayak}}]{sarma2015majorana}%
  \BibitemOpen
  \bibfield  {author} {\bibinfo {author} {\bibfnamefont {S.~D.}\ \bibnamefont
  {Sarma}}, \bibinfo {author} {\bibfnamefont {M.}~\bibnamefont {Freedman}},\
  and\ \bibinfo {author} {\bibfnamefont {C.}~\bibnamefont {Nayak}},\ }\bibfield
   {title} {\bibinfo {title} {Majorana zero modes and topological quantum
  computation},\ }\href {https://doi.org/10.1038/npjqi.2015.1} {\bibfield
  {journal} {\bibinfo  {journal} {npj Quantum Information}\ }\textbf {\bibinfo
  {volume} {1}},\ \bibinfo {pages} {1} (\bibinfo {year} {2015})}\BibitemShut
  {NoStop}%
\bibitem [{\citenamefont {Kitaev}(2006)}]{kitaev2006anyons}%
  \BibitemOpen
  \bibfield  {author} {\bibinfo {author} {\bibfnamefont {A.}~\bibnamefont
  {Kitaev}},\ }\bibfield  {title} {\bibinfo {title} {Anyons in an exactly
  solved model and beyond},\ }\href {https://doi.org/10.1016/j.aop.2005.10.005}
  {\bibfield  {journal} {\bibinfo  {journal} {Annals of Physics}\ }\textbf
  {\bibinfo {volume} {321}},\ \bibinfo {pages} {2} (\bibinfo {year}
  {2006})}\BibitemShut {NoStop}%
\bibitem [{\citenamefont {Bravyi}\ and\ \citenamefont
  {Kitaev}(2005)}]{PhysRevA.71.022316}%
  \BibitemOpen
  \bibfield  {author} {\bibinfo {author} {\bibfnamefont {S.}~\bibnamefont
  {Bravyi}}\ and\ \bibinfo {author} {\bibfnamefont {A.}~\bibnamefont
  {Kitaev}},\ }\bibfield  {title} {\bibinfo {title} {{Universal quantum
  computation with ideal Clifford gates and noisy ancillas}},\ }\href
  {https://doi.org/10.1103/PhysRevA.71.022316} {\bibfield  {journal} {\bibinfo
  {journal} {Phys. Rev. A}\ }\textbf {\bibinfo {volume} {71}},\ \bibinfo
  {pages} {022316} (\bibinfo {year} {2005})}\BibitemShut {NoStop}%
\bibitem [{\citenamefont {Brod}\ and\ \citenamefont
  {Galv\~ao}(2011)}]{PhysRevA.84.022310}%
  \BibitemOpen
  \bibfield  {author} {\bibinfo {author} {\bibfnamefont {D.~J.}\ \bibnamefont
  {Brod}}\ and\ \bibinfo {author} {\bibfnamefont {E.~F.}\ \bibnamefont
  {Galv\~ao}},\ }\bibfield  {title} {\bibinfo {title} {Extending matchgates
  into universal quantum computation},\ }\href
  {https://doi.org/10.1103/PhysRevA.84.022310} {\bibfield  {journal} {\bibinfo
  {journal} {Phys. Rev. A}\ }\textbf {\bibinfo {volume} {84}},\ \bibinfo
  {pages} {022310} (\bibinfo {year} {2011})}\BibitemShut {NoStop}%
\bibitem [{\citenamefont {Hebenstreit}\ \emph {et~al.}(2019)\citenamefont
  {Hebenstreit}, \citenamefont {Jozsa}, \citenamefont {Kraus}, \citenamefont
  {Strelchuk},\ and\ \citenamefont {Yoganathan}}]{PhysRevLett.123.080503}%
  \BibitemOpen
  \bibfield  {author} {\bibinfo {author} {\bibfnamefont {M.}~\bibnamefont
  {Hebenstreit}}, \bibinfo {author} {\bibfnamefont {R.}~\bibnamefont {Jozsa}},
  \bibinfo {author} {\bibfnamefont {B.}~\bibnamefont {Kraus}}, \bibinfo
  {author} {\bibfnamefont {S.}~\bibnamefont {Strelchuk}},\ and\ \bibinfo
  {author} {\bibfnamefont {M.}~\bibnamefont {Yoganathan}},\ }\bibfield  {title}
  {\bibinfo {title} {{All Pure Fermionic Non-Gaussian States Are Magic States
  for Matchgate Computations}},\ }\href
  {https://doi.org/10.1103/PhysRevLett.123.080503} {\bibfield  {journal}
  {\bibinfo  {journal} {Phys. Rev. Lett.}\ }\textbf {\bibinfo {volume} {123}},\
  \bibinfo {pages} {080503} (\bibinfo {year} {2019})}\BibitemShut {NoStop}%
\bibitem [{\citenamefont {Cimasoni}\ and\ \citenamefont
  {Reshetikhin}(2007)}]{cimasoni2007dimers}%
  \BibitemOpen
  \bibfield  {author} {\bibinfo {author} {\bibfnamefont {D.}~\bibnamefont
  {Cimasoni}}\ and\ \bibinfo {author} {\bibfnamefont {N.}~\bibnamefont
  {Reshetikhin}},\ }\bibfield  {title} {\bibinfo {title} {{Dimers on surface
  graphs and spin structures. I}},\ }\href
  {https://doi.org/10.1007/s00220-007-0302-7} {\bibfield  {journal} {\bibinfo
  {journal} {Communications in Mathematical Physics}\ }\textbf {\bibinfo
  {volume} {275}},\ \bibinfo {pages} {187} (\bibinfo {year}
  {2007})}\BibitemShut {NoStop}%
\bibitem [{\citenamefont {Cimasoni}\ and\ \citenamefont
  {Reshetikhin}(2008)}]{cimasoni2008dimers}%
  \BibitemOpen
  \bibfield  {author} {\bibinfo {author} {\bibfnamefont {D.}~\bibnamefont
  {Cimasoni}}\ and\ \bibinfo {author} {\bibfnamefont {N.}~\bibnamefont
  {Reshetikhin}},\ }\bibfield  {title} {\bibinfo {title} {{Dimers on surface
  graphs and spin structures. II}},\ }\href
  {https://doi.org/10.1007/s00220-008-0488-3} {\bibfield  {journal} {\bibinfo
  {journal} {Communications in mathematical physics}\ }\textbf {\bibinfo
  {volume} {281}},\ \bibinfo {pages} {445} (\bibinfo {year}
  {2008})}\BibitemShut {NoStop}%
\bibitem [{\citenamefont {Cimasoni}(2009)}]{cimasoni2009dimers}%
  \BibitemOpen
  \bibfield  {author} {\bibinfo {author} {\bibfnamefont {D.}~\bibnamefont
  {Cimasoni}},\ }\bibfield  {title} {\bibinfo {title} {{Dimers on Graphs in
  Non-Orientable Surfaces}},\ }\href
  {https://doi.org/10.1007/s11005-009-0299-2} {\bibfield  {journal} {\bibinfo
  {journal} {Letters in Mathematical Physics}\ }\textbf {\bibinfo {volume}
  {87}},\ \bibinfo {pages} {149} (\bibinfo {year} {2009})}\BibitemShut
  {NoStop}%
\bibitem [{\citenamefont {Dijkgraaf}\ \emph {et~al.}(2009)\citenamefont
  {Dijkgraaf}, \citenamefont {Orlando},\ and\ \citenamefont
  {Reffert}}]{Dijkgraaf2009dimer}%
  \BibitemOpen
  \bibfield  {author} {\bibinfo {author} {\bibfnamefont {R.}~\bibnamefont
  {Dijkgraaf}}, \bibinfo {author} {\bibfnamefont {D.}~\bibnamefont {Orlando}},\
  and\ \bibinfo {author} {\bibfnamefont {S.}~\bibnamefont {Reffert}},\
  }\bibfield  {title} {\bibinfo {title} {Dimer models, free fermions and super
  quantum mechanics},\ }\href {https://doi.org/10.4310/ATMP.2009.v13.n5.a1}
  {\bibfield  {journal} {\bibinfo  {journal} {Adv. Theor. Math. Phys.}\
  }\textbf {\bibinfo {volume} {13}},\ \bibinfo {pages} {1255} (\bibinfo {year}
  {2009})}\BibitemShut {NoStop}%
\bibitem [{\citenamefont {Bravyi}(2009)}]{bravyi2009contraction}%
  \BibitemOpen
  \bibfield  {author} {\bibinfo {author} {\bibfnamefont {S.}~\bibnamefont
  {Bravyi}},\ }\bibfield  {title} {\bibinfo {title} {Contraction of matchgate
  tensor networks on non-planar graphs},\ }\href@noop {} {\bibfield  {journal}
  {\bibinfo  {journal} {Contemp. Math}\ }\textbf {\bibinfo {volume} {482}},\
  \bibinfo {pages} {179} (\bibinfo {year} {2009})}\BibitemShut {NoStop}%
\bibitem [{\citenamefont {Kramers}\ and\ \citenamefont
  {Wannier}(1941)}]{PhysRev.60.252}%
  \BibitemOpen
  \bibfield  {author} {\bibinfo {author} {\bibfnamefont {H.~A.}\ \bibnamefont
  {Kramers}}\ and\ \bibinfo {author} {\bibfnamefont {G.~H.}\ \bibnamefont
  {Wannier}},\ }\bibfield  {title} {\bibinfo {title} {{Statistics of the
  Two-Dimensional Ferromagnet. Part I}},\ }\href
  {https://doi.org/10.1103/PhysRev.60.252} {\bibfield  {journal} {\bibinfo
  {journal} {Phys. Rev.}\ }\textbf {\bibinfo {volume} {60}},\ \bibinfo {pages}
  {252} (\bibinfo {year} {1941})}\BibitemShut {NoStop}%
\bibitem [{\citenamefont {Baxter}(2016)}]{baxter2016exactly}%
  \BibitemOpen
  \bibfield  {author} {\bibinfo {author} {\bibfnamefont {R.~J.}\ \bibnamefont
  {Baxter}},\ }\href@noop {} {\emph {\bibinfo {title} {{Exactly Solved Models
  in Statistical Mechanics}}}}\ (\bibinfo  {publisher} {Elsevier},\ \bibinfo
  {year} {2016})\BibitemShut {NoStop}%
\bibitem [{\citenamefont {Seiberg}\ \emph {et~al.}(2024)\citenamefont
  {Seiberg}, \citenamefont {Seifnashri},\ and\ \citenamefont
  {Shao}}]{seiberg2024non}%
  \BibitemOpen
  \bibfield  {author} {\bibinfo {author} {\bibfnamefont {N.}~\bibnamefont
  {Seiberg}}, \bibinfo {author} {\bibfnamefont {S.}~\bibnamefont
  {Seifnashri}},\ and\ \bibinfo {author} {\bibfnamefont {S.-H.}\ \bibnamefont
  {Shao}},\ }\bibfield  {title} {\bibinfo {title} {{Non-invertible symmetries
  and LSM-type constraints on a tensor product Hilbert space}},\ }\href
  {https://doi.org/10.21468/SciPostPhys.16.6.154} {\bibfield  {journal}
  {\bibinfo  {journal} {SciPost Physics}\ }\textbf {\bibinfo {volume} {16}},\
  \bibinfo {pages} {154} (\bibinfo {year} {2024})}\BibitemShut {NoStop}%
\bibitem [{\citenamefont {Onsager}(1944)}]{PhysRev.65.117}%
  \BibitemOpen
  \bibfield  {author} {\bibinfo {author} {\bibfnamefont {L.}~\bibnamefont
  {Onsager}},\ }\bibfield  {title} {\bibinfo {title} {{Crystal Statistics. I. A
  Two-Dimensional Model with an Order-Disorder Transition}},\ }\href
  {https://doi.org/10.1103/PhysRev.65.117} {\bibfield  {journal} {\bibinfo
  {journal} {Phys. Rev.}\ }\textbf {\bibinfo {volume} {65}},\ \bibinfo {pages}
  {117} (\bibinfo {year} {1944})}\BibitemShut {NoStop}%
\bibitem [{\citenamefont {Nayak}\ and\ \citenamefont
  {Wilczek}(1996)}]{nayak19962n}%
  \BibitemOpen
  \bibfield  {author} {\bibinfo {author} {\bibfnamefont {C.}~\bibnamefont
  {Nayak}}\ and\ \bibinfo {author} {\bibfnamefont {F.}~\bibnamefont
  {Wilczek}},\ }\bibfield  {title} {\bibinfo {title} {2n-quasihole states
  realize $2^{n-1}$-dimensional spinor braiding statistics in paired quantum
  hall states},\ }\href {https://doi.org/10.1016/0550-3213(96)00430-0}
  {\bibfield  {journal} {\bibinfo  {journal} {Nuclear Physics B}\ }\textbf
  {\bibinfo {volume} {479}},\ \bibinfo {pages} {529} (\bibinfo {year}
  {1996})}\BibitemShut {NoStop}%
\bibitem [{\citenamefont {Freedman}\ \emph {et~al.}(2003)\citenamefont
  {Freedman}, \citenamefont {Kitaev}, \citenamefont {Larsen},\ and\
  \citenamefont {Wang}}]{freedman2003topological}%
  \BibitemOpen
  \bibfield  {author} {\bibinfo {author} {\bibfnamefont {M.}~\bibnamefont
  {Freedman}}, \bibinfo {author} {\bibfnamefont {A.}~\bibnamefont {Kitaev}},
  \bibinfo {author} {\bibfnamefont {M.}~\bibnamefont {Larsen}},\ and\ \bibinfo
  {author} {\bibfnamefont {Z.}~\bibnamefont {Wang}},\ }\bibfield  {title}
  {\bibinfo {title} {Topological quantum computation},\ }\href
  {https://doi.org/10.1090/S0273-0979-02-00964-3} {\bibfield  {journal}
  {\bibinfo  {journal} {Bulletin of the American Mathematical Society}\
  }\textbf {\bibinfo {volume} {40}},\ \bibinfo {pages} {31} (\bibinfo {year}
  {2003})}\BibitemShut {NoStop}%
\bibitem [{\citenamefont {Nayak}\ \emph {et~al.}(2008)\citenamefont {Nayak},
  \citenamefont {Simon}, \citenamefont {Stern}, \citenamefont {Freedman},\ and\
  \citenamefont {Das~Sarma}}]{RevModPhys.80.1083}%
  \BibitemOpen
  \bibfield  {author} {\bibinfo {author} {\bibfnamefont {C.}~\bibnamefont
  {Nayak}}, \bibinfo {author} {\bibfnamefont {S.~H.}\ \bibnamefont {Simon}},
  \bibinfo {author} {\bibfnamefont {A.}~\bibnamefont {Stern}}, \bibinfo
  {author} {\bibfnamefont {M.}~\bibnamefont {Freedman}},\ and\ \bibinfo
  {author} {\bibfnamefont {S.}~\bibnamefont {Das~Sarma}},\ }\bibfield  {title}
  {\bibinfo {title} {Non-abelian anyons and topological quantum computation},\
  }\href {https://doi.org/10.1103/RevModPhys.80.1083} {\bibfield  {journal}
  {\bibinfo  {journal} {Rev. Mod. Phys.}\ }\textbf {\bibinfo {volume} {80}},\
  \bibinfo {pages} {1083} (\bibinfo {year} {2008})}\BibitemShut {NoStop}%
\bibitem [{\citenamefont {Wang}(2010)}]{wang2010topological}%
  \BibitemOpen
  \bibfield  {author} {\bibinfo {author} {\bibfnamefont {Z.}~\bibnamefont
  {Wang}},\ }\href@noop {} {\emph {\bibinfo {title} {Topological quantum
  computation}}},\ \bibinfo {number} {112}\ (\bibinfo  {publisher} {American
  Mathematical Soc.},\ \bibinfo {year} {2010})\BibitemShut {NoStop}%
\bibitem [{\citenamefont {Read}\ and\ \citenamefont
  {Green}(2000)}]{PhysRevB.61.10267}%
  \BibitemOpen
  \bibfield  {author} {\bibinfo {author} {\bibfnamefont {N.}~\bibnamefont
  {Read}}\ and\ \bibinfo {author} {\bibfnamefont {D.}~\bibnamefont {Green}},\
  }\bibfield  {title} {\bibinfo {title} {Paired states of fermions in two
  dimensions with breaking of parity and time-reversal symmetries and the
  fractional quantum hall effect},\ }\href
  {https://doi.org/10.1103/PhysRevB.61.10267} {\bibfield  {journal} {\bibinfo
  {journal} {Phys. Rev. B}\ }\textbf {\bibinfo {volume} {61}},\ \bibinfo
  {pages} {10267} (\bibinfo {year} {2000})}\BibitemShut {NoStop}%
\bibitem [{\citenamefont {Kitaev}(2001)}]{kitaev2001unpaired}%
  \BibitemOpen
  \bibfield  {author} {\bibinfo {author} {\bibfnamefont {A.~Y.}\ \bibnamefont
  {Kitaev}},\ }\bibfield  {title} {\bibinfo {title} {{Unpaired Majorana
  fermions in quantum wires}},\ }\href
  {https://doi.org/10.1070/1063-7869/44/10S/S29} {\bibfield  {journal}
  {\bibinfo  {journal} {Physics-uspekhi}\ }\textbf {\bibinfo {volume} {44}},\
  \bibinfo {pages} {131} (\bibinfo {year} {2001})}\BibitemShut {NoStop}%
\bibitem [{\citenamefont {Motrunich}\ \emph {et~al.}(2001)\citenamefont
  {Motrunich}, \citenamefont {Damle},\ and\ \citenamefont
  {Huse}}]{PhysRevB.63.224204}%
  \BibitemOpen
  \bibfield  {author} {\bibinfo {author} {\bibfnamefont {O.}~\bibnamefont
  {Motrunich}}, \bibinfo {author} {\bibfnamefont {K.}~\bibnamefont {Damle}},\
  and\ \bibinfo {author} {\bibfnamefont {D.~A.}\ \bibnamefont {Huse}},\
  }\bibfield  {title} {\bibinfo {title} {{Griffiths effects and quantum
  critical points in dirty superconductors without spin-rotation invariance:
  One-dimensional examples}},\ }\href
  {https://doi.org/10.1103/PhysRevB.63.224204} {\bibfield  {journal} {\bibinfo
  {journal} {Phys. Rev. B}\ }\textbf {\bibinfo {volume} {63}},\ \bibinfo
  {pages} {224204} (\bibinfo {year} {2001})}\BibitemShut {NoStop}%
\bibitem [{\citenamefont {Karzig}\ \emph {et~al.}(2017)\citenamefont {Karzig},
  \citenamefont {Knapp}, \citenamefont {Lutchyn}, \citenamefont {Bonderson},
  \citenamefont {Hastings}, \citenamefont {Nayak}, \citenamefont {Alicea},
  \citenamefont {Flensberg}, \citenamefont {Plugge}, \citenamefont {Oreg},
  \citenamefont {Marcus},\ and\ \citenamefont {Freedman}}]{PhysRevB.95.235305}%
  \BibitemOpen
  \bibfield  {author} {\bibinfo {author} {\bibfnamefont {T.}~\bibnamefont
  {Karzig}}, \bibinfo {author} {\bibfnamefont {C.}~\bibnamefont {Knapp}},
  \bibinfo {author} {\bibfnamefont {R.~M.}\ \bibnamefont {Lutchyn}}, \bibinfo
  {author} {\bibfnamefont {P.}~\bibnamefont {Bonderson}}, \bibinfo {author}
  {\bibfnamefont {M.~B.}\ \bibnamefont {Hastings}}, \bibinfo {author}
  {\bibfnamefont {C.}~\bibnamefont {Nayak}}, \bibinfo {author} {\bibfnamefont
  {J.}~\bibnamefont {Alicea}}, \bibinfo {author} {\bibfnamefont
  {K.}~\bibnamefont {Flensberg}}, \bibinfo {author} {\bibfnamefont
  {S.}~\bibnamefont {Plugge}}, \bibinfo {author} {\bibfnamefont
  {Y.}~\bibnamefont {Oreg}}, \bibinfo {author} {\bibfnamefont {C.~M.}\
  \bibnamefont {Marcus}},\ and\ \bibinfo {author} {\bibfnamefont {M.~H.}\
  \bibnamefont {Freedman}},\ }\bibfield  {title} {\bibinfo {title} {{Scalable
  designs for quasiparticle-poisoning-protected topological quantum computation
  with Majorana zero modes}},\ }\href
  {https://doi.org/10.1103/PhysRevB.95.235305} {\bibfield  {journal} {\bibinfo
  {journal} {Phys. Rev. B}\ }\textbf {\bibinfo {volume} {95}},\ \bibinfo
  {pages} {235305} (\bibinfo {year} {2017})}\BibitemShut {NoStop}%
\bibitem [{\citenamefont {Tran}\ \emph {et~al.}(2020)\citenamefont {Tran},
  \citenamefont {Bocharov}, \citenamefont {Bauer},\ and\ \citenamefont
  {Bonderson}}]{10.21468/SciPostPhys.8.6.091}%
  \BibitemOpen
  \bibfield  {author} {\bibinfo {author} {\bibfnamefont {A.}~\bibnamefont
  {Tran}}, \bibinfo {author} {\bibfnamefont {A.}~\bibnamefont {Bocharov}},
  \bibinfo {author} {\bibfnamefont {B.}~\bibnamefont {Bauer}},\ and\ \bibinfo
  {author} {\bibfnamefont {P.}~\bibnamefont {Bonderson}},\ }\bibfield  {title}
  {\bibinfo {title} {{Optimizing Clifford gate generation for measurement-only
  topological quantum computation with Majorana zero modes}},\ }\href
  {https://doi.org/10.21468/SciPostPhys.8.6.091} {\bibfield  {journal}
  {\bibinfo  {journal} {SciPost Phys.}\ }\textbf {\bibinfo {volume} {8}},\
  \bibinfo {pages} {091} (\bibinfo {year} {2020})}\BibitemShut {NoStop}%
\bibitem [{\citenamefont {Brown}\ \emph {et~al.}(2017)\citenamefont {Brown},
  \citenamefont {Laubscher}, \citenamefont {Kesselring},\ and\ \citenamefont
  {Wootton}}]{PhysRevX.7.021029}%
  \BibitemOpen
  \bibfield  {author} {\bibinfo {author} {\bibfnamefont {B.~J.}\ \bibnamefont
  {Brown}}, \bibinfo {author} {\bibfnamefont {K.}~\bibnamefont {Laubscher}},
  \bibinfo {author} {\bibfnamefont {M.~S.}\ \bibnamefont {Kesselring}},\ and\
  \bibinfo {author} {\bibfnamefont {J.~R.}\ \bibnamefont {Wootton}},\
  }\bibfield  {title} {\bibinfo {title} {{Poking Holes and Cutting Corners to
  Achieve Clifford Gates with the Surface Code}},\ }\href
  {https://doi.org/10.1103/PhysRevX.7.021029} {\bibfield  {journal} {\bibinfo
  {journal} {Phys. Rev. X}\ }\textbf {\bibinfo {volume} {7}},\ \bibinfo {pages}
  {021029} (\bibinfo {year} {2017})}\BibitemShut {NoStop}%
\bibitem [{\citenamefont {Aasen}\ \emph {et~al.}(2019)\citenamefont {Aasen},
  \citenamefont {Lake},\ and\ \citenamefont {Walker}}]{aasen2019fermion}%
  \BibitemOpen
  \bibfield  {author} {\bibinfo {author} {\bibfnamefont {D.}~\bibnamefont
  {Aasen}}, \bibinfo {author} {\bibfnamefont {E.}~\bibnamefont {Lake}},\ and\
  \bibinfo {author} {\bibfnamefont {K.}~\bibnamefont {Walker}},\ }\bibfield
  {title} {\bibinfo {title} {Fermion condensation and super pivotal
  categories},\ }\href {https://doi.org/10.1063/1.5045669} {\bibfield
  {journal} {\bibinfo  {journal} {Journal of Mathematical Physics}\ }\textbf
  {\bibinfo {volume} {60}},\ \bibinfo {pages} {121901} (\bibinfo {year}
  {2019})}\BibitemShut {NoStop}%
\bibitem [{\citenamefont {Luo}\ and\ \citenamefont
  {Clark}(2019)}]{PhysRevLett.122.226401}%
  \BibitemOpen
  \bibfield  {author} {\bibinfo {author} {\bibfnamefont {D.}~\bibnamefont
  {Luo}}\ and\ \bibinfo {author} {\bibfnamefont {B.~K.}\ \bibnamefont
  {Clark}},\ }\bibfield  {title} {\bibinfo {title} {Backflow transformations
  via neural networks for quantum many-body wave functions},\ }\href
  {https://doi.org/10.1103/PhysRevLett.122.226401} {\bibfield  {journal}
  {\bibinfo  {journal} {Phys. Rev. Lett.}\ }\textbf {\bibinfo {volume} {122}},\
  \bibinfo {pages} {226401} (\bibinfo {year} {2019})}\BibitemShut {NoStop}%
\bibitem [{\citenamefont {Acevedo}\ \emph {et~al.}(2020)\citenamefont
  {Acevedo}, \citenamefont {Curry}, \citenamefont {Joshi}, \citenamefont
  {Leroux},\ and\ \citenamefont {Malaya}}]{acevedo2020vandermonde}%
  \BibitemOpen
  \bibfield  {author} {\bibinfo {author} {\bibfnamefont {A.}~\bibnamefont
  {Acevedo}}, \bibinfo {author} {\bibfnamefont {M.}~\bibnamefont {Curry}},
  \bibinfo {author} {\bibfnamefont {S.~H.}\ \bibnamefont {Joshi}}, \bibinfo
  {author} {\bibfnamefont {B.}~\bibnamefont {Leroux}},\ and\ \bibinfo {author}
  {\bibfnamefont {N.}~\bibnamefont {Malaya}},\ }\bibfield  {title} {\bibinfo
  {title} {Vandermonde wave function ansatz for improved variational monte
  carlo},\ }in\ \href {https://doi.org/10.1109/DLS51937.2020.00010} {\emph
  {\bibinfo {booktitle} {2020 IEEE/ACM Fourth Workshop on Deep Learning on
  Supercomputers (DLS)}}}\ (\bibinfo {organization} {IEEE},\ \bibinfo {year}
  {2020})\ pp.\ \bibinfo {pages} {40--47}\BibitemShut {NoStop}%
\bibitem [{\citenamefont {von Glehn}\ \emph {et~al.}(2022)\citenamefont {von
  Glehn}, \citenamefont {Spencer},\ and\ \citenamefont {Pfau}}]{von2022self}%
  \BibitemOpen
  \bibfield  {author} {\bibinfo {author} {\bibfnamefont {I.}~\bibnamefont {von
  Glehn}}, \bibinfo {author} {\bibfnamefont {J.~S.}\ \bibnamefont {Spencer}},\
  and\ \bibinfo {author} {\bibfnamefont {D.}~\bibnamefont {Pfau}},\ }\bibfield
  {title} {\bibinfo {title} {A self-attention ansatz for ab-initio quantum
  chemistry},\ }\href {https://arxiv.org/abs/2211.13672} {\bibfield  {journal}
  {\bibinfo  {journal} {arXiv:2211.13672}\ } (\bibinfo {year}
  {2022})}\BibitemShut {NoStop}%
\bibitem [{\citenamefont {Chen}\ and\ \citenamefont
  {Lu}(2023)}]{chen2023exact}%
  \BibitemOpen
  \bibfield  {author} {\bibinfo {author} {\bibfnamefont {Z.}~\bibnamefont
  {Chen}}\ and\ \bibinfo {author} {\bibfnamefont {J.}~\bibnamefont {Lu}},\
  }\bibfield  {title} {\bibinfo {title} {Exact and efficient representation of
  totally anti-symmetric functions},\ }\href {https://arxiv.org/abs/2311.05064}
  {\bibfield  {journal} {\bibinfo  {journal} {arXiv:2311.05064}\ } (\bibinfo
  {year} {2023})}\BibitemShut {NoStop}%
\bibitem [{\citenamefont {Geier}\ \emph {et~al.}(2025)\citenamefont {Geier},
  \citenamefont {Nazaryan}, \citenamefont {Zaklama},\ and\ \citenamefont
  {Fu}}]{geier2025attention}%
  \BibitemOpen
  \bibfield  {author} {\bibinfo {author} {\bibfnamefont {M.}~\bibnamefont
  {Geier}}, \bibinfo {author} {\bibfnamefont {K.}~\bibnamefont {Nazaryan}},
  \bibinfo {author} {\bibfnamefont {T.}~\bibnamefont {Zaklama}},\ and\ \bibinfo
  {author} {\bibfnamefont {L.}~\bibnamefont {Fu}},\ }\bibfield  {title}
  {\bibinfo {title} {Is attention all you need to solve the correlated electron
  problem?},\ }\href {https://arxiv.org/abs/2502.05383} {\bibfield  {journal}
  {\bibinfo  {journal} {arXiv:2502.05383}\ } (\bibinfo {year}
  {2025})}\BibitemShut {NoStop}%
\bibitem [{\citenamefont {Biamonte}\ \emph {et~al.}(2011)\citenamefont
  {Biamonte}, \citenamefont {Clark},\ and\ \citenamefont
  {Jaksch}}]{biamonte2011categorical}%
  \BibitemOpen
  \bibfield  {author} {\bibinfo {author} {\bibfnamefont {J.~D.}\ \bibnamefont
  {Biamonte}}, \bibinfo {author} {\bibfnamefont {S.~R.}\ \bibnamefont
  {Clark}},\ and\ \bibinfo {author} {\bibfnamefont {D.}~\bibnamefont
  {Jaksch}},\ }\bibfield  {title} {\bibinfo {title} {{Categorical Tensor
  Network States}},\ }\href {https://doi.org/10.1063/1.3672009} {\bibfield
  {journal} {\bibinfo  {journal} {AIP Advances}\ }\textbf {\bibinfo {volume}
  {1}},\ \bibinfo {pages} {042172} (\bibinfo {year} {2011})}\BibitemShut
  {NoStop}%
\bibitem [{\citenamefont {Coecke}\ and\ \citenamefont
  {Duncan}(2007)}]{coecke2007graphical}%
  \BibitemOpen
  \bibfield  {author} {\bibinfo {author} {\bibfnamefont {B.}~\bibnamefont
  {Coecke}}\ and\ \bibinfo {author} {\bibfnamefont {R.}~\bibnamefont
  {Duncan}},\ }\bibfield  {title} {\bibinfo {title} {A graphical calculus for
  quantum observables},\ }\href
  {https://www.cs.ox.ac.uk/people/bob.coecke/GreenRed.pdf} {\bibfield
  {journal} {\bibinfo  {journal} {Preprint}\ } (\bibinfo {year}
  {2007})}\BibitemShut {NoStop}%
\bibitem [{\citenamefont {Coecke}\ and\ \citenamefont
  {Duncan}(2008)}]{coecke2008interacting}%
  \BibitemOpen
  \bibfield  {author} {\bibinfo {author} {\bibfnamefont {B.}~\bibnamefont
  {Coecke}}\ and\ \bibinfo {author} {\bibfnamefont {R.}~\bibnamefont
  {Duncan}},\ }\bibfield  {title} {\bibinfo {title} {{Interacting Quantum
  Observables}},\ }in\ \href {https://doi.org/10.1007/978-3-540-70583-3_25}
  {\emph {\bibinfo {booktitle} {International Colloquium on Automata,
  Languages, and Programming}}}\ (\bibinfo {organization} {Springer},\ \bibinfo
  {year} {2008})\ pp.\ \bibinfo {pages} {298--310}\BibitemShut {NoStop}%
\bibitem [{\citenamefont {Coecke}\ and\ \citenamefont
  {Duncan}(2011)}]{coecke2011interacting}%
  \BibitemOpen
  \bibfield  {author} {\bibinfo {author} {\bibfnamefont {B.}~\bibnamefont
  {Coecke}}\ and\ \bibinfo {author} {\bibfnamefont {R.}~\bibnamefont
  {Duncan}},\ }\bibfield  {title} {\bibinfo {title} {Interacting quantum
  observables: categorical algebra and diagrammatics},\ }\href
  {https://doi.org/10.1088/1367-2630/13/4/043016} {\bibfield  {journal}
  {\bibinfo  {journal} {New Journal of Physics}\ }\textbf {\bibinfo {volume}
  {13}},\ \bibinfo {pages} {043016} (\bibinfo {year} {2011})}\BibitemShut
  {NoStop}%
\bibitem [{\citenamefont {Coecke}\ and\ \citenamefont
  {Kissinger}(2010)}]{coecke2010compositional}%
  \BibitemOpen
  \bibfield  {author} {\bibinfo {author} {\bibfnamefont {B.}~\bibnamefont
  {Coecke}}\ and\ \bibinfo {author} {\bibfnamefont {A.}~\bibnamefont
  {Kissinger}},\ }\bibfield  {title} {\bibinfo {title} {{The Compositional
  Structure of Multipartite Quantum Entanglement}},\ }in\ \href
  {https://doi.org/10.1007/978-3-642-14162-1_25} {\emph {\bibinfo {booktitle}
  {International Colloquium on Automata, Languages, and Programming}}}\
  (\bibinfo {organization} {Springer},\ \bibinfo {year} {2010})\ pp.\ \bibinfo
  {pages} {297--308}\BibitemShut {NoStop}%
\bibitem [{\citenamefont {van~de Wetering}(2020)}]{van2020zx}%
  \BibitemOpen
  \bibfield  {author} {\bibinfo {author} {\bibfnamefont {J.}~\bibnamefont
  {van~de Wetering}},\ }\bibfield  {title} {\bibinfo {title} {{ZX-calculus for
  the working quantum computer scientist}},\ }\href
  {https://arxiv.org/abs/2012.13966} {\bibfield  {journal} {\bibinfo  {journal}
  {arXiv:2012.13966}\ } (\bibinfo {year} {2020})}\BibitemShut {NoStop}%
\bibitem [{\citenamefont {Ng}\ and\ \citenamefont
  {Wang}(2017)}]{ng2017universal}%
  \BibitemOpen
  \bibfield  {author} {\bibinfo {author} {\bibfnamefont {K.~F.}\ \bibnamefont
  {Ng}}\ and\ \bibinfo {author} {\bibfnamefont {Q.}~\bibnamefont {Wang}},\
  }\bibfield  {title} {\bibinfo {title} {{A universal completion of the
  ZX-calculus}},\ }\href {https://arxiv.org/abs/1706.09877} {\bibfield
  {journal} {\bibinfo  {journal} {arXiv:1706.09877}\ } (\bibinfo {year}
  {2017})}\BibitemShut {NoStop}%
\bibitem [{\citenamefont {Ng}\ and\ \citenamefont
  {Wang}(2018)}]{ng2018completeness}%
  \BibitemOpen
  \bibfield  {author} {\bibinfo {author} {\bibfnamefont {K.~F.}\ \bibnamefont
  {Ng}}\ and\ \bibinfo {author} {\bibfnamefont {Q.}~\bibnamefont {Wang}},\
  }\bibfield  {title} {\bibinfo {title} {{Completeness of the ZX-calculus for
  Pure Qubit Clifford+T Quantum Mechanics}},\ }\href
  {https://arxiv.org/abs/1801.07993} {\bibfield  {journal} {\bibinfo  {journal}
  {arXiv:1801.07993}\ } (\bibinfo {year} {2018})}\BibitemShut {NoStop}%
\bibitem [{\citenamefont {Hadzihasanovic}\ \emph {et~al.}(2018)\citenamefont
  {Hadzihasanovic}, \citenamefont {Ng},\ and\ \citenamefont
  {Wang}}]{hadzihasanovic2018two}%
  \BibitemOpen
  \bibfield  {author} {\bibinfo {author} {\bibfnamefont {A.}~\bibnamefont
  {Hadzihasanovic}}, \bibinfo {author} {\bibfnamefont {K.~F.}\ \bibnamefont
  {Ng}},\ and\ \bibinfo {author} {\bibfnamefont {Q.}~\bibnamefont {Wang}},\
  }\bibfield  {title} {\bibinfo {title} {Two complete axiomatisations of
  pure-state qubit quantum computing},\ }in\ \href
  {https://doi.org/10.1145/3209108.3209128} {\emph {\bibinfo {booktitle}
  {Proceedings of the 33rd annual ACM/IEEE symposium on logic in computer
  science}}}\ (\bibinfo {year} {2018})\ pp.\ \bibinfo {pages}
  {502--511}\BibitemShut {NoStop}%
\bibitem [{\citenamefont {Jeandel}\ \emph {et~al.}(2020)\citenamefont
  {Jeandel}, \citenamefont {Perdrix},\ and\ \citenamefont
  {Vilmart}}]{jeandel2020completeness}%
  \BibitemOpen
  \bibfield  {author} {\bibinfo {author} {\bibfnamefont {E.}~\bibnamefont
  {Jeandel}}, \bibinfo {author} {\bibfnamefont {S.}~\bibnamefont {Perdrix}},\
  and\ \bibinfo {author} {\bibfnamefont {R.}~\bibnamefont {Vilmart}},\
  }\bibfield  {title} {\bibinfo {title} {{Completeness of the ZX-Calculus}},\
  }\href {https://doi.org/10.23638/LMCS-16(2:11)2020} {\bibfield  {journal}
  {\bibinfo  {journal} {Logical Methods in Computer Science}\ }\textbf
  {\bibinfo {volume} {16}},\ \bibinfo {pages} {2} (\bibinfo {year}
  {2020})}\BibitemShut {NoStop}%
\bibitem [{\citenamefont {Wang}(2022)}]{wang2022completeness}%
  \BibitemOpen
  \bibfield  {author} {\bibinfo {author} {\bibfnamefont {Q.}~\bibnamefont
  {Wang}},\ }\bibfield  {title} {\bibinfo {title} {{Completeness of the
  ZX-calculus}},\ }\href {https://arxiv.org/abs/2209.14894} {\bibfield
  {journal} {\bibinfo  {journal} {arXiv:2209.14894}\ } (\bibinfo {year}
  {2022})}\BibitemShut {NoStop}%
\bibitem [{\citenamefont {Backens}(2014)}]{backens2014zx}%
  \BibitemOpen
  \bibfield  {author} {\bibinfo {author} {\bibfnamefont {M.}~\bibnamefont
  {Backens}},\ }\bibfield  {title} {\bibinfo {title} {{The ZX-calculus is
  complete for stabilizer quantum mechanics}},\ }\href
  {https://doi.org/10.1088/1367-2630/16/9/093021} {\bibfield  {journal}
  {\bibinfo  {journal} {New Journal of Physics}\ }\textbf {\bibinfo {volume}
  {16}},\ \bibinfo {pages} {093021} (\bibinfo {year} {2014})}\BibitemShut
  {NoStop}%
\bibitem [{\citenamefont
  {Hadzihasanovic}(2015)}]{hadzihasanovic2015diagrammatic}%
  \BibitemOpen
  \bibfield  {author} {\bibinfo {author} {\bibfnamefont {A.}~\bibnamefont
  {Hadzihasanovic}},\ }\bibfield  {title} {\bibinfo {title} {{A Diagrammatic
  Axiomatisation for Qubit Entanglement}},\ }in\ \href
  {https://doi.org/10.1109/LICS.2015.59} {\emph {\bibinfo {booktitle} {2015
  30th Annual ACM/IEEE Symposium on Logic in Computer Science}}}\ (\bibinfo
  {organization} {IEEE},\ \bibinfo {year} {2015})\ pp.\ \bibinfo {pages}
  {573--584}\BibitemShut {NoStop}%
\bibitem [{\citenamefont {Carette}\ \emph {et~al.}(2023)\citenamefont
  {Carette}, \citenamefont {Moutot}, \citenamefont {Perez},\ and\ \citenamefont
  {Vilmart}}]{carette2023compositionality}%
  \BibitemOpen
  \bibfield  {author} {\bibinfo {author} {\bibfnamefont {T.}~\bibnamefont
  {Carette}}, \bibinfo {author} {\bibfnamefont {E.}~\bibnamefont {Moutot}},
  \bibinfo {author} {\bibfnamefont {T.}~\bibnamefont {Perez}},\ and\ \bibinfo
  {author} {\bibfnamefont {R.}~\bibnamefont {Vilmart}},\ }\bibfield  {title}
  {\bibinfo {title} {Compositionality of planar perfect matchings},\ }\href
  {https://arxiv.org/abs/2302.08767} {\bibfield  {journal} {\bibinfo  {journal}
  {arXiv:2302.08767}\ } (\bibinfo {year} {2023})}\BibitemShut {NoStop}%
\bibitem [{\citenamefont {Koch}(2022)}]{koch2022quantum}%
  \BibitemOpen
  \bibfield  {author} {\bibinfo {author} {\bibfnamefont {M.}~\bibnamefont
  {Koch}},\ }\bibfield  {title} {\bibinfo {title} {{Quantum Machine Learning
  using the ZXW-Calculus}},\ }\href {https://arxiv.org/abs/2210.11523}
  {\bibfield  {journal} {\bibinfo  {journal} {arXiv:2210.11523}\ } (\bibinfo
  {year} {2022})}\BibitemShut {NoStop}%
\bibitem [{\citenamefont {Po{\'o}r}\ \emph {et~al.}(2023)\citenamefont
  {Po{\'o}r}, \citenamefont {Wang}, \citenamefont {Shaikh}, \citenamefont
  {Yeh}, \citenamefont {Yeung},\ and\ \citenamefont
  {Coecke}}]{poor2023completeness}%
  \BibitemOpen
  \bibfield  {author} {\bibinfo {author} {\bibfnamefont {B.}~\bibnamefont
  {Po{\'o}r}}, \bibinfo {author} {\bibfnamefont {Q.}~\bibnamefont {Wang}},
  \bibinfo {author} {\bibfnamefont {R.~A.}\ \bibnamefont {Shaikh}}, \bibinfo
  {author} {\bibfnamefont {L.}~\bibnamefont {Yeh}}, \bibinfo {author}
  {\bibfnamefont {R.}~\bibnamefont {Yeung}},\ and\ \bibinfo {author}
  {\bibfnamefont {B.}~\bibnamefont {Coecke}},\ }\bibfield  {title} {\bibinfo
  {title} {{Completeness for arbitrary finite dimensions of ZXW-calculus, a
  unifying calculus}},\ }in\ \href
  {https://doi.org/10.1109/LICS56636.2023.10175672} {\emph {\bibinfo
  {booktitle} {2023 38th Annual ACM/IEEE Symposium on Logic in Computer Science
  (LICS)}}}\ (\bibinfo {organization} {IEEE},\ \bibinfo {year} {2023})\ pp.\
  \bibinfo {pages} {1--14}\BibitemShut {NoStop}%
\bibitem [{\citenamefont {Bravyi}\ and\ \citenamefont
  {Kitaev}(2002)}]{bravyi2002fermionic}%
  \BibitemOpen
  \bibfield  {author} {\bibinfo {author} {\bibfnamefont {S.~B.}\ \bibnamefont
  {Bravyi}}\ and\ \bibinfo {author} {\bibfnamefont {A.~Y.}\ \bibnamefont
  {Kitaev}},\ }\bibfield  {title} {\bibinfo {title} {{Fermionic Quantum
  Computation}},\ }\href {https://doi.org/10.1006/aphy.2002.6254} {\bibfield
  {journal} {\bibinfo  {journal} {Annals of Physics}\ }\textbf {\bibinfo
  {volume} {298}},\ \bibinfo {pages} {210} (\bibinfo {year}
  {2002})}\BibitemShut {NoStop}%
\bibitem [{\citenamefont {Bravyi}(2006)}]{PhysRevA.73.042313}%
  \BibitemOpen
  \bibfield  {author} {\bibinfo {author} {\bibfnamefont {S.}~\bibnamefont
  {Bravyi}},\ }\bibfield  {title} {\bibinfo {title} {{Universal quantum
  computation with the $\nu=5/2$ fractional quantum Hall state}},\ }\href
  {https://doi.org/10.1103/PhysRevA.73.042313} {\bibfield  {journal} {\bibinfo
  {journal} {Phys. Rev. A}\ }\textbf {\bibinfo {volume} {73}},\ \bibinfo
  {pages} {042313} (\bibinfo {year} {2006})}\BibitemShut {NoStop}%
\bibitem [{\citenamefont {Ahlbrecht}\ \emph {et~al.}(2009)\citenamefont
  {Ahlbrecht}, \citenamefont {Georgiev},\ and\ \citenamefont
  {Werner}}]{PhysRevA.79.032311}%
  \BibitemOpen
  \bibfield  {author} {\bibinfo {author} {\bibfnamefont {A.}~\bibnamefont
  {Ahlbrecht}}, \bibinfo {author} {\bibfnamefont {L.~S.}\ \bibnamefont
  {Georgiev}},\ and\ \bibinfo {author} {\bibfnamefont {R.~F.}\ \bibnamefont
  {Werner}},\ }\bibfield  {title} {\bibinfo {title} {{Implementation of
  Clifford gates in the Ising-anyon topological quantum computer}},\ }\href
  {https://doi.org/10.1103/PhysRevA.79.032311} {\bibfield  {journal} {\bibinfo
  {journal} {Phys. Rev. A}\ }\textbf {\bibinfo {volume} {79}},\ \bibinfo
  {pages} {032311} (\bibinfo {year} {2009})}\BibitemShut {NoStop}%
\bibitem [{\citenamefont {Jones}(2021)}]{jones2021planar}%
  \BibitemOpen
  \bibfield  {author} {\bibinfo {author} {\bibfnamefont {V.}~\bibnamefont
  {Jones}},\ }\bibfield  {title} {\bibinfo {title} {Planar algebras},\ }\href
  {https://doi.org/10.53733/172} {\bibfield  {journal} {\bibinfo  {journal}
  {New Zealand Journal of Mathematics}\ }\textbf {\bibinfo {volume} {52}},\
  \bibinfo {pages} {1} (\bibinfo {year} {2021})}\BibitemShut {NoStop}%
\bibitem [{\citenamefont {Cai}\ \emph {et~al.}(2009)\citenamefont {Cai},
  \citenamefont {Choudhary},\ and\ \citenamefont {Lu}}]{cai2009theory}%
  \BibitemOpen
  \bibfield  {author} {\bibinfo {author} {\bibfnamefont {J.-Y.}\ \bibnamefont
  {Cai}}, \bibinfo {author} {\bibfnamefont {V.}~\bibnamefont {Choudhary}},\
  and\ \bibinfo {author} {\bibfnamefont {P.}~\bibnamefont {Lu}},\ }\bibfield
  {title} {\bibinfo {title} {{On the Theory of Matchgate Computations}},\
  }\href {https://doi.org/10.1109/CCC.2007.22} {\bibfield  {journal} {\bibinfo
  {journal} {Theory of Computing Systems}\ }\textbf {\bibinfo {volume} {45}},\
  \bibinfo {pages} {108} (\bibinfo {year} {2009})}\BibitemShut {NoStop}%
\bibitem [{\citenamefont {Wan}\ \emph {et~al.}(2023)\citenamefont {Wan},
  \citenamefont {Huggins}, \citenamefont {Lee},\ and\ \citenamefont
  {Babbush}}]{wan2023matchgate}%
  \BibitemOpen
  \bibfield  {author} {\bibinfo {author} {\bibfnamefont {K.}~\bibnamefont
  {Wan}}, \bibinfo {author} {\bibfnamefont {W.~J.}\ \bibnamefont {Huggins}},
  \bibinfo {author} {\bibfnamefont {J.}~\bibnamefont {Lee}},\ and\ \bibinfo
  {author} {\bibfnamefont {R.}~\bibnamefont {Babbush}},\ }\bibfield  {title}
  {\bibinfo {title} {{Matchgate Shadows for Fermionic Quantum Simulation}},\
  }\href {https://doi.org/10.1007/s00220-023-04844-0} {\bibfield  {journal}
  {\bibinfo  {journal} {Communications in Mathematical Physics}\ }\textbf
  {\bibinfo {volume} {404}},\ \bibinfo {pages} {629} (\bibinfo {year}
  {2023})}\BibitemShut {NoStop}%
\bibitem [{\citenamefont {Schuch}\ \emph {et~al.}(2007)\citenamefont {Schuch},
  \citenamefont {Wolf}, \citenamefont {Verstraete},\ and\ \citenamefont
  {Cirac}}]{PhysRevLett.98.140506}%
  \BibitemOpen
  \bibfield  {author} {\bibinfo {author} {\bibfnamefont {N.}~\bibnamefont
  {Schuch}}, \bibinfo {author} {\bibfnamefont {M.~M.}\ \bibnamefont {Wolf}},
  \bibinfo {author} {\bibfnamefont {F.}~\bibnamefont {Verstraete}},\ and\
  \bibinfo {author} {\bibfnamefont {J.~I.}\ \bibnamefont {Cirac}},\ }\bibfield
  {title} {\bibinfo {title} {{Computational Complexity of Projected Entangled
  Pair States}},\ }\href {https://doi.org/10.1103/PhysRevLett.98.140506}
  {\bibfield  {journal} {\bibinfo  {journal} {Phys. Rev. Lett.}\ }\textbf
  {\bibinfo {volume} {98}},\ \bibinfo {pages} {140506} (\bibinfo {year}
  {2007})}\BibitemShut {NoStop}%
\bibitem [{\citenamefont {Haferkamp}\ \emph {et~al.}(2020)\citenamefont
  {Haferkamp}, \citenamefont {Hangleiter}, \citenamefont {Eisert},\ and\
  \citenamefont {Gluza}}]{PhysRevResearch.2.013010}%
  \BibitemOpen
  \bibfield  {author} {\bibinfo {author} {\bibfnamefont {J.}~\bibnamefont
  {Haferkamp}}, \bibinfo {author} {\bibfnamefont {D.}~\bibnamefont
  {Hangleiter}}, \bibinfo {author} {\bibfnamefont {J.}~\bibnamefont {Eisert}},\
  and\ \bibinfo {author} {\bibfnamefont {M.}~\bibnamefont {Gluza}},\ }\bibfield
   {title} {\bibinfo {title} {Contracting projected entangled pair states is
  average-case hard},\ }\href
  {https://doi.org/10.1103/PhysRevResearch.2.013010} {\bibfield  {journal}
  {\bibinfo  {journal} {Phys. Rev. Res.}\ }\textbf {\bibinfo {volume} {2}},\
  \bibinfo {pages} {013010} (\bibinfo {year} {2020})}\BibitemShut {NoStop}%
\bibitem [{\citenamefont {Bampounis}\ and\ \citenamefont
  {Soares~Barbosa}(2024)}]{matchgatehierarchiy}%
  \BibitemOpen
  \bibfield  {author} {\bibinfo {author} {\bibfnamefont {A.}~\bibnamefont
  {Bampounis}}\ and\ \bibinfo {author} {\bibfnamefont {d.~S.~N.}\ \bibnamefont
  {Soares~Barbosa}, \bibfnamefont {Rui~and}},\ }\bibfield  {title} {\bibinfo
  {title} {{Matchgate hierarchy: A Clifford-like hierarchy for deterministic
  gate teleportation in matchgate circuits}},\ }\href
  {https://arxiv.org/abs/2410.01887} {\bibfield  {journal} {\bibinfo  {journal}
  {arXiv:2410.01887}\ } (\bibinfo {year} {2024})}\BibitemShut {NoStop}%
\bibitem [{\citenamefont {Janzing}\ \emph {et~al.}(2005)\citenamefont
  {Janzing}, \citenamefont {Wocjan},\ and\ \citenamefont
  {Beth}}]{janzing2005non}%
  \BibitemOpen
  \bibfield  {author} {\bibinfo {author} {\bibfnamefont {D.}~\bibnamefont
  {Janzing}}, \bibinfo {author} {\bibfnamefont {P.}~\bibnamefont {Wocjan}},\
  and\ \bibinfo {author} {\bibfnamefont {T.}~\bibnamefont {Beth}},\ }\bibfield
  {title} {\bibinfo {title} {{``NON-IDENTITY-CHECK'' IS QMA-COMPLETE}},\ }\href
  {https://doi.org/10.1142/S0219749905001067} {\bibfield  {journal} {\bibinfo
  {journal} {International Journal of Quantum Information}\ }\textbf {\bibinfo
  {volume} {3}},\ \bibinfo {pages} {463} (\bibinfo {year} {2005})}\BibitemShut
  {NoStop}%
\bibitem [{\citenamefont {Ji}\ and\ \citenamefont {Wu}(2009)}]{ji2009non}%
  \BibitemOpen
  \bibfield  {author} {\bibinfo {author} {\bibfnamefont {Z.}~\bibnamefont
  {Ji}}\ and\ \bibinfo {author} {\bibfnamefont {X.}~\bibnamefont {Wu}},\
  }\bibfield  {title} {\bibinfo {title} {{Non-Identity Check Remains
  QMA-Complete for Short Circuits}},\ }\href {https://arxiv.org/abs/0906.5416}
  {\bibfield  {journal} {\bibinfo  {journal} {arXiv:0906.5416}\ } (\bibinfo
  {year} {2009})}\BibitemShut {NoStop}%
\bibitem [{\citenamefont {Tarabunga}\ \emph {et~al.}(2024)\citenamefont
  {Tarabunga}, \citenamefont {Tirrito}, \citenamefont {Ba\~nuls},\ and\
  \citenamefont {Dalmonte}}]{PhysRevLett.133.010601}%
  \BibitemOpen
  \bibfield  {author} {\bibinfo {author} {\bibfnamefont {P.~S.}\ \bibnamefont
  {Tarabunga}}, \bibinfo {author} {\bibfnamefont {E.}~\bibnamefont {Tirrito}},
  \bibinfo {author} {\bibfnamefont {M.~C.}\ \bibnamefont {Ba\~nuls}},\ and\
  \bibinfo {author} {\bibfnamefont {M.}~\bibnamefont {Dalmonte}},\ }\bibfield
  {title} {\bibinfo {title} {{Nonstabilizerness via Matrix Product States in
  the Pauli Basis}},\ }\href {https://doi.org/10.1103/PhysRevLett.133.010601}
  {\bibfield  {journal} {\bibinfo  {journal} {Phys. Rev. Lett.}\ }\textbf
  {\bibinfo {volume} {133}},\ \bibinfo {pages} {010601} (\bibinfo {year}
  {2024})}\BibitemShut {NoStop}%
\bibitem [{\citenamefont {Hangleiter}\ and\ \citenamefont
  {Gullans}(2024)}]{PhysRevLett.133.020601}%
  \BibitemOpen
  \bibfield  {author} {\bibinfo {author} {\bibfnamefont {D.}~\bibnamefont
  {Hangleiter}}\ and\ \bibinfo {author} {\bibfnamefont {M.~J.}\ \bibnamefont
  {Gullans}},\ }\bibfield  {title} {\bibinfo {title} {{Bell Sampling from
  Quantum Circuits}},\ }\href {https://doi.org/10.1103/PhysRevLett.133.020601}
  {\bibfield  {journal} {\bibinfo  {journal} {Phys. Rev. Lett.}\ }\textbf
  {\bibinfo {volume} {133}},\ \bibinfo {pages} {020601} (\bibinfo {year}
  {2024})}\BibitemShut {NoStop}%
\bibitem [{\citenamefont {Leone}\ \emph {et~al.}(2024)\citenamefont {Leone},
  \citenamefont {Oliviero},\ and\ \citenamefont {Hamma}}]{leone2024learning}%
  \BibitemOpen
  \bibfield  {author} {\bibinfo {author} {\bibfnamefont {L.}~\bibnamefont
  {Leone}}, \bibinfo {author} {\bibfnamefont {S.~F.}\ \bibnamefont
  {Oliviero}},\ and\ \bibinfo {author} {\bibfnamefont {A.}~\bibnamefont
  {Hamma}},\ }\bibfield  {title} {\bibinfo {title} {Learning t-doped stabilizer
  states},\ }\href {https://doi.org/10.22331/q-2024-05-27-1361} {\bibfield
  {journal} {\bibinfo  {journal} {Quantum}\ }\textbf {\bibinfo {volume} {8}},\
  \bibinfo {pages} {1361} (\bibinfo {year} {2024})}\BibitemShut {NoStop}%
\bibitem [{\citenamefont {Lyu}\ and\ \citenamefont
  {Bu}(2024)}]{lyu2024fermionic}%
  \BibitemOpen
  \bibfield  {author} {\bibinfo {author} {\bibfnamefont {X.}~\bibnamefont
  {Lyu}}\ and\ \bibinfo {author} {\bibfnamefont {K.}~\bibnamefont {Bu}},\
  }\bibfield  {title} {\bibinfo {title} {{Fermionic Gaussian Testing and
  Non-Gaussian Measures via Convolution}},\ }\href
  {https://arxiv.org/abs/2409.08180} {\bibfield  {journal} {\bibinfo  {journal}
  {arXiv:2409.08180}\ } (\bibinfo {year} {2024})}\BibitemShut {NoStop}%
\bibitem [{\citenamefont {Coffman}\ \emph {et~al.}(2025)\citenamefont
  {Coffman}, \citenamefont {Smith},\ and\ \citenamefont
  {Gao}}]{coffman2025measuring}%
  \BibitemOpen
  \bibfield  {author} {\bibinfo {author} {\bibfnamefont {L.}~\bibnamefont
  {Coffman}}, \bibinfo {author} {\bibfnamefont {G.}~\bibnamefont {Smith}},\
  and\ \bibinfo {author} {\bibfnamefont {X.}~\bibnamefont {Gao}},\ }\bibfield
  {title} {\bibinfo {title} {{Measuring Non-Gaussian Magic in Fermions:
  Convolution, Entropy, and the Violation of Wick's Theorem and the Matchgate
  Identity}},\ }\href {https://arxiv.org/abs/2501.06179} {\bibfield  {journal}
  {\bibinfo  {journal} {arXiv:2501.06179}\ } (\bibinfo {year}
  {2025})}\BibitemShut {NoStop}%
\bibitem [{\citenamefont {Alagic}\ and\ \citenamefont
  {Fefferman}(2016)}]{alagic2016quantum}%
  \BibitemOpen
  \bibfield  {author} {\bibinfo {author} {\bibfnamefont {G.}~\bibnamefont
  {Alagic}}\ and\ \bibinfo {author} {\bibfnamefont {B.}~\bibnamefont
  {Fefferman}},\ }\bibfield  {title} {\bibinfo {title} {{On Quantum
  Obfuscation}},\ }\href {https://arxiv.org/abs/1602.01771} {\bibfield
  {journal} {\bibinfo  {journal} {arXiv:1602.01771}\ } (\bibinfo {year}
  {2016})}\BibitemShut {NoStop}%
\bibitem [{\citenamefont {Alagic}\ \emph {et~al.}(2021)\citenamefont {Alagic},
  \citenamefont {Brakerski}, \citenamefont {Dulek},\ and\ \citenamefont
  {Schaffner}}]{alagic2021impossibility}%
  \BibitemOpen
  \bibfield  {author} {\bibinfo {author} {\bibfnamefont {G.}~\bibnamefont
  {Alagic}}, \bibinfo {author} {\bibfnamefont {Z.}~\bibnamefont {Brakerski}},
  \bibinfo {author} {\bibfnamefont {Y.}~\bibnamefont {Dulek}},\ and\ \bibinfo
  {author} {\bibfnamefont {C.}~\bibnamefont {Schaffner}},\ }\bibfield  {title}
  {\bibinfo {title} {{Impossibility of Quantum Virtual Black-Box Obfuscation of
  Classical Circuits}},\ }in\ \href
  {https://doi.org/10.1007/978-3-030-84242-0_18} {\emph {\bibinfo {booktitle}
  {Advances in Cryptology--CRYPTO 2021: 41st Annual International Cryptology
  Conference, CRYPTO 2021, Virtual Event, August 16--20, 2021, Proceedings,
  Part I 41}}}\ (\bibinfo {organization} {Springer},\ \bibinfo {year} {2021})\
  pp.\ \bibinfo {pages} {497--525}\BibitemShut {NoStop}%
\bibitem [{\citenamefont {Broadbent}\ and\ \citenamefont
  {Kazmi}(2021)}]{broadbent2021constructions}%
  \BibitemOpen
  \bibfield  {author} {\bibinfo {author} {\bibfnamefont {A.}~\bibnamefont
  {Broadbent}}\ and\ \bibinfo {author} {\bibfnamefont {R.~A.}\ \bibnamefont
  {Kazmi}},\ }\bibfield  {title} {\bibinfo {title} {{Constructions for Quantum
  Indistinguishability Obfuscation}},\ }in\ \href
  {https://doi.org/10.1007/978-3-030-88238-9_2} {\emph {\bibinfo {booktitle}
  {International Conference on Cryptology and Information Security in Latin
  America}}}\ (\bibinfo {organization} {Springer},\ \bibinfo {year} {2021})\
  pp.\ \bibinfo {pages} {24--43}\BibitemShut {NoStop}%
\bibitem [{\citenamefont {Bartusek}\ and\ \citenamefont
  {Malavolta}(2022)}]{bartusek2022indistinguishability}%
  \BibitemOpen
  \bibfield  {author} {\bibinfo {author} {\bibfnamefont {J.}~\bibnamefont
  {Bartusek}}\ and\ \bibinfo {author} {\bibfnamefont {G.}~\bibnamefont
  {Malavolta}},\ }\bibfield  {title} {\bibinfo {title} {{Indistinguishability
  Obfuscation of Null Quantum Circuits and Applications}},\ }in\ \href
  {https://doi.org/10.4230/LIPIcs.ITCS.2022.15} {\emph {\bibinfo {booktitle}
  {13th Innovations in Theoretical Computer Science Conference (ITCS 2022)}}},\
  \bibinfo {series} {Leibniz International Proceedings in Informatics
  (LIPIcs)}, Vol.\ \bibinfo {volume} {215},\ \bibinfo {editor} {edited by\
  \bibinfo {editor} {\bibfnamefont {M.}~\bibnamefont {Braverman}}}\ (\bibinfo
  {publisher} {Schloss Dagstuhl -- Leibniz-Zentrum f{\"u}r Informatik},\
  \bibinfo {address} {Dagstuhl, Germany},\ \bibinfo {year} {2022})\ pp.\
  \bibinfo {pages} {15:1--15:13}\BibitemShut {NoStop}%
\bibitem [{\citenamefont {Bartusek}\ \emph {et~al.}(2023)\citenamefont
  {Bartusek}, \citenamefont {Kitagawa}, \citenamefont {Nishimaki},\ and\
  \citenamefont {Yamakawa}}]{bartusek2023obfuscation}%
  \BibitemOpen
  \bibfield  {author} {\bibinfo {author} {\bibfnamefont {J.}~\bibnamefont
  {Bartusek}}, \bibinfo {author} {\bibfnamefont {F.}~\bibnamefont {Kitagawa}},
  \bibinfo {author} {\bibfnamefont {R.}~\bibnamefont {Nishimaki}},\ and\
  \bibinfo {author} {\bibfnamefont {T.}~\bibnamefont {Yamakawa}},\ }\bibfield
  {title} {\bibinfo {title} {{Obfuscation of Pseudo-Deterministic Quantum
  Circuits}},\ }in\ \href {https://doi.org/10.1145/3564246.3585179} {\emph
  {\bibinfo {booktitle} {Proceedings of the 55th Annual ACM Symposium on Theory
  of Computing}}}\ (\bibinfo {year} {2023})\ pp.\ \bibinfo {pages}
  {1567--1578}\BibitemShut {NoStop}%
\bibitem [{\citenamefont {Coladangelo}\ and\ \citenamefont
  {Gunn}(2024)}]{coladangelo2024use}%
  \BibitemOpen
  \bibfield  {author} {\bibinfo {author} {\bibfnamefont {A.}~\bibnamefont
  {Coladangelo}}\ and\ \bibinfo {author} {\bibfnamefont {S.}~\bibnamefont
  {Gunn}},\ }\bibfield  {title} {\bibinfo {title} {{How to Use Quantum
  Indistinguishability Obfuscation}},\ }in\ \href
  {https://doi.org/10.1145/3618260.3649779} {\emph {\bibinfo {booktitle}
  {Proceedings of the 56th Annual ACM Symposium on Theory of Computing}}}\
  (\bibinfo {year} {2024})\ pp.\ \bibinfo {pages} {1003--1008}\BibitemShut
  {NoStop}%
\bibitem [{\citenamefont {Bartusek}\ \emph {et~al.}(2024)\citenamefont
  {Bartusek}, \citenamefont {Brakerski},\ and\ \citenamefont
  {Vaikuntanathan}}]{bartusek2024quantum}%
  \BibitemOpen
  \bibfield  {author} {\bibinfo {author} {\bibfnamefont {J.}~\bibnamefont
  {Bartusek}}, \bibinfo {author} {\bibfnamefont {Z.}~\bibnamefont
  {Brakerski}},\ and\ \bibinfo {author} {\bibfnamefont {V.}~\bibnamefont
  {Vaikuntanathan}},\ }\bibfield  {title} {\bibinfo {title} {{Quantum State
  Obfuscation from Classical Oracles}},\ }in\ \href
  {https://doi.org/10.1145/3618260.3649673} {\emph {\bibinfo {booktitle}
  {Proceedings of the 56th Annual ACM Symposium on Theory of Computing}}}\
  (\bibinfo {year} {2024})\ pp.\ \bibinfo {pages} {1009--1017}\BibitemShut
  {NoStop}%
\bibitem [{\citenamefont {Amy}\ \emph {et~al.}(2014)\citenamefont {Amy},
  \citenamefont {Maslov},\ and\ \citenamefont {Mosca}}]{amy2014polynomial}%
  \BibitemOpen
  \bibfield  {author} {\bibinfo {author} {\bibfnamefont {M.}~\bibnamefont
  {Amy}}, \bibinfo {author} {\bibfnamefont {D.}~\bibnamefont {Maslov}},\ and\
  \bibinfo {author} {\bibfnamefont {M.}~\bibnamefont {Mosca}},\ }\bibfield
  {title} {\bibinfo {title} {{Polynomial-Time T-Depth Optimization of
  Clifford+T Circuits Via Matroid Partitioning}},\ }\href
  {https://doi.org/10.1109/TCAD.2014.2341953} {\bibfield  {journal} {\bibinfo
  {journal} {IEEE Transactions on Computer-Aided Design of Integrated Circuits
  and Systems}\ }\textbf {\bibinfo {volume} {33}},\ \bibinfo {pages} {1476}
  (\bibinfo {year} {2014})}\BibitemShut {NoStop}%
\bibitem [{\citenamefont {Heyfron}\ and\ \citenamefont
  {Campbell}(2018)}]{heyfron2018efficient}%
  \BibitemOpen
  \bibfield  {author} {\bibinfo {author} {\bibfnamefont {L.~E.}\ \bibnamefont
  {Heyfron}}\ and\ \bibinfo {author} {\bibfnamefont {E.~T.}\ \bibnamefont
  {Campbell}},\ }\bibfield  {title} {\bibinfo {title} {{An efficient quantum
  compiler that reduces T count}},\ }\href
  {https://doi.org/10.1088/2058-9565/aad604} {\bibfield  {journal} {\bibinfo
  {journal} {Quantum Science and Technology}\ }\textbf {\bibinfo {volume}
  {4}},\ \bibinfo {pages} {015004} (\bibinfo {year} {2018})}\BibitemShut
  {NoStop}%
\bibitem [{\citenamefont {Ruiz}\ \emph {et~al.}(2025)\citenamefont {Ruiz},
  \citenamefont {Laakkonen}, \citenamefont {Bausch}, \citenamefont {Balog},
  \citenamefont {Barekatain}, \citenamefont {Heras}, \citenamefont {Novikov},
  \citenamefont {Fitzpatrick}, \citenamefont {Romera-Paredes}, \citenamefont
  {van~de Wetering} \emph {et~al.}}]{ruiz2025quantum}%
  \BibitemOpen
  \bibfield  {author} {\bibinfo {author} {\bibfnamefont {F.~J.}\ \bibnamefont
  {Ruiz}}, \bibinfo {author} {\bibfnamefont {T.}~\bibnamefont {Laakkonen}},
  \bibinfo {author} {\bibfnamefont {J.}~\bibnamefont {Bausch}}, \bibinfo
  {author} {\bibfnamefont {M.}~\bibnamefont {Balog}}, \bibinfo {author}
  {\bibfnamefont {M.}~\bibnamefont {Barekatain}}, \bibinfo {author}
  {\bibfnamefont {F.~J.}\ \bibnamefont {Heras}}, \bibinfo {author}
  {\bibfnamefont {A.}~\bibnamefont {Novikov}}, \bibinfo {author} {\bibfnamefont
  {N.}~\bibnamefont {Fitzpatrick}}, \bibinfo {author} {\bibfnamefont
  {B.}~\bibnamefont {Romera-Paredes}}, \bibinfo {author} {\bibfnamefont
  {J.}~\bibnamefont {van~de Wetering}}, \emph {et~al.},\ }\bibfield  {title}
  {\bibinfo {title} {{Quantum circuit optimization with AlphaTensor}},\ }\href
  {https://doi.org/10.1038/s42256-025-01001-1} {\bibfield  {journal} {\bibinfo
  {journal} {Nature Machine Intelligence}\ ,\ \bibinfo {pages} {1}} (\bibinfo
  {year} {2025})}\BibitemShut {NoStop}%
\bibitem [{\citenamefont {Kissinger}\ and\ \citenamefont {van~de
  Wetering}(2020{\natexlab{a}})}]{PhysRevA.102.022406}%
  \BibitemOpen
  \bibfield  {author} {\bibinfo {author} {\bibfnamefont {A.}~\bibnamefont
  {Kissinger}}\ and\ \bibinfo {author} {\bibfnamefont {J.}~\bibnamefont {van~de
  Wetering}},\ }\bibfield  {title} {\bibinfo {title} {{Reducing the number of
  non-Clifford gates in quantum circuits}},\ }\href
  {https://doi.org/10.1103/PhysRevA.102.022406} {\bibfield  {journal} {\bibinfo
   {journal} {Phys. Rev. A}\ }\textbf {\bibinfo {volume} {102}},\ \bibinfo
  {pages} {022406} (\bibinfo {year} {2020}{\natexlab{a}})}\BibitemShut
  {NoStop}%
\bibitem [{\citenamefont {de~Beaudrap}\ \emph {et~al.}(2020)\citenamefont
  {de~Beaudrap}, \citenamefont {Bian},\ and\ \citenamefont
  {Wang}}]{de2020fast}%
  \BibitemOpen
  \bibfield  {author} {\bibinfo {author} {\bibfnamefont {N.}~\bibnamefont
  {de~Beaudrap}}, \bibinfo {author} {\bibfnamefont {X.}~\bibnamefont {Bian}},\
  and\ \bibinfo {author} {\bibfnamefont {Q.}~\bibnamefont {Wang}},\ }\bibfield
  {title} {\bibinfo {title} {{Fast and Effective Techniques for T-Count
  Reduction via Spider Nest Identities}},\ }in\ \href
  {https://doi.org/10.4230/LIPIcs.TQC.2020.11} {\emph {\bibinfo {booktitle}
  {15th Conference on the Theory of Quantum Computation, Communication and
  Cryptography}}}\ (\bibinfo {year} {2020})\BibitemShut {NoStop}%
\bibitem [{\citenamefont {Backens}\ \emph {et~al.}(2021)\citenamefont
  {Backens}, \citenamefont {Miller-Bakewell}, \citenamefont {de~Felice},
  \citenamefont {Lobski},\ and\ \citenamefont {van~de
  Wetering}}]{backens2021there}%
  \BibitemOpen
  \bibfield  {author} {\bibinfo {author} {\bibfnamefont {M.}~\bibnamefont
  {Backens}}, \bibinfo {author} {\bibfnamefont {H.}~\bibnamefont
  {Miller-Bakewell}}, \bibinfo {author} {\bibfnamefont {G.}~\bibnamefont
  {de~Felice}}, \bibinfo {author} {\bibfnamefont {L.}~\bibnamefont {Lobski}},\
  and\ \bibinfo {author} {\bibfnamefont {J.}~\bibnamefont {van~de Wetering}},\
  }\bibfield  {title} {\bibinfo {title} {{There and back again: A circuit
  extraction tale}},\ }\href {https://doi.org/10.22331/q-2021-03-25-421}
  {\bibfield  {journal} {\bibinfo  {journal} {Quantum}\ }\textbf {\bibinfo
  {volume} {5}},\ \bibinfo {pages} {421} (\bibinfo {year} {2021})}\BibitemShut
  {NoStop}%
\bibitem [{\citenamefont {Cudby}\ and\ \citenamefont
  {Strelchuk}(2023)}]{cudby2023gaussian}%
  \BibitemOpen
  \bibfield  {author} {\bibinfo {author} {\bibfnamefont {J.}~\bibnamefont
  {Cudby}}\ and\ \bibinfo {author} {\bibfnamefont {S.}~\bibnamefont
  {Strelchuk}},\ }\bibfield  {title} {\bibinfo {title} {Gaussian decomposition
  of magic states for matchgate computations},\ }\href
  {https://arxiv.org/abs/2307.12654} {\bibfield  {journal} {\bibinfo  {journal}
  {arXiv:2307.12654}\ } (\bibinfo {year} {2023})}\BibitemShut {NoStop}%
\bibitem [{\citenamefont {Terhal}\ and\ \citenamefont
  {DiVincenzo}(2004)}]{terhal2004adaptive}%
  \BibitemOpen
  \bibfield  {author} {\bibinfo {author} {\bibfnamefont {B.~M.}\ \bibnamefont
  {Terhal}}\ and\ \bibinfo {author} {\bibfnamefont {D.~P.}\ \bibnamefont
  {DiVincenzo}},\ }\bibfield  {title} {\bibinfo {title} {{Adaptive Quantum
  Computation, Constant Depth Quantum Circuits and Arthur-Merlin Games}},\
  }\href {https://doi.org/10.26421/QIC4.2-5} {\bibfield  {journal} {\bibinfo
  {journal} {Quantum Information \& Computation}\ }\textbf {\bibinfo {volume}
  {4}},\ \bibinfo {pages} {134} (\bibinfo {year} {2004})}\BibitemShut {NoStop}%
\bibitem [{\citenamefont {Jones}\ and\ \citenamefont
  {Sunder}(1997)}]{jones1997introduction}%
  \BibitemOpen
  \bibfield  {author} {\bibinfo {author} {\bibfnamefont {V.~F.}\ \bibnamefont
  {Jones}}\ and\ \bibinfo {author} {\bibfnamefont {V.~S.}\ \bibnamefont
  {Sunder}},\ }\href@noop {} {\emph {\bibinfo {title} {Introduction to
  subfactors}}},\ Vol.\ \bibinfo {volume} {234}\ (\bibinfo  {publisher}
  {Cambridge University Press},\ \bibinfo {year} {1997})\BibitemShut {NoStop}%
\bibitem [{\citenamefont {Evans}\ and\ \citenamefont
  {Kawahigashi}(1998)}]{evans1998quantum}%
  \BibitemOpen
  \bibfield  {author} {\bibinfo {author} {\bibfnamefont {D.~E.}\ \bibnamefont
  {Evans}}\ and\ \bibinfo {author} {\bibfnamefont {Y.}~\bibnamefont
  {Kawahigashi}},\ }\href@noop {} {\emph {\bibinfo {title} {Quantum symmetries
  on operator algebras}}}\ (\bibinfo  {publisher} {Oxford University Press},\
  \bibinfo {year} {1998})\BibitemShut {NoStop}%
\bibitem [{\citenamefont {Jiang}\ \emph {et~al.}(2019)\citenamefont {Jiang},
  \citenamefont {Liu},\ and\ \citenamefont {Wu}}]{jiang2019block}%
  \BibitemOpen
  \bibfield  {author} {\bibinfo {author} {\bibfnamefont {C.}~\bibnamefont
  {Jiang}}, \bibinfo {author} {\bibfnamefont {Z.}~\bibnamefont {Liu}},\ and\
  \bibinfo {author} {\bibfnamefont {J.}~\bibnamefont {Wu}},\ }\bibfield
  {title} {\bibinfo {title} {Block maps and fourier analysis},\ }\href
  {https://doi.org/10.1007/s11425-017-9263-7} {\bibfield  {journal} {\bibinfo
  {journal} {Science China Mathematics}\ }\textbf {\bibinfo {volume} {62}},\
  \bibinfo {pages} {1585} (\bibinfo {year} {2019})}\BibitemShut {NoStop}%
\bibitem [{\citenamefont {Kissinger}\ \emph {et~al.}(2014)\citenamefont
  {Kissinger}, \citenamefont {Merry},\ and\ \citenamefont
  {Soloviev}}]{kissingerPatternGraphRewrite2014}%
  \BibitemOpen
  \bibfield  {author} {\bibinfo {author} {\bibfnamefont {A.}~\bibnamefont
  {Kissinger}}, \bibinfo {author} {\bibfnamefont {A.}~\bibnamefont {Merry}},\
  and\ \bibinfo {author} {\bibfnamefont {M.}~\bibnamefont {Soloviev}},\
  }\bibfield  {title} {\bibinfo {title} {Pattern graph rewrite systems},\
  }\href {https://doi.org/10.4204/EPTCS.143.5} {\bibfield  {journal} {\bibinfo
  {journal} {Electronic Proceedings in Theoretical Computer Science}\ }\textbf
  {\bibinfo {volume} {143}},\ \bibinfo {pages} {54} (\bibinfo {year}
  {2014})}\BibitemShut {NoStop}%
\bibitem [{\citenamefont {Kissinger}\ and\ \citenamefont
  {Zamdzhiev}(2015)}]{kissingerQuantomaticProofAssistant2015a}%
  \BibitemOpen
  \bibfield  {author} {\bibinfo {author} {\bibfnamefont {A.}~\bibnamefont
  {Kissinger}}\ and\ \bibinfo {author} {\bibfnamefont {V.}~\bibnamefont
  {Zamdzhiev}},\ }\bibfield  {title} {\bibinfo {title} {{Quantomatic: A Proof
  Assistant for Diagrammatic Reasoning}},\ }in\ \href
  {https://doi.org/10.1007/978-3-319-21401-6_22} {\emph {\bibinfo {booktitle}
  {Automated Deduction-CADE-25: 25th International Conference on Automated
  Deduction, Berlin, Germany, August 1-7, 2015, Proceedings 25}}}\ (\bibinfo
  {year} {2015})\ pp.\ \bibinfo {pages} {326--336}\BibitemShut {NoStop}%
\bibitem [{\citenamefont {Kissinger}\ and\ \citenamefont {van~de
  Wetering}(2020{\natexlab{b}})}]{kissingerPyZXLargeScale2020a}%
  \BibitemOpen
  \bibfield  {author} {\bibinfo {author} {\bibfnamefont {A.}~\bibnamefont
  {Kissinger}}\ and\ \bibinfo {author} {\bibfnamefont {J.}~\bibnamefont {van~de
  Wetering}},\ }\bibfield  {title} {\bibinfo {title} {{PyZX: Large Scale
  Automated Diagrammatic Reasoning}},\ }\href
  {https://doi.org/10.4204/EPTCS.318.14} {\bibfield  {journal} {\bibinfo
  {journal} {Electronic Proceedings in Theoretical Computer Science}\ }\textbf
  {\bibinfo {volume} {318}},\ \bibinfo {pages} {229} (\bibinfo {year}
  {2020}{\natexlab{b}})}\BibitemShut {NoStop}%
\bibitem [{\citenamefont {Cai}\ \emph {et~al.}(2018)\citenamefont {Cai},
  \citenamefont {Guo},\ and\ \citenamefont {Williams}}]{cai2018clifford}%
  \BibitemOpen
  \bibfield  {author} {\bibinfo {author} {\bibfnamefont {J.-Y.}\ \bibnamefont
  {Cai}}, \bibinfo {author} {\bibfnamefont {H.}~\bibnamefont {Guo}},\ and\
  \bibinfo {author} {\bibfnamefont {T.}~\bibnamefont {Williams}},\ }\bibfield
  {title} {\bibinfo {title} {{Clifford gates in the Holant framework}},\ }\href
  {https://doi.org/10.1016/j.tcs.2018.06.010} {\bibfield  {journal} {\bibinfo
  {journal} {Theoretical Computer Science}\ }\textbf {\bibinfo {volume}
  {745}},\ \bibinfo {pages} {163} (\bibinfo {year} {2018})}\BibitemShut
  {NoStop}%
\bibitem [{\citenamefont {Cahill}\ and\ \citenamefont
  {Glauber}(1999)}]{PhysRevA.59.1538}%
  \BibitemOpen
  \bibfield  {author} {\bibinfo {author} {\bibfnamefont {K.~E.}\ \bibnamefont
  {Cahill}}\ and\ \bibinfo {author} {\bibfnamefont {R.~J.}\ \bibnamefont
  {Glauber}},\ }\bibfield  {title} {\bibinfo {title} {Density operators for
  fermions},\ }\href {https://doi.org/10.1103/PhysRevA.59.1538} {\bibfield
  {journal} {\bibinfo  {journal} {Phys. Rev. A}\ }\textbf {\bibinfo {volume}
  {59}},\ \bibinfo {pages} {1538} (\bibinfo {year} {1999})}\BibitemShut
  {NoStop}%
\bibitem [{\citenamefont {Lutchyn}\ \emph {et~al.}(2018)\citenamefont
  {Lutchyn}, \citenamefont {Bakkers}, \citenamefont {Kouwenhoven},
  \citenamefont {Krogstrup}, \citenamefont {Marcus},\ and\ \citenamefont
  {Oreg}}]{lutchyn2018majorana}%
  \BibitemOpen
  \bibfield  {author} {\bibinfo {author} {\bibfnamefont {R.~M.}\ \bibnamefont
  {Lutchyn}}, \bibinfo {author} {\bibfnamefont {E.~P.}\ \bibnamefont
  {Bakkers}}, \bibinfo {author} {\bibfnamefont {L.~P.}\ \bibnamefont
  {Kouwenhoven}}, \bibinfo {author} {\bibfnamefont {P.}~\bibnamefont
  {Krogstrup}}, \bibinfo {author} {\bibfnamefont {C.~M.}\ \bibnamefont
  {Marcus}},\ and\ \bibinfo {author} {\bibfnamefont {Y.}~\bibnamefont {Oreg}},\
  }\bibfield  {title} {\bibinfo {title} {Majorana zero modes in
  superconductor--semiconductor heterostructures},\ }\href
  {https://doi.org/10.1038/s41578-018-0003-1} {\bibfield  {journal} {\bibinfo
  {journal} {Nature Reviews Materials}\ }\textbf {\bibinfo {volume} {3}},\
  \bibinfo {pages} {52} (\bibinfo {year} {2018})}\BibitemShut {NoStop}%
\bibitem [{\citenamefont {McLauchlan}\ and\ \citenamefont
  {B\'eri}(2022)}]{PhysRevLett.128.180504}%
  \BibitemOpen
  \bibfield  {author} {\bibinfo {author} {\bibfnamefont {C.~K.}\ \bibnamefont
  {McLauchlan}}\ and\ \bibinfo {author} {\bibfnamefont {B.}~\bibnamefont
  {B\'eri}},\ }\bibfield  {title} {\bibinfo {title} {{Fermion-Parity-Based
  Computation and Its Majorana-Zero-Mode Implementation}},\ }\href
  {https://doi.org/10.1103/PhysRevLett.128.180504} {\bibfield  {journal}
  {\bibinfo  {journal} {Phys. Rev. Lett.}\ }\textbf {\bibinfo {volume} {128}},\
  \bibinfo {pages} {180504} (\bibinfo {year} {2022})}\BibitemShut {NoStop}%
\bibitem [{\citenamefont {Aasen}\ \emph {et~al.}(2025)\citenamefont {Aasen},
  \citenamefont {Aghaee}, \citenamefont {Alam}, \citenamefont {Andrzejczuk},
  \citenamefont {Antipov}, \citenamefont {Astafev}, \citenamefont {Avilovas},
  \citenamefont {Barzegar}, \citenamefont {Bauer}, \citenamefont {Becker} \emph
  {et~al.}}]{aasen2025roadmap}%
  \BibitemOpen
  \bibfield  {author} {\bibinfo {author} {\bibfnamefont {D.}~\bibnamefont
  {Aasen}}, \bibinfo {author} {\bibfnamefont {M.}~\bibnamefont {Aghaee}},
  \bibinfo {author} {\bibfnamefont {Z.}~\bibnamefont {Alam}}, \bibinfo {author}
  {\bibfnamefont {M.}~\bibnamefont {Andrzejczuk}}, \bibinfo {author}
  {\bibfnamefont {A.}~\bibnamefont {Antipov}}, \bibinfo {author} {\bibfnamefont
  {M.}~\bibnamefont {Astafev}}, \bibinfo {author} {\bibfnamefont
  {L.}~\bibnamefont {Avilovas}}, \bibinfo {author} {\bibfnamefont
  {A.}~\bibnamefont {Barzegar}}, \bibinfo {author} {\bibfnamefont
  {B.}~\bibnamefont {Bauer}}, \bibinfo {author} {\bibfnamefont
  {J.}~\bibnamefont {Becker}}, \emph {et~al.},\ }\bibfield  {title} {\bibinfo
  {title} {Roadmap to fault tolerant quantum computation using topological
  qubit arrays},\ }\href {https://arxiv.org/abs/2502.12252} {\bibfield
  {journal} {\bibinfo  {journal} {arXiv:2502.12252}\ } (\bibinfo {year}
  {2025})}\BibitemShut {NoStop}%
\bibitem [{\citenamefont {Murg}\ \emph {et~al.}(2012)\citenamefont {Murg},
  \citenamefont {Korepin},\ and\ \citenamefont
  {Verstraete}}]{PhysRevB.86.045125}%
  \BibitemOpen
  \bibfield  {author} {\bibinfo {author} {\bibfnamefont {V.}~\bibnamefont
  {Murg}}, \bibinfo {author} {\bibfnamefont {V.~E.}\ \bibnamefont {Korepin}},\
  and\ \bibinfo {author} {\bibfnamefont {F.}~\bibnamefont {Verstraete}},\
  }\bibfield  {title} {\bibinfo {title} {{Algebraic Bethe ansatz and tensor
  networks}},\ }\href {https://doi.org/10.1103/PhysRevB.86.045125} {\bibfield
  {journal} {\bibinfo  {journal} {Phys. Rev. B}\ }\textbf {\bibinfo {volume}
  {86}},\ \bibinfo {pages} {045125} (\bibinfo {year} {2012})}\BibitemShut
  {NoStop}%
\bibitem [{\citenamefont {Jones}(2007)}]{jones2007and}%
  \BibitemOpen
  \bibfield  {author} {\bibinfo {author} {\bibfnamefont {V.~F.}\ \bibnamefont
  {Jones}},\ }\bibfield  {title} {\bibinfo {title} {In and around the origin of
  quantum groups},\ }\href {https://doi.org/10.1090/conm/437/08427} {\bibfield
  {journal} {\bibinfo  {journal} {Contemporary Mathematics}\ }\textbf {\bibinfo
  {volume} {437}},\ \bibinfo {pages} {101} (\bibinfo {year}
  {2007})}\BibitemShut {NoStop}%
\bibitem [{\citenamefont {Feng}\ \emph {et~al.}(2025)\citenamefont {Feng},
  \citenamefont {Liu}, \citenamefont {Lu},\ and\ \citenamefont
  {Wang}}]{quonAlgoPaper}%
  \BibitemOpen
  \bibfield  {author} {\bibinfo {author} {\bibfnamefont {Z.}~\bibnamefont
  {Feng}}, \bibinfo {author} {\bibfnamefont {Z.}~\bibnamefont {Liu}}, \bibinfo
  {author} {\bibfnamefont {F.}~\bibnamefont {Lu}},\ and\ \bibinfo {author}
  {\bibfnamefont {N.}~\bibnamefont {Wang}},\ }\bibfield  {title} {\bibinfo
  {title} {{Quon Classical Simulation Unifying Cliffords and Matchgates}},\
  }\href@noop {} {\bibfield  {journal} {\bibinfo  {journal} {arXiv preprint, to
  appear in the same arXiv posting}\ } (\bibinfo {year} {2025})}\BibitemShut
  {NoStop}%
\bibitem [{\citenamefont {Kasteleyn}(1961)}]{kasteleyn1961statistics}%
  \BibitemOpen
  \bibfield  {author} {\bibinfo {author} {\bibfnamefont {P.~W.}\ \bibnamefont
  {Kasteleyn}},\ }\bibfield  {title} {\bibinfo {title} {{The statistics of
  dimers on a lattice: I. The number of dimer arrangements on a quadratic
  lattice}},\ }\href {https://doi.org/10.1016/0031-8914(61)90063-5} {\bibfield
  {journal} {\bibinfo  {journal} {Physica}\ }\textbf {\bibinfo {volume} {27}},\
  \bibinfo {pages} {1209} (\bibinfo {year} {1961})}\BibitemShut {NoStop}%
\bibitem [{\citenamefont {Temperley}\ and\ \citenamefont
  {Fisher}(1961)}]{temperley1961dimer}%
  \BibitemOpen
  \bibfield  {author} {\bibinfo {author} {\bibfnamefont {H.~N.}\ \bibnamefont
  {Temperley}}\ and\ \bibinfo {author} {\bibfnamefont {M.~E.}\ \bibnamefont
  {Fisher}},\ }\bibfield  {title} {\bibinfo {title} {Dimer problem in
  statistical mechanics-an exact result},\ }\href
  {https://doi.org/10.1080/14786436108243366} {\bibfield  {journal} {\bibinfo
  {journal} {Philosophical Magazine}\ }\textbf {\bibinfo {volume} {6}},\
  \bibinfo {pages} {1061} (\bibinfo {year} {1961})}\BibitemShut {NoStop}%
\end{thebibliography}%

\end{document}